\documentclass[11pt]{article}

\usepackage[T1]{fontenc}

\usepackage[margin=1in]{geometry}

\usepackage{wrapfig}

\usepackage{tcolorbox}

\usepackage{authblk}

\usepackage{hyperref}
\usepackage{xcolor}
\hypersetup{
    colorlinks,
    linkcolor={red!50!black},
    citecolor={blue!50!black},
    urlcolor={blue!80!black}
}

\usepackage{amsmath}

\usepackage{relsize}

\usepackage{pgfplots}
\usepackage{pgfplotstable}
\usetikzlibrary{matrix,arrows,shapes,positioning,calc,snakes,decorations.markings}
\usetikzlibrary{fadings,shapes.arrows,shadows}   
\usetikzlibrary{arrows.meta,patterns}

\usetikzlibrary{arrows.meta}
\tikzset{>={Latex[width=1.25mm,length=1.25mm]}}

\pgfdeclarelayer{background}
\pgfsetlayers{background,main}


\definecolor{lightgray}{gray}{0.7}
\definecolor{medgray}{gray}{0.55}
\definecolor{darkgray}{gray}{0.4}

\definecolor{pastelblue}{rgb}{0.6, 0.8, 0.91}
\definecolor{pastelred}{rgb}{1.0, 0.41, 0.38}

\definecolor{darkpastelgreen}{rgb}{0.01, 0.75, 0.24}
\definecolor{pastelyellow}{rgb}{0.01, 0.75, 0.24}

\definecolor{shade1}{rgb}{0.87, 0.91, 0.92} 
\definecolor{shade2}{rgb}{1, 1, 0.88} 
\definecolor{shade3}{rgb}{0.69, 0.94, 0.92} 
\definecolor{shade4}{rgb}{1, 0.79, 0.78} 

\definecolor{darkpastelblue}{rgb}{0.27, 0.47, 0.70}
\definecolor{darkpastelred}{rgb}{0.59, 0.18, 0.10}
\definecolor{darkpastelbrown}{rgb}{0.39, 0.31, 0.25}
\definecolor{darkpastelmagenta}{rgb}{0.59, 0.44, 0.84}


\tikzset{fontscale/.style = {font=\relsize{#1}}}

\newcommand{\mysquare}[3]{
\node [rectangle, fill=#3,minimum size=4.75mm] () at (#1,#2) {};
}

\newcommand{\mytwocoloursquare}[2]{
	\fill [pastelred]  (#1-0.5, #2-0.5) -- (#1-0.5, #2+0.5) -- (#1+0.5,#2+0.5) -- cycle;
	\fill [pastelblue] (#1-0.5, #2-0.5) -- (#1+0.5, #2-0.5) -- (#1+0.5,#2+0.5) -- cycle;
	\draw[-] (#1-0.5,#2-0.5) -- (#1+0.5,#2+0.5);
	\node [rectangle, draw=black, minimum size=4.75mm] () at (#1,#2) {};
}

\newcommand{\myuparrow}[3]{
	\mysquare{#1}{#2}{lightgray}
	\draw[->, white] (#1,#2-0.25) -- ($(#1,#2-0.25)!0.5cm!(#1,#2+1)$);
}
\newcommand{\mydnarrow}[3]{
	\draw[->, darkgray] (#1,#2+0.25) -- ($(#1,#2+0.25)!0.5cm!(#1,#2-1)$);
}

\newcommand{\myrightarrow}[3]{
	\mysquare{#1}{#2}{lightgray}
	\draw[->, white] (#1-0.25,#2) -- ($(#1-0.25,#2)!0.5cm!(#1+1,#2)$);
}
\newcommand{\myleftarrow}[3]{
	\draw[->, darkgray] (#1+0.25,#2) -- ($(#1+0.25,#2)!0.5cm!(#1-1,#2)$);
}

\newcommand{\zeroS}[3]{
	\node[white, fill=#3, minimum size=4.75mm] at (#1,#2) {$0$};
}

\newcommand{\updncol}[2]{ 
	\pgfmathsetmacro{\limbelow}{#2 - 1}
	\pgfmathsetmacro{\limabove}{#2 + 1}

	\foreach \j in {1,...,\limbelow}
		\myuparrow{#1}{\j}{black};
	\zeroS{#1}{#2}{pastelblue};
	\foreach \j in {\limabove,...,8}
		\mydnarrow{#1}{\j}{black};
}

\newcommand{\leftrightcol}[2]{ 
	\pgfmathsetmacro{\limleft}{#1 - 1}
	\pgfmathsetmacro{\limright}{#1 + 1}

	\foreach \i in {1,...,\limleft}
		\myrightarrow{\i}{#2}{black};
	\zeroS{#1}{#2}{pastelred};
	\foreach \i in {\limright,...,10}
		\myleftarrow{\i}{#2}{black};
}

\newcommand{\patharrow}[4]{
\draw[->, line width=0.2mm] (#1,#2) -- (#3,#4);
}

\usepackage[utf8]{inputenc}
\usepackage{enumitem}

\usepackage{amsthm}
\usepackage{amsmath}

\newtheorem{conjecture}{Conjecture}
\newtheorem{theorem}{Theorem}
\newtheorem{lemma}[theorem]{Lemma}
\newtheorem{definition}[theorem]{Definition}
\newtheorem{observation}[theorem]{Observation}
\newtheorem{corollary}[theorem]{Corollary}

\usepackage{cite}


\usepackage[disable]{todonotes}

\newcommand{\spencer}[1]{\todo[inline,color=blue!30, caption={}]{Spencer: #1 }}

\usepackage{microtype}


\bibliographystyle{acm}

\title{\huge Unique End of Potential Line}

\author[1]{John Fearnley}
\author[2]{Spencer Gordon}
\author[3]{Ruta Mehta}
\author[1]{Rahul Savani}

\affil[1]{University of Liverpool\\
  \texttt{\{john.fearnley, rahul.savani\}@liverpool.ac.uk}}
\affil[2]{California Institute of Technology\\
\texttt{slgordon@caltech.edu}}
\affil[3]{University of Illinois at Urbana-Champaign\\
  \texttt{rutamehta@cs.illinois.edu}}

\usepackage{tabularx,amssymb,nicefrac,amsmath,multirow,stmaryrd}
\usepackage{comment,amsfonts} 
\usepackage{epsfig} \usepackage{latexsym,nicefrac,bbm}
\usepackage{xspace}
\usepackage{color,fancybox,graphicx,url}
\usepackage{tabularx} 
\usepackage{booktabs}
\usepackage{color,colortbl}
\usepackage{mathtools}


\newcommand{\tf}{{\tilde{f}}}
\newcommand{\mz}{{\times\zeta}}
\newcommand{\size}{{\mbox{size}}}

\renewcommand{\ss}{s} 
\renewcommand{\tt}{t} 
\newcommand{\yy}{{\mathbf y}}

\newcommand{\ww}{{\mathbf w}}
\newcommand{\bb}{{\mathbf b}}

\usepackage{algorithmicx}
\usepackage{algorithm}
\usepackage{algpseudocode}


\newcommand{\ra}{\rightarrow}
\long\def\symbolfootnote[#1]#2{\begingroup%
\def\thefootnote{\fnsymbol{footnote}}\footnote[#1]{#2}\endgroup}

\DeclareMathOperator{\Slice}{\mathsf{Slice}}

\DeclareMathOperator{\firsthalf}{FirstHalf}
\DeclareMathOperator{\secondhalf}{SecondHalf}

\def\cc#1{\mathsf{#1}}
\def\CLS{\ensuremath{\cc{CLS}}\xspace}

\def\FNP{\ensuremath{\cc{FNP}}\xspace}
\def\FP{\ensuremath{\cc{FP}}\xspace}
\def\NP{\ensuremath{\cc{NP}}\xspace}
\def\coNP{\ensuremath{\cc{coNP}}\xspace}

\def\TFNP{\ensuremath{\cc{TFNP}}\xspace}
\def\PPAD{\ensuremath{\cc{PPAD}}\xspace}
\def\PLS{\ensuremath{\cc{PLS}}\xspace}
\def\PPP{\ensuremath{\cc{PPP}}\xspace}
\def\PPA{\ensuremath{\cc{PPA}}\xspace}
\def\PSPACE{\ensuremath{\cc{PSPACE}}\xspace}
\def\PPADPLS{\ensuremath{\cc{PPAD} \cap \cc{PLS}}\xspace}
\def\EOPLc{\ensuremath{\cc{EOPL}}\xspace}
\def\UEOPLc{\UniqueEOPLc}
\def\PUEOPLc{\ensuremath{\cc{PromiseUEOPL}}\xspace}

\def\UniqueEOPLc{\ensuremath{\cc{UniqueEOPL}}\xspace}

\newcommand{\LinearFIXP}{\ensuremath{\cc{LinearFIXP}}\xspace}

\def\problem#1{{\scshape #1}}
\def\CM{\problem{Contraction}\xspace}
\def\LCM{\problem{PL-Contraction}\xspace}
\def\DCM{\problem{OPDC}\xspace}
\def\OPDC{\DCM}

\def\USO{\problem{Unique-Sink-Orientation}\xspace}
\def\GUSO{\problem{Grid-USO}\xspace}

\def\EOL{\problem{EndOfLine}\xspace}
\def\SOD{\problem{SinkOfDag}\xspace}
\def\EOPL{\problem{EndOfPotentialLine}\xspace}
\def\UEOPL{\problem{UniqueEOPL}\xspace}
\def\PUEOPL{\problem{PromiseUniqueEOPL}\xspace}
\def\UFEOPL{\problem{UniqueForwardEOPL}\xspace}
\def\UFEOPLp1{\problem{UniqueForwardEOPL+1}\xspace}
\def\EOML{\problem{EndOfMeteredLine}\xspace}
\def\SOVL{\problem{SinkOfVerifiableLine}\xspace}

\def\PLCP{\problem{P-LCP}\xspace}
\def\ContractionMap{\problem{ContractionMap}\xspace}

\def\eps{\varepsilon}

\def\ite{\mbox{ItoE}}
\def\eti{\mbox{EtoI}}
\def\pot{\mbox{$V$}}
\def\isvalid{\mbox{IsValid}}
\def\PLo{\solnref{Q1}\xspace}
\def\PLt{\solnref{Q2}\xspace}

\def\Real{\mathbb{R}}

\def\Natural{\mathbb{N}}

\let\N\Natural
\let\Q\Rational

\let\Z\Integer


	


\def\poly{\operatorname{poly}}










\def\Ceil#1{\left\lceil #1 \right\rceil}

\def\Set#1{\left\{ #1 \right\}}
\def\Abs#1{\left| #1 \right|}
\def\Card#1{\left| #1 \right|}
\def\Norm#1{\left\| #1 \right\|}
\def\Paren#1{\left( #1 \right)}		
\def\Brack#1{\left[ #1 \right]}		

%

\makeatletter
\def\Bigbar#1{\mathrel{\left|\vphantom{#1}\right.\n@space}}
\def\Setbar#1#2{\Set{#1 \Bigbar{#1 #2} #2}}

\def\vert{\operatorname{\mathsf{vert}}}

\def\begin@lgo{\begin{minipage}{1in}\begin{tabbing}
		\quad\=\qquad\=\qquad\=\qquad\=\qquad\=\qquad\=\qquad\=\qquad\=\qquad\=\qquad\=\qquad\=\qquad\=\qquad\=\kill}
\def\end@lgo{\end{tabbing}\end{minipage}}

\makeatother

\newcommand{\CPol}{\mbox{${\mathcal P}$}}
\newcommand{\CI}{\mbox{${\mathcal I}$}}
\newcommand{\CL}{\mbox{${\mathcal L}$}}
\newcommand{\CE}{\mbox{${\mathcal E}$}}
\newcommand{\CQ}{\mbox{${\mathcal Q}$}}

\newcommand{\uu}{\mbox{\boldmath $u$}}
\newcommand{\vv}{\mbox{\boldmath $v$}}
\newcommand{\qq}{\mbox{\boldmath $q$}}
\newcommand{\xx}{\mbox{\boldmath $x$}}
\newcommand{\cov}{\mbox{\boldmath $c$}}
\newcommand{\one}{\mbox{\boldmath $1$}}
\newcommand{\ones}{\mbox{\boldmath $1$}}
\newcommand{\zeros}{\mbox{\boldmath $0$}}
\newcommand{\MM}{\mbox{$M$}}

\newcommand{\pq}{\mbox{\boldmath $q$}}

\newcommand{\udir}{\ensuremath{\Psi}}
\DeclareMathOperator{\out}{out}
\DeclareMathOperator{\cha}{char}

\newcommand{\blank}{\ensuremath{\mathtt{*}}}
\newcommand{\up}{\ensuremath{\mathsf{up}}}
\newcommand{\down}{\ensuremath{\mathsf{down}}}
\newcommand{\zero}{\ensuremath{\mathsf{zero}}}
\newsavebox{\spacebox}
\begin{lrbox}{\spacebox}
\verb*! !
\end{lrbox}
\newcommand{\vblank}{\ensuremath{\mathtt{-}}}

\DeclareMathOperator{\decode}{decode}
\DeclareMathOperator{\subline}{subline}
\DeclareMathOperator{\fixed}{fixed}
\DeclareMathOperator{\free}{free}

\DeclareMathOperator{\isvertex}{IsVertex}
\DeclareMathOperator{\potf}{Potential}
\DeclareMathOperator{\lexpot}{LexPot}
\DeclareMathOperator{\adj}{adj}

\DeclareMathOperator{\FindFP}{\mbox{\textsc{FindFP}}}
\DeclareMathOperator{\ApproxFindFP}{\mbox{\textsc{ApproxFindFP}}}

\newcommand{\defineterm}[1]{\emph{#1}}

\newcommand{\restr}[2]{#1_{\left|#2\right.}}
\newcommand{\Restr}[2]{\tilde{#1}_{\left|#2\right.}}

\newcommand{\solnlabel}[1]{\label{sol:#1}}
\newcommand{\solnref}[1]{\ref{sol:#1}}

\begin{document}

\maketitle
\thispagestyle{empty}

\begin{abstract}
This paper studies the complexity of problems in \PPADPLS that have \emph{unique}
solutions. Three well-known examples of such problems are the problem of finding
a fixpoint of a contraction map, finding the unique sink of a Unique Sink
Orientation (USO), and solving the P-matrix Linear Complementarity Problem
(P-LCP). Each of these are promise-problems, and when the promise holds, they
always possess unique solutions.

We define the complexity class $\UEOPLc$ to capture problems of
this type. 
We first define a class that we call
$\EOPLc$, which consists of all problems that can be reduced to \EOPL.
This problem merges the canonical \PPAD-complete problem \EOL, with
the canonical \PLS-complete problem \SOD, and so \EOPL captures problems that
can be solved by a line-following algorithm that also simultaneously decreases a
potential function. 

\PUEOPLc is a promise-subclass of \EOPLc in which the line in the \EOPL instance
is guaranteed to be unique via a promise. We turn this into a non-promise class
\UEOPLc, by adding an extra solution type to \EOPL that captures any pair of
points that are provably on two different lines.

We show that $\UEOPLc \subseteq \EOPLc \subseteq \CLS$, and
that all of our motivating problems are contained in \UEOPLc:
specifically USO, P-LCP, and finding a fixpoint of a Piecewise-Linear
Contraction under an $\ell_p$-norm all lie in \UEOPLc. 
Until now, USO was not even known to lie in \PPAD or \PLS.
Our results also imply
that parity games, mean-payoff games, discounted games, and simple-stochastic
games lie in \UEOPLc.

All of our containment results are proved via a reduction to a problem that we
call One-Permutation Discrete Contraction (OPDC). This problem is motivated by a
discretized version of contraction, but it is also closely related to the
USO problem. We show that OPDC lies in \UEOPLc, and we are also able to show
that OPDC is \UEOPLc-complete.

Finally, using the insights from our reduction for Piecewise-Linear Contraction,
we obtain the first polynomial-time algorithms for finding fixed points of
contraction maps in fixed dimension for any $\ell_p$ norm, where previously such
algorithms were only known for the $\ell_2$ and $\ell_\infty$ norms. Our
reduction from \PLCP to \UEOPL allows a technique of Aldous~\cite{Aldous83} to be applied, which
in turn gives the fastest-known randomized algorithm for~\PLCP.

{\let\thefootnote\relax\footnote{This paper substantially revises and extends
the work described in our previous preprint ``End of Potential Line''
\texttt{arXiv:1804.03450}~\cite{FGMS18}.}}
\end{abstract}

\newpage
\thispagestyle{empty}


\tableofcontents

\newpage

\clearpage
\pagenumbering{arabic} 

\section{Introduction}

\paragraph{\bf Total function problems in NP.} The complexity class \TFNP
contains search problems that are guaranteed to have a solution, and whose
solutions can be verified in polynomial time~\cite{megiddo1991total}.
While it is a semantically defined complexity class and thus unlikely to
contain complete problems, a number of syntactically defined subclasses of
\TFNP have proven very successful at capturing the complexity of total search 
problems.
In this paper, we focus on two in particular, \PPAD and \PLS.
The class \PPAD was introduced in
\cite{papadimitriou1994complexity} to capture the difficulty of problems
that are guaranteed total by a parity argument. It has attracted intense
attention in the past decade, culminating in a series of papers showing that the
problem of computing a Nash-equilibrium in two-player games is \PPAD-complete
\cite{chen2009settling,daskalakis2009complexity}, and more recently a conditional
lower bound that rules out a PTAS for the problem~\cite{Rubinstein16}.
No polynomial-time algorithms for \PPAD-complete problems are known, and recent work
suggests that no such algorithms are likely to exist~\cite{BPR15,garg2016revisiting}. 
\PLS is the class of problems that
can be solved by local search algorithms (in perhaps exponentially-many steps).
It has also attracted much interest since it was introduced in
\cite{johnson1988easy}, and looks similarly unlikely to have polynomial-time
algorithms. Examples of problems that are complete for \PLS include the problem
of computing a pure Nash equilibrium in a congestion
game~\cite{fabrikant2004complexity}, a locally optimal max
cut~\cite{schaffer1991simple}, or a stable outcome in a hedonic game~\cite{GairingS10}.

\paragraph{\bf Continuous Local Search.} 
If a problem lies in both \PPAD and \PLS then it is unlikely to be complete for 
either class, since this would imply an extremely surprising containment of one class in the other.
In their 2011 paper~\cite{daskalakis2011continuous}, Daskalakis and
Papadimitriou observed that there are several prominent total function problems
in \PPADPLS for which researchers have not been able to design polynomial-time
algorithms. Motivated by this they introduced
the class $\CLS$, a syntactically defined subclass of $\cc{PPAD} \cap \cc{PLS}$.
\CLS is intended to capture the class of optimization problems over a continuous
domain in which a continuous potential function is being minimized and the
optimization algorithm has access to a polynomial-time continuous improvement
function.
They showed that many classical problems of unknown
complexity are in $\CLS$, including the problem of solving a simple
stochastic game, the more general problems of solving a Linear Complementarity
Problem with a P-matrix, finding an approximate fixpoint
to a contraction map, finding an approximate stationary point of a
multivariate polynomial, and finding a mixed Nash equilibrium of a congestion
game. 


\paragraph{\bf CLS problems with unique solutions.}

In this paper we study an interesting subset of problems that lie within \CLS,
and have \emph{unique} solutions. 

\begin{description}
\item[Contraction.] In this problem we are given a function $f : \mathbb{R}^d
\rightarrow \mathbb{R}^d$ that is purported to be $c$-contracting, meaning that for all
points $x, y \in [0, 1]^n$ we have $d(f(x),
f(y)) \le c \cdot d(x, y)$, where $c$ is a constant satisfying $0 < c < 1$, and
$d$ is a distance metric.
Banach's fixpoint theorem states that if $f$ is contracting, then it has a
unique \emph{fixpoint}~\cite{Banach1922}, meaning that there is a unique point
$x \in \mathbb{R}^d$ such that $f(x) = x$.

\item[P-LCP.] The \emph{P-matrix linear complementarity problem} (P-LCP) is a
variant of the linear complementarity problem in which the input matrix is a
P-matrix~\cite{cottle2009linear}. An interesting property of this problem is
that, if the input matrix actually is a P-matrix, then the problem is guaranteed
to have a unique solution~\cite{cottle2009linear}.
Designing a polynomial-time algorithm for P-LCP has been open for
decades, at least since the 1978 paper of Murty~\cite{murty1978computational}
that provided exponential-time examples for \emph{Lemke's algorithm}~\cite{lemke1965bimatrix} for P-LCPs.

\item[USO.] A \emph{unique sink orientation} (USO) is an orientation of the
edges of an $n$-dimensional hypercube such that every face of the cube has a
unique sink. Since the entire cube is a face of itself, this means that there is
a unique vertex of the cube that is a sink, meaning that all edges are oriented
inwards. The USO problem is to find this \emph{unique} sink.
\end{description} 

All of these problems are most naturally stated as \emph{promise} problems. This
is because we have no way of verifying up front whether a function is
contracting, whether a matrix is a P-matrix, or whether an orientation is a USO.
Hence, it makes sense, for example, to study the contraction problem where it is
promised that the function $f$ is contracting, and likewise for the other two.

However, each of these problems can be turned into non-promise problems that lie
in TFNP. In the case of Contraction, if the function $f$ is not contracting,
then there exists a short certificate of this fact. Specifically, any pair of
points $x, y \in \mathbb{R}^d$ such that $d(f(x), f(y)) > c \cdot d(x, y)$ give
an explicit proof that the function $f$ is not contracting. We call these
\emph{violations}, since they witness a violation of the promise that is
inherent in the problem. 

So Contraction can be formulated as the non-promise problem of either
finding a solution, or finding a violation. This problem is in \TFNP because in
the case where there is not a unique solution, there must exist a violation of
the promise. 
The P-LCP and USO problems also have violations that
can be witnessed by short certificates, and so they
can be turned into non-promise problems contained in the same way, and these
problems also lie in \TFNP.

For Contraction and P-LCP we actually have the stronger result that both
problems are in 
\CLS~\cite{daskalakis2011continuous}. Prior to this work USO was not known
to lie in any non-trivial subclass of \TFNP, and placing USO into a non-trivial
subclass of \TFNP was identified as an interesting open problem by
Kalai~\cite[Problem 6]{Kalai18}.

We remark that not every problem in CLS has the uniqueness properties that we
identify above. 
For example, the KKT problem~\cite{daskalakis2011continuous} lies in \CLS, but
has no apparent notion of having a unique solution. 
The problems that we identify here seem to share the special property that there
is a natural promise version of the problem, and that promise problem always has
a unique solution. 

\subsection{Our contribution}

In this paper, we define a complexity class that naturally captures the
properties exhibited by problems like Contraction, P-LCP, and USO. In fact, we
define two new sub-classes of \CLS. 

\paragraph{\bf End of potential line.}

The complexity class \EOPLc contains every problem that can be reduced in
polynomial time to 
the problem \EOPL, which we define in this paper (Definition~\ref{def:EOPL}). The \EOPL problem 
unifies
in an extremely natural way the circuit-based views of \PPAD and of \PLS.
The canonical
\PPAD-complete problem is \EOL, a problem that provides us with an
exponentially large graph consisting of lines and cycles, and asks us to find
the end of one of the lines. The canonical \PLS-complete problem provides us
with an exponentially large DAG, whose acyclicity is guaranteed by the
existence of a \defineterm{potential function} that increases along each edge.
The problem \EOPL is an instance of \EOL that \emph{also} has a potential
function that increases along each edge.

So the class \EOPLc captures problems that admit a
\emph{single} combinatorial proof of their joint membership in the classes \PPAD
of fixpoint problems and \PLS of local search problems, and is a
combinatorially-defined alternative to the class~\CLS. We are able to show that
$\EOPLc \subseteq \CLS$ (Corollary~\ref{cor:eoplinCLS}), by providing a
polynomial-time reduction from \EOPL to the \EOML problem defined by 
Hub{\'a}{\v{c}}ek and Yogev~\cite{hubavcek2017hardness}, which they have shown
to lie in~\CLS.

We remark that it is an interesting open problem to determine whether $\EOPLc =
\CLS$. The inspiration behind both classes was to capture problems in $\PPAD
\cap \PLS$. The class \CLS does this by affixing a potential function to the
\PPAD-complete Brouwer fixpoint problem, while \EOPLc does this affixing a
potential function to the \PPAD-complete problem \EOL. The class \EOPLc is not
the main focus of this paper, however.

\paragraph{\bf Unique end of potential line.}

An \EOPL instance consists of an exponentially large graph that contains only
lines (the cycles that can appear in \EOL instances are ruled out by the
potential function.) The problem explicitly gives us the start of one of these
lines. A solution to the problem is a vertex that is at the end of any line,
other than the given start vertex. 
We could find a solution by 
following the line from the start vertex until we find the other end,
although that may take exponential time. There may be many other lines
in the graph though, and the starts and ends of these lines are also solutions.

We define the promise-problem \PUEOPL, in which it is promised that there is a
unique line in the graph. This line must be the one that begins at the given
starting vertex, and so the only solution to the problem is the other end of
that line. Thus, if the promise is satisfied, the problem has a
unique solution. We can define the promise-class \PUEOPLc which contains all
promise-problems that can be reduced in polynomial-time to \PUEOPL.

We are not just studying promise problems in this paper, however. We can turn
\PUEOPL into a non-promise problem by defining appropriate violations. One might
imagine that a suitable violation would be a vertex that is the start of a
second line. Indeed, if we are given the promise that there is no start to a
second line, then we do obtain the problem \PUEOPL. However, with just this
violation, we obtain a problem that is identical to \EOPL, which is not what we
are intending to capture. 

Instead we add a violation that captures any pair of
vertices $v$ and $u$ that are provably on different lines, even if $v$ and $u$
are in the middle of their respective lines. We do this by using the potential
function: if $v$ and $u$ have the same potential, then they must be on different
lines, and likewise if the potential of $u$ lies between the potential of $v$
and the potential of the successor of~$v$. We formalise this as the problem
\UEOPL (Definition~\ref{def:UEOPL}), and we define the complexity class \UEOPLc
to contain all (non-promise) problems that can be reduced in polynomial-time to
\UEOPL.

We have that $\UEOPLc \subseteq \EOPLc$ by definition, since 
\UEOPL simply adds an extra type of violation to \EOPL.
We remark that this new violation makes
the problem substantially different from \EOPL. In \EOPL only the starts and
ends of lines are solutions, while in \UEOPL there are many more solutions
whenever there is more than one line in the instance. 
As such, we view \UEOPLc as capturing a distinct subclass of problems in \EOPLc,
and we view it as the natural class for promise-problems in \PPADPLS that have
unique solutions.


\paragraph{UEOPL containment results.}

We show that USO, P-LCP, and a variant of the Contraction problem all lie in
\UEOPLc. We define the concept of a \emph{promise-preserving} reduction, which
is a polynomial-time reduction between two problems A and B, with the property
that if A is promised to have a unique solution, then the instance of B that is
produced by the reduction will also have a unique solution. All of the
reductions that we produce in this paper are promise-preserving, which means
that whenever we show that a problem is in \UEOPLc, we also get that the
corresponding promise problem lies in \PUEOPLc.

\begin{theorem}[cf.\ Theorem~\ref{thm:uso}]
USO is in \UEOPLc under promise-preserving reductions.
\end{theorem}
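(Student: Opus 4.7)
The plan is to route through the paper's central containment $\OPDC \in \UEOPLc$, by giving a promise-preserving reduction from \USO to \OPDC. Since promise-preserving reductions compose, this yields $\USO \in \UEOPLc$. This strategy is natural because \OPDC is explicitly designed to capture the coordinate-by-coordinate fixing structure shared by \USO, \PLCP, and Contraction; trying to reduce directly to \UEOPL would force us to rebuild the line/potential/violation machinery from scratch, whereas \OPDC gives us exactly the right interface.

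For the reduction, fix an arbitrary permutation $\pi$ of the $n$ coordinates of the USO's hypercube. The idea is to interpret the process of locating the unique sink as successively fixing coordinates $\pi(1), \pi(2), \dots, \pi(n)$. At step $i$, having fixed a prefix of coordinate values, the remaining slice is itself a face of the hypercube, and by the USO promise has a unique sink within that face. The outmap of that sub-sink, restricted to coordinate $\pi(i{+}1)$, determines which value to fix next. The \OPDC instance I construct then answers each ``which direction'' query for a slice and coordinate by returning the orientation dictated by the outmap of the relevant sub-sink. The unique sink of the full USO coincides with the unique fixed point of this discrete contraction instance.

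Concretely, the key steps I would carry out are: \emph{(i)} formalize the slice-indexed direction function so that each \OPDC query is answered in polynomial time using the \USO oracle on a canonical representative vertex of the sub-slice; \emph{(ii)} show that when the input is a valid USO, each slice has a unique sink, so the \OPDC instance is well-defined and its unique solution is the sink of the cube; and \emph{(iii)} verify promise-preservation, i.e.\ that every \OPDC violation surfaced by the downstream \OPDC-to-\UEOPL reduction pulls back to a polynomial-size \USO violation, namely two vertices of some face of the hypercube whose outmaps jointly witness either two sinks or no sink on that face.

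The main obstacle, I expect, is step \emph{(iii)}. A \UEOPL-style violation is a pair of points whose potentials prove they lie on different lines, and an \OPDC violation is a pair of slice/direction configurations that cannot coexist in any contraction. Translating such abstract inconsistencies back into an explicit pair of hypercube vertices whose USO outmaps contradict face-uniqueness requires carrying, with each \OPDC state, enough combinatorial bookkeeping about the sub-slice's certified sink that any detected inconsistency can be localized to a specific identifiable face. Once the bookkeeping is designed so that every piece of \OPDC evidence names the hypercube face and vertices it accuses, the violation translation should be mechanical, and promise-preservation follows.
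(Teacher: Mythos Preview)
Your overall strategy is right: reduce \USO to \OPDC and invoke Theorem~\ref{thm:opdc}. But you have misread the interface that \OPDC exposes. In \OPDC the direction functions $D_i$ are \emph{point-indexed}, not slice-indexed: the input is a single point $p\in P$ and the output is $\up/\down/\zero$ for each coordinate~$i$. There is no query of the form ``given a slice and a coordinate, return a direction,'' and certainly no step where you may first compute the sub-sink of a face and then read off its outmap---that would already solve a USO and is not polynomial time. The paper's reduction is therefore much simpler than what you sketch: take $P=\{0,1\}^n$ and, for each vertex $v$ and each coordinate $i$, set $D_i(v)=\zero$ if $\udir(v)_i=0$, and otherwise $\up$ or $\down$ according to whether $v_i=0$ or $v_i=1$ (and $D_i(v)=\zero$ for all $i$ when $\udir(v)=\vblank$). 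No permutation $\pi$, no sub-sink computation, no bookkeeping.

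With this direct definition, the violation translation you were worried about becomes almost immediate. An \solnref{O1} solution is a point with $\udir(v)_i=0$ for all $i$, i.e.\ a sink (\solnref{US1}) or a $\vblank$ vertex (\solnref{USV1}). An \solnref{OV1} solution gives two distinct points $p,q$ in the same $i$-slice that are both sinks of that face, hence have identical outmaps restricted to the face, which is exactly a \solnref{USV2} witness. An \solnref{OV2} solution gives two points in the same $i$-slice that are both sinks on the first $i{-}1$ coordinates and whose $i$th outmap bits are both~$1$---again identical restricted outmaps, hence \solnref{USV2}. \solnref{OV3} solutions are impossible by construction, since $D_i$ never points outside the cube. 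So promise-preservation is a one-line check per case, not the obstacle you anticipated.
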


For the USO problem, our \UEOPLc containment result substantially advances our
knowledge about the problem. Prior to this work, the problem was only known to
lie in \TFNP, and Kalai~\cite[Problem 6]{Kalai18} had posed the challenge to
place it in some non-trivial subclass of \TFNP. Our result places USO in
\UEOPLc, \EOPLc, \CLS, \PPAD (and hence \PPA and \PPP), and \PLS, and so we
answer Kalai's challenge by placing the problem in \emph{all} of the standard
subclasses of \TFNP.

This result is in some sense surprising. Although every face of a USO has a
unique sink, the orientation itself may contain cycles, and so there is no
obvious way to define a potential function for the problem. Moreover, none of
the well-known algorithms for solving USOs~\cite{SzaboW01,HansenPZ14} have the
line-following nature needed to produce an \EOL instance. Nevertheless, our
result shows that one can produce an \EOL instance that has potential function
for the USO problem.

\begin{theorem}[cf.\ Theorem~\ref{thm:plcp2uso} and Theorem~\ref{thm:plcp2ueopldirectly}]
There are two different variants of the P-LCP problem, both of which lie in
\UEOPLc under promise-preserving reductions.
\end{theorem}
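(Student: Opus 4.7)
The plan is to give two distinct reductions from \PLCP to \UEOPL. The first proceeds via the \UEOPLc containment of \USO already stated above (Theorem~\ref{thm:uso}); the second constructs an \EOPL-style line directly from the complementary pivoting path. Throughout, the goal is to produce a reduction that is \emph{promise-preserving}, so that whenever $M$ truly is a P-matrix the output \UEOPL instance has a single line whose unique endpoint encodes the \PLCP solution.

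For the first reduction, I would use the classical combinatorial link between \PLCP and unique sink orientations. Given an instance $(M,q)$ of dimension $n$, associate to each $v \in \{0,1\}^n$ the complementary basis that selects $w_i$ when $v_i = 0$ and $z_i$ when $v_i = 1$, and let $x_v$ be the resulting basic solution obtained from the corresponding square linear system. Orient each hypercube edge $(v,v \oplus e_i)$ toward $v$ if the $i$-th basic coordinate at $v$ is positive, and toward $v \oplus e_i$ otherwise. When $M$ is a P-matrix this orientation is known to be a \USO whose unique sink encodes the \PLCP solution. Promise-preservation then reduces to the observation that any $2$-face violating the unique-sink property exposes a $2 \times 2$ principal submatrix of $M$ with non-positive determinant, giving a short certificate that $M$ is not a P-matrix and hence a valid \PLCP violation. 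Composing this reduction with Theorem~\ref{thm:uso} places \PLCP in \UEOPLc via a promise-preserving reduction.

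For the second reduction, I would trace out the EOPL line directly from Lemke's complementary pivoting algorithm, bypassing the hypercube encoding. Vertices of the line represent augmented complementary bases along the Lemke path, edges correspond to individual pivot steps, and the potential is a lexicographic pivot record whose primary coordinate is tied to the value of the Lemke auxiliary variable and whose secondary coordinates record the driving and last-leaving indices together with the bit vector of currently basic variables. When $M$ is a P-matrix, Lemke's algorithm is known to trace a single path from the canonical artificial start to the unique LCP solution, so this yields an \EOPL instance with a unique line whose non-start endpoint is the LCP solution. The two \UEOPL violation types --- a second line-start, and two vertices whose potentials contradict their line ordering --- must then be translated back into P-matrix violations: a second line-start exhibits a secondary ray of Lemke's procedure, and any potential inversion exposes two complementary bases whose signed basic solutions together pin down a principal submatrix of $M$ with non-positive determinant.

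The main obstacle is the promise-preservation bookkeeping for the direct reduction. The potential function must simultaneously be (i) strictly monotone along every legal pivot step, (ii) polynomial-time computable without any a priori proof that $M$ is a P-matrix, and (iii) structured so that every \UEOPL violation witness can be effectively converted into an explicit certificate of non-P-ness. I expect to handle this as in prior line-following reductions by expanding each EOPL vertex to record not only the current basis but also an index into the pivot sub-procedure, so that any pair of potential-violating vertices jointly identifies two complementary bases whose basic solutions yield a witness principal minor. By contrast, the \USO-route reduction inherits its promise-preservation directly from Theorem~\ref{thm:uso} and is the cleaner of the two to write down in full detail, which is why both variants are kept in the statement.
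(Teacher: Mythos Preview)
Your high-level architecture matches the paper exactly: one reduction via \USO (composing with Theorem~\ref{thm:uso}) and one direct reduction based on Lemke's complementary pivoting path, with the auxiliary variable $z$ driving the potential. The paper's direct potential is in fact simpler than yours --- just a scaled $(z^0 - z)$, with degeneracy handled by a standard lexicographic perturbation of $q$ rather than a multi-coordinate lexicographic potential.

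The substantive gap is in the violation bookkeeping, and it is precisely the point where the theorem bifurcates into ``two different variants''. You propose to map \emph{every} \UEOPL violation back to a non-positive principal minor. The paper does not do this and, more importantly, argues that it is not clear one can: the two reductions are stated for \emph{different} \PLCP variants because they return different kinds of non-P-matrix certificates. In the \USO route, a \solnref{USV2} witness (two vertices with matching outmap on their spanned subcube) is returned verbatim as a \solnref{PV3} violation --- the Szab\'o--Welzl outmap condition --- not converted into a $2\times 2$ minor as you suggest. In the direct route, the \solnref{UV3} violation (two line vertices with equal or interleaved potential) yields two feasible complementary points of~\eqref{eq:c} with the same $z$-value, hence two solutions of the LCP $(M,q+z\mathbf{1})$; from these one extracts a \solnref{PV2} violation (a nonzero sign-reversed vector $x = y^* - y'$) via the classical identity $(y^*-y')_i(M(y^*-y'))_i \le 0$. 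The paper explicitly notes that it is not apparent how to convert between these violation types in polynomial time, which is why the two reductions target \PLCP-with-\solnref{PV3} and \PLCP-with-\solnref{PV2} respectively.

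So your plan would go through once you relax the insistence on principal-minor witnesses: in each reduction, take whatever short non-P-matrix certificate falls out naturally (\solnref{PV3} for the \USO route, \solnref{PV2} for the Lemke route) and declare \emph{that} to be the violation type of the corresponding \PLCP variant.
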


We actually provide two different promise-preserving reductions from P-LCP to
\UEOPL. The issue here is that there are many possible types of violation that
one can define for P-LCP. So far, the standard formulation of P-LCP either asks
for a solution, or a \emph{non-positive principle minor} of the input matrix.
A matrix is a P-matrix if and only if all of its principle minors are positive,
and so this is sufficient to define a total problem. 

The reduction from P-LCP to \EOL can map the start and end of each line back to
either a solution or a non-positive principle minor. Our problem is that the
extra violations in the \UEOPL instance, corresponding to a proof that there are
multiple lines, do not easily map back to non-positive principle minors. They do
however map back to other short certificates that the input matrix is not a
P-matrix. For example, a matrix is a P-matrix if and only if it does not reverse
the sign of any non-zero vector~\cite{cottle2009linear}. So we can also
formulate P-LCP as a total problem that asks for a solution, or a non-zero
vector whose sign is reversed. 

We study the following variants of P-LCP. The first variant asks
us to either find a solution, or find a non-positive principle minor, or find a
non-zero vector whose sign is reversed. The second variant asks us to either
find a solution, or a non-positive principle minor, or a third type of violation
whose definition is inspired by a violation of the USO property. In all cases,
we either solve the problem, or obtain a short certificate that the input was
not a P-matrix, although the format of these certificates can vary.

It is not clear whether these variants
are equivalent under polynomial-time reductions, as one would need to be able to
map one type of violation to the other efficiently.
We remark that if one is only interested in the promise problem, then the choice
of violations is irrelevant. Both of our reductions show that promise P-LCP lies
in \PUEOPLc.

\begin{theorem}[cf.\ Theorem~\ref{thm:lcm}]
Finding the fixpoint of a piecewise linear contraction map in the $\ell_p$ norm
is in \UEOPLc under promise-preserving reductions, for any $p \in \Natural \cup
\{\infty\}$.
\end{theorem}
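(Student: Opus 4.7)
The plan is to give a promise-preserving reduction from Piecewise-Linear $\ell_p$-Contraction to One-Permutation Discrete Contraction (OPDC), and then to invoke the containment of OPDC in $\UEOPLc$ shown elsewhere in the paper.

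Step one is to build an explicit hyperrectangular grid that discretizes $f$. Let $\mathcal{H}$ be the union of all axis-aligned hyperplanes that arise from the breakpoints of $f$, together with the axis-aligned hyperplanes $\{x_i = c\}$ obtained by projecting the intersections of the graph of each component $f_i$ with the diagonal $y_i = x_i$. Form a common refinement of $\mathcal{H}$; on each resulting closed cell $C$, $f$ is affine and each difference $f_i(x) - x_i$ has constant sign, so each cell carries a direction label $d_i(C) \in \{\up, \down, \zero\}$ for every coordinate $i$. A cell with the all-$\zero$ label contains a fixpoint of $f$, from which an exact fixpoint can be recovered by solving the linear system $f(x) = x$ on the corresponding affine piece.

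I would then package this grid with its direction labels as an OPDC instance. The crucial verification is the one-permutation contraction property: for the identity coordinate ordering, after fixing the first $k$ coordinates to values consistent with $\zero$-labels on those coordinates, the induced labelling on the remaining $d - k$ coordinates must point to a unique consistent slice cell. This is exactly where $\ell_p$-contraction enters. Coordinate-aligned projection $\pi : \R^d \to \R^{d-k}$ is $1$-Lipschitz in every $\ell_p$ norm (since it merely drops coordinates), and the coordinate-aligned inclusion $\iota : \R^{d-k} \to \R^d$ that pastes in the fixed prefix is an isometry onto the slice, so $g := \pi \circ f \circ \iota$ is a $c$-contraction on the slice. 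Banach's theorem then endows $g$ with a unique fixpoint, which witnesses the required uniqueness for OPDC. Violations of the OPDC structure translate, by construction, into either a pair of distinct fixpoints of $f$ or an explicit pair $(x,y)$ with $\|f(x) - f(y)\|_p > c \, \|x - y\|_p$, both of which certify failure of the contraction promise; this is what makes the reduction promise-preserving.

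The main obstacle is quantitative. The grid must be fine enough that the true fixpoint is isolated within a single affine cell, yet coarse enough that the total number of cells stays polynomial in the bit complexity of the input. For $p = \infty$ this follows from a direct coordinatewise bound using $c$, and for $p = 2$ from standard estimates. For general $p \in \Natural$ one must convert an $\ell_p$-contraction into per-coordinate monotone movement, losing at most a factor of $d^{1/p}$ when passing between $\ell_p$ and $\ell_\infty$ on the coordinate axes; this factor is absorbed into the bit-length analysis. Once this is in place, the composition of the reduction to OPDC with the OPDC-to-UEOPL reduction yields the desired promise-preserving containment in $\UEOPLc$ uniformly in $p$.
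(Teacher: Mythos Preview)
Your high-level plan (reduce to \OPDC, then invoke \OPDC $\in$ \UEOPLc) matches the paper, but the actual reduction you sketch has a genuine gap in the grid construction and in the treatment of violations.

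\textbf{The grid.} A \LinearFIXP circuit with $\max/\min$ gates does \emph{not} have axis-aligned breakpoints: each $\max(g_1,g_2)$ creates a break along the general hyperplane $\{g_1=g_2\}$, and the number of linear pieces can be exponential in the circuit size. So ``the union of axis-aligned hyperplanes arising from the breakpoints of $f$'' is not well-defined, and any common refinement of the true piece structure need not be polynomial. Likewise, the locus $\{x:f_i(x)=x_i\}$ is a union of pieces of general hyperplanes, not axis-aligned ones, so your claim that ``each difference $f_i(x)-x_i$ has constant sign on each cell'' does not follow from your construction. The paper avoids all of this: the direction function is evaluated \emph{pointwise} at grid vertices, and the only structural requirement is that every fixpoint of every $i$-slice actually lands on the grid. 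This is Lemma~\ref{lem:points2}, and its proof has nothing to do with norm comparisons or piece enumeration. It uses the translation of a \LinearFIXP circuit to an LCP $(M_C,\qq_C)$ (Lemma~\ref{lem:ruta}) and then bounds the bit-length of LCP solutions (Lemma~\ref{lem:lcpsize}). The grid widths $k_d,k_{d-1},\dots,k_1$ are chosen \emph{iteratively} so that fixing a coordinate to a value of bit-length $\le\kappa_j$ keeps the bit-length of the remaining LCP solution under control; this iterative dependence is precisely why only $i$-slices (one fixed permutation) are handled, and is the content behind the ``One-Permutation'' in \OPDC.

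\textbf{Violations.} You assert that every \OPDC violation yields either two fixpoints of $f$ or an explicit pair with $\|f(x)-f(y)\|_p>c\|x-y\|_p$. This is exactly what the paper cannot do for \solnref{OV2} violations: two directly adjacent $(i{-}1)$-slice fixpoints with $D_i$ pointing inwards certify (via Lemmas~\ref{lem:points2} and~\ref{lem:ov2violation}) that $f$ is not contracting, but they need not themselves violate the contraction inequality, and no polynomial procedure to extract such a pair is known. The paper handles this by adding a new solution type \solnref{CMV3} to the problem definition (Definition~\ref{def:LCM}), and \solnref{OV2} maps to \solnref{CMV3}. Your proposal needs either this extra violation type or an argument that \solnref{OV2} can be converted to an explicit contraction witness; the latter is left open.
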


For Contraction, we study contraction maps specified by \emph{piecewise linear}
functions that are contracting with respect to an $\ell_p$ norm. This differs the contraction problem studied
previously~\cite{daskalakis2011continuous}, where 
the function is given by an arbitrary
arithmetic circuit. To see why this is necessary, note that although every contraction map has
a unique fixpoint, if we allow the function to be specified by an arbitrary arithmetic
circuit, then there is no
guarantee that the fixpoint is rational. So it is not clear whether finding the
\emph{exact} fixpoint of a contraction map even lies in \FNP.

Prior work has avoided this issue by instead asking for an \emph{approximate}
fixpoint, and the problem of finding an approximate fixpoint of a contraction
map specified by an arithmetic circuit lies in
\CLS~\cite{daskalakis2011continuous}. However, if we look for approximate
fixpoints, then we destroy the uniqueness property that we are interested in,
because there are infinitely many approximate fixpoint solutions surrounding any
exact fixpoint.

So, we study the problem where the function is represented by a \LinearFIXP
arithmetic circuit~\cite{EtessamiY10}, which is a circuit in which the multiplication of two
variables is disallowed. This ensures that, when the function actually is
contracting, there is a unique rational fixpoint that we can produce. We note
that this is still an interesting class of contraction maps, since it is
powerful enough to represent simple-stochastic
games~\cite{EtessamiY10}.

We place this problem in \UEOPLc via a promise-preserving reduction. Our
reduction can produce multiple types of violation. In addition to the standard
violation of a pair of points at which~$f$ is not contracting, our reduction
sometimes produces a different type of violation (cf.\
Definition~\ref{def:LCM}), which while not giving an explicit violation of
contraction, still gives a short certificate that~$f$ is not contracting.

\begin{theorem}
The following problems are in \UEOPLc
\begin{itemize}
\itemsep1mm
\item Solving a parity game.
\item Solving a mean-payoff game.
\item Solving a discounted game.
\item Solving a simple-stochastic game.
\item Solving the ARRIVAL problem.
\end{itemize}
\end{theorem}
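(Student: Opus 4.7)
The plan is to obtain all four game results by composing well-known chains of polynomial-time reductions with our containment result for piecewise-linear contraction (Theorem~\ref{thm:lcm}), and to obtain ARRIVAL via a known reduction to USO together with Theorem~\ref{thm:uso}. For the games, I would use the classical cascade: parity games reduce to mean-payoff games via Puri's exponential vertex-weighting; mean-payoff games reduce to discounted games by choosing a discount factor $1-\gamma$ with $\gamma$ polynomially small, as in Zwick--Paterson; discounted games reduce to simple stochastic games by the standard Zwick--Paterson construction; and simple stochastic games reduce to the problem of computing the unique fixpoint of the associated Bellman operator. That operator is piecewise linear, monotone, and, after the standard discount/perturbation, a strict contraction in the $\ell_\infty$-norm, and it can be written as a \LinearFIXP circuit (as already noted in the excerpt). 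Feeding the resulting instance into Theorem~\ref{thm:lcm} with $p=\infty$ places each of parity, mean-payoff, discounted, and simple-stochastic games in $\UEOPLc$.

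For ARRIVAL, the plan is to invoke the reduction of G\"artner, Hansen, Hub\'a\v{c}ek, Kr\'al, Mosaad and Sl\'ivov\'a, which embeds ARRIVAL into USO in a promise-preserving way, and then compose with Theorem~\ref{thm:uso}. Because an ARRIVAL instance has a unique ``run'' and hence a unique termination vertex, the produced hypercube orientation actually satisfies the USO property, so the composition stays inside the non-promise class $\UEOPLc$ rather than merely its promise version.

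Concretely, I would proceed in the order: (i) record each textbook reduction and bound its blow-up to confirm it is polynomial-time; (ii) verify for each step that when the source instance has its promised unique value or unique termination vertex, the produced target instance also satisfies the target's promise, so that no spurious violation solutions are created; (iii) chain the reductions and apply Theorem~\ref{thm:lcm} or Theorem~\ref{thm:uso}. Since all of our containment reductions are promise-preserving, the chain preserves uniqueness end-to-end, and each game problem lands in $\UEOPLc$ via a promise-preserving reduction.

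The main obstacle is step (ii): checking that the standard game-to-game reductions are promise-preserving in our precise sense, and in particular that the polynomial-sized discount factor used when passing to discounted games (and then to SSGs) is small enough to make the Bellman operator a genuine $\ell_\infty$-contraction whose unique fixpoint encodes the optimal values and strategies of the original game. This is folklore but needs a careful bit-length argument so that the contraction factor remains bounded away from $1$ by an inverse polynomial. For ARRIVAL the analogous obstacle is simply verifying that the G\"artner et al.\ construction does land in the (non-promise) USO problem we reduced from, which follows directly from their analysis since an honest ARRIVAL instance produces an honest USO.
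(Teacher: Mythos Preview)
Your plan for the four infinite games is essentially the paper's own argument: chain the standard reductions parity $\to$ mean-payoff $\to$ discounted $\to$ SSG and then feed SSG into Theorem~\ref{thm:lcm} via the Bellman operator written as a \LinearFIXP circuit (the paper additionally notes that SSG $\to$ P-LCP gives a second, independent route). Your concern about verifying promise-preservation along the chain is unnecessary here: each of these four games is a genuine non-promise search problem with an unconditionally unique solution, so any polynomial-time reduction into a problem already shown to be in \UEOPLc suffices for containment.

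Your treatment of ARRIVAL, however, has a gap. The G\"artner--Hansen--Hub\'a\v{c}ek--Kr\'al--Mosaad--Sl\'ivov\'a paper does \emph{not} reduce ARRIVAL to USO; it places ARRIVAL directly in \EOPLc by constructing successor, predecessor, and potential circuits. The paper's argument for ARRIVAL $\in$ \UEOPLc is simply the observation that the \EOPL instance those authors build always contains exactly one line, so no \solnref{UV2} or \solnref{UV3} violations can exist and the very same instance is already a \UEOPL instance. Going through \USO as you propose would require a reduction that is not in the cited work and, to my knowledge, not in the literature; you should replace that step with the direct ``unique line'' observation about the G\"artner et al.\ construction.
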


Finally, we observe that our results prove that several other problems lie in
\UEOPLc. The simple-stochastic game (SSG) problem is known to reduce to
Contraction~\cite{EtessamiY10} and to
P-LCP~\cite{hansen2013complexity}, and thus our two results give two separate
proofs that the SSG problem lies in \UEOPLc. It is known that discounted games
can be reduced to SSGs~\cite{zwick1996complexity}, mean-payoff games can be
reduced to discounted games~\cite{zwick1996complexity},
and parity games can be reduced to mean-payoff games~\cite{puri1996theory}. So
all of these problems lie in \UEOPLc too. Finally, 
G{\"{a}}rtner et al.~\cite{GHHKMS18} noted that 
the ARRIVAL problem~\cite{DGKMW17} lies in \EOPLc, and in fact their \EOPL
instance always contains exactly one line, and so the problem also lies in 
\UniqueEOPLc.

We remark that none of these are promise-problems. Each of them
can be formulated so that they \emph{unconditionally} have a unique solution.
Hence, these problems seem to be easier than the problems captured by \UEOPLc,
since problems that are complete for \UEOPLc only have a unique solution
conditioned on the promise that there are no violations.

\paragraph{\bf A UEOPL-complete problem.}

In addition to our containment results, we also give a \UEOPLc-completeness
result. Specifically, we show that \emph{One-Permutation Discrete Contraction}
(OPDC) is complete for \UEOPLc.

OPDC is a problem that is inspired by both Contraction and USO. Intuitively, it 
is a discrete version of Contraction. The inputs to the
problem are (a concise representation of) a discrete grid of points $P$ covering
the space $[0, 1]^d$, and a set of \emph{direction} functions $\mathcal{D} =
D_{i = 1 \dots d}$, where each function $D_i$ has the form $D_i : P \rightarrow
\{\up, \down, \zero\}$. To discretize a map $f : [0, 1]^d \rightarrow [0, 1]^d$
we simply define $D_i$ so that 
\begin{itemize}
\item $D_i(p) = \up$ whenever $f(p)_i > p_i$,
\item $D_i(p) = \zero$ whenever $f(p)_i = p_i$, and
\item $D_i(p) = \down$ whenever $f(p)_i < p_i$.
\end{itemize} 
So the direction function for dimension $i$ simply points in the
direction that $f$ moves in dimension $i$.
To solve the problem, we seek a point $p \in P$ such that $D_i(p) = \zero$ for
all $i$, which corresponds to a fixpoint of $f$.

Why is the problem called \emph{One Permutation} Discrete Contraction? This is
due to some extra constraints that we place on the problem. An interesting
property of a function $f$ that is contracting with respect to an $\ell_p$ norm
is that if we restrict the function to a \emph{slice}, meaning that we fix some
of the coordinates and let others vary, then the resulting function is still
contracting. We encode this property into the OPDC problem, by insisting that
slices should also have unique fixpoints when we ignore the dimensions not in
the slice. However, we do not do this for all slices, but only for
\emph{$i$-slices} in
which the \emph{last $d - i + 1$} coordinates have been fixed. In this sense,
our definition depends on the order of the dimensions. If we rearranged the
dimensions into a different permutation, then we would obtain a different
problem, so each problem corresponds to some particular permutation of the
dimensions. The name of the problem was chosen to reflect this fact.

We remark that, although the problem was formulated as a discretization of
Contraction, it is also closely related to the USO problem. Specifically, if we
take the set of points $P$ to be the $\{0, 1\}^n$ hypercube, then the direction
functions actually specify an orientation of the cube. Moreover, the condition
that every slice should have have unique fixpoint \emph{exactly corresponds} to
the USO property that every face should have a unique sink. However, since we
only insist on this property for $i$-slices, OPDC can be viewed as a variant of
USO in which \emph{only some} of the faces have unique sinks.

\begin{theorem}[cf.\ Theorem~\ref{thm:opdc}]
OPDC lies in \UEOPLc under promise-preserving reductions.
\end{theorem}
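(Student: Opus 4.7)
The plan is to construct a \UEOPL instance whose line encodes a recursive slice-by-slice fixpoint search licensed by the unique-slice-fixpoint promise of \OPDC. Each line vertex is a pair $(p, i)$ where $p \in P$ is a grid point and $i$ records the current dimension of focus; intuitively, we view the algorithm as first driving $p$ to the unique fixpoint of the slice in which coordinates $i, i+1, \dots, d$ are held fixed at their values in $p$, and then using $D_i$ to step in dimension~$i$, which moves us into a new slice whose lower-level search must restart. The starting vertex is a canonical grid corner paired with level~$1$, and terminal vertices are exactly points with $D_j(p) = \zero$ for all~$j$.

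The successor at $(p, i)$ is defined as follows: if some $D_j(p) \neq \zero$ for $j < i$, let $j^\ast$ be the smallest such index, step $p$ by one grid unit in the direction given by $D_{j^\ast}(p)$, and move to level~$j^\ast$. Otherwise $p$ is a fixpoint of the free dimensions below~$i$; if $D_i(p) = \zero$ we ascend to $(p, i+1)$, and if $D_i(p) \in \{\up, \down\}$ we step $p$ by one grid unit in dimension~$i$ accordingly and reset to level~$1$. The predecessor is obtained by inverting these rules case by case, which is well-defined on each configuration provided the unique-fixpoint promise holds. The potential is lexicographic: a primary component records the deepest level $i'$ at which $p$ is presently a partial-slice fixpoint (so that the potential strictly decreases when we ascend in level), and a secondary component counts the remaining work in the current level-$1$ search. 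Every successor step strictly decreases the combined potential, so the instance is a valid \EOPL instance and the graph is acyclic.

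Under the \OPDC promise, the recursive procedure is deterministic and visits each state at most once, so the constructed graph is a single line from the start vertex to the unique global fixpoint, giving the promise-preserving part. Without the promise, any \UEOPL solution that is not the canonical endpoint is either a spurious end-of-line (with successor or predecessor undefined) or a pair of vertices witnessed to lie on different lines via the \UEOPL potential test. I would show that each such witness localises to some level~$i$ at which two distinct points both satisfy $D_j = \zero$ for all $j < i$, exhibiting two distinct fixpoints of the same $i$-slice and hence an explicit \OPDC violation. The main technical obstacle will be designing the lex potential and the level-reset rule so that every flavour of \UEOPL violation (in particular, two vertices of equal potential or with potentials inconsistent around a successor edge) can be efficiently unwound into such a pair of slice fixpoints; this bookkeeping is what ensures the reduction is truly promise-preserving rather than only correct in the promise setting.
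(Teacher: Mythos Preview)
Your proposal has a genuine gap: the predecessor circuit cannot be defined as you claim, even under the \OPDC promise. Consider a valid state $(p',1)$. By your successor rule, it can be reached from $(q,i)$ for \emph{any} $i$ such that $q$ is a dimension-$i$ neighbour of $p'$ with $D_j(q)=\zero$ for all $j<i$ and $D_i(q)$ pointing toward $p'$. Nothing in the \OPDC promise prevents several such $(q,i)$ from existing simultaneously: the promise says each $i$-slice has a unique fixpoint, but it places no constraint on whether a dimension-$1$ neighbour of $p'$ points toward $p'$ at the same time as a dimension-$2$ neighbour (which happens to be a $1$-slice fixpoint) also points toward $p'$. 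So your successor is not injective on valid states, and the ``case by case'' inversion fails. Any choice you make for $P$ will leave the unmatched states as spurious \solnref{U1} solutions that do not correspond to \OPDC solutions or violations, breaking both correctness and the promise-preserving property.

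This is precisely the obstacle the paper's proof is built around. The paper first enriches the vertex encoding to a tuple $(p_0,p_1,\dots,p_d)$ of \emph{witnesses}, where each $p_j$ records the most recent $j$-surface point visited; these witnesses are what force the line to be unique under the promise. Even so, the paper observes that overwriting a witness loses information irrecoverably, so a predecessor still cannot be given directly; instead the reduction goes through the intermediate problem \UFEOPL\ (forward-only), and then applies the pebbling-game reversibility technique of Bitansky et al.\ and Hub\'a\v{c}ek--Yogev to manufacture a predecessor circuit. Handling violations through the pebbling layer is itself nontrivial and is where much of the paper's work lies. Your outline skips both the witness encoding and the pebbling step, and the sketch of the potential (``deepest level at which $p$ is a partial-slice fixpoint'' plus ``remaining work in the current level-$1$ search'') is not computable from $(p,i)$ alone and does not change on an ascend step, so it would not serve as a strictly monotone potential either.
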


OPDC actually plays a central role in the paper. We reduce Piecewise-Linear
Contraction, USO, and P-LCP to it, and we then reduce OPDC to \UEOPL, as shown
in Figure~\ref{fig:reductions}. The reduction from OPDC to \UEOPL is by far the
most difficult step.

\begin{wrapfigure}{R}{0.5\textwidth}
\scalebox{0.8}{ 
\begin{tikzpicture}[node distance=1.7cm, font=\sffamily]
\tikzset{>={Latex[width=2mm,length=2mm]}}
\tikzstyle{cc}=[rectangle, draw, rounded corners, font=\Large]
\tikzstyle{problems}=[font=\large]
\tikzstyle{problem}=[font=\Large]

\node[problems] (ssg) {\textsf{Simple Stochastic Games}};
\node[problems,below of=ssg,yshift=4.5mm] (dpg) {\textsf{Discounted Payoff Games}};
\node[problems,below of=dpg,yshift=4.5mm] (mpg) {\textsf{Mean-payoff Games}};
\node[problems,below of=mpg,yshift=4.5mm] (parity) {\textsf{Parity Games}};

\node[problem,above left of=ssg, xshift=-1.6cm, yshift=1.25cm] (contraction) {\textsf{PL Contraction}};
\node[problem,above right of=ssg, xshift=1.6cm, yshift=1.25cm] (uso) {\textsf{Unique Sink Orientation}};
\node[problem,above right of=ssg, xshift=0.3cm, yshift=0.05cm] (plcp) {\textsf{P-matrix LCP}};
\node[problem,above of=ssg, yshift=2.5cm] (opdc) {\textsf{One-Permutation Discrete Contraction}};

\node[cc, above of=opdc] (ueopl) {$\mathsf{UniqueEOPL}$};
\node[cc, above of=ueopl] (eopl) {$\mathsf{EOPL}$};
\node[cc, above of=eopl] (cls) {$\mathsf{CLS}$};



\path[->, thick] (parity) edge (mpg);
\path[->, thick] (mpg) edge (dpg);
\path[->, thick] (dpg) edge (ssg);
\path[->, thick] (ssg) edge (contraction);
\path[->, thick] (ssg) edge (plcp);
\path[->, thick] (contraction) edge node[left,xshift=-2mm] {Theorem~\ref{thm:lcm}} (opdc);
\path[->, thick] (uso) edge node[right,xshift=2mm] {Theorem~\ref{thm:uso}} (opdc);
\path[->, thick] (plcp) edge node[right,xshift=1mm, yshift=-1mm] {Theorem~\ref{thm:plcp2uso}} (uso);
\path[<->, thick] (opdc) edge node[right] {Theorem~\ref{thm:opdc-completeness}} (ueopl);
\path[->, thick] (ueopl) edge node[right] {By definition} (eopl);
\path[->, thick] (eopl) edge node[right] {Corollary~\ref{cor:eoplinCLS}} (cls);


\end{tikzpicture}
}
\caption{The relationship of \UEOPLc to other classes and problems.}
\label{fig:reductions}
\end{wrapfigure}
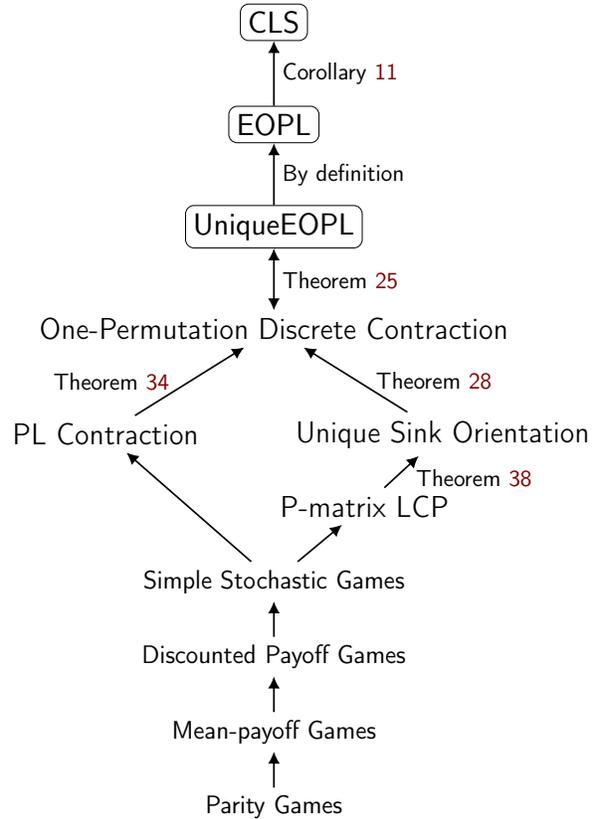

In the case where the promise is satisfied, meaning that the OPDC instance has a
unique fixpoint, the reduction
defines a single line that starts at the point $p \in P$ with $p_i = 0$ for all
$i$, and ends at the unique fixpoint. This line walks around the grid, following
the direction functions given by $\mathcal{D}$ in a specific manner that ensures
that it will find the fixpoint, while also decreasing a potential function at
each step. We need to very carefully define the vertices of
this line, to ensure that the line is unique, and for this we crucially rely on
the fact that every $i$-slice also has a unique fixpoint.

This does not get us all the way to \UEOPL though, because the line we describe
above lacks a \emph{predecessor} circuit. In \UEOPL, each vertex has a
predecessor, a successor and a potential, but the line we construct only gives
successors and potentials to each vertex. To resolve this, we apply the
\emph{pebbling game} reversibility argument introduced by 
Bitansky et al~\cite{BPR15}, and later improved by 
Hub{\'a}{\v{c}}ek and
Yogev~\cite{hubavcek2017hardness}.
Using this technique 
allows us to produce a predecessor
circuit, as long as there is exactly one line in the instance.

Our reduction also handles violations in the OPDC instance. The key challenge
here is that the pebbling game argument assumes that there is exactly one line,
and so far it has only been applied to promise-problems. We show that the
argument can be extended to work with non-promise problems that may have
multiple lines. This can cause the argument to break, specifically when multiple
lines are detected, but we are able to show that these can be mapped back to
violations in the OPDC instance.

\begin{theorem}[cf.\ Theorem~\ref{thm:opdc-completeness}]
\OPDC is \UEOPLc-complete under promise-preserving reductions, even when the set
of points $P$ is a hypercube.
\end{theorem}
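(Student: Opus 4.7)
The containment $\OPDC \in \UEOPLc$ was already established, so what remains is to prove \UEOPLc-hardness by giving a promise-preserving reduction from \UEOPL to \OPDC on a hypercube $P = \{0,1\}^d$. The guiding idea is to simulate the unique line of a \UEOPL instance by the canonical descent of a discrete ``contraction'' on $\{0,1\}^d$, so that the unique fixpoint of the \OPDC instance coincides with the end of the \UEOPL line, and the additional solution types of \UEOPL (pairs certifying two distinct lines) are exactly what give rise to \OPDC slice-violations.

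My plan is to take $d$ polynomially related to $n$ so that points of the \OPDC hypercube can be used to index potential \UEOPL vertices together with auxiliary bits that stage a hierarchical search. For each $p \in \{0,1\}^d$, the direction functions $D_1,\ldots,D_d$ are computed from the \UEOPL circuits for successor, predecessor and potential: when $p$ decodes to a vertex $v$ on the line, $D_i(p)$ compares the $i$-th bit of $v$ with the $i$-th bit of its successor (or of the end of the line, via the $\potf$ circuit), so that $D_i(p) = \zero$ for all $i$ precisely at the line endpoint; when $p$ decodes to a vertex off the line, the $D_i$ are routed toward the line using the potential as a guide, mimicking the OPDC-to-UEOPL construction but running in reverse.

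The principal obstacle is enforcing the \emph{one-permutation} slice property: every $i$-slice, obtained by fixing the last $d-i+1$ coordinates of the hypercube, must itself contain a unique fixpoint of $D_1,\ldots,D_{i-1}$. Nothing in the \UEOPL structure provides such a hierarchy directly, so I would impose one by interpreting the coordinates in a canonical order that reflects a nested binary search on the potential of the endpoint: the highest-order coordinate pins down the most significant bit of the endpoint's potential, the next coordinate pins down the next bit, and so on, so that an $i$-slice corresponds to constraining the endpoint's potential to lie in a specific subrange. Within each such subrange, uniqueness follows from the \UEOPL promise that the line (and hence its endpoint) is unique. Making this precise, while keeping the direction functions computable by polynomial-size circuits, is the technical core of the reduction and where I expect the main difficulty.

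To finish, I would show that every solution type of \OPDC is traceable back to a \UEOPL solution in a promise-preserving way. An \OPDC fixpoint decodes to the unique \UEOPL line end. A pair of fixpoints within some $i$-slice, or a pair of points whose direction assignments contradict local contractivity, decodes to a pair of \UEOPL vertices that either share a potential without lying on a single line, or have inconsistent successor/predecessor pointers; both are among the \UEOPL violation types. Hence whenever the \UEOPL instance satisfies its promise, so does the resulting \OPDC instance, yielding \UEOPLc-completeness of \OPDC on hypercubes.
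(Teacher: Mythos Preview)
Your high-level strategy---organise the hypercube coordinates hierarchically so that fixing the top coordinates amounts to a binary search on the potential of the line's endpoint---is exactly the idea the paper uses. But the concrete construction you sketch has a real gap.

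The direction rule ``$D_i(p)$ compares the $i$-th bit of $v$ with the $i$-th bit of its successor'' does not give unique fixpoints in $i$-slices: many vertices $v$ on the line may agree with $S(v)$ in their first $i$ bits, so lots of $i$-slices would carry multiple fixpoints (or none), and you would manufacture \solnref{OV1}/\solnref{OV2} violations out of thin air even on a perfectly promised instance. Your alternative phrasing, comparing with ``the end of the line,'' presupposes the very thing you are trying to find, so it cannot be implemented by a polynomial-size circuit. The proposal never reconciles these two half-ideas into a single, locally computable rule with the required slice-uniqueness.

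The paper threads this needle as follows. First it preprocesses the \UEOPL instance so that potentials increase by exactly one along edges and every line has length exactly $2^n$; you omit this, but without it the recursive halving below is ill-defined. Then it takes the hypercube to have $m\cdot n$ dimensions, viewed as a tuple $(v_1,\dots,v_n)$ of $m$-bit strings. The last block $v_n$ is tested \emph{locally}: if $V(v_n)=2^{n-1}$ you are in the second half of the line, otherwise the first. Recursively, $v_{n-1}$ picks a half of whichever half you landed in, and so on. Crucially, the direction functions on the block for $v_i$ are set to $\zero$ once $v_i$ is the correct ``midpoint'' vertex, and otherwise they point toward $S(\decode(p))$ \emph{only when $\decode(p)$ sits at the boundary of the current first half} (the rest of the time they default toward $0$). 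This is locally computable via the $\subline$ and $\decode$ functions, and it makes each $i$-slice have a unique fixpoint because fixing the top blocks pins down a specific sub-line whose unique endpoint determines the lower blocks. Mapping \solnref{OV1} and \solnref{OV2} violations back to \solnref{UV3} pairs then works because two disagreeing fixpoints in the same $i$-slice decode to two distinct vertices with the same potential in the appropriate sub-line.

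So your outline is on the right track, but to make it a proof you need the $2^n$-length normalisation, the block-tuple encoding, and a direction rule that only activates at half-boundaries---none of which are present in your sketch.
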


We show that OPDC is \UEOPLc-hard by giving a polynomial-time
promise-preserving reduction from \UEOPL to OPDC. This means that OPDC is
\UEOPLc-complete, and the variant of OPDC in which it is promised that there are
no violations is \PUEOPLc-complete.

Our reduction produces an \OPDC instances where the set of points $P$ is the
boolean hypercube $\{0, 1\}^n$. In the case where the \UEOPL instance has no
violations, meaning that it contains a single line, the reduction embeds this
line into the hypercube. To do this, it splits the line in half. The second half
is embedded into a particular sub-cube, while the first half is embedded into
all other sub-cube. This process is recursive, so each half of the line is again
split in half, and further embedded into sub-cubes. The reduction ensures that
the only fixpoint of the instance corresponds to the end of the line. If the
\UEOPL instance does have violations, then this embedding may fail. However, in
any instance where the embedding fails, we are able to produce a violation for
the original \UEOPL instance. 

We remark that this hardness reduction makes significant progress towards
showing a hardness result for Contraction and USO. As we have mentioned, OPDC
is a discrete variant of Contraction, and when the set of points is a hypercube,
the problem is also very similar to USO. The key difference is that OPDC 
insists that only $i$-slices should have a unique fixpoint, whereas Contraction
and USO insist that \emph{all} slices should have unique fixpoints. To show
a hardness result for either of those two problems, one would need to produce an
OPDC instance with that property.

\paragraph{\bf New algorithms.}

Our final contributions are algorithmic and arise from the structural
insights provided by our containment results. 
Using the ideas from our reduction from Piecewise-Linear Contraction to \UEOPL, 
we obtain the first
polynomial-time algorithms for finding fixpoints of 
Piecewise-Linear Contraction maps in fixed dimension for any $\ell_p$
norm, where previously such algorithms were only known for $\ell_2$ and
$\ell_\infty$ norms. If this input is not a contraction map, then our algorithm
may instead produce a short certificate that the function is not contracting.

We also show that these results can be extended to the case where the
contraction map is given by a general arithmetic circuit. In this case, we
provide polynomial-time algorithm that either finds an approximate fixpoint, or
produces a short certificate that the function is not contracting.

An interesting consequence of our algorithms is that it is now unlikely that
$\ell_p$-norm Contraction in fixed-dimension is \CLS-complete. This should be
contrasted to the recent result of Daskalakis et al.~\cite{DTZ17}, who showed
the variant of the contraction problem where a metric is given as part of the
input is \CLS-complete, even in dimension 3. Our result implies that it is
unlikely that this can be directly extended to $\ell_p$ norms, at
least not without drastically increasing the number of dimensions in the
instance.

Finally, as noted in~\cite{GHHKMS18}, one of our reductions from \PLCP to \EOPL
allows a technique of Aldous~\cite{Aldous83} to be applied, which in turn gives the fastest
known randomized algorithm for~\PLCP.

\subsection{Related work}

\paragraph{\bf CLS.}

Recent work by Hub\'{a}\v{c}ek and Yogev~\cite{hubavcek2017hardness} proved 
lower bounds for \CLS. They introduced
a problem known as \EOML which they showed was in \CLS, and for which they
proved a query complexity lower bound of $\Omega(2^{n/2}/\sqrt{n})$ and
hardness under the assumption that there were one-way permutations and
indistinguishability obfuscators for problems in $\cc{P_{/poly}}$.
Another recent result showed that the search version of the Colorful
Carath\'eodory Theorem is in $\cc{PPAD} \cap \cc{PLS}$, and left open
whether the problem is also in $\CLS$~\cite{colorfulcara2017}.

Until recently, it was not known whether there was a natural \CLS-complete problem.
In their original paper, Daskalakis and Papadimitriou suggested two natural
candidates for \CLS-complete problems, \ContractionMap and \PLCP, which we 
study in this paper.
Recently, two variants of \ContractionMap have been shown to be \CLS-complete.
Whereas in the original definition of \ContractionMap it is assumed that 
an $\ell_p$ or $\ell_\infty$ norm is fixed, and the contraction property 
is measured with respect to the metric induced by this fixed norm, 
in these two new complete variants, a metric~\cite{DTZ17} and
meta-metric~\cite{FGMS17} 
are given as input to the problem\footnote{
The result for meta-metrics appeared in an unpublished earlier version of 
this paper~\cite{FGMS17} that contained only a small fraction 
of the results from the current version. 
This result for meta-metrics is dropped from this current version given
that it has been superseded, and is not important for our main message.
}. 

\paragraph{\bf P-LCP.}

Papadimitriou showed that \PLCP, the problem of solving the LCP or returning a
violation of the P-matrix property, is in
\PPAD~\cite{papadimitriou1994complexity} using Lemke's algorithm.
The relationship between Lemke's algorithm and \PPAD has been studied
by Adler and Verma~\cite{AV11}.
Later, Daskalakis and Papadimitrou showed that \PLCP is in
\CLS~\cite{daskalakis2011continuous}, using the potential reduction method
in~\cite{kojima1992interior}.  
Many algorithms for \PLCP have been studied, e.g.,~\cite{murty1978computational,morris2002randomized,kojima1991unified}.
However,
no polynomial-time algorithms are known for \PLCP, or for the promise
version where one can assume that the input matrix is a P-matrix.

The best known algorithms for \PLCP are based on
a reduction to Unique Sink Orientations (USOs) of cubes~\cite{StickneyW78}.
For an P-matrix LCP of size $n$, the USO algorithms of~\cite{SzaboW01} apply, and 
give a deterministic 
algorithm that runs in time $O(1.61^n)$ and a randomized algorithm with
expected running time $O(1.43^n)$.
The application of Aldous' algorithm~\cite{Aldous83} to the \UEOPL instance that we produce from
a P-matrix LCP takes expected time $2^{n/2}\cdot \poly(n) = O(1.4143^n)$ in the worst case.

\paragraph{\bf Unique Sink Orientations.}
In this paper we study USOs of cubes, a problem that was first studied 
by Stickney and Watson~\cite{StickneyW78} in the context of P-matrix LCPs.
A USO arising from a P-matrix may be cyclic.
Motivating by Linear Programming, \emph{acylic}, AUSOs have also
been studied, both for cubes and general polytopes~\cite{Hoke88,GartnerS06}.
Recently G\"artner and Thomas studied the computational 
complexity of recognizing USOs and AUSOs~\cite{GartnerT15}.
They found that the problem is \coNP-complete for USOs and \PSPACE-complete for AUSOs.
A series of papers provide upper and lower bounds for specific algorithms 
for solving (A)USOs, including~\cite{SzaboW01,HansenPZ14,fearnley2016complexity,MatousekS04,SchurrS05,FriedmannHZ11,Thomas17}.
An AUSOs on a $n$-dimensional cube can be solved in subexponential time,
by the RANDOM-FACET algorithm, which is essentially tight for this algorithm~\cite{Gartner02}.
An almost quadratic lower bound on the number of vertex evaluations needed
to solve a general USO is known~\cite{SchurrS04}; unlike for AUSOs, the best
running times known for general USOs, as for P-matrix LCPs, are exponential.
To be best of our knowledge, we are first to study the general problem
of solving a USO from a complexity-theoretic point of view.

\paragraph{\bf Contraction.}

The problem of computing a fixpoint of a continuous map $f:\mathcal{D}\mapsto \mathcal{D}$
with Lipschitz constant
$c$ has been extensively studied, in both continuous and discrete variants~\cite{ChenD09,ChenD08,DengQSZ11}.
For arbitrary maps with $c>1$, exponential bounds on the query
complexity of computing fixpoints are known~\cite{HirschPV89,ChenD05}.
In \cite{BoonyasiriwatSX07,HuangKS99,Sik09}, algorithms for computing fixpoints 
for specialized maps such as weakly ($c=1$) or strictly ($c < 1$) contracting maps are studied.
For both cases, algorithms are known for the case of $\ell_2$ and $\ell_\infty$ norms, 
both for absolute approximation ($||x-x^*||\le \epsilon$ where $x^*$ is
an exact fixpoint) and relative approximation ($||x-f(x)||\le \epsilon$). 
A number of algorithms are known for
the $\ell_2$ norm handling both types of approximation \cite{NemYud83,HuangKhachSik99,Sik01}. 
There is an exponential lower bound for absolute approximation with $c=1$ \cite{Sik01}.
For relative approximation and a domain of dimension $d$, an
$O(d\cdot\log{1/\epsilon})$ time algorithm is known \cite{HuangKhachSik99}. 
For absolute approximation with $c<1$, an ellipsoid-based algorithm with time complexity
$O(d \cdot [\log(1/\epsilon) + \log(1/(1-c))])$ is known \cite{HuangKhachSik99}. 
For the $\ell_\infty$ norm, \cite{ShellSik03} gave an algorithm to find an
$\epsilon$-relative approximation in time $O(\log(1/\epsilon)^d)$ which is
polynomial for constant $d$. In summary,
for the 
$\ell_2$ norm  polynomial-time algorithms are known for strictly contracting maps; for the 
$\ell_\infty$ norm algorithms that are polynomial time for constant dimension are known.
For arbitrary $\ell_p$ norms, to the best of our knowledge, no polynomial-time
algorithms for constant dimension were known before this paper.

\paragraph{\bf Infinite games.}

Simple Stochastic Games are related to Parity games, which are an extensively studied class of two-player zero-sum
infinite games that capture important problems in formal verification and logic~\cite{EmersonJ91}.
There is a sequence of polynomial-time reductions from parity games 
to mean-payoff games to discounted games to simple stochastic 
games~\cite{puri1996theory,gartner2005simple,jurdzinski2008simple,zwick1996complexity,hansen2013complexity}.
The complexity of solving these problems is unresolved and has received 
much attention over many years~(see, for example, 
\cite{zwick1996complexity,condon1992complexity,fearnley2010linear,jurdzinski1998deciding,bjorklund2004combinatorial,fearnley2016complexity}).
In a recent breakthrough~\cite{parity}, a quasi-polynomial time algorithm for parity games
have been devised, and there are now several algorithms with this running time~\cite{parity,JL17,FJS0W17}.
For mean-payoff, discounted, and simple stochastic games, the best-known 
algorithms run in randomized subexponential
time~\cite{ludwig1995subexponential}. The existence of a polynomial time
algorithm for solving any of these games would be a major breakthrough. Simple
stochastic games can also be reduced in polynomial time to Piecewise-Linear
Contraction with the $\ell_\infty$ norm~\cite{EtessamiY10}.


\subsection{Future directions}

A clear direction for future work is to show that further problems are
\UEOPLc-complete. We have several conjectures.

\begin{conjecture}
USO is hard for \UniqueEOPLc.
\end{conjecture}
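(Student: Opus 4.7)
The plan is to upgrade the \UEOPLc-hardness proof for \OPDC into a hardness proof for USO. We already have a promise-preserving reduction from \UEOPL to \OPDC whose output lives on the hypercube $\{0,1\}^n$, so the output is already an \emph{orientation} of a hypercube, and the task reduces to modifying the embedding so that the orientation is a \emph{unique sink} orientation — that is, so that every face, not merely every $i$-slice, has a unique sink.

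Concretely, I would first re-examine the recursive embedding used in the \OPDC hardness reduction. Recall that this embedding splits the \UEOPL line in half, places the second half into one designated sub-cube, and places the first half across the remaining sub-cubes, with the directions $D_i$ chosen to point toward the fixpoint only in the ``last'' coordinates. The second step is to redesign the embedding so that each face acquires a unique sink, perhaps by routing the embedded line along a Gray-code-like Hamiltonian path in each sub-cube: then, on any face, the induced orientation would correspond to the restriction of an embedded line to a sub-cube, in which the dimensions outside the face are constant and the dimensions inside the face can be analyzed inductively as a smaller embedded instance. The third step is to verify that this strengthened embedding still preserves the potential function and that the terminal vertex of the embedded line remains the unique global sink. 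Finally, we must handle violations: when the \UEOPL instance contains multiple lines, the embedding will create faces with multiple sinks or no sink at all, and we must show how to extract from any such face either a \UEOPL violation or a short certificate that our map is not a valid USO (so that the reduction is promise-preserving).

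The hard part is forcing the USO condition on \emph{every} face simultaneously. In the \OPDC reduction there is significant freedom in defining $D_i$ at points off the embedded line — any choice suffices as long as the $i$-slice condition is met — but the USO condition on an arbitrary face intersects embedded and unembedded portions of the line in intricate ways, and a naive local choice of directions will almost certainly create a face with two sinks or none. One promising angle is to exploit the fact that every face of the hypercube is itself a lower-dimensional hypercube, and to make the embedding recursively self-similar, so that on every face the restricted orientation is again an embedded \UEOPL line on a sub-instance whose sink is uniquely determined.

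An alternative approach is to avoid \OPDC altogether and give a direct reduction from \UEOPL to USO, encoding the predecessor, successor, and potential circuits of \UEOPL into the coordinates of a hypercube orientation using the rich combinatorial structure of USOs (e.g.\ the Holt--Klee condition and the characterization of USOs via outmap functions). In either approach, the principal challenge — and the reason this is stated only as a conjecture — is ruling out spurious sinks on those faces whose dimensions cut across the hierarchical splitting of the \UEOPL line, since every such face provides an independent combinatorial constraint that a single global embedding must satisfy simultaneously.
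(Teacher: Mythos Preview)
This statement is a \emph{conjecture} in the paper, not a theorem; the paper provides no proof. It appears in the ``Future directions'' section, where the authors explain why they believe USO should be \UEOPLc-hard and identify the obstacle: the \OPDC hardness reduction already lands on a hypercube, but \OPDC only demands unique sinks on $i$-slices, whereas USO demands unique sinks on \emph{every} face.

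Your proposal is therefore not comparable to a paper proof, because there is none. What you have written is a research outline, and you are candid about this (you yourself call it ``the reason this is stated only as a conjecture''). Your diagnosis of the central difficulty --- forcing the unique-sink property simultaneously on all faces, including those that cut across the recursive splitting --- exactly matches the paper's own discussion. The two concrete strategies you suggest (a Gray-code-like self-similar embedding, or a direct encoding via outmap characterizations) are plausible starting points, but neither is carried far enough to constitute a proof, and you do not claim otherwise.

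In short: there is no gap to name because there is no claimed proof; your write-up is an honest sketch of how one might attack an open problem, and it is well-aligned with the paper's stated intuition for why the conjecture should hold.
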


We think that, among our three motivating problems, USO is the most likely to be
\UEOPLc-complete. Our hardness proof for OPDC already goes some way towards
proving this, since we showed that OPDC was hard even when the set of points is
a hypercube. The key difference between OPDC on a hypercube and USO is that OPDC
only requires that the faces corresponding to $i$-slices should have unique
sinks, while USO requires that all faces should have unique sinks. 

\begin{conjecture}
Piecewise-Linear Contraction in an $\ell_p$ norm is hard for \UniqueEOPLc.
\end{conjecture}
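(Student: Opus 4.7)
The plan is to reduce from the \UEOPLc-complete problem OPDC on a hypercube (Theorem~\ref{thm:opdc-completeness}) to Piecewise-Linear $\ell_p$-Contraction. Given an OPDC instance on $\{0,1\}^d$ with direction functions $D_1,\dots,D_d$, I would build a piecewise-linear function $f : [0,1]^D \to [0,1]^D$, with $D$ polynomial in $d$, that is contracting in the chosen $\ell_p$ norm and whose unique fixpoint encodes the OPDC fixpoint. The reduction would proceed in two stages: first, strengthen the OPDC-hardness construction so that the produced instance satisfies uniqueness on \emph{every} slice of the cube, not only on $i$-slices; second, build a piecewise-linear contraction that realises the resulting direction functions.

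The first stage is essentially forced on us. Any $\ell_p$-contracting map induces an OPDC-like instance in which every slice has a unique fixpoint, since restricting a contraction to a slice yields a contraction on that slice, to which Banach applies. So for an OPDC instance to be realisable by a contraction, it must have the stronger all-slice uniqueness property. I would attempt this by revisiting the recursive embedding of a \UEOPL line into the hypercube used in the proof of Theorem~\ref{thm:opdc-completeness}, modifying it so that any non-$i$-slice containing part of the line has a unique sink by construction, while every failure of the all-slice property is efficiently mappable back to a \UEOPL violation (in the sense of Definition~\ref{def:UEOPL}).

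In the second stage I would subdivide each unit cell of the hypercube into a fine grid and define $f$ piecewise linearly so that on each subcell $f(x) - x$ points in the direction $D_i$ prescribed at the nearest corner, with magnitudes tapered near cell boundaries so that $f$ remains continuous and its Jacobian has operator norm at most some fixed $c < 1$ in the $\ell_p$ norm. If this can be arranged, $f$ is automatically $\ell_p$-contracting, and any fixpoint must lie at a grid vertex with $D_i = \zero$ for all $i$, giving the OPDC fixpoint. Violations in the contraction instance, either the usual $(x,y)$ with $\|f(x)-f(y)\|_p > c\|x-y\|_p$ or the auxiliary certificates in Definition~\ref{def:LCM}, would be converted to all-slice OPDC violations using the standard Contraction-to-OPDC reduction, and then pushed back to \UEOPL violations by the strengthened first-stage reduction.

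The main obstacle will be the second stage for $p < \infty$. Because the OPDC direction at a vertex depends on the whole vertex, $f$ cannot be a product of one-dimensional maps, and piecewise-linear interpolation will in general introduce off-diagonal Jacobian entries that increase the $\ell_p$ operator norm. For $\ell_\infty$ this is harmless, as only per-coordinate sensitivity matters and the construction aligns naturally with OPDC's one-permutation structure; but for $\ell_p$ with $p < \infty$, and especially $p \in \{1,2\}$, these off-diagonal terms can push the norm above $c$. Plausibly a sufficiently fine refinement combined with a smoothly tapered interpolation kernel, whose shape depends on $p$ and $d$, can keep the Jacobian norm uniformly bounded; but doing so while keeping the blowup polynomial, and while ensuring the reduction remains promise-preserving, is likely to be the technical crux of any proof.
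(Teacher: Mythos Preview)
This statement is a \emph{conjecture} in the paper, not a theorem: the paper gives no proof and explicitly lists it as an open problem in the future-directions section. So there is no ``paper's own proof'' to compare against. What the paper does provide is a brief discussion of why the conjecture seems plausible and what the main obstacles are, and your proposal aligns well with that discussion: the paper identifies the same two barriers you do, namely (i) upgrading the OPDC-hardness construction from $i$-slice uniqueness to all-slice uniqueness, and (ii) converting the discrete OPDC instance into a continuous piecewise-linear map while preserving the contraction property.

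Your proposal is therefore a reasonable research plan, but it is not a proof, and you acknowledge this yourself. Both stages contain genuine gaps. For stage one, the current \UEOPL-to-OPDC reduction can produce instances where non-$i$-slices have multiple sinks even when the \UEOPL instance is promise-satisfying, so you would need a substantially different embedding, not just a local patch. For stage two, your own analysis is correct that for $p<\infty$ the off-diagonal Jacobian entries introduced by interpolation are the crux; the paper notes that discrete-to-continuous reductions are standard in the \PPAD setting but that maintaining contraction under an $\ell_p$ norm is the new difficulty, and no technique for doing this is currently known. Until both of these are resolved, what you have is a credible outline of an attack on an open problem, not a proof of the conjecture.
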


Our OPDC hardness result also goes some way towards showing that Piecewise-Linear 
Contraction is hard, however there are more barriers to overcome here. In
addition to the $i$-slice vs.\ all slice issue, we would also need to convert
the discrete OPDC problem to the continuous contraction problem. Converting
discrete problems to continuous fixpoint problems has been well-studied in the
context of \PPAD-hardness reductions~\cite{daskalakis2009complexity,mehta}, but
here the additional challenge is to carry out such a reduction while maintaining
the contraction property.

Aside from hardness, we also think that the relationship between Contraction and
USO should be explored further. Our formulation of the OPDC problem exposes
significant similarities between the two problems, which until this point have
not been recognised. Can we reduce USO to Contraction in polynomial time?

\begin{conjecture}
P-LCP is hard for \UniqueEOPLc.
\end{conjecture}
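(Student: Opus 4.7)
My plan is to attempt a direct promise-preserving reduction from the \UEOPLc-complete problem \OPDC{} restricted to the Boolean hypercube (as produced by the hardness reduction of Theorem~\ref{thm:opdc-completeness}) to \PLCP. The natural target is the Stickney--Watson correspondence between P-matrix LCPs and cube USOs: given an $n\times n$ matrix $M$ and $q\in\Real^n$, the LCP induces an orientation of $\{0,1\}^n$ in which the edge at vertex $v$ in direction $i$ points out of $v$ exactly when the $i$-th component of a certain piecewise-linear residual is positive. If $M$ is a P-matrix this orientation is a USO and the unique sink corresponds to the unique LCP solution. I would therefore attempt to construct, from an \OPDC{} instance $(\mathcal{D}, V)$ on $\{0,1\}^n$, a pair $(M, q)$ whose Stickney--Watson orientation at each vertex $v$ agrees with the direction tuple $(D_1(v),\dots,D_n(v))$ on the $i$-slices that \OPDC{} constrains.

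The first step is to design local gadget matrices: for each coordinate $i$, choose rows of $M$ whose sign pattern against the indicator of $v$ reproduces the desired up/down behaviour of $D_i$ on the $i$-slice containing $v$. Since the \OPDC{} promise only requires unique fixpoints on $i$-slices (not on arbitrary sub-cubes), there is genuine slack: I can leave the orientation of the remaining faces unspecified and use this freedom to enforce algebraic constraints on $M$. The second step is to show that the matrix $M$ built from these gadgets is in fact a P-matrix whenever the \OPDC{} promise holds, by expressing each principal minor in terms of the unique-sink property of the corresponding $i$-slice and invoking the Gale--Nikaid\^o-type criteria for P-matrices. The third step is to map \PLCP violations back to \OPDC{} violations: a sign-reversed vector witnessing non-P-ness should, by inspecting which coordinates it flips, localise to a slice on which \OPDC{} has more than one fixpoint, giving a valid violation type for the \UEOPL{} instance that produced the \OPDC{} instance.

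The main obstacle is that the class of cube USOs realisable as Stickney--Watson USOs of P-matrix LCPs is strictly smaller than the class of all cube USOs: it is known to be contained in the Holt--Klee ``pseudo-acyclic'' class, and characterising realisability exactly is a notorious open problem. An arbitrary \OPDC{}-on-hypercube instance produced by Theorem~\ref{thm:opdc-completeness} need not respect these extra structural conditions, so the gadget construction above will in general fail to yield a P-matrix even when the \OPDC{} promise is met. The crux of the proof is therefore to modify the \UEOPL-to-\OPDC{} hardness reduction so that its output not only satisfies the \OPDC{} axioms but additionally satisfies whatever combinatorial conditions are needed for Stickney--Watson realisability along the recursive half-cube embedding of the line.

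If this direct attempt is blocked by an essential obstruction (for instance, if the embedded line forces non-Holt--Klee behaviour on some face), I would fall back to an indirect route: show that a structured sub-problem of \UEOPL{}, such as the canonical \EOPL{} instances arising from simple stochastic or discounted games (which are already known to reduce to \PLCP{} via Gärtner--R\"ust and Jurdzi\'nski--Savani), is already \UEOPLc-hard. This reduces the conjecture to proving \UEOPLc-hardness of one of these game problems, which aligns with the USO hardness conjecture and seems the most plausible unified attack on all three conjectures simultaneously.
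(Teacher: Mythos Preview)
The statement you are addressing is a \emph{conjecture}, not a theorem: the paper does not prove it and explicitly lists it as open. So there is no ``paper's own proof'' to compare against; the paper only offers a couple of sentences of speculation. It remarks that since \PLCP reduces to \USO, hardness of \USO should be settled first, and it floats a different route from yours --- starting from \LCM, whose \UEOPLc containment already goes through an LCP formulation, and asking whether that LCP can be massaged into a P-matrix LCP.

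Your proposal is a reasonable research plan, and you have correctly identified the crux: the Stickney--Watson correspondence only realises a strict subclass of cube USOs (contained in the Holt--Klee class), and the \OPDC-on-hypercube instances produced by Theorem~\ref{thm:opdc-completeness} do not even satisfy the full USO property, only the $i$-slice unique-sink property. Bridging that gap is exactly the open problem; nothing in your sketch closes it. Your ``first step'' of choosing rows of $M$ so that the sign pattern of the residual at each vertex $v$ matches $(D_1(v),\dots,D_n(v))$ is not a local construction: the residual at $v$ depends on $A_{\alpha(v)}^{-1}q$, which couples all rows of $M$ globally, so you cannot assemble $M$ coordinate-by-coordinate from gadgets in the way you describe.

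Your fallback route is also blocked, and here the paper is explicit: in the same ``Future directions'' section it says the authors do \emph{not} believe that simple stochastic games (or discounted, mean-payoff, parity games) are \UEOPLc-complete, precisely because those problems have unconditionally unique solutions and thus appear strictly easier than \UEOPL, which only has unique solutions under a promise. So reducing the conjecture to \UEOPLc-hardness of one of those game problems is reducing it to something the paper conjectures to be false.
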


Of all of our conjectures, this will be the most difficult to prove. Since P-LCP reduces to USO, the
hardness of USO should be resolved before we attempt to show that P-LCP is hard.
One possible avenue towards showing the hardness of P-LCP might be to
reduce from Piecewise-Linear Contraction. Our \UEOPLc containment proof for 
Piecewise-Linear Contraction makes explicit use of the fact that the problem can
be formulated as an LCP, although in that case the resulting matrix is not a
P-matrix. Can we modify the reduction to produce a P-matrix?

\begin{conjecture}
$\UEOPLc \subset \EOPLc = \CLS$.
\end{conjecture}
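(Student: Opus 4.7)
The plan is to attack the conjecture in two independent pieces: the equality $\EOPLc = \CLS$ (we already have $\EOPLc \subseteq \CLS$ from Corollary~\ref{cor:eoplinCLS}), and the strict separation $\UEOPLc \subsetneq \EOPLc$. I would pursue each direction with very different tools, since one is a containment result and the other is an (almost certainly unconditional-resistant) separation.

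For the equality direction the goal is to reduce a canonical $\CLS$ problem, for instance \CLO\ itself or \EOML, to \EOPL. The strategy is to simulate, on a single combinatorial object, the two existing reductions that witness $\CLS \subseteq \PPAD \cap \PLS$: the \PLS-style reduction already produces a DAG with a potential that strictly decreases along edges, and the \PPAD-style reduction converts the continuous improvement direction into an \EOL-style successor circuit. Concretely, I would discretize the domain on a grid fine enough that each cell has a unique successor determined by the improvement direction, and define the edge potential as the discretized continuous potential with a lexicographic-plus-depth tie-breaker $-k \cdot \text{(refinement level)}$ added to kill plateaus. Verifying that the resulting graph is actually a line (no branching, no cycles) should reduce to the non-degeneracy of the improvement direction after a standard perturbation.

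For the strict containment, since an unconditional separation would imply separations of long-standing classes, the realistic target is an oracle separation. The template is the query-complexity machinery used by Hub\'a\v{c}ek and Yogev~\cite{hubavcek2017hardness} for \EOML and the pebbling reversibility argument of Bitansky, Paneth and Rosen~\cite{BPR15}. One would design a relativized \EOPL instance in which many lines coexist, no polynomial-query algorithm can produce a pair of points that the \UEOPL violations recognise as being on different lines, and yet a line end is findable by an \EOPL algorithm because the promised start vertex is revealed. Intuitively, the extra ``same-potential'' and ``intermediate-potential'' violations of \UEOPL require distinguishing lines in a way an oracle can obstruct, while the pure line-following structure of \EOPL remains solvable.

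The main obstacle will be the $\CLS \subseteq \EOPLc$ direction. In a \CLS\ instance the continuous potential is only guaranteed to decrease along the Brouwer-style line in an amortized sense: locally it can fluctuate, and the \PLS and \PPAD certificates do not a priori live on the same combinatorial edges. Aligning them so that the \emph{same} successor edge witnesses both the line step and a strict potential drop, while keeping the instance of polynomial size, is the crux of the problem and will likely require a new ``potential-repair'' gadget that interleaves auxiliary vertices carrying a synthetic potential (for instance, a counter that strictly decreases along every auxiliary step), or a sharpening of the Brouwer discretization so that the continuous potential can be lifted to a strictly monotone combinatorial one. The strict-containment piece by contrast is plausibly within reach once the right hard candidate problem is isolated and the $\Omega(2^{n/2}/\sqrt{n})$ style lower bounds from~\cite{hubavcek2017hardness} are adapted to the \UEOPL violation structure.
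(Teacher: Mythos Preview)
The statement you are attempting to prove is a \emph{conjecture}: the paper does not prove it and explicitly lists it as an open problem in the ``Future directions'' section. There is therefore no proof in the paper to compare your proposal against. What the paper offers is speculation: for the equality $\EOPLc = \CLS$ it suggests placing one of the known \CLS-complete Contraction variants (with the metric given as input~\cite{DTZ17,FGMS17}) into \EOPLc; for the strict separation $\UEOPLc \subsetneq \EOPLc$ it points to the extra \solnref{UV3} violation as heuristic evidence and asks whether oracle separations can shed light on the question. Your proposal is a research plan, not a proof, and should be read as such.

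Two concrete issues with your plan. First, reducing \EOML to \EOPL buys you nothing: Theorem~\ref{thm:eoml2eopl} already establishes that \EOML and \EOPL are polynomial-time equivalent, and this is precisely how Corollary~\ref{cor:eoplinCLS} obtains $\EOPLc \subseteq \CLS$. The missing direction $\CLS \subseteq \EOPLc$ requires reducing a \CLS-\emph{complete} problem to \EOPL, and \EOML is not known to be \CLS-complete. You would need to target \CLO or the metric-Contraction variants of~\cite{DTZ17,FGMS17}; your discretization sketch is aimed in that direction, but the ``main obstacle'' you identify (aligning the \PPAD line-following edge with a strict potential drop on the same combinatorial edge) is exactly the open problem, and your ``potential-repair gadget'' is a name for a solution rather than a solution.

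Second, your oracle-separation sketch has the quantifiers slightly tangled. To separate $\UEOPLc$ from $\EOPLc$ you need an oracle relative to which \EOPL cannot be reduced to \UEOPL; operationally, a family of \EOPL instances on which finding any \EOPL solution requires many queries, while finding a \UEOPL solution (which includes all \solnref{UV3} witnesses of multiple lines) is easy. Your description has ``a line end is findable by an \EOPL algorithm'' where you presumably want the opposite: the instance should be \emph{hard} for \EOPL-style search but made easy by the extra \UEOPL violations. Getting this right is the starting point, not a detail.
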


The question of $\EOPLc$ vs $\CLS$ is unresolved, and we actually think it could
go either way. One could show that $\EOPLc = \CLS$ by placing either of the two
known \CLS-complete Contraction variants into \EOPLc~\cite{DTZ17,FGMS17}. If the
two classes are actually distinct, then it is interesting to ask which of the
problems in \CLS are also in \EOPLc. 

On the other hand, we believe that \UEOPLc is a strict subset of \EOPLc.
The evidence for this is that the extra violation in \UEOPL that does not appear
in \EOPL changes the problem significantly. This new violation will introduce
many new solutions whenever there are multiple lines in the instance, and so 
it is unlikely, in our view, that one could reduce \EOPL to \UEOPL. Of course, there is
no hope to unconditionally prove that $\UEOPLc \subset \EOPLc$, but we can ask
for further evidence to support the idea. For example, can oracle separations shed any
light on the issue?

Finally, we remark that \UEOPLc is the closest complexity class to \FP, among
all the standard sub-classes of \TFNP. However, we still think that further
subdivisions of \UEOPLc will be needed. Specifically, we do not believe that
simple stochastic games, or any of the problems that can be reduced to them, are
\UEOPLc-complete, since all of these problems have unique solutions
unconditionally. Further research will be needed to classify these problems.

\section{Unique End of Potential Line}
\label{sec:EOPL}

In this section we define two new complexity classes called $\EOPLc$ and
$\UniqueEOPLc$. These two classes are defined by merging the definitions of \PPAD and
\PLS, so we will begin by recapping those classes.

The complexity class \PPAD contains every problem that can be reduced to
\EOL~\cite{papadimitriou1994complexity}.

\begin{definition}[\EOL]
Given Boolean circuits $S,P : \Set{0,1}^n \to \Set{0,1}^n$ such that $P(0^n)
=0^n \neq S(0^n)$, find one of the following:
\begin{enumerate}[label=(E\arabic*)]
\item \solnlabel{E1} A point $x \in \Set{0,1}^n$ such that $P(S(x)) \neq x$.
\item \solnlabel{E2} A point $x \in \Set{0,1}^n$ such that $x \ne 0^n$ and $S(P(x)) \neq x $.
\end{enumerate}
\end{definition}

Intuitively, the problem defines an exponentially large graph where all vertices
vhave in-degree and out-degree at most one.  Each bit-string in $\Set{0,1}^n$
defines a vertex, while the functions $S$ and $P$ define \emph{successor} and
\emph{predecessor} functions for each vertex. A directed edge exists from vertex
$x$ and $y$ if and only if $S(x) = y$ and $P(y) =
x$. Any vertex $x$ for which $P(S(x)) \ne x$ has no outgoing edge, while every
vertex $y$ with $S(P(x)) \ne x$ has no incoming edge.

The condition that $P(0^n) = 0^n \neq S(0^n)$ specifies that the vertex $0^n$
has no incoming edge, and so it is the start of a line.
To solve the problem, we must find either a solution of type \solnref{E1}, which is a
vertex $x$ that is the end of a line, or a solution of type \solnref{E2}, which is a
vertex $x$ other than $0^n$ that is the start of a line. Since we know that the
graph has at least one source, there must at least exist a solution of type \solnref{E1},
and so the problem is in \TFNP.

The complexity class \PLS contains every problem that can be reduced to the
\SOD\footnote{While this is not the standard definition of PLS, it has been
observed that the standard \PLS-complete problem can easily be recast as a \SOD
instance~\cite{papadimitriou1994complexity}.}

\begin{definition}[\SOD]
Given a Boolean circuit $S : \Set{0,1}^n \to \Set{0,1}^n$ such that $S(0^n) \ne
0^n$, and a circuit 
$V: \Set{0,1}^n \to \Set{0,1,\dotsc,2^m - 1}$ find:
\begin{enumerate}[label=(S\arabic*)]
\item \solnlabel{S1} A vertex $x \in \Set{0,1}^n$ such that $S(x) \ne x$ and either $S(S(x)) = x$ or $V(S(x)) \leq V(x)$.
\end{enumerate}
\end{definition}

Once again, the problem specifies an exponentially large graph on the vertex set
$\Set{0,1}^n$, but this time the only guarantee is that each vertex has
out-degree one. The circuit $S$ gives a successor function. In this problem,
some bit-strings do not correspond to vertices in the graph. Specifically, if we
have $S(x) = x$ for some bit-string $x \in \Set{0,1}^n$, then $x$ does not
encode a vertex.

The second circuit $V$ gives a \emph{potential} to each vertex
from the set $\Set{0,1,\dotsc,2^m - 1}$. An edge exists in the graph if and only
if the potential \emph{increases} along that edge. Specifically, there is an
edge from $x$ to $y$ if and only if $S(x) = y$ and $V(x) < V(y)$. This
restriction means that the graph must be a DAG.

To solve the problem, we must find a sink of the DAG, ie., a vertex that has no
outgoing edge. Since we require that $S(0^n) \ne 0^n$, we know that the DAG has
at least one vertex, and therefore it must also have at least one sink. This
places the problem in TFNP.

\paragraph{End of potential line.} 

We define a new problem called \EOPL, which merges the two definitions of \EOL
and \SOD into a single problem.

\begin{definition}[\EOPL]
\label{def:EOPL}
Given Boolean circuits $S,P : \Set{0,1}^n \to \Set{0,1}^n$ such that $P(0^n) =0^n\neq S(0^n)$ and a Boolean circuit $V: \Set{0,1}^n \to \Set{0,1,\dotsc,2^m - 1}$ such that $V(0^n) = 0$ find one of the following:
\begin{enumerate}[label=(R\arabic*)]
\itemsep0mm
\item \solnlabel{R1} A point $x \in \Set{0,1}^n$ such that $S(P(x)) \neq x \neq 0^n$ or $P(S(x)) \neq x$.
\item \solnlabel{R2} A point $x \in \Set{0,1}^n$ such that $x \neq S(x)$, $P(S(x)) = x$, and $V(S(x)) - V(x) \leq 0$.
\end{enumerate}
\end{definition}

This problem defines an exponentially large graph where each vertex has
in-degree and out-degree at most one (as in \EOL) that is also a DAG (as in
\SOD). An edge exists from $x$ to $y$ if and only if $S(x) = y$, $P(y) = x$, and
$V(x) < V(y)$. As in \SOD, only some bit-string encode vertices, and we adopt
the same idea that if $S(x) = x$ for some bit-string $x$, then $x$ does
\emph{not} encode a vertex.

So we have a single instance that is simultaneously an instance of \EOL and an
instance of \SOD. To solve the problem, it suffices to solve \emph{either} of
these problems. Solutions of type \solnref{R1} consist of vertices $x$ that are either
the end of a line, or the start of a line (excluding the case where $x = 0^n$).
Solutions of type \solnref{R2} consist of any point $x$ where the potential does not
strictly increase on the edge between $x$ and $S(x)$.

\paragraph{\bf The complexity class \EOPLc.}

We define the complexity class \EOPLc to consist of all problems that can be
reduced in polynomial time to \EOPL. By definition the problem lies in
$\PPADPLS$, since one can simply ignore solutions of type \solnref{R2} to obtain an \EOL
instance, and ignore solutions of type \solnref{R1} to obtain a \SOD instance.

In fact we are able to show the stronger result that $\EOPLc \subseteq \CLS$. To
do this, we reduce \EOPL to the problem \EOML, which was defined
by Hub{\'a}{\v{c}}ek and Yogev, who also showed that the problem lies in \CLS~\cite{hubavcek2017hardness}. 
The main difference between the two problems is that \EOML requires that the
potential increases by \emph{exactly} one along each edge. The reduction from
\EOML to \EOPL is straightforward. The other direction is more involved, and
requires us to insert new vertices into the instance. Specifically, if there is
an edge between a vertex $x$ and a vertex $y$, but $V(y) \ne V(x) + 1$, then we
need to insert a new chain of vertices of length $V(y) - V(x) - 1$ between $x$
and $y$, so that we
can ensure that the potential always increases by exactly one along each edge. 
The full details are given in Appendix~\ref{app:eoml2eopl}, where the following
theorem is proved.

\begin{theorem}
\label{thm:eoml2eopl}
\EOML and \EOPL are polynomial-time equivalent.
\end{theorem}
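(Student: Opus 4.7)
The easy direction, $\EOML \to \EOPL$, is essentially syntactic: any \EOML instance already satisfies the strictly-increasing-potential condition required by \EOPL, because a step of $+1$ is strictly positive. Hence the reduction is the identity on the circuits $S, P, V$, and one only needs to verify that solutions match up across the two problems. A type \solnref{R1} \EOPL solution corresponds to an end-of-line in \EOML, and a type \solnref{R2} \EOPL solution corresponds to a metering violation (an edge whose potential change is not exactly $+1$) in \EOML.

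For the harder direction, $\EOPL \to \EOML$, I plan to subdivide every edge of the \EOPL instance so that every potential jump becomes~$+1$. Given \EOPL circuits $S, P : \{0,1\}^n \to \{0,1\}^n$ and $V : \{0,1\}^n \to \{0,\ldots,2^m-1\}$, construct an \EOML instance on bit-strings of length $n + m + O(1)$, each encoding a pair $(x, j)$ with $x$ an original vertex and $j$ an $m$-bit offset. Identify $(x, 0)$ with $x$; for $0 \le j < V(S(x)) - V(x)$ interpret $(x, j)$ as an inserted point on the subdivided edge from $x$ to $S(x)$, at potential $V(x) + j$. The new circuits are then polynomial-size: $V'((x,j)) = V(x) + j$, the successor $S'((x, j))$ simply increments $j$, wrapping to $(S(x), 0)$ when $j + 1 = V(S(x)) - V(x)$, and $P'$ is defined symmetrically in terms of $P$ and the potential gap at the opposite endpoint of the edge. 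By construction the potential increases by exactly one along every edge, and the designated source is $(0^n, 0)$. Note that the offset $j$ may be exponentially large in absolute value, but since it is stored in binary the total bit-length of a vertex remains polynomial; this succinct encoding of the padded chain is the one nontrivial ingredient of the construction.

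The main obstacle, and where the bulk of the work lies, is handling violations cleanly. The subdivision construction implicitly assumes that $V(S(x)) > V(x)$ and that $S, P$ are locally inverse to each other; when either condition fails in the original \EOPL instance, these failures are themselves solutions of type \solnref{R1} or \solnref{R2} there. The circuits $S', P', V'$ therefore need to be designed so that any \EOML solution arising from such a local failure on some would-be edge $(x, S(x))$ can be efficiently decoded back into the corresponding original solution, and so that no spurious endpoints are introduced at interior points $(x, j)$ of a well-formed edge. Concretely, I would arrange that if $V(S(x)) \le V(x)$ or $P(S(x)) \ne x$ or $S(P(y)) \ne y$, the intermediate vertex $(x, 1)$ (or $(P(y), \cdot)$) on the affected edge is set to be a local end-of-line or metering violation in the \EOML instance, and then verify that any \EOML solution on some $(x, j)$ decodes deterministically to the appropriate type \solnref{R1} or \solnref{R2} solution of the original \EOPL instance. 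With this accounting in place, the remaining verification that solutions on original vertices $(x, 0) = x$ map back correctly, together with the easy direction above, yields the claimed polynomial-time equivalence.
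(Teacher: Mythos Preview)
Your approach for the hard direction matches the paper's: subdivide every edge of the \EOPL instance into a chain of unit-potential steps, encode the intermediate points as pairs $(x,j)$, and spend the real effort on the case analysis ensuring that every \EOML solution in the subdivided instance decodes back to an \solnref{R1} or \solnref{R2} solution of the original. The paper carries out exactly this construction (writing $(u,\pi)$ for your $(x,j)$) and the bulk of Appendix~\ref{sec:eopl2eoml} is the violation bookkeeping you anticipate.

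There is, however, a small but genuine oversight in your easy direction. The identity map is not a valid reduction: \EOML stipulates $V(0^n)=1$ while \EOPL stipulates $V(0^n)=0$, so the circuits cannot be passed through unchanged. Relatedly, the solution correspondence is not as clean as you claim: an \solnref{R2} solution only gives $V(S(x))-V(x)\leq 0$, but the \EOML metering violation (T3) additionally requires $V(x)>0$, and \EOML permits non-start vertices with potential~$0$ for which neither T2 nor T3 is available. The paper fixes both issues by introducing one extra bit: the new start $0^{n+1}$ has potential~$0$ and points to the old start $(1,0^n)$, and all zero-potential vertices of the original instance are turned into self-loops so they can never surface as \EOPL solutions. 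This is a minor repair, but without it the ``identity'' reduction you propose is not a well-formed \EOPL instance and the backward solution map is incomplete.
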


As we have mentioned, Hub{\'a}{\v{c}}ek and Yogev have shown that \EOML lies
in \CLS~\cite{hubavcek2017hardness}, so we get the following corollary.

\begin{corollary}
\label{cor:eoplinCLS}
$\EOPLc \subseteq \CLS$.
\end{corollary}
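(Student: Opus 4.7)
The corollary is an almost immediate consequence of the preceding theorem together with the known containment of \EOML in \CLS, so my plan is really just to chain these facts together carefully. First I would unfold the definition of \EOPLc: a problem $\Pi$ lies in \EOPLc precisely when there is a polynomial-time many-one reduction from $\Pi$ to \EOPL. My goal is therefore to exhibit a polynomial-time reduction from \EOPL to some problem already known to be in \CLS, and then appeal to the closure of \CLS under polynomial-time reductions.

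The natural target is \EOML, which Hub{\'a}{\v{c}}ek and Yogev~\cite{hubavcek2017hardness} have placed in \CLS. By Theorem~\ref{thm:eoml2eopl}, \EOPL and \EOML are polynomial-time equivalent, so in particular there is a polynomial-time reduction from \EOPL to \EOML. Composing an arbitrary reduction $\Pi \le_p \EOPL$ with this reduction $\EOPL \le_p \EOML$ yields a polynomial-time reduction $\Pi \le_p \EOML$. Since \EOML lies in \CLS and \CLS is closed under polynomial-time reductions (being itself defined as the class of problems reducible to its canonical complete problem \CLO), we conclude $\Pi \in \CLS$, and hence $\EOPLc \subseteq \CLS$.

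There is essentially no obstacle at the corollary stage; all the genuine work has been done in the direction $\EOPL \le_p \EOML$ of Theorem~\ref{thm:eoml2eopl}, where one must insert padding chains of length $V(y) - V(x) - 1$ between adjacent vertices whose potentials differ by more than one, while ensuring that the resulting potential function increases by exactly one along every edge and that no spurious line endpoints are introduced. Once that reduction is in hand, the corollary is a one-line composition argument, and I would present it as such in the paper.
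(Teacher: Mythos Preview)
Your proposal is correct and matches the paper's own argument: the corollary is obtained simply by combining Theorem~\ref{thm:eoml2eopl} (specifically the direction $\EOPL \le_p \EOML$) with the Hub{\'a}{\v{c}}ek--Yogev result that $\EOML \in \CLS$, exactly as you describe.
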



\paragraph{\bf Problems with unique solutions.}

The problems that we study in this paper all share a specific set of properties
that cause them to produce an interesting subclass of \EOPL instances. Each of
the problems that we study have a \emph{promise}, and if the promise is
satisfied the problem has a \emph{unique} solution. 

For example, in the Contraction problem, we are given a function $f : [0, 1]^d
\rightarrow [0, 1]^d$ that is
promised to be \emph{contracting}, meaning that $d(f(x), f(y)) \le c \cdot f(x,
y)$ for some positive constant $c < 1$ and some distance metric $d$. We cannot
efficiently check whether $f$ is actually contracting, but if it is, then
Banach's fixpoint theorem states that $f$ has a unique
fixpoint~\cite{Banach1922}.
If $f$ is not contracting,
then there will exist \emph{violations} that can be witnessed by short
certificates. For Contraction, a violation is any pair of points $x, y$ such
that $d(f(x), f(y)) > c \cdot f(x, y)$.

We can use violations to formulate the problem as a non-promise problem that
lies in TFNP. Specifically, if we ask for either a fixpoint or a violation of
contraction, then the contraction problem is total, because if there is no
fixpoint, then the contrapositive of Banach's theorem implies that there must
exist a violation of contraction.

\paragraph{\bf Unique End of Potential Line.}

When we place this type of problem in \EOPLc, we obtain an instance with extra
properties. Specifically, if the original problem has no violations, meaning
that the promise is satisfied, then the \EOPL instance will contain a
\emph{single} line that starts at $0^n$, and ends at the unique solution of the
original problem. This means that, if we ever find two distinct lines in our
\EOPL instance, then we immediately know that original instance fails to satisfy
the promise.

We define the following problem, which is intended to capture these properties.

\begin{definition}[\UEOPL]
\label{def:UEOPL}
Given Boolean circuits $S,P : \Set{0,1}^n \to \Set{0,1}^n$ such that $P(0^n) =0^n\neq S(0^n)$ and a Boolean circuit $V: \Set{0,1}^n \to \Set{0,1,\dotsc,2^m - 1}$ such that $V(0^n) = 0$ find one of the following:
\begin{enumerate}[label=(U\arabic*), wide=0pt]
\itemsep0mm
\item \solnlabel{U1} A point $x \in \Set{0,1}^n$ such that $P(S(x)) \neq x$.
\end{enumerate}
\begin{enumerate}[label=(UV\arabic*), wide=0pt, leftmargin=\parindent]
\item \solnlabel{UV1} A point $x \in \Set{0,1}^n$ such that $x \neq S(x)$, $P(S(x)) = x$, and $V(S(x)) - V(x) \leq 0$.

\item \solnlabel{UV2} A point $x \in \Set{0,1}^n$ such that $S(P(x)) \neq x \neq 0^n$.

\item \solnlabel{UV3} Two points $x, y \in \Set{0,1}^n$, such that $x \ne y$, $x \ne S(x)$, $y
\ne S(y)$, and either $V(x) = V(y)$ or  $V(x) < V(y) < S(x)$.

\end{enumerate}
\end{definition}

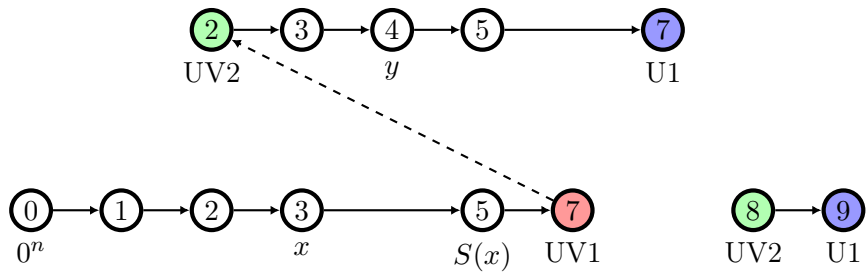
\begin{figure}[htb]
\begin{center}
\pgfdeclarelayer{background}
\pgfsetlayers{background,main}

\tikzstyle{vertex}=[draw,ultra thick,circle,fill=white,minimum size=15pt,inner sep=0pt]

\tikzstyle{red vertex} = [vertex, fill=red!40]
\tikzstyle{blue vertex} = [vertex, fill=blue!40]
\tikzstyle{green vertex} = [vertex, fill=green!30]
\tikzstyle{edge} = [draw,thick,->]
\tikzstyle{weight} = []
\tikzstyle{red edge} = [draw,line width=5pt,-,red!50]
\tikzstyle{blue edge} = [draw,line width=5pt,-,blue!50]
\tikzstyle{green edge} = [draw,line width=5pt,-,green!50]
\tikzstyle{ignored edge} = [draw,line width=5pt,-,black!20]

\begin{tikzpicture}[scale=1.2,swap]

\node[vertex]      (a0) at (0,0) [label=below:$0^n$]     {$0$};
\node[vertex]      (a1) at (1,0)                         {$1$};
\node[vertex]      (a2) at (2,0)                         {$2$};
\node[vertex]      (a3) at (3,0) [label=below:$x$]       {$3$};
\node[vertex]      (a4) at (5,0) [label=below:$S(x)$]    {$5$};
\node[red vertex]  (a5) at (6,0) [label=below:UV1]       {$7$};

\node[green vertex](b1) at (8,0) [label=below:UV2]       {$8$};
\node[blue vertex] (b2) at (9,0) [label=below:U1]        {$9$};

\node[green vertex](c1) at (2,2) [label=below:UV2]       {$2$};
\node[vertex]      (c2) at (3,2)                         {$3$};
\node[vertex]      (c3) at (4,2) [label=below:$y$]       {$4$};
\node[vertex]      (c4) at (5,2)                         {$5$};
\node[blue vertex] (c5) at (7,2) [label=below:U1]       {$7$};
    
\path[edge] (a0) -- (a1);
\path[edge] (a1) -- (a2);
\path[edge] (a2) -- (a3);
\path[edge] (a3) -- (a4);
\path[edge] (a4) -- (a5);
\path[edge] (b1) -- (b2);
\path[edge] (c1) -- (c2);
\path[edge] (c2) -- (c3);
\path[edge] (c3) -- (c4);
\path[edge] (c4) -- (c5);
\path[edge,dashed] (a5) -- (c1);


\end{tikzpicture}
\end{center}
\caption{Example of solutions, including violations, for \UEOPL. 
This figure should be viewed in color.
In this example we have 3 lines.
The \emph{main line} starts at $0^n$ and ends with a UV1 solution, when the successor of the 
red vertex, which has potential $7$, has lower potential, $2$. 
That successor is thus the start of another line, which is a UV2 solution.
There is a final line of length one to the bottom right, which also has a start, which is another 
UV2 solution.
This line does not intersect either of the other two lines in terms of the ranges of their potential values, so 
does not contribute to UV3 solutions.
The main line and top line do intersect in terms of the ranges of their potential values and they contribute
many UV3 solutions. 
We highlight one on the diagram with $x$, $S(x)$, and $y$, such that $V(x) < V(y) < V(S(x))$.
There are two U1 solutions at the end of the top and bottom right lines.
Finally, we note that if there were no violations, then there must be one line and a single U1 solution at 
the end of the main line.
}
\label{fig:ueopl}
\end{figure}

We split the solutions into two types: proper solutions and violations.
Solutions of type \solnref{U1} encode the end of any line, which are the proper solutions
to the problem.  
There are three types of violation solution.
Violations of type \solnref{UV1} are
vertices at which the potential fails to strictly increase. 
Violations of type \solnref{UV2} ask for the start of any line, other
than the vertex $0^n$. Clearly, if there are two sources in the graph, then
there are two lines. 

Violations of type \solnref{UV3} give another witness that the instance contains more than
one line. This is encoded by a pair of vertices $x$ and $y$, with either $V(x) =
V(y)$, or with the property
that the potential of $y$ lies between the potential of $x$ and $S(x)$. Since we
require the potential to strictly increase along every edge of a line, this
means that $y$ cannot lie on the same line as $x$, since all vertices before $x$
in the line have potential strictly less than $V(x)$, while all vertices after
$S(x)$ have potential strictly greater than $V(S(x))$. 


We remark that \solnref{UV2} by itself already captures the property ``there is a unique
line'', since if a second line cannot start, then it cannot exist. So why do we
insist on the extra type of violation given by \solnref{UV3}? Violations of type \solnref{UV3} allow
us to solve the problem immediately if we ever detect the existence of multiple
lines. Note that this is not the case if we only have solutions of type \solnref{UV2},
since we may find two vertices on two different lines, but both of them may be
exponentially many steps away from the start of their respective lines.




By adding \solnref{UV3} solutions, we make the problem easier than \EOPL (note that without
UV3, the problem is actually the same as \EOPL).
This means that problems that can be reduced to \UEOPL have the very special
property that, if at any point you detect the existence of multiple lines,
either through the start of a second line, or through a violation in \solnref{UV3}, then
you \emph{immediately} get a violation in the original problem without any extra
effort. All of the problems that we study in this paper share this property.

\paragraph{\bf The complexity class \UniqueEOPLc.}

We define the complexity class \UniqueEOPLc to be the class of problems that can be
reduced in polynomial time to \UEOPL. We note that $\UniqueEOPLc \subseteq \EOPLc$ is
trivial, since the problem remains total even if we disallow solutions of type
UV3.

For each of our problems, it is also interesting to consider the complexity of
the promise variant, in which it is guaranteed via a promise that no violations
exist. We define \PUEOPL to be the promise version of \UEOPL in which $0^n$ is
the only start of a line (and hence there are no solutions that are 
type \solnref{UV2} or \solnref{UV3}).
We define the complexity class \PUEOPLc to be the class of promise problems
that can be reduced in polynomial time to \PUEOPL.

\paragraph{\bf Promise-preserving reductions.}

The problem \UEOPL has the interesting property that, if it is promised that
there are no violation solutions, then there must be a unique solution. All of
the problems that we study in this paper share this property, and indeed when
when we reduce them to \UEOPL, the resulting instance will have a unique line
whenever the original problem has no violation solutions. 

We formalise this by defining the concept of a \defineterm{promise-preserving}
reduction. This is a reduction between two problems A and B, both of which have
proper solutions and violation solutions. The reduction is promise-preserving
if, when it is promised that A has no violations, then the resulting instance of
B also has no violations. Hence, if we reduce a problem to \UEOPL via a chain of
promise-preserving reductions, and we know that there are no violations in the
original problem, then there is a unique line ending at the unique proper
solution in the \UEOPL instance. 

Note that this is more restrictive than a general reduction. We could in
principle produce a reduction that took an instance of A, where it is promised that there are
no violations, and produce an instance of B that sometimes contains violations.
By using promise-preserving reductions, we are showing that our problems have
the natural properties that one would expect for a problem in \UEOPLc.
Specifically, that the promise version has a unique solution, and that this can
be found by following the unique line in the \UEOPL instance.

One added bonus is that, if we show that a problem is in \UEOPLc via a chain of
promise-preserving reductions, then we automatically get that the promise
version of that problem, where it is promised that there are no violations, lies
in \PUEOPLc. Moreover, if we show that a problem is \UEOPLc-complete via a
promise-preserving reduction, then this also implies that the promise version of
that problem is \PUEOPLc-complete.




\section{One-Permutation Discrete Contraction}

The \defineterm{One-Permutation Discrete Contraction} (\DCM) problem will play a
crucial role in our results. We will show that the problem lies in \UniqueEOPLc,
and we will then reduce both \LCM and \GUSO to \DCM, thereby showing that those
problems also lie in \UniqueEOPLc. We will also show that \UEOPL can be reduced
to \DCM, making this problem the first example of a non-trivial
\UniqueEOPLc-complete problem.

\paragraph{\bf Direction functions.} 

The \OPDC can be seen as a discrete variant of the continuous contraction
problem. Recall that a contraction map is a function $f : [0, 1]^n \rightarrow
[0, 1]^d$ that is contracting under a metric $d$, i.e., $d(f(x), f(y)) \le c
\cdot f(x, y)$ for all $x, y \in [0, 1]^d$ and some constant $c$
satisfying $0 < c < 1$. 



We will discretize the space by  overlaying a grid of points on the $[0, 1]^d$
cube. Let $[k]$ denote the set $\{0, 1, \dots, k\}$. Given a tuple of grid
widths $(k_1, k_2, \dots, k_d)$, we define the set 
\begin{equation*}
P(k_1, k_2, \dots, k_d) = [k_1] \times [k_2] \times \dots \times [k_d].
\end{equation*}
We will refer to $P(k_1, k_2, \dots, k_d)$ simply as $P$ when the grid
widths are clear from the context.
Note that each point $p \in P$ is a tuple
$(p_1, p_2, \dots, p_d)$, where $p_i$ is an integer between $0$ and $k_i$, and
this point maps onto the point $(p_1/k_1, p_2/k_2, \dots, p_d/k_d) \in [0,
1]^d$.


Instead of a single function $f$, in the discrete version of the problem we will
use a family of \defineterm{direction functions} over the grid $P$. For each dimension $i \le d$,
we have function $D_i : P \rightarrow \{\up, \down, \zero\}$. Intuitively, the
natural reduction from a contraction map $f$ to a family of direction functions
would, for each point $p \in P$ and each dimension $i \le d$ set:
\begin{itemize}
\item 
$D_i(p) = \up$ whenever $f(p)_i > p_i$,
\item 
$D_i(p) = \down$ whenever $f(p)_i < p_i$, and 
\item 
$D_i(p) = \zero$ whenever $f(p)_i = p_i$.
\end{itemize}
In other words, the function $D_i$ simply outputs whether $f(p)$ moves up, down,
or not at all in dimension~$i$. So a point $p \in P$ with $D_i(p) = \zero$ for
all $i$ would correspond to the fixpoint of $f$. Note, however, that the grid
may not actually contain the fixpoint of $f$, and so there may be no point $p$
satisfying $D_i(p) = \zero$ for all $i$.

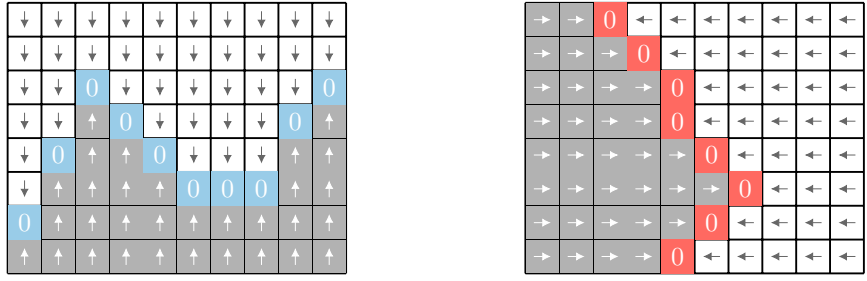
\begin{figure}
\begin{center}
\scalebox{0.9}{
\begin{tikzpicture}[scale=0.5]
\draw[thick,step=1,xshift=0.5cm,yshift=0.5cm] (0,0) grid (10,8); 
\foreach \x/\y in {1/2,2/4,3/6,4/5,5/4,6/3,7/3,8/3,9/5,10/6}
{
	\updncol{\x}{\y}
}
\end{tikzpicture}
}
\hskip 2cm
\scalebox{0.9}{
\begin{tikzpicture}[scale=0.5]
\draw[thick,step=1,xshift=0.5cm,yshift=0.5cm] (0,0) grid (10,8); 
\foreach \x/\y in {5/1,6/2,7/3,6/4,5/5,5/6,4/7,3/8}
{
	\leftrightcol{\x}{\y}
}
\end{tikzpicture}
}
\end{center}
\caption{Left: A direction function for the up/down dimension. Right: A
direction function for the left/right dimension.
This figure should be viewed in color.
}
\label{fig:direction}
\end{figure}

\paragraph{\bf A two-dimensional example.}

To illustrate this definition, consider the two-dimensional instance given in
Figure~\ref{fig:direction}, which we will use as a running example. It shows two direction functions: the figure on the left shows a direction function for the up-down
dimension, which we will call dimension $1$ and illustrate using the color blue.
The figure on the right shows a direction function for the left-right dimension,
which we will call dimension $2$ and illustrate using the color red. Each square
in the figures represents a point in the discretized space, and the value of the
direction function is shown inside the box. Note that there is exactly one point
$p$ where $D_1(p) = D_2(p) = \zero$, which is the fixpoint that we seek.

\paragraph{\bf Slices.}

We will frequently refer to subsets of $P$ in which some of the dimensions have
been fixed. A \defineterm{slice} will be represented as a tuple $(s_1, s_2,
\dots, s_d)$, where each $s_i$ is either
\begin{itemize}
\item a number in $[0, 1]$, which indicates that dimension $i$ should be fixed
to $s_i$, or
\item the special symbol $\blank$, which indicates that dimension $i$ is free to
vary.
\end{itemize}
We define $\Slice_d = \Paren{[0,1]\cup \Set{\blank}}^d$ to be the set of all
possible slices in dimension $d$. Given a slice $s \in \Slice_d$, we define $P_s
\subseteq P$ to be the set of points in that slice, ie., $P_s$ contains every
point $p \in P$ such that $p_i = s_i$ whenever $s_i \ne \blank$. We'll say that a slice $\ss' \in \Slice_d$ is a sub-slice of a slice $\ss \in \Slice_d$ if $s_j \neq \blank \implies s'_j = s_j$ for all $j\in [d]$.

An \defineterm{$i$-slice} is a slice $\ss$ for which $s_j = \blank$ for all $j \le
i$, and $s_j \ne \blank$ for all $j > i$. In other words, all dimensions up to
and including dimension $i$ are allowed to vary, while all other dimensions are
fixed.

In our two-dimensional example, there are three types of $i$-slices. There is
one $2$-slice: the slice $(\blank, \blank)$ that contains every point. For each
$x$, there is a $1$-slice $(\blank, x)$, which restricts the left/right
dimension to the value $n$. For each pair $x, y$ there is a $0$-slice $(y, x)$,
which contains only the exact point corresponding to $x$ and $y$.

\paragraph{\bf Discrete contraction maps.}

We can now define a one-permutation discrete contraction map. We say that a point $p \in P_s$ in some slice $s$
is a \defineterm{fixpoint} of $s$ if $D_i(p) = \zero$ for all dimensions $i$
where $s_i = \blank$. The following definition captures the promise version of
the problem, and we will later give a non-promise version by formulating
appropriate violations.

\begin{definition}[One-Permutation Discrete Contraction Map]
Let $P$ be a grid of points over $[0,1]^d$ and let $\mathcal{D} = (D_i)_{i = 1, \dots, d}$ be a family of direction functions over $P$.
We say that $\mathcal{D}$ and $P$ form a one-permutation discrete contraction
map if, for every $i$-slice $s$, the following conditions hold.
\begin{enumerate}
\itemsep1mm
\item There is a unique fixpoint of $s$.

\item Let $s' \in \Slice_d$ be a sub-slice of $s$ where some coordinate $i$ for which
$s_i = \blank$ has been fixed
to a value, and all other coordinates are unchanged. If $q$ is the unique fixpoint of $s$, and $p$ is the unique
fixpoint of $s'$, then
\begin{itemize}
\itemsep1mm
\item if $p_i < q_i$, then $D_i(p) = \up$, and
\item if $p_i > q_i$, then $D_i(p) = \down$.
\end{itemize}
\end{enumerate}
\end{definition}
The first condition specifies that each $i$-slice must have a unique fixed
point. Since the slice $(\blank, \blank, \dots, \blank)$ is an $i$-slice, this
implies that the full problem also has a unique fixpoint. 


The second condition is a more technical condition. It tells us that if we have
found the unique fixpoint $p$ of the $(i+1)$-slice $s'$, and if this point is
not the unique fixpoint of the $i$-slice $s$, then the direction function
$D_i(p)$ tells us which way to walk to find the unique fixpoint of $s$. This is
a crucial property that we will use in our reduction from \OPDC to \UEOPL, and
in our algorithms for contraction maps.

In our two-dimensional example, the first condition requires that every slice
$(\blank, x)$ has a unique fixpoint, and this corresponds to saying that for
every fixed slice of the left/right dimension, there is a unique blue point that
is zero. The second
condition requires that, if we are at some blue zero, then the red direction
function at that point tells us the direction of the overall fixpoint. 
It can be seen that our example satisfies both of these requirements. 

Note that both properties only consider $i$-slices. In the continuous
contraction map problem with an $L_p$ norm distance metric, \emph{every} slice
has a unique fixpoint, and so one may expect a discrete version of contraction
to share this property. The problem is that the second property is very
difficult to prove. Indeed, when we reduce \LCM to \OPDC in
Section~\ref{sec:lcm2eopl}, we must carefully choose the grid size to ensure
that both the first and second properties hold. In fact, our choice of grid size
for dimension $i$ will depend on the grid size of dimension $i+1$, which is why
our definition only considers $i$-slices.

The name \emph{One-Permutation} Discrete Contraction was chosen to emphasize
this fact. The $i$-slices correspond to restricting dimensions in order,
starting with dimension $d$. Since the order of the dimensions is arbitrary, we
could have chosen any permutation of the dimensions, but we must choose
\emph{one} of these permutations to define the problem.

\paragraph{\bf The OPDC problem.}

The OPDC problem is as follows: given a discrete contraction map $\mathcal{D} =
(D_i(p))_{i=1,\dots,d}$, find the unique point $p$ such that $D_i(p) = \zero$
for all $i$. Note that we cannot efficiently verify whether $\mathcal{D}$ is
actually a one-permutation discrete contraction map.

So, the OPDC problem is a promise problem, and we will formulate a total variant
of it that uses a set of violations to cover the cases where $\mathcal{D}$ fails
to be a discrete contraction map.

\begin{definition}[\OPDC]
\label{def:OPDC}
Given a tuple $(k_1, k_2, \dots, k_d)$ and circuits $(D_i(p))_{i=1,\dots,d}$,
where each circuit $D_i : P(k_1, k_2, \dots, k_d) \rightarrow
\{\up, \down, \zero\}$, find one of the following
\begin{enumerate}[label=(O\arabic*)]
\item \solnlabel{O1} A point $p \in P$ such that $D_i(p) = \zero$ for all $i$.
\end{enumerate}
\begin{enumerate}[label=(OV\arabic*),wide=0pt,leftmargin=\parindent]
\item \solnlabel{OV1} An $i$-slice $s$ and two points $p, q \in P_s$ with $p \ne q$ such that 
 $D_j(p) = D_j(q) = \zero$ for all $j \le i$.
\item \solnlabel{OV2} An $i$-slice $s$ and two points $p, q \in P_s$ such that
\begin{itemize}
\item $D_j(p) = D_j(q) = \zero$ for all $j < i$, 
\item $p_i = q_i + 1$, and
\item $D_i(p) = \down$ and $D_i(q) = \up$.
\end{itemize}
\item \solnlabel{OV3} An $i$-slice $s$ and a point $p \in P_s$ such that
\begin{itemize}
\item $D_j(p) = D_j(q) = \zero$ for all $j < i$, and either
\item $p_i = 0$ and $D_i(p) = \down$, or
\item $p_i = k_i$ and $D_i(p) = \up$.
\end{itemize}
\end{enumerate}
\end{definition}
\noindent Solution type \solnref{O1} encodes a fixpoint, which is the proper solution of the
discrete contraction map, while solution types \solnref{OV1} through \solnref{OV3} encode violations
of the discrete contraction map property.

Solution type \solnref{OV1} witnesses a violation of the first property of a discrete
contraction map, namely that each $i$-slice should have a unique fixpoint. A
solution of type \solnref{OV1} gives two different points $p$ and $q$ in the same $i$-slice
that are both fixpoints of that slice.

Solutions of type \solnref{OV2} witness violations of the first and second properties of a
discrete contraction map. In these solutions we have two points $p$ and $q$ that
are both fixpoints of their respective $(i-1)$-slices and are directly adjacent
in an $i$-slice $s$. If there is a fixpoint $r$ of the slice $s$, then this
witnesses a violation of the second property of a discrete contraction map,
which states that $D_i(p)$ and $D_i(q)$ should both point towards $r$, and
clearly one of them does not. On the other hand, if slice $s$ has no fixpoint,
then $p$ and $q$ also witness this fact, since the fixpoint should be in-between
$p$ and $q$, which is not possible.

Solutions of type \solnref{OV3} consist of a point $p$ that is a fixpoint of its
$(i-1)$-slice but $D_i(p)$ points outside the boundary of the grid. 
These are clear violations of the second property, since $D_i(p)$ should point
towards the fixpoint of the $i$-slice containing $p$, but that fixpoint cannot
be outside the grid.

It is perhaps not immediately obvious that \OPDC is a total problem. Ultimately
we will prove this fact in the next section by providing a promise-preserving
reducing from \OPDC to \UEOPL. This will give us a proof of totality, and will
also prove that, if the discrete contraction map has no violations, then it does
indeed have a unique solution.

\subsection{One-Permutation Discrete Contraction is in \UEOPLc}
\label{sec:opdc2ufeopl}

In this section, we will show that One-Permutation Discrete Contraction lies in
\UEOPLc under promise-preserving reductions.

\paragraph{\bf UFEOPL.} 

Our reduction will make use of an intermediate problem that we call
\defineterm{unique forward EOPL}, which is a version of \UEOPL in which we only
have a successor circuit $S$, meaning that no predecessor circuit $P$ is given. 

\begin{definition}[\UFEOPL]
Given a Boolean circuits $S : \Set{0,1}^n \to \Set{0,1}^n$ such that $S(0^n) \ne
0^n$ and a Boolean circuit $V: \Set{0,1}^n \to \Set{0,1,\dotsc,2^m - 1}$ such that $V(0^n) = 0$ find one of the following:
\begin{enumerate}[label=(UF\arabic*), wide=0pt, leftmargin=\parindent]
\itemsep0mm
\item \solnlabel{UF1} A point $x \in \Set{0,1}^n$ such that $S(x) \neq x$ and either $S(S(x)) =
S(x)$
or $V(S(x)) \le V(x)$.
\end{enumerate}
\begin{enumerate}[label=(UFV\arabic*), wide=0pt, leftmargin=\parindent]


\item \solnlabel{UFV1} Two points $x, y \in \Set{0,1}^n$, such that $x \ne y$, $x \ne S(x)$, $y
\ne S(y)$, and either $V(x) = V(y)$ or  $V(x) < V(y) < V(S(x))$.

\end{enumerate}
\end{definition}

Without the predecessor circuit, this problem bears more resemblance to \SOD
than to \EOPL. As
in \SOD, a bit-string $x$ encodes a vertex if and only if $S(x) \ne x$,
and an edge exists between vertices $x$ and $y$ if and only if $S(x) = y$ and
$V(x) < V(y)$. The proper solution type \solnref{UF1} asks us to find a vertex that is a
sink of the DAG, just as before.

The difference lies in the violation solution type \solnref{UFV1}, which is the same as
violation type \solnref{UV3} of \UEOPL. It asks for two vertices $x$ and $y$ that either
have the same potential, or for which the potential of $y$ lies strictly between
the potential of $x$ and the potential of~$S(x)$. Note that this restriction
severely constrains a \SOD instance: if there are no violation solutions, then
the DAG must consist of a single line that starts at $0^n$, and ends at the
unique solution of type \solnref{UF1}. So in this sense, the problem really does capture
instances of \UEOPL that lack a predecessor circuit.

The \UFEOPL problem will play a crucial role in our reduction. We will reduce
\OPDC to it, and we will then reduce it to \UEOPL.

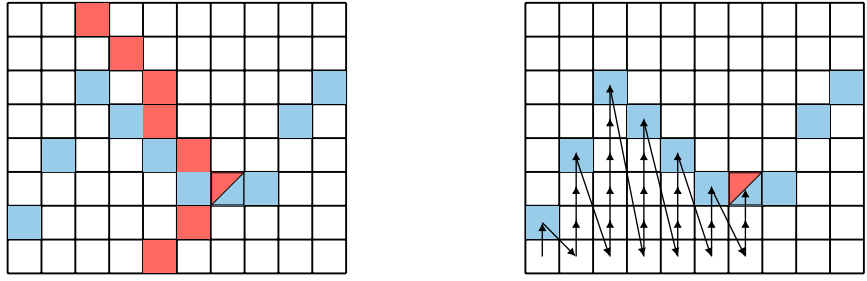
\begin{figure}
\begin{center}
\scalebox{0.9}{
\begin{tikzpicture}[scale=0.5]
\draw[thick,step=1,xshift=0.5cm,yshift=0.5cm] (0,0) grid (10,8); 
\foreach \x/\y in {1/2,2/4,3/6,4/5,5/4,6/3,7/3,8/3,9/5,10/6}
{
	\mysquare{\x}{\y}{pastelblue}
}
\foreach \x/\y in {5/1,6/2,6/4,5/5,5/6,4/7,3/8}
{
	\mysquare{\x}{\y}{pastelred}
}

\mytwocoloursquare{7}{3}

\end{tikzpicture}
}
\hskip 2cm
\scalebox{0.9}{
\begin{tikzpicture}[scale=0.5]
\draw[thick,step=1,xshift=0.5cm,yshift=0.5cm] (0,0) grid (10,8); 
\foreach \x/\y in {1/2,2/4,3/6,4/5,5/4,6/3,7/3,8/3,9/5,10/6}
{
	\mysquare{\x}{\y}{pastelblue}
	
}
\mytwocoloursquare{7}{3}

\patharrow{1}{1}{1}{2}

\patharrow{1}{2}{1.99}{1}

\patharrow{2}{1}{2}{2.1}
\patharrow{2}{2}{2}{3.1}
\patharrow{2}{3}{2}{4.1}

\patharrow{2}{3.92}{2.99}{1}

\patharrow{3}{1}{3}{2.1}
\patharrow{3}{2}{3}{3.1}
\patharrow{3}{3}{3}{4.1}
\patharrow{3}{4}{3}{5.1}
\patharrow{3}{5}{3}{6.1}

\patharrow{3}{6}{3.99}{1}

\patharrow{4}{1}{4}{2.1}
\patharrow{4}{2}{4}{3.1}
\patharrow{4}{3}{4}{4.1}
\patharrow{4}{4}{4}{5.1}

\patharrow{4}{5}{4.99}{1}

\patharrow{5}{1}{5}{2.1}
\patharrow{5}{2}{5}{3.1}
\patharrow{5}{3}{5}{4.1}

\patharrow{5}{4}{5.99}{1}

\patharrow{6}{1}{6}{2.1}
\patharrow{6}{2}{6}{3.1}

\patharrow{6}{3}{6.995}{1}

\patharrow{7}{1}{7}{2.1}
\patharrow{7}{2}{7}{3}

\end{tikzpicture}
}
\end{center}
\caption{Left: The red and blue surfaces. Right: the path that we
follow.
This figure should be viewed in color.
}
\label{fig:rbsurface}
\end{figure}
\paragraph{An illustration of the reduction.}

Before we discuss the formal definition of the construction, we first give some
intuition by describing the reduction for the two-dimensional example shown in
Figure~\ref{fig:direction}.


The reduction uses the notion of a \emph{surface}. On the left side in
Figure~\ref{fig:rbsurface}, we have overlaid the surfaces of the two direction
functions from Figure~\ref{fig:direction}. The surface of a direction function
$D_i$ is exactly the set of points $p \in P$ such that $D_i(p) = \zero$. The
fixpoint $p$ that we seek has $D_i(p) = \zero$ for all dimensions $i$, and so it
lies at the intersection of these surfaces.

To reach the overall fixpoint, we walk along a path starting from the
bottom-left corner, which is shown on the right-hand side of
Figure~\ref{fig:rbsurface}. The path begins by walking upwards until it finds
the blue surface. Once it has found the blue surface, it then there are two
possibilities:
either we have found the overall fixpoint, in which case the line ends, or
we have not found the overall fixpoint, and the red direction function
tells us that the direction of the overall fixpoint is to the right.

If we have not found the overall fixpoint, then we move one step to the right,
go back to the bottom of the diagram, and start walking upwards again. We keep
repeating this until we find the overall fixpoint. This procedure gives us the
line shown on the right-hand side of Figure~\ref{fig:rbsurface}.

\paragraph{\bf The potential.}
How do we define a potential for this line? Observe that the dimension-two 
coordinates of the points on the line are weakly monotone, meaning that the line
never moves to the left. Furthermore, for any
dimension-two slice (meaning any slice in which the left/right coordinate is
fixed), the dimension-one coordinate is monotonically increasing. So, if $p =
(p_1, p_2)$ denotes any point on the line, 
if $k$ denotes
the maximum coordinate in either dimension, then the function
\begin{equation*}
V(p_1, p_2) = k \cdot p_2 + p_1 
\end{equation*}
is a function that monotonically increases along the line, which we can use as a
potential function.


\smallskip

\paragraph{\bf Uniqueness.}

To provide a promise-preserving reduction to UFEOPL, we must argue that the line
is unique whenever the OPDC instance has no violations. Here we must carefully
define what exactly a vertex on the line actually is, to ensure that no other
line can exist. Specifically, we must be careful that only points that are to
the left of the fixpoint are actually on the line, and that no ``false'' line
exists to the right of the fixpoint.

Here we rely on the following fact: if the line visits a point with coordinate
$x$ in dimension $2$, then it must have visited the point $p$ on the blue
surface in the slice defined by $x-1$. Moreover, for that point $p$ we must have
$D_2(p) = \up$, which means that it is to the left of the overall fixpoint.

Using this fact, each vertex on our line will be a pair $(p, q)$, where $p$ is
the current point that we are visiting, and $q$ is either
\begin{itemize}
\itemsep0.5mm
\item the symbol $\vblank$, indicating that we 
are still in the first column of points, and we have never visited a point on
the blue surface, or
\item a point $q$ that is on the blue surface, that satisfies $q_2 = p_2 - 1$
and $D_2(q) = \up$.
\end{itemize}
Hence the point $q$ is always the last point that we visited on the blue
surface, which provides a witness that we have not yet walked past the overall
fixpoint.


When we finish walking up a column of points, and find the point on the blue
surface, we overwrite $q$ with the new point that we have found. This step is
the reason why only a successor circuit can be given for the line, since the
value that is overwritten cannot easily be computed by a predecessor circuit.

\paragraph{\bf Violations.}

Our two-dimensional OPDC example does not contain any violations, but our
reduction can still handle all possible violations in the OPDC instance. At a
high level, there are two possible ways in which the reduction can go wrong if
there are violations.

\begin{enumerate}
\item It is possible, that as we walk upwards in some column, we do not find
a fixpoint, and our line will get stuck. This creates an end of line solution of
type \solnref{UF1}, which
must be mapped back to an OPDC violation. In our two-dimensional example, this
case corresponds to a column of points in which there is no point on the blue surface. However, if there
is no point on the blue surface, then
we will either
\begin{itemize}
\item find two adjacent points $p$ and $q$ in that column with
$D_1(p) = \up$ and $D_2(p) = \down$, which is a solution of type \solnref{OV2}, or

\item find a point $p$ at the top of the column with $D_1(p) = \up$, or a
point $q$ at the bottom of the column with $D_1(q) = \down$. Both of these are
solutions of type \solnref{OV3}.
\end{itemize}
There is also the similar case where we walk all the way to the right without
finding an overall fixpoint, in which case we will find a point $p$ on the
right-hand boundary that satisfies $D_1(p) = \zero$ and $D_2(p) = \up$, which is
a solution of type \solnref{OV3}.

\item The other possibility is that there may be more than one point on the blue
surface in some of the columns. This will inevitably lead to multiple lines,
since if $q$ and $q'$ are both points on the blue surface in some column, and
$p$ is some point in the column to the right of $p$ and $q$, then
$(p, q)$ and $(p, q')$ will both be valid vertices on two different lines. 

These can show up as violations of type \solnref{UFV1}, which we map back to
solutions of type \solnref{OV1}. Specifically, the points $p$ and $q$, which are given as
part of the two vertices, are both fixpoints of the same slice, which is exactly
what \solnref{OV1} asks for.

\end{enumerate}
We can argue that our reduction is promise-preserving. This is because violation
solutions in the UFEOPL instance are never mapped back to proper solutions of
the OPDC instance. This means that, if we promise that the OPDC instance has no
violations, then the resulting UFEOPL instance must also contain no violations.

\paragraph{\bf The full reduction.} 
Our reduction from \DCM to \UFEOPL generalizes the approach given above to $d$
dimensions.
We say that a point $p \in P$ is on the
\defineterm{$i$-surface} if $D_j(p) = \zero$ for all $j \le i$. In our two-dimensional
example we followed a line of points on the one-surface, in order to find a
point on the two-surface. In between any two points on the one-surface, we
followed a line of points on the zero-surface (every point is trivially on the
zero-surface).

Our line will visit a sequence of points on the $(d-1)$-surface in
order to find the point on the $d$-surface, which is the fixpoint. Between
any two points on the $(d-1)$-surface the line visits a sequence of points on
the $(d-2)$-surface, between any two points on the 
$(d-2)$-surface the line visits a sequence of points on the $(d-3)$-surface, and
so on.

The line will follow the same pattern that we laid out in two dimensions. Every
time we find a point on the $i$-surface, we remember it, increment our position
in dimension $i$ by $1$, and reset our coordinates back to $0$ for all
dimensions $j < i$. Hence, a vertex will be a tuple $(p_0, p_1, \dots, p_d)$,
where each $p_i$ is either
\begin{itemize}
\item the symbol $\vblank$, indicating that we have not yet encountered a point
on the $i$-surface, or
\item the most recent point on the $i$-surface that we have visited.
\end{itemize}
This is a generalization of the witnessing scheme that we saw in two-dimensions.

The potential is likewise generalized so that the potential of a point $p$ is
proportional to $\sum_{i=1}^d k^i p_i$, where again $k$ is some constant that is
larger than the grid size. This means that progress in dimension~$i$ dominates
progress in dimension $j$ whenever $j < i$, which allows the potential to
monotonically increase along the line. 

We are also able to deal with all possible violations, using the ideas that we
have described in two-dimensional case. Full details of this construction are
given in Appendix~\ref{app:dcm2ufeopl}, where the following lemma is proved.

\begin{lemma}
\label{lem:dcm2ufeopl}
There is a polynomial-time promise-preserving reduction from
\OPDC to \UFEOPL.
\end{lemma}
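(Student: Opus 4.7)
The reduction extends the two-dimensional construction already illustrated in the excerpt. A vertex of the \UFEOPL instance will be a tuple $v = (p_0; p_1, \ldots, p_d) \in P \times (P \cup \{\vblank\})^d$, where $p_0$ is the current grid point and, for $i \ge 1$, $p_i$ is either $\vblank$ (no advance in dimension~$i$ has occurred yet) or the most recent $(i-1)$-surface point visited just before such an advance. A bit-string encodes a legitimate vertex iff the entries are mutually consistent with $p_0$: when $p_i \ne \vblank$ we require $D_j(p_i) = \zero$ for all $j < i$, $|(p_i)_i - (p_0)_i| = 1$ with $D_i(p_i)$ pointing from $p_i$ towards $p_0$, and $(p_i)_j = (p_0)_j$ for $j > i$. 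The start vertex $(0^d; \vblank, \ldots, \vblank)$ serves as the required $0^n$.

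\textbf{Successor and potential.} Given $v$, let $i$ be the smallest dimension with $D_i(p_0) \ne \zero$. If no such $i$ exists, $p_0$ is a global fixpoint and we set $S(v) = v$, making $v$ a \solnref{UF1} sink corresponding to an \solnref{O1}. Otherwise, following $D_i(p_0)$, we move $(p_0)_i$ by $\pm 1$, reset $(p_0)_j$ to $0$ (up) or $k_j$ (down) for all $j < i$, set $p_i$ to the old $p_0$, reset $p_j$ to $\vblank$ for $j < i$, and emit the resulting tuple. If the move would step off the grid, we set $S(v) = v$. The potential is $V(v) = \sum_{i=1}^d K^{i-1}(p_0)_i$ with $K > \max_i k_i$: then an upward move in dimension~$i$ changes $V$ by at least $K^{i-1} - \sum_{j<i} K^{j-1} k_j > 0$, so $V$ strictly increases along any edge on which the line moves up. Downward moves can in principle cause $V$ to decrease, but we will show that this coincides with a violation of the input \OPDC.

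\textbf{Mapping solutions back.} A \solnref{UF1} outcome arises either from $S(v) = v$ or from $V(S(v)) \le V(v)$. By construction this happens in one of three ways: (i) every $D_j(p_0) = \zero$, giving an \solnref{O1}; (ii) the intended move in dimension~$i$ exits the grid while $p_0$ is already a fixpoint of its $(i-1)$-slice, yielding an \solnref{OV3}; or (iii) the move is downward from an interior point, in which case the stored witness $p_i$ (sitting at $(p_0)_i - 1$ with $D_i(p_i) = \up$) together with $p_0$ (with $D_i(p_0) = \down$) forms a pair of adjacent $(i-1)$-surface points with opposing $D_i$-arrows, i.e.\ an \solnref{OV2}. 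A \solnref{UFV1} violation consists of two distinct vertices $v, v'$ with equal or interleaving potentials. Since $V$ depends only on $p_0$, either $p_0 = p_0'$ with differing witnesses (immediately giving two distinct $i$-slice fixpoints in some slice, an \solnref{OV1}) or $p_0 \ne p_0'$ lie in a common $(i+1)$-slice with both satisfying $D_j = \zero$ for $j \le i$, again yielding \solnref{OV1}.

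\textbf{Main obstacle.} The heart of the proof is showing that when the \OPDC instance has no violations, the construction describes a \emph{single} line from $(0^d; \vblank, \ldots, \vblank)$ to the unique global fixpoint, with no extraneous vertices or loops. I plan to handle this by induction on the dimension: the first property of a one-permutation discrete contraction (unique fixpoints within every $i$-slice) guarantees that the witness $p_i$ on a legitimate vertex is uniquely determined by $p_0$, ruling out parallel lines, while the second property (slice-fixpoint direction functions point towards the higher-dimensional fixpoint) ensures that from $(0^d; \vblank, \ldots, \vblank)$ every step is upward, so the line never encounters an interior $D_i = \down$ and never triggers a spurious \solnref{UF1}. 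The main source of bookkeeping is the case analysis mapping each failure mode of the successor to a concrete \solnref{OV*} violation; once that is in place, promise-preservation is automatic, because every \UFEOPL violation we emit has been matched to an \OPDC violation.
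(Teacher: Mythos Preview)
Your overall architecture---a current point $p_0$ together with a tuple of surface witnesses $p_1,\dots,p_d$, a successor that follows the direction functions, and a weighted-sum potential---matches the paper's. But there is a genuine gap in the uniqueness argument, and it stems from two design choices that diverge from the paper.

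\textbf{The validity conditions do not pin down the witnesses.} Your potential depends only on $p_0$, so promise-preservation requires that in a violation-free \OPDC instance there is at most one valid vertex over each $p_0$. Your stated conditions do not achieve this. First, the condition on $p_i$ is vacuous when $p_i = \vblank$, so for every $p_0 \in P$ the tuple $(p_0;\vblank,\dots,\vblank)$ is valid; in particular the all-$\vblank$ vertex over any $p_0 \ne 0^d$ collides in potential with the genuine main-line vertex over that $p_0$. Second, even when $p_i \ne \vblank$, you allow either $(p_i)_i = (p_0)_i - 1$ with $D_i(p_i)=\up$ or $(p_i)_i = (p_0)_i + 1$ with $D_i(p_i)=\down$. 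When $(p_0)_i$ happens to equal the $i$-slice fixpoint's $i$-coordinate, both neighbours point toward $(p_0)_i$ by the second \OPDC property, so both are legitimate witnesses. You therefore get two distinct valid vertices with the same $p_0$, hence the same potential, and a \solnref{UFV1} violation in a promise instance. Your claimed mapping of such a \solnref{UFV1} back to \solnref{OV1} also fails here: the two candidate witnesses live in \emph{different} $(i{-}1)$-slices (they differ in coordinate $i$), so they are not two fixpoints of a common slice.

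\textbf{How the paper closes this gap.} The paper makes the line unidirectional: it always resets lower coordinates to $0$ and walks up, and its validity test requires $D_{i+1}(p_i)\ne\down$, ruling out ``from above'' witnesses. It also ties $\vblank$ in position $j$ to coordinate $0$ in dimension $j{+}1$ for every lower witness (the paper's conditions 3--4), so the $\vblank$/non-$\vblank$ pattern is forced by the coordinates. Finally, the paper's potential is \emph{not} a function of $p_0$ alone: it reads the $(i{+}1)$-th coordinate of each stored $p_i$. Together these choices make the witness tuple uniquely determined along the single line. Your bidirectional scheme (resetting to $0$ or to $k_j$ depending on direction) cannot be reconciled with a $p_0$-only potential: downward moves strictly decrease it, and---contrary to your claim---such moves can occur at perfectly valid vertices in a promise instance (e.g.\ the all-$\vblank$ vertex over the top corner), with no \OPDC violation to report. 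To repair the argument you will need either to tighten the validity predicate so that the witnesses are forced (as the paper does), or to change the potential so that it distinguishes vertices sharing a common $p_0$.
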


\paragraph{\bf From \UFEOPL to \UFEOPLp1.}

The next step of the reduction is to slightly modify the \UFEOPL instance, so
that the potential increases by exactly one in each step. Specifically we define
the following problem

\begin{definition}[\UFEOPLp1]
Given a Boolean circuits $S : \Set{0,1}^n \to \Set{0,1}^n$ such that $S(0^n) \ne
0^n$ and a Boolean circuit $V: \Set{0,1}^n \to \Set{0,1,\dotsc,2^m - 1}$ such that $V(0^n) = 0$ find one of the following:
\begin{enumerate}[label=(UFP\arabic*), wide=0pt, leftmargin=\parindent]
\itemsep0mm
\item \solnlabel{UFP1} A point $x \in \Set{0,1}^n$ such that $S(x) \neq x$ and either $S(S(x)) =
S(x)$
or $V(S(x)) \ne V(x) + 1$.
\end{enumerate}
\begin{enumerate}[label=(UFPV\arabic*), wide=0pt, leftmargin=\parindent]
\item \solnlabel{UFPV1} Two points $x, y \in \Set{0,1}^n$, such that $x \ne y$, $x \ne S(x)$, $y
\ne S(y)$, and $V(x) = V(y)$.

\end{enumerate}
\end{definition}

There are two differences between this problem and \UFEOPL. Firstly, an edge
exists between $x$ and $y$ if and only if $S(x) = y$, and $V(y) = V(x) + 1$,
and this is reflected in the modified definition of solution type \solnref{UFP1}.
Secondly, solution type \solnref{UFPV1} has been modified to only cover the case where we
have two vertices $x$ and $y$ that have the same potential. The case where
$V(x) < V(y) < V(S(x))$ is not covered, since in this setting this would imply
$V(S(x)) > V(x) + 1$, which already gives us a solution of type \solnref{UFP1}.

It is not difficult to reduce \UFEOPL to \UFEOPLp1, using the same techniques
that we used in the reduction from \EOPL to \EOML in
Theorem~\ref{thm:eoml2eopl}. This gives us the following lemma, which is proved
in Appendix~\ref{app:uf2ufp1}.

\begin{lemma}
\label{lem:uf2ufp1}
There is a polynomial-time promise-preserving reduction from \UFEOPL to
\UFEOPLp1.
\end{lemma}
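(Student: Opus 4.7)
The plan is to carry out a standard "chain insertion" reduction, closely analogous to the reduction from \EOPL to \EOML in Theorem~\ref{thm:eoml2eopl}: whenever an edge of the \UFEOPL instance raises the potential by more than one, insert a chain of dummy intermediate vertices so that potentials increase by exactly one at each step. Concretely, I would take the new bit-string space to be pairs $(x, i)$ with $x \in \Set{0,1}^n$ and $i$ an $m$-bit integer, where $(x, 0)$ corresponds to the original vertex $x$ and $(x, i)$ with $i > 0$ represents the $i$-th dummy along the outgoing edge of $x$. Set $V'((x, i)) = V(x) + i$, and define $S'$ so that it walks $(x, 0) \to (x, 1) \to \dots \to (x, V(S(x)) - V(x) - 1) \to (S(x), 0)$ whenever the original edge is well-formed (meaning $S(x) \ne x$ and $V(S(x)) > V(x)$).

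Non-vertices in the new instance are declared via the self-loop convention $S'(X) = X$. I would say that $(x, 0)$ is a non-vertex exactly when $S(x) = x$, and that $(x, i)$ with $i > 0$ is a non-vertex exactly when $V(x) + i \ge V(S(x))$; this single rule simultaneously handles the case that the chain from $x$ has ended and the case that no chain exists because $V(S(x)) \le V(x)$. When $x$ is itself a \solnref{UF1} solution of the original instance, i.e.\ either $S(S(x)) = S(x)$ or $V(S(x)) \le V(x)$, we want the new instance to produce a corresponding \solnref{UFP1} solution; in the first subcase the terminal successor $(S(x), 0)$ of the chain is itself a non-vertex, and in the second subcase we set $S'((x, 0)) = (x, 1)$, which is a non-vertex by the rule above. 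By construction every real edge of the new instance satisfies $V'(S'(X)) = V'(X) + 1$, so the only way to obtain a \solnref{UFP1} solution is to have $S'(X)$ be a non-vertex, and a short case analysis then verifies that each such situation maps back to a \solnref{UF1} solution of the original.

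The main obstacle, and the only step that requires genuine care, is showing that \solnref{UFPV1} violations map back to \solnref{UFV1} violations, which is exactly what makes the reduction promise-preserving. Given a pair $X = (x, i)$ and $Y = (y, j)$ with $V'(X) = V'(Y)$, $X \ne Y$, and both non-fixed, I would argue by case analysis on $i, j$. The identity $V(x) + i = V(y) + j$ forces $x = y$ whenever $i = j$, so we must have $x \ne y$. If $i = j = 0$ then $V(x) = V(y)$ and both $x, y$ are non-fixed in the original (inherited from $X, Y$ being vertices of the new instance), yielding an equal-potential \solnref{UFV1}. Otherwise assume WLOG that $V(x) \le V(y)$; equality again yields the equal-potential \solnref{UFV1}, while strict inequality $V(x) < V(y)$ forces $i > j \ge 0$ and hence $i \ge 1$, so the non-vertex rule applied to $(x, i)$ gives $V(x) + i < V(S(x))$. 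Then $V(x) < V(y) = V(x) + i - j \le V(x) + i < V(S(x))$ produces the strict interleaving $V(x) < V(y) < V(S(x))$, which is exactly the second form of \solnref{UFV1}, and non-fixedness of $y$ in the original is immediate from $(y, j)$ being a vertex of the new instance. The construction is polynomial-time since chain lengths are bounded by $2^m$ and fit in $m$ additional bits, completing the plan.
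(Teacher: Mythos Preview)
Your proposal is correct and follows essentially the same chain-insertion construction as the paper's proof: encode vertices as pairs $(x,i)$, set $V'((x,i)) = V(x)+i$, walk the dummy chain until reaching $(S(x),0)$, and map \solnref{UFP1} solutions back by observing that the only way $S'(X)$ is a non-vertex is if the corresponding original edge hits an end of line, while \solnref{UFPV1} pairs map back via the interleaving argument you give. The only substantive difference is cosmetic: the paper declares $(x,0)$ a non-vertex already when $V(S(x))\le V(x)$, whereas you keep $(x,0)$ as a vertex and instead let its successor $(x,1)$ be the non-vertex; both conventions work. One sentence is garbled --- ``$V(x)+i=V(y)+j$ forces $x=y$ whenever $i=j$'' is false as written (it only forces $V(x)=V(y)$), but your subsequent case analysis is correct and does not rely on that claim, so this is a writing slip rather than a gap.
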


\paragraph{\bf \UFEOPLp1 to \UEOPL.}

The final step of the proof is to reduce \UFEOPL to \UEOPL. For this, we are
able to build upon existing work. The following problem was introduced by
Bitansky et al~\cite{BPR15}. 

\begin{definition}[\SOVL~\cite{BPR15}]
The input to the problem consists of a starting vertex $x_s \in \{0, 1\}^n$, a
target integer $T \le 2^n$, and two boolean circuits $S : \{0, 1\}^n
\rightarrow \{0, 1\}^n$, $W : \{0, 1\}^n \times \{0, 1\}^n \rightarrow \{0, 1\}$.
It is promised that, for every vertex $x \in \{0, 1\}^n$, and every integer $i
\le T$, we have $W(x, i) = 1$ if and only if $x = S^{i-1}(x_s)$. The goal is
to find the vertex $x_f \in \{0, 1\}^n$ such that $W(x_f, T) = 1$.
\end{definition}

\SOVL is intuitively very similar to \UFEOPL. In this problem, a single line is
encoded, where as usual the vertices are encoded as bit-strings, and the circuit
$S$ gives the successor of each vertex. The difference in this problem is that
the circuit $W$ gives a way of verifying how far along the line a given vertex
is. Specifically, $W(x, i) = 1$ if and only if $x$ is the $i$th vertex on the
line. Note that this is inherently a promise problem, since if $W(x, i) = 1$ for
some $i$, we have no way of knowing whether $x$ is \emph{actually} $i$ steps
along the line, without walking all of those steps ourselves.

It was shown by Hub\'a\v{c}ek and Yogev~\cite{hubavcek2017hardness}  that \SOVL can be reduced in polynomial
time to \EOML, and hence also to \EOPL (via Theorem~\ref{thm:eoml2eopl}).
Moreover, the resulting \EOPL instance has a unique line, so this reduction also
reduces \SOVL to \UEOPL.
It is easy to reduce the promise version of \UFEOPLp1 to \SOVL, since we can
implement the circuit $W$ so that $W(x, i) = 1$ if and only if $V(x) = i$.

However, the existing work only deals with the promise problem. Our contribution
is to deal with violations. We show that, if one creates a \SOVL instance from
a \UFEOPLp1 instance, in the way described above, and applies the reduction of 
Hub\'a\v{c}ek and Yogev to produce a \UEOPL instance, then any violation can be
mapped back to a solution in the original \UFEOPLp1 instance. 
Hence, we show
the following lemma, whose full proof appears in Appendix~\ref{app:ufeopl2eopl}.

\begin{lemma}
\label{lem:ufeopl2eopl}
There is a polynomial-time promise-preserving reduction from 
\UFEOPL to \UniqueEOPLc. 
\end{lemma}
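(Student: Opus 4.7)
The plan is to chain three reductions: first apply Lemma~\ref{lem:uf2ufp1} to pass from \UFEOPL to \UFEOPLp1, then reduce \UFEOPLp1 to the \SOVL problem, and finally apply the pebbling reduction of Hub\'a\v{c}ek and Yogev~\cite{hubavcek2017hardness} (composed with Theorem~\ref{thm:eoml2eopl}) to go from \SOVL into \EOPL, which under the promise yields a \UEOPL instance. The first step is already established, so the remaining work is the latter two steps together with a careful analysis of how violations in the input propagate through the construction.

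For the middle step, given a \UFEOPLp1 instance $(S, V)$ with $V : \Set{0,1}^n \to \Set{0,1,\dots,2^m-1}$, I would construct a \SOVL instance with the same successor circuit $S$, starting vertex $x_s = 0^n$, target $T = 2^m - 1$, and verifier $W(x,i)$ that returns $1$ if and only if $V(x) = i$ and $S(x) \ne x$. Under the \UFEOPLp1 promise, the potential strictly increases by exactly one along the unique line starting at $0^n$, so $W$ correctly identifies the $i$-th vertex on that line at every position reached, as required by the \SOVL promise. Composing with the Hub\'a\v{c}ek--Yogev construction produces an \EOML (hence \EOPL) instance whose unique line has its far endpoint at the honest end of the \UFEOPLp1 line, giving a promise-preserving reduction in the promise case.

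The hard part is dealing with violations, because once the input is allowed to fail the \UFEOPLp1 promise the verifier $W$ no longer satisfies the SOVL promise and the pebbling construction must be re-examined. The plan is a case analysis on the solution type returned in the constructed \UEOPL instance, exploiting the fact that pebble configurations in the Hub\'a\v{c}ek--Yogev construction are built from iterates $S^k(x_s)$ together with consistency checks against $W$. A \solnref{U1} output either corresponds to the honest end of the \UFEOPLp1 line, and maps back to a \solnref{UFP1} solution via the vertex whose successor breaks the chain, or it was reached through a mismatch between two pebble configurations, which exposes distinct vertices $x \ne y$ with $V(x) = V(y)$ and hence a \solnref{UFPV1} violation of the input. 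Violations of type \solnref{UV1}, \solnref{UV2}, and \solnref{UV3} in the output are handled analogously: in every case the breakdown of reversibility in the pebble transitions exposes two pebbles that are both certified by $W$ at the same position $i$, from which two distinct input vertices of equal potential can be extracted, again a \solnref{UFPV1} violation. The anticipated main obstacle is bookkeeping: one must verify that every structural failure mode of the Hub\'a\v{c}ek--Yogev construction, when the SOVL promise breaks, is caused by such a $W$-collision rather than by some more subtle artefact, so that no spurious solution escapes the analysis. Completing this case analysis gives the desired polynomial-time promise-preserving reduction.
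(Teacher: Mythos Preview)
Your proposal is correct and matches the paper's approach almost exactly: the paper also composes the step to \UFEOPLp1 with the Bitansky et al.\ / Hub\'a\v{c}ek--Yogev pebbling reduction (with $W(x,i)=1$ iff $V(x)=i$), and its main contribution is precisely the case analysis you outline, showing that every \solnref{UV1}/\solnref{UV2}/\solnref{UV3} violation in the pebbling instance exposes either a genuine \solnref{UFP1} end-of-line or two pebbles at the same claimed position, yielding a \solnref{UFPV1} collision. The only minor divergence is that the paper maps \solnref{U1} solutions uniformly to \solnref{UFP1} (not sometimes to \solnref{UFPV1}), and then handles promise-preservation by a separate argument that the one \solnref{UV2} sub-case mapping to \solnref{UFP1} can only arise when the input already has multiple lines.
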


This completes the chain of promise-preserving reductions from \OPDC to \UEOPL.
Hence, we have shown the following theorem.

\begin{theorem}
\label{thm:opdc}
\OPDC is in \UniqueEOPLc under polynomial-time promise-preserving reductions.
\end{theorem}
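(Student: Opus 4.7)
The plan is to compose the three promise-preserving reductions that have just been established, forming a chain $\OPDC \to \UFEOPL \to \UFEOPLp1 \to \UEOPL$ which together place \OPDC in \UniqueEOPLc. Since both polynomial-time computability and the promise-preserving property are closed under composition, a violation-free \OPDC input is turned by Lemma~\ref{lem:dcm2ufeopl} into a violation-free \UFEOPL instance, then by Lemma~\ref{lem:uf2ufp1} into a violation-free \UFEOPLp1 instance, and finally by Lemma~\ref{lem:ufeopl2eopl} into a violation-free \UEOPL instance. The composed reduction is precisely what the statement demands.

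Concretely, I would first apply Lemma~\ref{lem:dcm2ufeopl} to the input $(k_1,\dots,k_d)$ and circuits $(D_i)$ to produce the successor/potential pair $(S,V)$ sketched in Section~\ref{sec:opdc2ufeopl}: vertices encode tuples $(p_0,\dots,p_d)$ remembering the last point visited on each $i$-surface, $S$ walks along these surfaces in the manner illustrated in Figure~\ref{fig:rbsurface}, and $V$ is essentially the weighted sum $\sum_i k^i p_i$, which strictly increases along the constructed line. Any \solnref{UF1} or \solnref{UFV1} solution pulls back to an \solnref{O1}, \solnref{OV1}, \solnref{OV2}, or \solnref{OV3} solution of the original \OPDC instance, and crucially an \OPDC violation is produced only if a \UFEOPL violation is.

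I would then feed the resulting \UFEOPL instance into Lemma~\ref{lem:uf2ufp1}, a routine padding argument analogous to Theorem~\ref{thm:eoml2eopl} that inserts a chain of $V(S(x)) - V(x) - 1$ auxiliary vertices between any pair of vertices whose potential jumps by more than one, yielding a \UFEOPLp1 instance. Finally, I would apply Lemma~\ref{lem:ufeopl2eopl}, which uses the pebbling-game reversibility construction of Bitansky et al.~\cite{BPR15} and Hub\'a\v{c}ek and Yogev~\cite{hubavcek2017hardness} to manufacture a predecessor circuit and thereby move from a forward-only instance to a full \UEOPL instance.

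The only genuinely subtle link in this chain is the last one, and the heavy lifting there has already been absorbed into Lemma~\ref{lem:ufeopl2eopl}; its difficulty lay precisely in extending the pebbling-game argument, previously restricted to the promise setting with a single line, so that the extra \UEOPL violations introduced by the transformation can be traced back to \UFEOPL violations, and hence by the earlier stages all the way back to \OPDC violations. With all three lemmas in hand, the proof of Theorem~\ref{thm:opdc} reduces to verifying that the composition of three polynomial-time promise-preserving reductions is itself polynomial-time and promise-preserving, which is immediate.
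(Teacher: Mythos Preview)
Your proposal is correct and matches the paper's own argument exactly: the theorem is stated immediately after Lemma~\ref{lem:ufeopl2eopl} with the sentence ``This completes the chain of promise-preserving reductions from \OPDC to \UEOPL,'' and the proof is nothing more than composing Lemmas~\ref{lem:dcm2ufeopl}, \ref{lem:uf2ufp1}, and \ref{lem:ufeopl2eopl}. Your summary of what each link does and why composition preserves both polynomial time and the promise is accurate.
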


\subsection{One-Permutation Discrete Contraction is \UEOPLc-hard}
\label{sec:opdc-complete}

In this section we will show that One-Permutation Discrete Contraction is
\UEOPLc-complete, by giving a hardness result. Specifically, we give a reduction
from \UEOPL to \OPDC.

\paragraph{\bf Modifying the line.}
The first step of the reduction is to slightly alter the \UEOPL instance.
Specifically, we would like to ensure the following two properties.
\begin{enumerate}
\item Every edge increases the potential by exactly one. That is, $V(S(x)) =
V(x) + 1$ for every vertex $x$.
\item The line has length exactly $2^n$ for some integer $n$.
More specifically, we ensure that if $x$ is the end of any line then we have
$V(x) = 2^n - 1$. The start of the line given in the problem has potential $0$, this
ensures that the length of that line is exactly $2^n$, although other lines may
be shorter.
\end{enumerate}
We have already developed a technique for ensuring the first property in the
reduction from EOPL to EOML in Theorem~\ref{thm:eoml2eopl}, which can be reused
here. Specifically, we introduce a
chain of dummy vertices between any pair of vertices $x$ and $y$ with $S(x) = y$
and $V(y) > V(x) + 1$. The second property can be ensured by choosing $n$ 
so that $2^n$ is larger than the longest possible line in the instance. Then, at
every vertex $x$ that is the end of a line, we introduce a chain of dummy
vertices
\begin{equation*}
(x, 0) \rightarrow (x, 1) \rightarrow \dots \rightarrow (x, 2^n - V(x) - 1),
\end{equation*}
where $V(x, i) = V(x) + i$. The vertex $e = (x, 2^n - V(x) - 1)$ will be the new end of
the line, and note that $V(e) = 2^n - 1$ as required. The full details
of this are given in Appendix~\ref{app:2k}, where the following lemma is shown.

\begin{lemma}
\label{lem:2k}
Given a \UEOPL instance $L = (S, P, V)$, there is a polynomial-time
promise-preserving reduction that produces a \UEOPL instance $L' = (S', P',
V')$, where
\begin{itemize}
\item For every $x$ and $y$ with $y = S(x)$ and $x = P(y)$ we have $V(y) = V(x)
+ 1$, and
\item There exists an integer $n$ such that, $x$ is a solution of type \solnref{U1} if and
only if we have $V(x) = 2^n-1$. 
\end{itemize}
\end{lemma}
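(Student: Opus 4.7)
The plan is to combine two local transformations on the circuits of $L$: \emph{edge padding} to normalize each potential increment to exactly one, and \emph{tail padding} to extend every end-of-line vertex into a chain terminating at potential $2^n - 1$. Choose $n = m$ so that $2^n$ strictly bounds the potentials available in $L$. I would encode vertices of $L'$ as triples $(\tau, x, i)$ where $x \in \{0,1\}^n$ is an original vertex, $\tau \in \{A, B, C\}$ indicates whether the vertex is original ($A$), an edge-padding dummy lying between $x$ and $S(x)$ ($B$), or a tail-padding dummy following an end-of-line vertex $x$ ($C$), and $i$ is an index in the chain. Set $V'(A, x) = V(x)$, $V'(B, x, i) = V(x) + i$, and $V'(C, x, i) = V(x) + i$.

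The circuits $S'$ and $P'$ are defined locally. For an original vertex $(A, x)$: if $P(S(x)) = x$, set $S'(A, x)$ to $(A, S(x))$ when $V(S(x)) = V(x) + 1$ and to $(B, x, 1)$ otherwise; if $P(S(x)) \neq x$ and $V(x) < 2^n - 1$, set $S'(A, x) = (C, x, 1)$; if $V(x) = 2^n - 1$, leave $(A, x)$ self-looping, marking an end of line. The predecessor $P'(A, x)$ is symmetric, returning $(A, x)$ itself when $S(P(x)) \neq x$ and $x \neq 0^n$, so as to preserve the start-of-line marker. For edge-padding dummies set $S'(B, x, i) = (B, x, i+1)$ when $i+1 < V(S(x)) - V(x)$ and $(A, S(x))$ otherwise; tail-padding dummies are handled analogously with $(C, x, 2^n - 1 - V(x))$ self-looping as the new end of line. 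Every circuit uses only a constant number of evaluations of $S$, $P$, and $V$, together with comparisons on $O(n+m)$ bits, so $L'$ has polynomial size.

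The hard part will be verifying that the reduction is promise-preserving, i.e., that every solution of $L'$ pulls back to a solution of $L$ of the matching type. A solution of type \solnref{UV1} in $L'$ cannot occur among dummies by construction, so it must involve an original vertex and translate immediately to a \solnref{UV1} of $L$. A solution of type \solnref{UV2} in $L'$ must take the form $(A, x)$ with $x \neq 0^n$ and $S(P(x)) \neq x$, yielding a \solnref{UV2} in $L$. For \solnref{UV3}, observe that since every edge of $L'$ increments the potential by exactly one, the ``interleaving'' clause $V(x) < V(y) < V(S(x))$ is unreachable, so it suffices to rule out equal potentials; a short case analysis splitting on the $A$-, $B$-, and $C$-tags shows that any equal-potential pair in $L'$ forces either two original vertices of $L$ with equal or interleaved potentials (\solnref{UV3} in $L$), or two distinct sources in $L$ (\solnref{UV2} in $L$). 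Finally, a \solnref{U1} solution of $L'$ occurs precisely at a vertex $(C, x, 2^n - 1 - V(x))$ or at an end-of-line $(A, x)$ with $V(x) = 2^n - 1$, in either case giving a proper \solnref{U1} of $L$; and in the promise case the unique chain terminates exactly at potential $2^n - 1$, establishing both conclusions of the lemma.
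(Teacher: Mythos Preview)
Your approach is essentially the paper's: pad each edge to unit potential increments and extend every end-of-line to potential $2^n-1$. The paper encodes both kinds of padding with a single pair $(v,i)$ rather than your three tags, but that is cosmetic.

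There is, however, a genuine gap in how you treat vertices of $L$ that already witness a \solnref{UV1} violation, i.e.\ vertices $x$ with $P(S(x))=x$ but $V(S(x))\le V(x)$. Your rule for $S'(A,x)$ says ``go to $(B,x,1)$ otherwise,'' and this ``otherwise'' includes the case $V(S(x))\le V(x)$. From $(B,x,1)$ your rule for $B$-dummies then sends us to $(A,S(x))$, since $i+1=2$ is not less than the nonpositive quantity $V(S(x))-V(x)$. But $V'(A,S(x))=V(S(x))\le V(x)<V(x)+1=V'(B,x,1)$, so the edge $(B,x,1)\to(A,S(x))$ violates the first bullet of the lemma, and it does so \emph{at a dummy}, contradicting your claim that ``\solnref{UV1} in $L'$ cannot occur among dummies.'' Worse, the symmetric predecessor $P'(A,S(x))$ you sketch would need to point to $(B,x,V(S(x))-V(x)-1)$, a vertex with negative index, so the circuit is not even well-defined on these inputs.

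The fix is easy and is exactly what the paper does: detect the condition $V(S(x))\le V(x)$ up front and treat such $x$ as if it were an end of line (either make $(A,x)$ a self-loop, or route it into your $C$-chain). Any \solnref{U1} or \solnref{UV2} solution of $L'$ created this way then maps back to the \solnref{UV1} at $x$ in $L$, and promise-preservation is unaffected because such $x$ never exist when $L$ has no violations. Once you patch this case, the remainder of your argument goes through.
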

For the remainder of this section, we will assume that we have a \UEOPL
instance $L = (S, P, V)$ that satisfies that two extra conditions given by
Lemma~\ref{lem:2k}. We will use $m$ to denote the bit-length of a vertex in $L$.

\paragraph{\bf The set of points.}

We create an \OPDC instance over a boolean hypercube with $m \cdot n$
dimensions, so our set of points is $P = \{0, 1\}^{mn}$. We will interpret each
point $p \in P$ as a tuple $(v_1, v_2, \dots, v_n)$, where each $v_i$ is a
bit-string of length $m$, meaning that each $v_i$ can represent a vertex in $L$.

To understand the reduction, it helps to consider the case where there is a
unique line in $L$. We know that this line has length exactly $2^n$. The
reduction repeatedly splits this line into two equal parts. 
\begin{itemize}
\item Let $L_1$ denote the first half of $L$, which contains all vertices $v$
with potential $0 \le V(v) \le 2^{n-1}-1$. 
\item Let $L_2$ denote the second half of $L$, which contains all vertices $v$
with potential $2^{n-1} \le V(v) \le 2^n-1$.
\end{itemize}
Observe that $L_1$ and $L_2$ both contain exactly $2^{n-1}$ vertices.

The idea is to embed $L_1$ and $L_2$ into different sub-cubes 
of the $\{0, 1\}^{mn}$ point space. The line that we embed will be determined by
the \emph{last} element of the tuple. Let $v$ be the vertex satisfying $V(v) =
2^{n-1}$, meaning $v$ is the first element of $L_2$. 
\begin{itemize}
\item We embed $L_2$ into the sub-cube $(\blank, \blank, \dots,
\blank, v)$.
\item We embed a copy of $L_1$ into each sub-cube $(\blank, \blank, \dots,
\blank, u)$ with $u \ne v$.
\end{itemize}
Note that this means that we embed a single copy of $L_2$, but many copies of
$L_1$. Specifically, there are $2^m$ possibilities for the final element of the
tuple. One of these corresponds to the sub-cube containing $L_2$, while $2^m-1$
of them contain a copy of $L_1$.

The construction is recursive. So we split $L_2$ into two lines $L_{2,
1}$ and $L_{2,2}$, each containing half of the vertices of $L_2$. If $w$ is the
vertex satisfying $V(w) = 2^{n-1} + 2^{n-2}$, which is the first vertex of
$L_{2,2}$, then
we embed a copy of $L_{2, 2}$ into the sub-cube $(\blank, \blank, \dots,
\blank, w, v)$, where $v$ is the same vertex that we used above, and
we embed a copy of $L_{2, 1}$ into each sub-cube $(\blank, \blank, \dots,
\blank, u, v)$, where $u \ne w$.
Likewise $L_1$ is split into two, and embedded into the sub-cubes of $(\blank,
\blank, \dots, \blank, u)$ whenever $u \ne v$.

Given a vertex $(v_1, v_2, \dots, v_n)$, we can view the bit-string $v_n$ as
choosing either $L_1$ or $L_2$, based on whether $V(v_n) = 2^{n-1}$. Once that
decision has been made, we can then view $v_{n-1}$ as choosing one half of the
remaining line. 
Since the original line $L$ has length $2^n$, and we repeat this process $n$
times, this means that at the end of the process we will be left with a line
containing a single vertex. So in this way, a point $(v_1, v_2, \dots, v_n)$ is
a representation of some vertex in $L$, specifically the vertex that is left
after we repeatedly split the line according to the choices made by $v_n$
through $v_1$.

We can compute the vertex represented by any tuple in polynomial time. Moreover,
given a slice $(\blank, \blank, \dots, \blank, v_i, v_{i+1}, \dots, v_n)$ that
fixes elements $i$ through $n$ of the tuple, we can produce, in polynomial time, a
\UEOPL instance corresponding to the line that is embedded in that slice. This
is formalised in the following lemma, whose proof is given in
Appendix~\ref{app:points}.

\begin{lemma}
\label{lem:points}
There are polynomial-time algorithms for computing the following two functions.
\begin{itemize}
\item The function $\decode(v_1, v_2, \dots, v_n)$ which takes a point in $P$
and returns the corresponding vertex of $L$.

\item The function $\subline(v_i, v_{i+1}, \dots, v_n, L)$, which takes
bit-strings $v_i$ through $v_n$, representing the slice, $(\blank, \blank,
\dots, \blank, v_i, v_{i+1}, \dots, v_n)$, and the instance $L = (S, P, V)$, and
returns a new \UEOPL instance $L' = (S', P', V')$ which represents the instance
that is to be embedded into this slice.
\end{itemize}
\end{lemma}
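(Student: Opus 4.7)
The plan is to implement both functions by mirroring the recursive halving used to define the embedding, using only the circuits $S$, $P$, $V$ of $L$ together with one trick that avoids ever walking the line. By Lemma~\ref{lem:2k}, the main line of $L$ has length exactly $2^n$ and its vertices carry potentials $0, 1, \dots, 2^n - 1$, so I can detect which half was chosen at each level of the recursion from a single evaluation of $V$, rather than by traversing an exponentially long segment.

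Concretely, I would first precompute a ``cumulative starting potential'' $s_j$ for the sub-line that is active at level $j$ of the halving. Start with $s_n = 0$ and, for $j = n, n-1, \dots, i$, set $s_{j-1} = s_j + 2^{j-1}$ if $V(v_j) = s_j + 2^{j-1}$, and $s_{j-1} = s_j$ otherwise. This records the sequence of half-choices encoded by $v_n, \dots, v_i$ in $O(n)$ calls to $V$, hence in polynomial time. With these constants in hand, $\decode(v_1, \dots, v_n)$ simply returns $v_1$: after $n$ halvings the embedded sub-line has length $1$, and its unique vertex is precisely the level-$0$ label $v_1$. The function $\subline(v_i, \dots, v_n, L)$ computes $s_{i-1}$ and then outputs circuits $(S', P', V')$ that expose the length-$2^{i-1}$ segment of $L$ starting at potential $s_{i-1}$: set $V'(v) = V(v) - s_{i-1}$, let $S'(v) = S(v)$ except sever the successor link at the segment's last vertex by setting $S'(v) = v$ there, define $P'(v) = P(v)$ with an analogous sever at the first vertex, and relabel the start so that the syntactic requirements $P'(0^{m'}) = 0^{m'} \ne S'(0^{m'})$ and $V'(0^{m'}) = 0$ all hold. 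Each new circuit is a constant-size wrapper around $S$, $P$, $V$ together with comparisons against the precomputed $s_{i-1}$ and $2^{i-1}$, so polynomial-time evaluability and polynomial circuit size are both immediate.

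The one point that will require care is correctness when $L$ itself carries violations. In that case $V$ need not be injective on the main line, and the ``segment starting at potential $s_{i-1}$'' may not be geometrically well-defined. What the present lemma actually requires, however, is only that the returned circuits are syntactically valid \UEOPL inputs; any semantic pathology is deferred to be caught later by the OPDC direction functions built on top of $\subline$, where it maps back to an honest \UEOPL violation of $L$. I therefore expect the main obstacle not to be the lemma itself, which is essentially an implementation exercise once the potential trick is in place, but rather arranging the surrounding reduction so that this deferred violation-handling composes cleanly; the lemma is stated purely as a polynomial-time computability claim precisely in order to separate those two concerns.
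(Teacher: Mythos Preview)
Your proposal is correct and takes essentially the same approach as the paper. The paper defines auxiliary functions $\firsthalf$ and $\secondhalf$ that split an instance by potential threshold and then applies them recursively, checking at level $j$ whether $V'(v_j) = 2^{j-1}$ in the already-shifted instance; your iterative computation of the cumulative offset $s_j$ and the test $V(v_j) = s_j + 2^{j-1}$ is exactly the unrolled form of the same recursion, and both proofs conclude with $\decode(v_1,\dots,v_n) = v_1$.
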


Although we have described this construction in terms of a single line, the two
polynomial time algorithms given by Lemma~\ref{lem:points} are capable of
working with instances that contain multiple lines. In the case where there are
multiple lines, there may be two or more bit-strings $x$ and $y$ with $V(x) = V(y) =
2^{n-1}$. In that case, we will embed second-half instances into $(\blank,
\blank, \dots, x)$ and $(\blank, \blank, \dots, y)$. This is not a problem for
the functions $\decode$ and $\subline$, although this may lead to violations
in the resulting OPDC instance that we will need to deal with.

\paragraph{\bf The direction functions.}

The direction functions will carry out the embedding of the lines. Since our
space is $\{0, 1\}^{mn}$, we will need to define $m \cdot n$ direction
functions $D_1$ through $D_{mn}$. 

The direction functions $D_{m(n-1)+1}$ through $D_{mn}$ correspond to the bits
used to define $v_n$ in a point $(v_1, v_2, \dots, v_n)$. These direction
functions are used to implement the transition between the first and second half
of the line. For each point $p = (v_1, v_2, \dots, v_n)$ we define these
functions using the following algorithm.


\begin{enumerate}
\item 
In the case where $V(v_n) \ne 2^{n-1}$, meaning that $\decode(p)$ is a vertex in
the first half of the line, then there are two possibilities.

\begin{enumerate}
\item If $V(\decode(p)) = 2^{n-1} - 1$, meaning that $p$ is the last
vertex on the first half of the line, then
we orient the direction function of dimensions $(n-1) \cdot m$
through $m$ towards the bit-string given by $S(\decode(p))$. This captures the idea that once we reach the end of
the first half of the line, we should then move to the second half, and we do
this by moving towards the sub-cube $(\blank, \blank, \dots, \blank,
S(\decode(p))$. 
So for each $i$ in the range $m(n-1) + 1 \le i \le mn$ we define
\begin{equation*}
D_i(p) = \begin{cases}
\up & \text{if $p_i = 0$ and $S(\decode(p))_i = 1$,}\\
\down & \text{if $p_i = 1$ and $S(\decode(p))_i = 0$,}\\
\zero & \text{otherwise.}\\
\end{cases}
\end{equation*}

\item If the rule above does not apply, then we orient everything towards $0$.
Specifically, we set
\begin{equation*}
D_i(p) = \begin{cases}
\down & \text{if $p_i = 1$,}\\
\zero & \text{if $p_i = 0$.}
\end{cases}
\end{equation*}
This is an arbitrary choice: our reduction would work with any valid
direction rules in this case.
\end{enumerate}

\item If $V(v_n) = 2^{n-1}$, then we are in the second half of the line. In this
case we set $D_i(p) = \zero$ for all dimensions $i$ in the range $(n-1) \cdot m
\le i \le m$. This captures the idea that, once we have entered the second half
of the line, we should never leave it again.

\end{enumerate}
We use the same idea recursively to define direction functions for all
dimensions $i \le m(n-1)$. This gives us a family of polynomial-time computable
direction functions $\mathcal{D} = (D_i)_{i=1,\dots,mn}$. The full details can
be found in Appendix~\ref{app:direction}.

\paragraph{\bf The proof.}

We have now defined $P$ and $\mathcal{D}$, so we have an OPDC instance. We must
now argue that the reduction is correct. 
The intuitive idea is as follows. If we
are at some point $p \in P$, and $\decode(p) = v$ is a vertex that is not the
end of a line, then there is some direction function $D_i$ such that $D_i(p) \ne
\zero$. We can see this directly for the case where $V(\decode(p)) = 2^{n-1} -
1$, since the direction functions on dimensions 
$m(n-1) + 1$ through $mn$ will be oriented towards $S(V(\decode(p))$, and so $p$
will not be a solution. 

As we show in the proof, the same property holds for all
other vertices in the middle of the line. The end of the line will be a
solution, because it will be encoded by the point $p = (v_1, v_2, \dots, v_n)$,
where each $v_i$ is the first vertex on the second half of the line embedded in
the sub-cube. Our direction functions ensure that $D_j(p) = \zero$ for all $j$
for this point.

Our reduction must also deal with violations. Violations of type \solnref{OV3} are
impossible by construction. In violations of type \solnref{OV1} and \solnref{OV2} we have two points
$p$ and $q$ that are in the same $i$-slice. Here we specifically use the fact
that violations can only occur within $i$-slices. Note that an $i$-slice will
fix the last $mn - i$ bits of the tuple $(v_1, v_2, \dots, v_n)$, which means
that there will be an index $j$ such that all $v_l$ with $l > j$ are fixed. This
allows us to associate the slice with the line $L' = \subline(v_{j+1},
v_{j+2}, \dots, v_n)$, and we know that both $p$ and $q$ encode vertices of
$L'$. In both cases, we are able to recover two vertices in $L'$ that have the
same potential, and these vertices also have the same potential in $L$. So we
get a solution of type \solnref{UV3}. The details are rather involved, and we defer the
proof to Appendix~\ref{app:completeness}, where the following lemma is proved.

\begin{lemma}
\label{lem:completeness}
There is a polynomial-time promise-preserving reduction from \UEOPL to \OPDC.
\end{lemma}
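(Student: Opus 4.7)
The plan is to first apply Lemma~\ref{lem:2k} to normalise the \UEOPL instance $L=(S,P,V)$ so that every edge increments the potential by one and every line-end has potential exactly $2^n-1$. Let $m$ be the bit-length of a \UEOPL vertex. The \OPDC instance is built on $P=\Set{0,1}^{mn}$ viewed as tuples $(v_1,\dots,v_n)$ of $m$-bit strings, with $\decode$ and $\subline$ as in Lemma~\ref{lem:points}, and with direction functions defined recursively as in the construction: in the top block of $m$ coordinates the directions point towards $S(\decode(p))$ exactly when $V(\decode(p))=2^{n-1}-1$, collapse to $\zero$ once $v_n$ encodes a second-half vertex, and otherwise point towards $0$; lower blocks are defined analogously on the sub-line returned by $\subline$.

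The first step is to show that \solnref{O1} solutions are in bijection with \solnref{U1} solutions. Suppose $p=(v_1,\dots,v_n)$ satisfies $D_i(p)=\zero$ for every $i$. By the definition of the top block, either $v_n$ is already a second-half representative (i.e. $V(v_n)=2^{n-1}$), or the top-block directions point to some $S(\decode(p))\ne 0^m$, forcing a nonzero $D_i$. The former alternative must hold, and then the lower blocks are an \OPDC instance for $\subline(v_n,L)$ of the same form; the induction terminates only at the unique vertex with potential $2^n-1$, which by Lemma~\ref{lem:2k} is a \solnref{U1} solution. Conversely, taking $v_n,v_{n-1},\dots,v_1$ to be the second-half starts along the recursion produces a legitimate \solnref{O1} point.

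The second step is to rule out \solnref{OV3} by inspection of the construction: the ``arbitrary'' branch sets every nonzero coordinate to $\down$ and every zero coordinate to $\zero$, so it never points off the hypercube; the $S(\decode(p))$-targeting branch points to a bit-string inside $\Set{0,1}^m$; and the $\zero$ branch is vacuous. This invariant is inherited by every recursive block.

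The main obstacle, and where the bookkeeping is delicate, is translating \solnref{OV1} and \solnref{OV2} into \solnref{UV3}. An $i$-slice $s$ fixes coordinates $i{+}1,\dots,mn$, hence fixes a suffix $v_{j+1},\dots,v_n$ of the tuple and possibly a partial prefix of block $v_j$; let $L'=\subline(v_{j+1},\dots,v_n,L)$ be the associated sub-line. For \solnref{OV1}, the two points $p,q\in P_s$ both satisfy $D_l(\cdot)=\zero$ for all $l\le i$, so the induction used for \solnref{O1} applied inside the slice shows that each decodes to a line-end of a sub-sub-line of $L'$; these two distinct endpoints have identical potential in $L'$ and therefore identical potential in $L$, yielding a \solnref{UV3} solution. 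For \solnref{OV2}, the two adjacent points with opposite block-$j$ directions straddle the splitting coordinate of $L'$, so each decodes, via the recursion on the lower blocks, to a vertex at the same ``split'' potential level of the respective halves of $L'$, again producing two distinct vertices with equal $V$-value in $L$, i.e.\ a \solnref{UV3} solution. Promise preservation is then automatic: since every \OPDC violation is mapped to a \UEOPL violation (and never to a \solnref{U1} solution), if the input satisfies its promise the resulting \OPDC instance has only the unique \solnref{O1} at the end of the line.
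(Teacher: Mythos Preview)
Your overall plan matches the paper's: normalise via Lemma~\ref{lem:2k}, build the \OPDC instance on $\{0,1\}^{mn}$ using $\decode$ and $\subline$, show \solnref{O1} corresponds to \solnref{U1} by the block-by-block induction, and rule out \solnref{OV3} by inspection. Those parts are fine. The gaps are in how you map \solnref{OV1} and \solnref{OV2} back to \solnref{UV3}.

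For \solnref{OV1}, the claim ``these two distinct endpoints have identical potential in $L'$'' is not true in general. The $i$-slice only fixes \emph{part} of block $j$, so $v_j$ and $u_j$ can differ in the free bits. In particular, one of $p,q$ may land in the second-half subline of $L'$ (its block-$j$ directions all $\zero$) while the other is in the first-half subline; then $\decode(p)$ sits at potential at least $2^{j-1}$ while $\decode(q)$ sits at $2^{j-1}-1$, and they do \emph{not} have equal potential. The paper resolves this with a four-case analysis on whether $v_j=u_j$ and on which point (if any) has all-$\zero$ directions across block $j$; in the mixed case it uses $S(\decode(q))$ rather than $\decode(q)$ to produce the equal-potential witness, and it separately argues distinctness of the two witnesses (which you also omit).

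For \solnref{OV2}, your picture is wrong: the two points do not ``straddle'' the split nor lie in ``respective halves'' of $L'$. Since $D_i(p)=\down$ and $D_i(q)=\up$ are both nonzero, both points are in case~1 of the block-$j$ direction rule, i.e.\ both lie in the \emph{first} half of $L'$. The paper's argument is that $D_l=\zero$ for all $l<i$ forces $\decode(p)$ and $\decode(q)$ to sit at the same potential $2^{j-1}-1$ (the end of that first half), and then the opposing directions in dimension $i$ show $S(\decode(p))\ne S(\decode(q))$, hence $\decode(p)\ne\decode(q)$, yielding the \solnref{UV3} pair.
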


Thus, we have shown the following theorem.

\begin{theorem}
\label{thm:opdc-completeness}
\OPDC is \UEOPLc-complete under promise-preserving reductions, even when the set
of points $P$ is a hypercube.
\end{theorem}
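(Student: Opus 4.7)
The theorem combines the two directions established in this section: containment, $\OPDC \in \UEOPLc$ under promise-preserving reductions, is exactly Theorem~\ref{thm:opdc}, while hardness is Lemma~\ref{lem:completeness}. The stronger claim that hardness holds even on hypercube instances is automatic once one inspects the construction of Lemma~\ref{lem:completeness}: the point set it builds is the boolean hypercube $\{0,1\}^{mn}$. My plan is therefore to combine the two and to describe how I would organise the proof of Lemma~\ref{lem:completeness} itself.

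First, I would preprocess the input \UEOPL instance $L = (S,P,V)$ using Lemma~\ref{lem:2k} so that every edge increases the potential by exactly one and every proper-solution vertex has potential $2^n - 1$ for a common integer $n$. This normalisation is precisely what makes the subsequent halving construction terminate cleanly. Second, I would set the point set to be $\{0,1\}^{mn}$ and view each point as a tuple $(v_1,\dots,v_n)$ of $m$-bit strings. The embedding is recursive: the last coordinate $v_n$ selects either the unique sub-cube $(\blank,\dots,\blank,v)$ into which the second half of $L$ is embedded, where $v$ is the first vertex of that half, or one of the $2^m - 1$ parallel copies of the first half held in the remaining sub-cubes. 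Recursion on the lower coordinates encodes the structure of each half, and the helper functions $\decode$ and $\subline$ of Lemma~\ref{lem:points} translate points to vertices of $L$ and slices to embedded sub-instances.

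Third, I would specify the direction functions $\mathcal{D} = (D_i)_{i=1}^{mn}$ recursively. On a sub-cube embedding a second half, the top block of $m$ direction functions is frozen to $\zero$, so the line cannot escape that sub-cube once it enters; on a sub-cube embedding a first half, the top block is routed toward $S(\decode(p))$ exactly when $p$ represents the last vertex of that half, forcing the crossing into the second half at a single point. Recursion encodes each embedded half on the lower coordinates. Verifying that the unique proper OPDC solution is the point decoding to the line-end is then straightforward: in the middle of any embedded sub-line some direction is nonzero, while at the final vertex the direction functions push into the correct next-level sub-cube, eventually producing a point with $D_i(p) = \zero$ for every $i$.

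The main obstacle is mapping the three OPDC violation types back to \UEOPL violations in a promise-preserving way. Violations of type \solnref{OV3} are impossible by construction, since the direction rules never leave $\{0,1\}$. Violations of types \solnref{OV1} and \solnref{OV2} live inside some $i$-slice, which fixes the last $mn - i$ bits of the tuple and therefore corresponds to an embedded sub-line $L' = \subline(v_{j+1},\dots,v_n,L)$; the two given points decode to vertices of $L'$, and because the slice fails the promise one can extract two vertices of $L'$ with equal potential, giving a \solnref{UV3} violation of the original instance $L$. This bookkeeping is the most delicate part, but it simultaneously certifies promise-preservation, because violations in the produced OPDC instance are only ever traced back to violations, and never to proper solutions, of the input.
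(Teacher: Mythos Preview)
Your proposal is correct and follows essentially the same approach as the paper: the theorem is immediate from combining Theorem~\ref{thm:opdc} (containment) with Lemma~\ref{lem:completeness} (hardness), and the hypercube clause is automatic because the hardness construction sets $P=\{0,1\}^{mn}$. Your sketch of Lemma~\ref{lem:completeness}---preprocessing via Lemma~\ref{lem:2k}, the recursive halving embedding using $\decode$/$\subline$, the direction functions that freeze on second-half sub-cubes and route toward $S(\decode(p))$ at the transition point, the impossibility of \solnref{OV3}, and the extraction of equal-potential vertices (hence a \solnref{UV3} solution) from \solnref{OV1}/\solnref{OV2}---matches the paper's construction and case analysis in Appendix~\ref{app:completeness}.
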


Since we have shown bidirectional promise-preserving reductions between \OPDC
and \UEOPL, we also get that the promise version of \OPDC is complete for
\PUEOPLc.

\section{UniqueEOPL containment results}

\subsection{Unique Sink Orientations}
\label{sec:uso}


\paragraph{\bf Unique sink orientations.}

Let $C = \{0, 1\}^n$ be an $n$-dimensional hypercube. An
\defineterm{orientation} of $C$ gives a direction to each edge of $C$. 
We formalise this as a function $\udir : C \rightarrow \{0, 1\}^n$, that 
assigns a bit-string to each vertex of~$C$, with the interpretation that the
$i$-th bit of the string gives an orientation of the edge in
dimension~$i$. More precisely, for each vertex $v \in C$ and each dimension $i$,
let $u$ be the vertex that is adjacent to $v$ in dimension $i$.
\begin{itemize}
\item If $\udir(v)_i = 0$ then the edge between $v$ and $u$ is oriented towards
$v$.
\item If $\udir(v)_i = 1$ then the edge between $v$ and $u$ is oriented towards
$u$.
\end{itemize}
Note that this definition does not insist that $v$ and $u$ agree on the
orientation of the edge between them, meaning that $\udir(v)_i$ and
$\udir(u)_i$ may orient the edge in opposite directions. However, this will be
a violation in our set up, and a proper orientation should be thought of as
always assigning a consistent direction to each edge.

A \defineterm{face} is a subset of $C$ in which some coordinates have been
fixed. This can be defined using the same notation that we used for slices in
OPDC. So a face $f = (f_1, f_2, \dots, f_n)$, where each $f_i$ is either $0$,
$1$, or $\blank$, and the sub-cube defined by $f$ contains every vertex $v \in
C$ such that $v_i = f_i$ whenever $f_i \ne \blank$.

A vertex $v \in C$ is a \defineterm{sink} if all of the edges of $v$ are
directed towards $v$, meaning that $\udir(v)_i = 0$ for all $i$. Given a face
$f$, a vertex $v$ is a \defineterm{sink of $f$} if it is the sink of the
sub-cube defined by $f$, meaning that $\udir(v)_i = 0$ whenever $f_i = \blank$.

A \defineterm{unique sink orientation} (USO) is an orientation in which
\emph{every} face has a unique sink. Since $f = (\blank, \blank, \dots, \blank)$
is a face, this also implies that the whole cube has a unique sink, and the USO
problem is to find the unique sink.

\paragraph{\bf Placing the problem in \TFNP.}

The USO property is quite restrictive, and there are many orientations that are
not USOs. Indeed, the overall cube may not have a sink at all, or it may have
multiple sinks, and this may also be the case for the other faces of the cube.
Fortunately, Szab{\'{o}} and Welzl have pointed out that if an orientation is
not a USO, then there is a succinct witness of this fact~\cite{SzaboW01}.

Let $v, u \in C$ be two distinct vertices. We have that $\udir$ is a USO if and
only if there exists some dimension $i$ such that $v_i \ne u_i$ and $\udir(v)_i
\ne \udir(u)_i$. Put another way, this means that if we restrict the
orientation only to the sub-cube defined by any two vertices $v$ and $u$, then
the orientations of $v$ and $u$ must be different on that sub-cube. If one uses
$\oplus$ to denote the XOR operation on binary strings, and $\cap$ to denote
the bit-wise and-operation, then
this condition can be written concisely as
\begin{equation*}
(v \oplus u) \cap (\udir(v) \oplus \udir(u)) \ne 0^n.
\end{equation*}
Note that this condition also ensures that the orientation is consistent, since
if $v$ and $u$ differ in only a single dimension, then the conditions states
that they must agree on the orientation of the edge between them.

We use this condition to formulate the USO problem as a problem in \TFNP.
\begin{definition}[\USO]
\label{def:uso}
Given an orientation function $\udir: \{0, 1\}^n \rightarrow \{0, 1\}^n \cup
\{\vblank\}$ find
one of the following.
\begin{enumerate}[label=(US\arabic*), wide=0pt]
\item \solnlabel{US1} A point $v \in \{0,1\}^n$ such that $\udir(v)_i = 0$ for all $i$.
\end{enumerate}
\begin{enumerate}[label=(USV\arabic*), wide=0pt]
\item \solnlabel{USV1} A point $v \in \{0,1\}^n$ such that $\udir(v) = \vblank$.
\item \solnlabel{USV2} Two points $v, u \in \{0,1\}^n$ such that $v \ne u$ and 
$(v \oplus u) \cap (\udir(v) \oplus \udir(u)) = 0^n$.
\end{enumerate}
\end{definition}

Note that this formulation of the problem allows the orientation function to
decline to give an orientation for some vertices, and this is indicated
by setting $\udir(v) = \vblank$. Any such vertex is a violation of type \solnref{USV1}.
While this adds nothing interesting to the USO problem, 
we will use this in Section~\ref{sec:PLCPtoEOPL} when we reduce P-LCP 
\emph{to} USO, since in some cases the reduction may
not be able to produce an orientation at a particular vertex.

Assuming that there are no violations of type \solnref{USV1}, it is easy to see that the
problem is total. This is because
every USO has a sink, giving a solution of type \solnref{US1}, while every orientation
that is not a USO has a violation of type \solnref{USV2}.

\paragraph{\bf Placing the problem in \UEOPLc.}

We show that the problem lies in \UEOPLc by providing a promise-preserving
reduction from \USO to \OPDC. The reduction is actually not too difficult,
because when the point set for the \OPDC instance is a hypercube, the \OPDC
problem can be viewed as a less restrictive variant of USO. Specifically, 
USO demands that \emph{every} face has a unique sink, while \OPDC only requires
that the $i$-slices should have unique sinks.

The reduction creates an OPDC instances on the same set of points, meaning that
$P = C$. The direction functions simply follow the orientation given by $\udir$.
Specifically, for each $v \in P$ and each dimension $i$ we define
\begin{equation*}
D_i(v) = \begin{cases}
\zero & \text{if $\udir(v)_i = 0$,} \\
\up & \text{if $\udir(v)_i = 1$ and $v_i = 0$,} \\
\down & \text{if $\udir(v)_i = 1$ and $v_i = 1$.}
\end{cases}
\end{equation*}
If $\udir(v) = \vblank$, then we instead set $D_i(v) = \zero$ for all $i$.

To prove that this is correct, we show that every solution of the OPDC instance
can be mapped back to a solution of the USO instance. Any fixpoint of the OPDC
instance satisfies $D_i(v) = \zero$ for all $i$, which can only occur if $v$ is
a sink, or if $\udir(v) = \vblank$. The violation solutions of OPDC can be used
to generate a pair of vertices that constitute a \solnref{USV2} violation. We defer the
details to Appendix~\ref{app:uso}, where the following lemma is proved.

\begin{lemma}
\label{lem:uso}
There is a polynomial-time promise-preserving reduction from \USO to \OPDC.
\end{lemma}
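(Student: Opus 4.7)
The plan is to verify correctness of the construction already laid out in the excerpt: set $P = C = \{0,1\}^n$ and define each $D_i(v)$ by the three-way case split from $\udir(v)_i$, defaulting to $\zero$ whenever $\udir(v) = \vblank$. Polynomial-time computability is immediate because each $D_i(v)$ is a constant-size circuit over one evaluation of $\udir$. What remains is (i) mapping every \OPDC solution back to a \USO solution, and (ii) checking that the reduction is promise-preserving.

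For (i), I would handle each \OPDC solution type in turn. A type-\solnref{O1} fixpoint $p$ satisfies $D_i(p) = \zero$ for all $i$, which by construction forces either $\udir(p) = \vblank$ (return $p$ as \solnref{USV1}) or $\udir(p) = 0^n$ (return $p$ as \solnref{US1}). For a type-\solnref{OV1} pair $p \neq q$ sharing an $i$-slice with $D_j = \zero$ on all $j \le i$, either one of them has $\udir = \vblank$ (yielding \solnref{USV1}) or both satisfy $\udir(\cdot)_j = 0$ for every $j \le i$; since $p$ and $q$ agree on coordinates $> i$, the product $(p \oplus q) \cap (\udir(p) \oplus \udir(q))$ vanishes bit-by-bit, giving a \solnref{USV2} witness. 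For a type-\solnref{OV2} pair, coordinates $< i$ contribute $0$ to the AND for the same reason, coordinates $> i$ satisfy $p \oplus q = 0$, and on coordinate $i$ the signs $D_i(p) = \down$ with $p_i = 1$ together with $D_i(q) = \up$ with $q_i = 0$ force $\udir(p)_i = \udir(q)_i = 1$, zeroing that bit as well. The key sanity check is that type-\solnref{OV3} is impossible under our construction: with $k_i = 1$, the hypothesis $p_i = 0$ and $D_i(p) = \down$ contradicts the rule that outputs $\down$ only when $v_i = 1$, and symmetrically for $p_i = 1$ with $D_i(p) = \up$.

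Promise-preservation in (ii) then follows quickly: if the \USO instance has no violations then $\udir$ is a genuine unique sink orientation with no $\vblank$ values, so a type-\solnref{OV1} violation would exhibit two distinct sinks of the face defined by the $i$-slice (contradicting uniqueness), a type-\solnref{OV2} violation would produce a \solnref{USV2} witness via part (i) (again a contradiction), and type-\solnref{OV3} is ruled out unconditionally. The main obstacle is the bookkeeping for \solnref{OV2}, where one must simultaneously track the three coordinate regimes $j < i$, $j = i$, $j > i$ and the interaction between the $\up/\down/\zero$ encoding and the underlying $\udir$ values; this is routine but easy to mis-state, and I would write it out most carefully in the appendix.
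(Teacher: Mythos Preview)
Your proposal is correct and follows essentially the same approach as the paper: the same direction-function construction, the same case analysis mapping \solnref{O1} to \solnref{US1}/\solnref{USV1}, mapping \solnref{OV1} and \solnref{OV2} to \solnref{USV2} (or \solnref{USV1} when $\udir$ outputs $\vblank$), and the same observation that \solnref{OV3} is structurally impossible. Your direct argument for promise-preservation (showing each \OPDC violation would contradict the \USO promise) is slightly cleaner than the paper's contrapositive phrasing, but the content is identical.
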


Thus we have shown the following theorem.

\begin{theorem}
\label{thm:uso}
\USO is in \UEOPLc under promise-preserving reductions.
\end{theorem}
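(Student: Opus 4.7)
The plan is to prove Theorem \ref{thm:uso} by establishing Lemma \ref{lem:uso} (a promise-preserving reduction from \USO to \OPDC) and then composing with Theorem \ref{thm:opdc}, which shows that \OPDC is already in \UEOPLc under promise-preserving reductions. Since promise-preserving reductions compose in the obvious way (a violation in \UEOPL maps back to a violation in \OPDC and thence to a violation in \USO), the theorem follows immediately once the lemma is in place. So the only real content is the reduction from \USO to \OPDC described just above the lemma.

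First I would set up the \OPDC instance explicitly: take $P = C = \{0,1\}^n$ (so all $k_i = 1$) and use the direction functions $D_i$ as displayed in the excerpt, where $D_i(v) = \zero$ if $\udir(v)_i = 0$, $D_i(v) = \up$ if $\udir(v)_i = 1$ and $v_i = 0$, and $D_i(v) = \down$ if $\udir(v)_i = 1$ and $v_i = 1$; if $\udir(v) = \vblank$, set $D_i(v) = \zero$ for every $i$. This is computable in polynomial time given the circuit for $\udir$.

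Next I would verify the four solution cases of \OPDC. For a fixpoint solution of type \solnref{O1}, the condition $D_i(v) = \zero$ for all $i$ holds iff either $\udir(v) = \vblank$ (a violation of type \solnref{USV1}) or $\udir(v)_i = 0$ for all $i$ (a sink, i.e.\ \solnref{US1}). For an \solnref{OV1} violation, two points $p,q$ in the same $i$-slice agree on dimensions $i{+}1,\dots,n$ and satisfy $D_j(p) = D_j(q) = \zero$ for every $j \le i$; by the direction rule this forces $\udir(p)_j = \udir(q)_j = 0$ for $j \le i$ (once we have separately returned any $\vblank$ vertex as \solnref{USV1}), so $(p\oplus q) \cap (\udir(p)\oplus\udir(q)) = 0^n$, yielding a \solnref{USV2} violation. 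An \solnref{OV2} violation gives $p_i = 1$, $q_i = 0$, $\udir(p)_i = \udir(q)_i = 1$, $\udir(p)_j = \udir(q)_j = 0$ for $j < i$, and agreement on dimensions $>i$; again the AND of the two XORs is $0^n$, a \solnref{USV2} violation. Finally, \solnref{OV3} is impossible by construction: $D_i(p) = \down$ forces $p_i = 1$ and $D_i(p) = \up$ forces $p_i = 0$, which contradicts the boundary conditions $p_i = 0$ and $p_i = k_i = 1$ respectively.

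Promise-preservation is automatic: the only way to produce an \OPDC violation is from a \USO violation of type \solnref{USV1} or \solnref{USV2}, so if the input is promised to be a genuine USO, the constructed \OPDC instance has no violations either. I do not expect a real obstacle here — the reduction is essentially a relabelling, since an $i$-slice of $P$ is literally a face of $C$ and the OPDC ``unique fixpoint of an $i$-slice'' requirement is just the USO ``unique sink of that face'' requirement restricted to the particular family of faces used by \OPDC. The slight subtlety, and the only place one must be careful in writing up, is making sure that the $\vblank$ case is routed to \solnref{USV1} before the \OPDC violations are processed, so that the case analyses for \solnref{OV1} and \solnref{OV2} legitimately conclude $\udir(p)_j = 0$ from $D_j(p) = \zero$ rather than the vacuous $\vblank$ case.
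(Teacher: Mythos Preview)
Your proposal is correct and follows essentially the same approach as the paper: the same reduction to \OPDC, the same case analysis over \solnref{O1}, \solnref{OV1}, \solnref{OV2}, \solnref{OV3}, and the same promise-preservation argument, composed with Theorem~\ref{thm:opdc}. One tiny sharpening: in the \solnref{OV2} case you do not need to route $\vblank$ separately, since $D_i(p)=\down$ and $D_i(q)=\up$ already force $\udir(p),\udir(q)\neq\vblank$; the $\vblank$ check is only genuinely needed in the \solnref{OV1} case.
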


This proves the following new facts about the complexity of finding the sink of a USO.

\begin{corollary}
\USO is in \PPAD, \PLS, and \CLS.
\end{corollary}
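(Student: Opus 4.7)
The plan is to observe that this corollary follows by chaining together containments that have already been established in the excerpt, together with the standard fact that $\CLS \subseteq \PPADPLS$. Specifically, Theorem~\ref{thm:uso} places \USO in \UEOPLc via a promise-preserving reduction to \OPDC (which in turn reduces to \UEOPL). So it suffices to argue the chain $\UEOPLc \subseteq \EOPLc \subseteq \CLS \subseteq \PPAD$ and $\UEOPLc \subseteq \EOPLc \subseteq \CLS \subseteq \PLS$.

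First, I would note that $\UEOPLc \subseteq \EOPLc$ is immediate from the definitions of the two classes. An instance of \UEOPL is literally an instance of \EOPL with an additional type of violation solution (\solnref{UV3}) allowed. Any \EOPL solution is a \UEOPL solution, so the identity map reduces \EOPL to \UEOPL in one direction, but for our inclusion we need the other direction: to reduce a problem in \UEOPLc to \EOPL we can use the same \UEOPL instance as an \EOPL instance, and since solutions of type \solnref{U1} and \solnref{UV1}, \solnref{UV2} already correspond to \EOPL solutions, and solutions of type \solnref{UV3} can be handled by following the line from either $x$ or $y$ to find a genuine end-of-line solution (more simply: because \UEOPLc is defined as the class of problems reducing to \UEOPL, and \UEOPL reduces trivially to \EOPL by converting any \solnref{UV3} solution into a \solnref{UV2} or \solnref{U1} solution by walking the successor circuit). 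Actually the cleaner route is simply: every \UEOPLc problem reduces to \UEOPL, and \UEOPL itself is an instance of the \EOPL-style graph with one extra solution type, so in particular its solutions include all \EOPL solutions; any total problem reducing to \UEOPL therefore reduces to \EOPL as well.

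Second, I would invoke Corollary~\ref{cor:eoplinCLS}, which gives $\EOPLc \subseteq \CLS$ via the reduction from \EOPL to \EOML. Third, I would invoke the original theorem of Daskalakis and Papadimitriou~\cite{daskalakis2011continuous} establishing $\CLS \subseteq \PPAD \cap \PLS$; this is a foundational fact about \CLS and does not require further work here.

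Chaining everything together:
\[
\USO \in \UEOPLc \subseteq \EOPLc \subseteq \CLS \subseteq \PPAD \cap \PLS,
\]
which gives \USO $\in \PPAD$, \USO $\in \PLS$, and \USO $\in \CLS$ simultaneously. There is no real obstacle; the entire content lies in Theorem~\ref{thm:uso} and Corollary~\ref{cor:eoplinCLS}, both already established. The only mild subtlety worth mentioning is checking that the extra \solnref{UV3} violations in \UEOPL do not break the containment $\UEOPLc \subseteq \EOPLc$, but this is handled as sketched above.
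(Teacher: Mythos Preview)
Your overall chain of containments is correct and matches the paper's (implicit) reasoning: the corollary is stated without proof in the paper because it is immediate from Theorem~\ref{thm:uso}, the inclusion $\UEOPLc \subseteq \EOPLc$, Corollary~\ref{cor:eoplinCLS}, and $\CLS \subseteq \PPADPLS$.

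However, your discussion of $\UEOPLc \subseteq \EOPLc$ is muddled and contains a wrong detour. You write that ``the identity map reduces \EOPL to \UEOPL'' because ``any \EOPL solution is a \UEOPL solution''---but that implication goes the wrong way (a reduction from $X$ to $Y$ needs to map $Y$-solutions back to $X$-solutions, not the reverse). You then talk about ``handling'' \solnref{UV3} solutions ``by following the line,'' which is unnecessary: an \EOPL solver never returns \solnref{UV3} solutions at all. The clean one-line argument (which the paper gives verbatim) is: a \UEOPL instance is literally an \EOPL instance, and every \EOPL solution type (\solnref{R1}, \solnref{R2}) is already a \UEOPL solution type (\solnref{U1}/\solnref{UV2}, \solnref{UV1}); hence the identity map, together with the existing back-map from \UEOPL solutions to the original problem, gives a reduction to \EOPL. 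No walking of lines is needed. Your final ``cleaner route'' sentence gets there, but the preceding paragraph should be deleted.
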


\subsection{Piecewise Linear Contraction Maps}
\label{sec:lcm2eopl}

In this section, we show that finding a fixpoint of a \defineterm{piecewise
linear} contraction map lies in \UEOPLc. Specifically, we study contraction maps
where the function $f$ is given as a \defineterm{\LinearFIXP circuit}, which is
an arithmetic circuit comprised of $\max, \min, +, -$, and $\mz$ (multiplication
by a constant) gates~\cite{EtessamiY10}. Hence, a \LinearFIXP circuit defines a
\emph{piecewise linear} function.

\paragraph{\bf Violations.}

Not every function $f$ is contracting, and the most obvious way to prove that
$f$ is not contracting is to give a pair of points $x$ and $y$ that satisfy 
$\|f(x) - f(y)\|_p > c \cdot \| x - y \|_p$, which directly witness the fact
that $f$ is not contracting.

However, when we discretize Contraction in order to to reduce it to OPDC, there
are certain situations in which we have a convincing proof that $f$ is not
contracting, but no apparent way to actually produce a violation of contraction.
In fact, the discretization itself is non-trivial, so we will explain that
first, and then define the type of violations that we will use.

\paragraph{\bf The reduction.}

We are given a function $f : [0, 1]^n \rightarrow [0, 1]^n$, that is purported
to be contracting with contraction factor $c$ in the $\ell_p$ norm. We will
produce an \OPDC instance by constructing the point set $P$,  and a family of
direction functions $\mathcal{D}$. 

The most complex step of the reduction is to produce an appropriate set of
points $P$ for the \OPDC instance. This means we need to choose integers $k_1$,
$k_2$, through $k_d$ in order to define the point set $P(k_1, k_2, \dots, k_d)$,
where we recall that this defines a grid of integers, where each dimension~$i$
can take values between $0$ and $k_i$.
We will describe the method for picking $k_1$ through $k_d$ after we have
specified the rest of the reduction.

The direction functions will simply follow the directions given by $f$.
Specifically, for every point $p \in P(k_1, k_2, \dots, k_d)$, let $p'$ be the
corresponding point in $[0, 1]^n$, meaning that $p'_i = p_i / k_i$ for all $i$.
For and every
dimension $i$  we define the direction function $D_i$ so that
\begin{itemize}
\item if $f(p')_i > p'_i$ then $D_i(p) = \up$,
\item if $f(p')_i < p'_i$ then $D_i(p) = \down$, and
\item if $f(p')_i = p'_i$ then $D_i(p) = \zero$.
\end{itemize}
In other words, the function $D_i$ simply checks whether $f(p')$ moves up, down,
or not at all in dimension $i$. This completes the specification of the family
of direction functions $\mathcal{D}$. 

We must carefully choose $k_1$ through $k_d$ to ensure that the fixpoint of $f$
is contained within the grid. In fact, we need a stronger property: 
for every $i$-slice of the grid, if $f$ has a fixpoint in that $i$-slice, then
it should also appear in the grid.
Recall that $p \in P$ is a fixpoint of some slice $s$ if $D_i(p) = \zero$ for
every $i$ for which $s_i = \blank$. We can extend this definition to the
continuous function $f$ as follows: a point $x \in [0, 1]^d$ is a fixpoint of
$s$ if $(x - f(x))_i = 0$ for all $i$ for which $s_i = \blank$,
where we now interpret $s$ as specifying that $x_i = s_i/k_i$ whenever $s_i \ne
\blank$.
We are able to
show the following lemma, whose proof appears in Appendix~\ref{app:points2}.

\begin{lemma}
\label{lem:points2}
There exists integers $(k_1, k_2, \dots, k_d)$ such that for every $i$-slice $s$
we have that if $x \in [0, 1]^d$ is a fixpoint of $s$ according to $f$, then
there exists a point $p \in P(k_1, k_2, \dots, k_d)$ such that 
\begin{itemize}
\item $p$ is a fixpoint of $\mathcal{D}$, and
\item $p_i = k_i \cdot x_i$ for all $i$ where $s_i = \blank$.
\end{itemize}
Moreover, the number of bits needed to write down each $k_i$ is polynomial
in the number of bits needed to write down $f$.
\end{lemma}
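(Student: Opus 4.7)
The plan is to exploit the piecewise linear structure of the \LinearFIXP circuit $f$ to bound the denominators of the fixpoint coordinates of each $i$-slice, and to choose the grid sizes $k_1, \ldots, k_d$ inductively from $i = d$ down to $i = 1$ while maintaining a divisibility chain $k_d \mid k_{d-1} \mid \cdots \mid k_1$. This chain is what allows the per-step denominator growth to be additive rather than multiplicative across the $d$ iterations.

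First I would observe that $f$ induces a polyhedral subdivision of $[0,1]^d$, on each cell $C$ of which $f$ acts as an affine map $y \mapsto A_C y + b_C$ whose rational coefficients share a common denominator $D_f$ with $\log D_f = \operatorname{poly}(\operatorname{size}(f))$. Fix an $i$-slice $s$ and a point $x$ that is a fixpoint of $s$ according to $f$. Then $x$ lies in some cell $C$, and its free coordinates $(x_1, \ldots, x_i)$ satisfy the linear system
\[
(I_i - \widetilde{A}_C)\,(x_1, \ldots, x_i)^\top \;=\; \widetilde{b}_C,
\]
where $\widetilde{A}_C$ is the leading $i\times i$ submatrix of $A_C$ and $\widetilde{b}_C$ depends affinely on the fixed values $s_{i+1}/k_{i+1}, \ldots, s_d/k_d$. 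Applying Cramer's rule together with Hadamard's inequality, and clearing denominators carefully, one obtains that the reduced-form denominator of each $x_j$ divides $D_f^{\,p(d)} \cdot \operatorname{lcm}(k_{i+1}, \ldots, k_d)$ for a fixed polynomial $p$ depending only on $\operatorname{size}(f)$.

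The key design choice is then to enforce the divisibility chain $k_d \mid k_{d-1} \mid \cdots \mid k_1$, so that $\operatorname{lcm}(k_{i+1}, \ldots, k_d)$ collapses to $k_{i+1}$ at every step. Concretely, I would set $k_d := D_f^{\,p(d)}$ and, for $i < d$, set $k_i := D_f^{\,p(d)} \cdot k_{i+1}$; this yields $k_i = D_f^{\,p(d)\,(d-i+1)}$, whose bit-length is at most $d \cdot p(d) \cdot \log D_f = \operatorname{poly}(\operatorname{size}(f))$, as required. By the denominator bound, $k_i$ is divisible by the denominator of each $x_j$ with $j \leq i$, so $p_j := k_j x_j$ is an integer in $[0,k_j]$; setting $p_j := s_j$ for $j > i$ yields the desired grid point $p \in P$. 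That $D_j(p) = \zero$ for every free index $j$ is immediate from the construction of the direction functions: the continuous point $p'$ associated to $p$ agrees with $x$ on the free coordinates and with $s$ on the fixed ones, so $f_j(p') = f_j(x) = x_j = p'_j$.

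The main obstacle is establishing the uniform divisibility bound on $x_j$'s denominator across all cells of the subdivision. Although the number of cells can be exponential in $\operatorname{size}(f)$, every affine piece is built from the same circuit constants and therefore inherits the common denominator $D_f$; the crucial point is to phrase the bound in terms of divisibility by $D_f^{\,p(d)} \cdot \operatorname{lcm}(k_{i+1}, \ldots, k_d)$, rather than via an LCM over cell-dependent determinants that would blow up exponentially. Two secondary technicalities need attention: pinning down the precise polynomial $p(d)$ appearing in the Cramer--Hadamard bound, and handling degenerate cells in which $I_i - \widetilde{A}_C$ is singular, where either no fixpoint exists in the cell or an entire affine family does, in which case one selects a rational representative whose denominator is still controlled by the relevant sub-determinants and hence obeys the same divisibility bound.
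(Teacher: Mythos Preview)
Your central divisibility claim --- that the reduced-form denominator of each free coordinate $x_j$ of the slice fixpoint divides $D_f^{\,p(d)} \cdot \operatorname{lcm}(k_{i+1}, \ldots, k_d)$ --- is false, and the whole construction rests on it. A one-dimensional example already breaks it: take $f(x) = x/3 + 1/3$, a \LinearFIXP contraction on $[0,1]$ with common coefficient denominator $D_f = 3$, whose unique fixpoint is $x = 1/2$. The denominator $2$ divides no power of $3$, so your choice $k_1 = D_f^{\,p(1)}$ cannot place this fixpoint on the grid. The underlying reason is that Cramer's rule gives $x_j = \det(M_j)/\det(I_i - \widetilde{A}_C)$, and after clearing the $D_f$-denominators the \emph{numerator} of $\det(I_i - \widetilde{A}_C)$ is a cell-dependent integer whose prime factors need not appear in $D_f$; that integer lands in the denominator of $x_j$, and $D_f^{\,p(d)}$ has no reason to absorb it. You flag exactly this obstacle in your final paragraph but then assert that phrasing the bound as divisibility by $D_f^{\,p(d)}\cdot\operatorname{lcm}(\ldots)$ sidesteps it --- it does not, and making it work would force you to take an LCM over the exponentially many cell determinants, destroying the polynomial bit-length of $k_i$.

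The paper avoids this trap by working with \emph{bit-length} bounds rather than divisibility. It passes through the \LinearFIXP-to-LCP translation (the lemma quoted from~\cite{mehta}), so that a slice fixpoint becomes a vertex solution of an LCP $(M,\qq)$; the key structural fact is that fixing an input coordinate leaves $M$ unchanged and only perturbs $\qq$ additively in bit-length (Observations~\ref{obs:m} and~\ref{obs:q}). A Cramer/Hadamard bound on LCP vertex solutions (Lemma~\ref{lem:lcpsize}) then yields $b(x_j)\le\kappa_j$ for explicit $\kappa_j$ growing only linearly in the number of already-fixed coordinates, and the paper takes $k_i = 2^{\kappa_i}$. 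What propagates cleanly through the induction is the bit-length of the solution, not a divisor of its denominator; your divisibility chain is the wrong invariant.
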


This lemma states that we can pick the grid size to be fine enough so that all
fixpoints of $f$ in all $i$-slices are contained within the grid. The proof of
this is actually quite involved, and relies crucially on the fact that we have
access to a \LinearFIXP representation of $f$. From this, we can compute upper
bounds on the bit-length of any point that is a fixpoint of $f$. We also rely on
the fact that we only need to consider $i$-slices, because our proof fixes the
grid-widths one dimension at a time, starting with dimension $d$ and working
backwards.

\paragraph{\bf The extra violation.}

The specification of our reduction is now complete, but we have still not
fully defined the original problem, because we need to add an extra violation.
The issue arises with solutions of type \solnref{OV2}, where we have 
an $i$-slice $s$ and two points $p, q$ in $s$ such that
\begin{itemize}
\item $D_j(p) = D_j(q) = \zero$ for all $j < i$, 
\item $p_i = q_i + 1$, and
\item $D_i(p) = \down$ and $D_i(q) = \up$.
\end{itemize}
This means that $p$ and $q$ are both fixpoints of their respective slices
$(i-1)$-slices, and are directly adjacent to each other in dimension $i$. We are
able to show, in this situation, that if $f$ is contracting, then $f$ has a
fixpoint for the slice $s$, and it must lie between $p$ and $q$. The following
lemma is shown in Appendix~\ref{app:ov2violation}.

\begin{lemma}
\label{lem:ov2violation}
If $f$ is contracting, and we have two points $p$ and $q$ that are a violation
of type \solnref{OV2}, then there exists a point $x \in [0, 1]^n$ in the slice $s$ that
satisfies all of the following.
\begin{itemize}
\item $(x - f(x))_j = 0$ for all $j \le i$, meaning that $x$ is a fixpoint of
the slice $s$, and
\item $q_i < k_i \cdot x_i < p_i$, meaning that $x$ lies between $p$ and $q$ in
dimension $i$.
\end{itemize}
\end{lemma}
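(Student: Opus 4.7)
The plan is to apply Banach's fixed-point theorem along dimensions $1, \ldots, i-1$ to extract a continuous curve of partial fixpoints parameterized by coordinate $i$, and then invoke the intermediate value theorem in coordinate $i$ to locate the desired point. Throughout, write $p' = (p_1/k_1, \ldots, p_d/k_d)$ and $q' = (q_1/k_1, \ldots, q_d/k_d)$ for the continuous embeddings of $p$ and $q$, and $\bar s = (s_{i+1}/k_{i+1}, \ldots, s_d/k_d)$ for the fixed tail coordinates of the $i$-slice $s$.

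For each $t \in [0,1]$, consider the restricted map $\tilde f_t : [0,1]^{i-1} \to [0,1]^{i-1}$ defined by $\tilde f_t(y) = (f(y, t, \bar s)_1, \ldots, f(y, t, \bar s)_{i-1})$. Since projecting onto the first $i-1$ coordinates is non-expansive in the $\ell_p$ norm, the $c$-contraction of $f$ forces each $\tilde f_t$ to be $c$-contracting on the complete set $[0,1]^{i-1}$, so Banach's theorem yields a unique fixpoint $g(t) \in [0,1]^{i-1}$. The same contraction inequality, now with the parameter $t$ also varying, shows that $g$ is Lipschitz in $t$: for finite $p$ one has $\|g(t)-g(t')\|_p^p \le c^p(\|g(t)-g(t')\|_p^p + |t-t'|^p)$, hence $\|g(t)-g(t')\|_p \le c(1-c^p)^{-1/p}\,|t-t'|$, with an analogous argument for $p=\infty$. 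Since $f$ is continuous, the map $h(t) = f(g(t), t, \bar s)_i - t$ is then continuous on $[0,1]$.

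Because $D_j(p) = \zero$ for all $j < i$ by the \solnref{OV2} hypothesis, the vector $(p'_1, \ldots, p'_{i-1})$ is a fixpoint of $\tilde f_{p'_i}$; by uniqueness, $g(p'_i) = (p'_1, \ldots, p'_{i-1})$, and symmetrically $g(q'_i) = (q'_1, \ldots, q'_{i-1})$. The remaining conditions $D_i(p) = \down$ and $D_i(q) = \up$ translate to $h(p'_i) = f(p')_i - p'_i < 0$ and $h(q'_i) = f(q')_i - q'_i > 0$, and since $q'_i = q_i/k_i < p_i/k_i = p'_i$ the intermediate value theorem yields some $t^* \in (q'_i, p'_i)$ with $h(t^*) = 0$. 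Setting $x = (g(t^*), t^*, \bar s)$ then gives a point in the slice $s$ with $(x - f(x))_j = 0$ for all $j \le i$ (dimensions $j < i$ by the definition of $g$, and dimension $i$ because $h(t^*) = 0$) and with $q_i < k_i x_i < p_i$, as required. The main subtlety is ensuring that the restricted maps $\tilde f_t$ are themselves contracting on a complete space and that $g$ depends continuously on $t$; both hinge on coordinate projection being non-expansive in $\ell_p$, which is exactly what makes the $\ell_p$-norm assumption essential here.
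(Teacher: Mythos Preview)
Your proof is correct, but it takes a genuinely different route from the paper's. The paper applies Banach's theorem directly to the $i$-dimensional slice $s$ (via Lemma~\ref{lem:cm1}) to obtain the fixpoint $x$ of $\restr{f}{\ss}$, and then invokes a separate direction lemma (Lemma~\ref{lem:cm2}, which says that for fixpoints $x$ of $\ss$ and $y$ of a one-coordinate-finer sub-slice, one has $(x_i - y_i)\Delta_i(y) > 0$) twice---once with $y = p'$ and once with $y = q'$---to conclude $q'_i < x_i < p'_i$. You instead parameterize the $(i-1)$-slice fixpoints by the $i$th coordinate, establish Lipschitz dependence on the parameter, and pick out $x$ via the intermediate value theorem applied to the $i$th displacement. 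Your argument is a bit more elementary and self-contained, since it avoids the contradiction-based proof of Lemma~\ref{lem:cm2} in favour of a direct IVT; on the other hand the paper's decomposition pays off because Lemma~\ref{lem:cm2} is reused in the algorithmic results of Section~\ref{sec:algorithms}, so isolating it is natural there. Both approaches ultimately hinge on the same fact you flag---that coordinate projection is non-expansive under $\ell_p$, so slice restrictions remain contractions.
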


So if we have an \solnref{OV2} violation, and if $f$ is contracting, then
Lemma~\ref{lem:ov2violation} implies that there is a fixpoint $x$ of the slice
$s$ that lies strictly between $p$ and $q$ in dimension $i$. However,
Lemma~\ref{lem:points2} says that all fixpoints of $s$ lie in the grid, and
since $p$ and $q$ are directly adjacent adjacent in the grid in dimension $i$,
there is no room for $x$, so it cannot exist. The only way that this
contradiction can be resolved is if $f$ not actually contracting. 

Hence, \solnref{OV2} violations give us a concise witness that $f$ is not contracting. But
the points $p$ and $q$ themselves may satisfy the contraction property. While we
know that there must be a violation of contraction \emph{somewhere}, we are not
necessarily able to compute such a violation in polynomial time. To resolve
this, we add the analogue of an \solnref{OV2} violation to the contraction problem.

\begin{definition}[\LCM]
\label{def:LCM}
Given a \LinearFIXP circuit computing 
$f : [0, 1]^d
\rightarrow [0,1]^d$, a constant $c \in (0, 1)$, and a $p \in \Natural \cup
\{\infty\}$, find one of the following.
\begin{enumerate}[label=(CM\arabic*), wide=0pt, leftmargin=\parindent]
\item \solnlabel{CM1} A point $x \in [0, 1]^d$ such that $f(x) = x$.
\end{enumerate}
\begin{enumerate}[label=(CMV\arabic*), wide=0pt, leftmargin=\parindent]
\item \solnlabel{CMV1} Two points $x, y \in [0, 1]^d$ such that $\|f(x) - f(y)\|_p > c \cdot \| x - y \|_p$.
\item \solnlabel{CMV2} A point $x \in [0, 1]^d$ such that $f(x) \not\in [0, 1]^d$.
\item \solnlabel{CMV3} An $i$-slice $s$ and two points $x, y \in [0, 1]^d$ in $s$ such that 
\begin{itemize}
\item $(f(x) - x)_j = (f(y) - y)_j = 0$ for all $j < i$, 
\item $k_i \cdot x_i = k_i \cdot y_i + 1$, where $k_i$ is the integer given by
Lemma~\ref{lem:points2} for the \LinearFIXP circuit that computes $f$, and
\item $f(x)_i < x_i$ and $f(y)_i > y_i$.
\end{itemize}
\end{enumerate}
\end{definition}

Solution type \solnref{CM1} asks us to find a fixpoint of the map $f$, and there are two
types of violation. Violation type \solnref{CMV1} asks us to find two points $x$ and $y$
that prove that $f$ is not contracting with respect to the $\ell_p$ norm.
Violation type \solnref{CMV2} asks us to find a point that $f$ does not map to 
$[0, 1]^d$. Note that this second type of violation is necessary to make the
problem total, because it is possible that $f$ is a contraction map, but the
unique fixpoint of $f$ does not lie in $[0, 1]^d$.

Violations of type \solnref{CMV3} are the direct translation of \solnref{OV2} violations to
contraction. Note that if $f$ actually is contracting, and has a fixpoint in
$[0, 1]^n$, then no violations can exist. For \solnref{CMV3} violations, this fact is a
consequence of Lemmas~\ref{lem:points2} and~\ref{lem:ov2violation}.

\paragraph{\bf Correctness of the reduction.}

To prove that the reduction is correct, we must show that all solutions of the
\OPDC instance given by $P$ and $\mathcal{D}$ can be mapped back to solutions of
the original instance. Solutions of type \solnref{O1} give us a point $p$ such that
$D_i(p) = \zero$ for all $i$, which by definition means that the point
corresponding to $p$ is a fixpoint
of $f$. Violations of type \solnref{OV1} give us two points that are both fixpoints of the
same slice $s$, which also means that they are both fixpoints of the slice $s$
according to $f$, and it is not difficult to show that these two points violate
contraction
in an $\ell_p$ norm. Violations of type \solnref{OV3} are points that attempt
to leave the $[0, 1]^d$, and so give us a solution of type \solnref{CMV2}.
Violations of type \solnref{OV2} map directly to violations of type \solnref{CMV3}, as we have
discussed.
So we have the following lemma, which is proved in Appendix~\ref{app:lcm2opdc}.

\begin{lemma}
\label{lem:lcm2opdc}
There is a polynomial-time promise-preserving reduction from \LCM to \OPDC.
\end{lemma}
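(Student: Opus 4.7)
The plan is to formalize the reduction described in the main text and decode every solution produced by the \OPDC instance back into a solution of the \LCM instance, checking along the way that the decoding never turns a violation-free \LCM instance into an \OPDC instance with violations. First I would confirm that the reduction runs in polynomial time: the grid widths $k_1,\dotsc,k_d$ are supplied by Lemma~\ref{lem:points2} and have polynomial bit-length, while each direction circuit $D_i$ simply evaluates the \LinearFIXP circuit for $f$ at the rational point $p' = (p_1/k_1, \dotsc, p_d/k_d)$ and compares $f(p')_i$ with $p'_i$.

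Next I would exhibit the solution decoding. A type \solnref{O1} solution $p$ has $D_i(p)=\zero$ for every $i$, hence $f(p')=p'$, giving a \solnref{CM1} solution. A \solnref{OV3} violation with $p_i=0$ and $D_i(p)=\down$ certifies $f(p')_i<0$, so $p'$ is a \solnref{CMV2} solution, and the case $p_i=k_i$, $D_i(p)=\up$ is symmetric. A \solnref{OV2} violation is, by construction of $D_i$ and the choice of grid widths from Lemma~\ref{lem:points2}, a \solnref{CMV3} solution verbatim. The delicate case is \solnref{OV1}: distinct grid points $a,b$ in a common $i$-slice with $D_j(a)=D_j(b)=\zero$ for every $j\le i$. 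Writing $a',b'$ for the corresponding rational points, we have $f(a')_j=a'_j$ and $f(b')_j=b'_j$ for $j\le i$, while $a'_j=b'_j$ for $j>i$. For finite $p$ this gives
\[
\|a'-b'\|_p^p = \sum_{j\le i}|a'_j-b'_j|^p = \sum_{j\le i}|f(a')_j-f(b')_j|^p \le \|f(a')-f(b')\|_p^p,
\]
and the analogous inequality with $\max$ in place of the sum holds for $p=\infty$. Since $a\ne b$ forces $\|a'-b'\|_p>0$ and $c<1$, we conclude $\|f(a')-f(b')\|_p > c\,\|a'-b'\|_p$, a \solnref{CMV1} violation.

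For promise-preservation I would assume the \LCM instance has no violations and argue that none of \solnref{OV1}, \solnref{OV2}, \solnref{OV3} can arise. \solnref{OV3} is ruled out because $f$ maps $[0,1]^d$ into itself (otherwise the witness would be a \solnref{CMV2}). An \solnref{OV1} witness would, by the decoding above, produce a \solnref{CMV1} witness, contradicting contraction. Finally, \solnref{OV2} is ruled out by combining Lemma~\ref{lem:ov2violation}, which from an \solnref{OV2} pair produces a continuous fixpoint $x$ of the enclosing $i$-slice lying strictly between $p$ and $q$ in dimension $i$, with Lemma~\ref{lem:points2}, which forces every $i$-slice fixpoint of $f$ to coincide with a grid point; but $p$ and $q$ are adjacent in dimension $i$, leaving no room.

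The main obstacle I anticipate is the \solnref{OV1} decoding under an arbitrary $\ell_p$ norm: the only coordinates on which $a'$ and $b'$ provably agree (resp.\ differ) are those $j>i$ (resp.\ $j\le i$), yet $f(a')$ and $f(b')$ may disagree in every coordinate, so one cannot simply compare them dimension-by-dimension. The key observation is that truncating the $\ell_p$ or $\ell_\infty$ norm to the coordinates $j\le i$ can only decrease it, and on those coordinates the identities $f(a')_j=a'_j$ and $f(b')_j=b'_j$ pin $\|f(a')-f(b')\|_p$ from below by $\|a'-b'\|_p$; this is the one place where the argument genuinely uses that \OPDC quantifies only over $i$-slices rather than arbitrary slices.
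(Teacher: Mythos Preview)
Your proposal is correct and follows essentially the same approach as the paper: you decode \solnref{O1}$\to$\solnref{CM1}, \solnref{OV1}$\to$\solnref{CMV1} via the $\ell_p$ truncation inequality, \solnref{OV2}$\to$\solnref{CMV3} verbatim, and \solnref{OV3}$\to$\solnref{CMV2}, exactly as the paper does in Appendix~\ref{app:lcm2opdc}. Your promise-preservation argument is slightly more explicit than the paper's (you rule out each \OPDC violation type directly, whereas the paper simply observes that violations only ever map to violations), but the content is the same.
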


\begin{theorem}
\label{thm:lcm}
\LCM is in \UEOPLc under promise-preserving reductions.
\end{theorem}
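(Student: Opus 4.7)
The plan is to obtain this theorem as an immediate consequence of two results that are already in place. First, I would invoke Lemma~\ref{lem:lcm2opdc}, which supplies a polynomial-time promise-preserving reduction from \LCM to \OPDC. Second, I would invoke Theorem~\ref{thm:opdc}, which states that \OPDC itself lies in \UEOPLc under polynomial-time promise-preserving reductions. Composing these two reductions yields a polynomial-time promise-preserving reduction from \LCM to \UEOPL, which is exactly what the theorem asserts.

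The only thing to verify is that the composition of two promise-preserving reductions is itself promise-preserving, but this follows directly from the definition: if an \LCM instance has no violations, then Lemma~\ref{lem:lcm2opdc} produces an \OPDC instance with no violations, and then Theorem~\ref{thm:opdc} produces a \UEOPL instance with no violations. Polynomial-time composability is also immediate since both reductions run in polynomial time. Since there is no real obstacle here beyond citing the two prior results, the proof is essentially a one-line composition argument, and the substantive work has already been absorbed into Lemmas~\ref{lem:points2} and~\ref{lem:ov2violation} (for the bit-length and slice-fixpoint-localisation arguments behind the reduction to \OPDC) and into the chain of reductions \OPDC $\to$ \UFEOPL $\to$ \UFEOPLp1 $\to$ \UEOPL carried out in Section~\ref{sec:opdc2ufeopl}.
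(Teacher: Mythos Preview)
Your proposal is correct and matches the paper's approach exactly: the paper states Theorem~\ref{thm:lcm} immediately after Lemma~\ref{lem:lcm2opdc} without a separate proof, as it follows at once by composing that lemma with Theorem~\ref{thm:opdc}. Your additional remark that promise-preserving reductions compose is a fair point to make explicit, though the paper treats this as obvious.
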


\subsection{The P-Matrix Linear Complementarity Problem}
\label{sec:PLCPtoEOPL}

In this section, we reduce \PLCP to \USO and, separately, \PLCP to \UEOPL. 
Given that we show that \USO reduces to \UEOPL (via \OPDC) and is thus in
\UEOPLc, our direct reduction from \PLCP to \UEOPL is not needed to show that
\PLCP is contained in \UEOPLc. However, by reducing directly we can produce a
\UEOPL instance with size linear in the size of our \PLCP instance, which is
needed to obtain the algorithmic result in Section~\ref{sec:aldous}.

The direct reduction to \UEOPL relies heavily on the application of Lemke's
algorithm to P-matrix LCPs, and our reduction to \USO relies on the computation
of \emph{principal pivot transformations} of LCPs. Next we introduce the 
required concepts.
Let $[d]$ denote the set $\{1,\dots,d\}$.

\begin{definition}[LCP $(M, \qq)$]
\label{def:lcp}
Given a matrix $M \in \Real^{d \times d}$ and vector $\qq\in \Real^{d\times 1}$,
find a~{$\yy\in \Real^{d \times 1}$} s.t.:
\begin{equation}\label{eq:lcp}
\ww = M\yy + \qq \ge 0;\ \ \ \ \yy\ge 0;\ \ \ \ \\y_i\cdot \\w_i =0,\ \forall i \in [d]. 
\end{equation}
\end{definition}

In general, deciding whether an LCP has a solution is \NP-complete~\cite{chung1989np},
but if $M$ is a P-matrix, as defined next, then the LCP $(M,\qq)$ has a \emph{unique} solution 
for all $\qq\in \Real^{d\times 1}$.

The problem of checking if a matrix is a P-matrix is \coNP-complete~\cite{coxson1994p}, 
so we cannot expect to be able to verify that an LCP instance $(M,\qq)$ is actually 
defined by a P-matrix $M$. Instead, we use succinct witnesses that $M$ is not a P-matrix
as a violation solution, which allows us to define total variants of the P-LCP problem
that lie in \TFNP, as first done by Megiddo~\cite{megiddo1988note, megiddo1991total}.
This approach has previously used to place the P-matrix problem in \PPAD and \CLS~\cite{papadimitriou1994complexity,daskalakis2011continuous}.

Our paper is about problems with unique solutions. It is well known that a
matrix $M$ is a P-matrix \emph{if and only if} for all $\qq \in \Real^{d\times 1}$, 
the LCP $(M,\qq)$ has
a unique solution~\cite{cottle2009linear}. However, this characterization of a
P-matrix is not directly useful for defining succinct violations: 
while two distinct solutions would be a succinct violation, there is no
corresponding succinct witness for the case of no solutions. Next we 
introduce the three well-known succinct witnesses for $M$ not being a P-matrix that we
will use.

First, we introduce some further required notation.
Restating \eqref{eq:lcp}, the LCP problem $(M,\qq)$ seeks a pair of non-negative vectors $(\yy,\ww)$ such that:
\begin{equation}
\label{eq:IwminusMy}
I\ww - M\yy = \qq \quad \text{and } \forall i\in [d]\  \text{we have } \yy_i\cdot \ww_i=0. 
\end{equation}
If $\qq \ge 0$, then $(\yy,\ww) = (0,\qq)$ is a \emph{trivial solution}.
We identify a solution $(\yy,\ww)$ with the set of components of $\yy$ that are positive:
let $\alpha = \{i \ | \ \yy_i > 0, \ i \in [d]\}$ denote such a set of ``basic variables''.
Going the other way, to check if there is a solution that corresponds to a
particular $\alpha \subseteq [d]$, we
try to perform a \emph{principal pivot transformation}, re-writing the LCP by 
writing certain variables $\yy_i$ as~$\ww_i$, and checking if in this re-written 
LCP there exists the trivial solution $(\yy',\ww') = (0,\qq')$.
To that end, we construct an $d \times d$ matrix $A_\alpha$, 
where the $i$th column of $A_\alpha$ is defined as follows.
Let $e_i$ denote the $i$th 
unit column vector in dimension $d$, and let $M_{\cdot i}$ denote the $i$th column of the matrix $M$.
\begin{equation}
\label{eq:Aalpha}
(A_\alpha)_{\cdot i} \ :=
\begin{cases}
	-M_{\cdot i} & \quad \text{if } i \in \alpha,\\
	\ \ e_i & \quad \text{if } i \notin \alpha.
\end{cases} 
\end{equation}
Then $\alpha$ corresponds to an LCP solution if 
$A_\alpha$ is non-singular and, the ``new \qq'' i.e.,
$(A_\alpha^{-1}\qq)$, is non-negative.
For a given $\alpha \subseteq [n]$, we define $\out(\alpha) := \vblank$ if $\det(A_\alpha) = 0$;
note that this will not happen if $M$ is really a P-matrix, but our general treatment here is made
to deal with the non-promise problem, in which case a zero determinant will correspond to a
violation\footnote{We could also define $\out(\alpha) := \vblank$ if $\det(A_\alpha) < 0$, since then we also get
a P-matrix violation, however we choose to still perform a principal pivot transformation
in this case since, ceteris paribus, it seems reasonable to prefer an LCP solution to a P-matrix violation
when we reduce P-LCP to USO.}.
If $\det(A_\alpha) \ne 0$, we define $\out(\alpha)$ as a bit-string in $\{0,1\}^d$ as follows:
\begin{equation}
\label{def:outK}
(\out(\alpha))_i = 
\begin{cases}
1 & \quad \text{if } (A_\alpha^{-1}\qq)_i < 0,\\
0 & \quad \text{if } (A_\alpha^{-1}\qq)_i \ge 0.
\end{cases}
\end{equation}
With this notation, a subset $\alpha \subseteq [d]$ corresponds to a solution of
the LCP if $\out(\alpha) = 0^d$.
We will use $\out(\alpha)$ both to define a succinct violation of the P-matrix property, and 
in our promise-preserving reduction from \PLCP to \USO.

Next, we introduce three types of succinct violations 
that prove that a matrix $M$ is not a P-matrix. Let $\cha(v)$ for $v \subseteq [d]$ denote 
the characteristic vector of $v$, i.e., $(\cha(v))_i = 1$  if $i\in v$ and 0 otherwise.
As in the section on USOs, when we write $\oplus$ and $\cap$, here we mean 
the bit-wise XOR and bit-wise and-operations on bit-strings.
\begin{definition}[\PLCP violations]
For a given LCP $(M,\qq)$ in dimension $d$,
each of the following provides a polynomial-size witness that $M$ is not a P-matrix:
\begin{enumerate}[label=(PV\arabic*), wide=0pt] 
	\item \solnlabel{PV1} A set $\alpha \subseteq [d]$ such that the corresponding principal minor is non-positive, i.e., $\det(M_{\alpha\alpha}) \le 0$.
	\item \solnlabel{PV2} A vector $x \ne 0$ whose sign is reversed by $M$, that is, for all $i \in [d]$ we have $x_i(Mx)_i \le 0$.
	\item \solnlabel{PV3} Two distinct sets $\alpha, \beta \subset [d]$, with $(\cha(\alpha) \oplus \cha(\beta)) \cap (\out(\alpha) \oplus \out(\beta)) = 0^d$.
\end{enumerate}
\end{definition}
The violation PV1 corresponds to the standard definition of a P-matrix as having all
positive principal minors. 
Megiddo~\cite{megiddo1988note, megiddo1991total} used this violation to 
place to place the \PLCP problem in \TFNP.
The same violation was then used by Papadimitriou to put \PLCP in \PPAD, because Lemke's
algorithm is a \PPAD-type complementary pivoting algorithm, which inspired
\PPAD, and it will return a non-positive principal minor if it fails to find an
LCP solution.

The characterization of P-matrices as those that do not reverse the sign of any
non-zero vector, as used for violation PV2, was first discovered by Gale and
Nikaido~\cite{GaleN65}. The final violation, PV3, follows from the work
of Stickney and Watson~\cite{StickneyW78}, who showed that P-matrix LCPs give
rise to USOs, and from the work of Tzabo and Welzl~\cite[Lemma 2.3]{SzaboW01},
who showed that this condition characterizes that ``outmap'' of USOs, as
discussed in Section~\ref{sec:uso}, hence the name of the function being ``out''.

Several other characterizations of P-matrices are known, some of which would
provide alternative succinct violations~\cite{cottle2009linear,JT95}. We have
given the violations PV1--PV3 above, since we use the three for our promise
preserving reductions from \PLCP to \UEOPL and to \USO. In particular, for our
reduction from \PLCP to \USO we need violations of types PV1 and PV3, and for
our reduction from \PLCP directly to \UEOPL we need violations of types PV1 and
PV2. It is not immediately apparent how to convert violations of one type to
another in polynomial time, and it is conceivable that allowing different sets
of violations changes the complexity of the problem. We leave it as further work
to further explore this.

 
\begin{definition}[\PLCP] \label{def:plcp} Given an LCP $(M, \qq)$, find either:
\begin{enumerate}[label=(Q\arabic*), wide=0pt] 
\item \solnlabel{Q1} $\yy \in \Real^{d\times 1}$ that satisfies (\ref{eq:lcp}).
\item \solnlabel{Q2} one of the violations \solnref{PV1}--\solnref{PV3}. 
\end{enumerate}
In the promise version, we are promised that $M$ is a P-matrix and seek a solution of type \PLo.
\end{definition}

\paragraph{Reduction from \PLCP to \USO.}
We are now ready to present our reduction to USO.
The reduction is simple, and we present it in full detail here.
For the LCP instance $\CI = (M,\qq)$ in dimension $d$, we produce an instance
$\mathcal{U}$ of \USO also in dimension $d$.

We first need to deal with the possibility that $\qq$ is degenerate.
A P-matrix LCP has a degenerate~$\qq$ if $A_\alpha^{-1}\cdot \qq$ has a zero entry for some
$\alpha \subseteq [d]$.
To ensure that this does not present a problem for our reduction, we use a standard
technique known as lexicographic perturbation~\cite[Section 4.2]{cottle2009linear}:
In the reduction that follows we assume that we are using such a degeneracy resolution scheme.

The reduction associates each vertex $v$ of the resulting USO with a set a $\alpha(v)$ of basic variables
for the LCP, and then uses $\out(\alpha)$ as the outmap at $v$.
In detail,
for a vertex $v \in \{0,1\}^d$ of $\mathcal{U}$, we define $\alpha(v) = \{i \ | \ v(i) = 1\}$,
and set $\udir(v) = \out(\alpha(v))$.
It immediately follows that:
\begin{itemize}
\item A solution of type \solnref{US1} in $\mathcal{U}$ is a solution of type \solnref{Q1} in \CI.
\item A solution of type \solnref{USV1} in $\mathcal{U}$ is a solution of type \solnref{PV1} in \CI.
\item A solution of type \solnref{USV2} in $\mathcal{U}$ is a solution of type \solnref{PV3} in \CI.
\end{itemize}
If $M$ is actually a P-matrix then $\mathcal{U}$ will have exactly one solution of type \solnref{US1},
and no violation solutions~\cite{StickneyW78}. 
Thus our reduction is promise preserving, and we obtain the following.

\begin{theorem}
\label{thm:plcp2uso}
There is a polynomial-time promise-preserving reduction from \PLCP with violations
of type \solnref{PV1} and \solnref{PV3} to \USO.
\end{theorem}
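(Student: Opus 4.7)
The plan is to formalize the reduction that is sketched just before the theorem statement, so the bulk of the work is verifying that the three solution types pull back correctly and that the promise is preserved. I would first apply the standard lexicographic perturbation to $\qq$, replacing it by $\qq + \epsilon e_1 + \epsilon^2 e_2 + \cdots$ with lexicographic sign comparisons, so that for every $\alpha \subseteq [d]$ with $A_\alpha$ non-singular the vector $A_\alpha^{-1}\qq$ has no zero coordinate; this makes $\out(\alpha)$ well-defined in $\{0,1\}^d$ whenever $A_\alpha$ is non-singular. For each $v \in \{0,1\}^d$ I set $\alpha(v) = \{i : v_i = 1\}$ and $\udir(v) = \out(\alpha(v))$; computing $\udir(v)$ requires a single $d \times d$ matrix inversion, so the reduction runs in polynomial time.

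The core of the argument is the translation of each \USO solution back to a \PLCP solution. For a type \solnref{US1} sink $v$ I have $\out(\alpha(v)) = 0^d$, so $A_{\alpha(v)}^{-1}\qq \ge 0$; by the derivation of $A_\alpha$ from \eqref{eq:IwminusMy}, reading the components of this vector as $\yy_i$ for $i \in \alpha(v)$ and $\ww_i$ for $i \notin \alpha(v)$ (and zero on the complements) gives a feasible complementary pair, hence a type \solnref{Q1} solution. For a type \solnref{USV1} vertex with $\udir(v) = \vblank$ I have $\det(A_{\alpha(v)}) = 0$; expanding the determinant along the columns $i \notin \alpha(v)$, which are standard basis vectors $e_i$, yields the identity $\det(A_{\alpha(v)}) = (-1)^{|\alpha(v)|}\det(M_{\alpha(v)\alpha(v)})$, so the principal minor vanishes and I return a \solnref{PV1} violation. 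For a type \solnref{USV2} pair $(v,u)$, since $\cha(\alpha(v)) = v$ and $\cha(\alpha(u)) = u$ by construction, the condition $(v \oplus u) \cap (\udir(v) \oplus \udir(u)) = 0^d$ is syntactically the \solnref{PV3} condition on the distinct sets $\alpha(v) \ne \alpha(u)$, so the violation transfers verbatim.

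For promise preservation, suppose $M$ is a P-matrix. Then every principal minor is strictly positive, so no \solnref{PV1} witness exists, which by the contrapositive of the \solnref{USV1} translation means every $A_\alpha$ is non-singular and no \solnref{USV1} solution appears in $\mathcal{U}$. Moreover the Stickney--Watson theorem states that $\out$ is a USO outmap when $M$ is a P-matrix, so no pair of vertices can witness the \solnref{USV2} condition, giving no \solnref{USV2} solutions either. Hence $\mathcal{U}$ is a genuine USO whose unique sink corresponds to the unique LCP solution guaranteed for a P-matrix. The main obstacle will be checking the two points that are easy to get wrong: the sign bookkeeping in the determinant identity underlying the \solnref{USV1} to \solnref{PV1} map, and verifying that the lexicographic perturbation never introduces spurious violations (it does not, because it changes neither the set of singular $A_\alpha$ nor the principal minors of $M$, and only breaks ties in the sign pattern used by $\out$).
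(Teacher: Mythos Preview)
Your proposal is correct and follows essentially the same approach as the paper: the paper defines the identical reduction (with $\alpha(v)=\{i:v_i=1\}$ and $\udir(v)=\out(\alpha(v))$ after lexicographic perturbation) and then records exactly the three correspondences US1$\to$Q1, USV1$\to$PV1, USV2$\to$PV3, invoking Stickney--Watson for promise preservation. You simply spell out the determinant identity $\det(A_{\alpha})=(-1)^{|\alpha|}\det(M_{\alpha\alpha})$ and the feasibility check that the paper leaves implicit behind ``It immediately follows that''.
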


\begin{figure}[tbp]
   \centering
\begin{center}
\scalebox{0.99}{ 
\begin{tikzpicture}

	\path[use as bounding box] (-3,-3) rectangle (3.7,3.9);
	\clip (-3,-3) rectangle (3.7,3.9);

	\tikzstyle{colvec} = [-{>[width=2mm,length=2mm]}, thick]

	\draw[draw=none, fill=shade1] (3,1.5)  -- (3.5,2) -- (1.3,3.9) --  (0,0) -- cycle {};
	\draw[draw=none, fill=shade2] (3,1.5) -- (3.7,1) -- (3.2, 0) -- (1.5,-2) -- (0,-1.5) -- (0,0) -- cycle {};
	\draw[draw=none, fill=shade3] (-1.75,0) -- (-1.3,3.6) -- (1.3,3.9) -- (0,0) -- cycle {};
	\draw[draw=none, fill=shade4] (-2.2,0) -- (-3,-2) -- (-2.5,-2.5) -- (-2,-3) -- (0,-1.5) -- (0,0) -- cycle {};

	\draw[colvec, darkpastelblue] (0,0) -- (2,1) {};
	\node[darkpastelblue] at (2.3,0.8) {$M_1$};
	\node[darkpastelblue] at (2.9,0.8) {$\begin{bmatrix} 2\\ 1\\ \end{bmatrix}$};
	\draw[colvec, darkpastelblue] (0,0) -- (1,3) {};
	\node[darkpastelblue] at (0.5,2.9) {$M_2$};
	\node[darkpastelblue] at (1.4,2.9) {$\begin{bmatrix} 1\\ 3\\ \end{bmatrix}$};
	\draw[colvec, darkpastelred] (0,0) -- (-1,0) {};
	\node[darkpastelred] at (-1,0.3) {$-e_1$};
	\draw[colvec, darkpastelred] (0,0) -- (0,-1) {};
	\node[darkpastelred] at (0.45,-0.85) {$-e_2$};

	\draw[-, darkpastelbrown, thick] (-2,-1) -- (1,1);
	\node[fill=darkpastelbrown, circle, inner sep=1.5pt] at (0.142,0.429) {};
	\node[fill=darkpastelbrown, circle, inner sep=1.5pt] at (-0.5,0) {};


	\node[fill=darkpastelmagenta, circle, inner sep=1.5pt] at (1,1) {};
	\node[darkpastelmagenta] at (1.35,1.55) {$-\qq = \begin{bmatrix} 1\\ 1\\ \end{bmatrix}$};

	\node[fill=darkpastelbrown, circle, inner sep=1.5pt] at (-2,-1) {};
	\node[darkpastelbrown] at (-2,-1.8) {$-\cov = \begin{bmatrix} -2\\ -1\\ \end{bmatrix}$};


\end{tikzpicture}
}
\hskip -1.75cm
\scalebox{0.96}{ 
\tikzset{arrowstyle/.style={draw=black, fill=black, single arrow, minimum height=#1, minimum width=6ex, single arrow head extend=.4cm,}}
\newcommand{\tikzfancyarrow}[3]{\node [arrowstyle=1cm, rotate=-90] at (#1,#2) {};}

\newcommand{\zuparrow}[2]{
	\draw[->,line width=0.1mm] ([yshift=2mm]$(#1)!0.5!(#2)$) -- node[left] {\small $z$} ([yshift=5mm]$(#1)!0.5!(#2)$);
}
\newcommand{\zdnarrow}[2]{
	\draw[->,line width=0.1mm] ([yshift=5mm]$(#1)!0.5!(#2)$) -- node[left] {\small $z$} ([yshift=2mm]$(#1)!0.5!(#2)$);
}

\pgfdeclarelayer{background}
\pgfsetlayers{background,main}

\tikzstyle{vertex}=[draw,ultra thick,circle,fill=white,minimum size=10pt,inner sep=0pt]
\tikzstyle{red vertex} = [vertex, fill=pastelred]
\tikzstyle{blue vertex} = [vertex, fill=pastelblue]
\tikzstyle{edge} = [draw,thick,->]
\tikzstyle{weight} = []

\begin{tikzpicture}[scale=1,swap]

\path[use as bounding box] (-1.7,-3) rectangle (7.7,3);
\clip (-1.7,-3) rectangle (7.7,3);

\node (ray) at (-1.5,1) {};

\foreach \pos/\name in {
{(0,1)/start},
{(1,1)/1},
{(2,1)/2},
{(3,1)/3}, 
{(4,1)/4}, 
{(5,1)/5}, 
{(6,1)/6}, 
{(7,1)/end}}
	\node[vertex] (\name) at \pos {};

\path[edge] (ray) -- node[below,xshift=-1mm, align=center, fontscale=-1] {primary\\ ray} node () {} (start);

\zdnarrow{ray}{start}
\zdnarrow{start}{1}
\zdnarrow{1}{2}
\zuparrow{2}{3}
\zuparrow{3}{4}
\zuparrow{4}{5}
\zdnarrow{5}{6}
\zdnarrow{6}{end}



\foreach \source/ \dest /\weight in
	{
	start/1/,
	1/2/,
	2/3/,
	3/4/,
	4/5/,
	5/6/,
	6/end/
	}
	\path[edge] (\source) -- node[weight] {$\weight$} (\dest);


\node[red vertex] (10) at (-1,-1) [label=above:{$\mathbf 0$}] {};

\foreach \pos/\name in {
{(0,-1)/11},
{(1,-1)/12},
{(2,-1)/13}, 
{(3,-1)/14}, 
{(4,-1)/15},
{(5,-1)/16},
{(6,-1)/17},
{(7,-1)/18}}
	\node[vertex] (\name) at \pos {};

\draw[edge,->] (10) edge [loop left] node [below]  {\tiny $P$} ();
\draw[edge,->] (10) edge [bend left] node [above] {\tiny $S$} (11);
\draw[edge,->] (11) edge [bend left] node [below] {\tiny $P$} (10);

\draw[edge,->] (11) edge [bend left] node [above] {\tiny $S$} (12);
\draw[edge,->] (12)	edge [bend left] node [below] {\tiny $P$} (11);

\draw[edge,->] (12) edge [bend left] node [above] {\tiny $S$} (13);
\draw[edge,->] (13)	edge [bend left] node [below] {\tiny $P$} (12);
\draw[edge,->] (13) edge [loop right] node [above]  {\tiny $S$} ();

\draw[edge,->] (14) edge [loop below] node [right]  {\tiny $S,P$} ();
\draw[edge,->] (15) edge [loop below] node [right]  {\tiny $S,P$} ();

\draw[edge,->] (16) edge [loop left] node [below]  {\tiny $P$} ();
\draw[edge,->] (16) edge [bend left] node [above] {\tiny $S$} (17);
\draw[edge,->] (17)	edge [bend left] node [below] {\tiny $P$} (16);

\draw[edge,->] (17) edge [bend left] node [above] {\tiny $S$} (18);
\draw[edge,->] (18)	edge [bend left] node [below] {\tiny $P$} (17);
\draw[edge,->] (18) edge [loop right] node [above]  {\tiny $S$} ();

\node[blue vertex] at (18) {};
\node[blue vertex] at (16) {};
\node[blue vertex] at (13) {};

\draw[-{>[scale=2.5,
          length=2,
  		  width=8]}, line width=2mm] (3,0.4) -- (3,-0.4); 

\node[align=center] at (3,2.25) {Lemke path showing increase/decrease of $z$};
\node[align=center] at (3,-2.25) {\textsc{EndOfPotentialLine} instance}; 

\end{tikzpicture}
}
\end{center}
\caption{
Left: geometric view of Lemke's algorithm as inverting a piecewise linear map.
Right: construction of $S$ and $P$ for \EOPL instance from the Lemke
path, where the arrows on its edges indicate
whether $z$ increases or decreases along the edge.
This figure should be viewed in color.
}
\label{fig:lemke}
\end{figure}   

\medskip

\noindent \textbf{Overview of reductions from \PLCP to \EOPL and \UEOPL.}
Comparing \UEOPL and \EOPL we see that:
\solnref{U1} and \solnref{UV2} correspond to \solnref{R1}, 
and \solnref{UV1} corresponds to \solnref{R2}, so the only difference
is that \UEOPL has the extra violation solution \solnref{UV3}.
Thus there is some extra work to do for our reduction to \UEOPL, to 
map \solnref{UV3} solutions back to a P-LCP solution.

Our reduction from P-LCP to the two problems produces the same instance.
The reduction in both cases is based on Lemke's algorithm, which we 
describe in detail in Appendix~\ref{app:lemke}.
For \EOPL, we only need to use \solnref{PV1} violations.
For \UEOPL, we only need to use \solnref{PV2} violations.
Next we give a high-level description of the reduction, where for simplicity
we just refer to a resulting \EOPL instance.
Full details of both reductions appear in Appendix~\ref{app:full_plcp_reduction}.

Lemke's algorithm introduces to the LCP an extra variable $z$ and an extra positive vector~$\cov$,
called a covering vector.
It follows a path along edges of the new LCP polyhedron 
based on a complementary pivot rule that maintains an almost-complementary solution.  
In Figure~\ref{fig:lemke}, we give an example with $\cov = (2,1)^\top$; in
our reduction we take $\cov$ to be the all ones vector $\ones$. 
Geometrically, solving an LCP is equivalent to finding a \emph{complementary
cone}, corresponding to a subset of columns of $M$ and the complementary 
unit vectors, that contains $-\qq$. This is depicted on the left
in~Figure~\ref{fig:lemke}, which also shows Lemke's algorithm as inverting a
piecewise linear map along the line from $-\cov$ to $-\qq$.
The algorithm pivots between the brown vertices at the
intersections of complementary cones and terminates at $-\qq$. The extra variable
$z$ can be normalized and then describes how far along the line from $\cov$ to $-\qq$
we are. P-matrix LCPs are exactly those where the complementary cones cover the
whole space with no overlap, and then $z$ decreases monotonically as Lemke's
algorithm proceeds.

Each vertex along the Lemke path corresponds to a subset of $[d]$ of 
basic variables, and a possibly a unique duplicate label. 
A label $l \in [d]$ is said to be duplicate if $y_l=0$ as well as $w_l=0$. 
The vertices without a duplicate label has $z=0$ and correspond to a solution of the LCP. 
To encode these subsets and the duplicate label, we consider bit strings of
length $n=2d$ that represent vertices in the \EOPL instance. 
The first $d$ bits encode the subset, and bits $(d+1,\dots, 2d)$ bits encode the
duplicate label, where bit $(d+l)$ is one if $l$ is the duplicate label.
Thus, ``valid'' vertices in \EOPL instance can have at most one bit set to one
among bits $(d+1)$ through $2d$.
We ensure that ``invalid'' bit configurations form self-loops in the \EOPL
instance, and hence do not give rise to any solutions.

We use the following key properties of Lemke's algorithm as applied to a P-matrix:
\begin{enumerate}
\itemsep0mm
\item If Lemke's algorithm does not terminate with a LCP solution \PLo, it provides a \PLt solution.
\item For a P-matrix $M$, the extra variable $z$ strictly decreases in each step of the algorithm.
\item Given a subset of $[d]$ and duplicate label $l$, we can efficiently decide if it corresponds to a vertex on a Lemke path.
\item By a result of Todd~\cite[Section 5]{todd1976orientation}, the Lemke path can be efficiently locally oriented.
\end{enumerate}

Recall that LCP \eqref{eq:lcp} has a trivial solution, namely $\yy=0$, if $\qq\ge 0$. 
Therefore, wlog assume that $\min_{i \in [d]} q_i <0$. 
The starting vertex of the Lemke path is the vertex $\xx^0 = (\yy^0,\ww^0,z^0)$
with $\yy^0=0$, $z^0=|\min_{i\in [d]} q_i|$, and
$\ww=\qq+z^0\ones$. 
So $0^n$ is a start of line in the \EOPL
instance, we point the successor of $0^n$ to the bit configuration corresponding to $\xx^0$. 
We then follow the line of bit configurations corresponding to the vertices
traversed by Lemke's algorithm, updating $z$ in each step.
We use $(z^0-z+1)$ as the potential
function -- $(z^0-z)$ to ensure increasing potential along the line, and $+1$ to
ensure that only $0^n$ has zero potential.
If we start with a \PLCP instance where~$M$ is actually a P-matrix 
then this reduction will produce a single line from $\xx^0$ to the solution of the
\PLCP, and~$z$ will monotonically decrease along this line. 
The main difficulty of the reduction is dealing with the case where~$M$ is not a
P-matrix. This may cause Lemke's algorithm to terminate without an LCP solution.
Another issue is that, even when Lemke's algorithm does find a
solution,~$z$ may not decrease monotonically along the line. 

In the former case, the first property above gives us a \PLt solution for the
\PLCP problem. In the latter case, we define any point on the line where $z$
increases to be a self-loop, breaking the line at these points.
Figure~\ref{fig:lemke} shows an example, where the two vertices at which $z$
increases are turned into self loops, thereby introducing two new solutions before
and after the break.
Both of these solutions give us a \PLt solution for the \PLCP instance. The 
full details of the reduction are
involved and appear in Appendix~\ref{app:full_plcp_reduction}. It is worth
noting that, in the case where the input is actually a P-matrix, the resulting
\EOPL instance has a unique line, so our reduction in promise preserving.
Moreover, our \EOPL instance is a valid \UEOPL instance, and we can map back
all violations so as to obtain the following.

\begin{theorem}
\label{thm:plcp2ueopldirectly}
There are a polynomial-time promise-preserving reduction from 
\PLCP with violations of type \solnref{PV1} to \EOPL,
and from \PLCP with violations of type \solnref{PV2} to \UEOPL, and thereby also to \EOPL.
Both reductions only incur a linear blowup in the size of the instance.
\end{theorem}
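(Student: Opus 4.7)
The plan is to build a single instance that simultaneously serves as an \EOPL instance (for the first reduction) and as a \UEOPL instance (for the second), and then verify the four required properties: totality, correctness at proper solutions, correctness at each type of violation, and promise preservation. Following the sketch, I would encode each vertex as a bit-string of length $n = 2d$, with the first $d$ bits recording a subset $\alpha \subseteq [d]$ of basic variables and the second $d$ bits recording the (at most one) duplicate label. Any bit-string that does not satisfy the single-duplicate-label constraint, or for which the associated linear system fails to produce a Lemke-feasible point (which can be checked in polynomial time with a standard lexicographic perturbation of $\qq$ to avoid degeneracies), is set to be a self-loop under $S$ and $P$, so that it contributes no solutions. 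The starting vertex $0^n$ gets special treatment: $P(0^n) = 0^n$ and $S(0^n)$ is the encoding of $\xx^0$, the canonical Lemke starting vertex for covering vector $\cov = \ones$.

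Next I would define $S$ and $P$ at every valid vertex using Lemke's complementary pivot rule, appealing to Todd's orientation result~\cite{todd1976orientation} to show that the two adjacent pivots (one raising $z$, one lowering $z$) can be computed locally from the bit-string alone. The potential $V$ is derived from the value of the auxiliary variable $z$: since $z$ is a rational with polynomially bounded bit-length, one can define $V$ by scaling and rounding so that $V$ is an integer in $\{0, 1, \dots, 2^m - 1\}$ with $V(x) > V(x')$ iff $z(x) < z(x')$; the offset is chosen so that $V(0^n) = 0$. When $M$ is a P-matrix, $z$ decreases strictly with every pivot, $V$ strictly increases along every edge, and the construction yields a unique line from $0^n$ to the unique LCP solution, establishing promise preservation for both reductions.

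The bulk of the work is mapping violations back to P-LCP violations. A solution of type \solnref{R2} (equivalently \solnref{UV1}) corresponds to a pivot step where $z$ does not strictly decrease; at such a step the local Lemke data directly yields a non-positive principal minor of a submatrix of $M$ (giving \solnref{PV1}) and, by the sign-reversal characterization of P-matrices, also a nonzero vector whose sign is reversed (giving \solnref{PV2}). A solution of type \solnref{R1}/\solnref{UV2} that is not $0^n$ arises either as a Lemke secondary-ray termination (which classically yields a \solnref{PV1} witness, and with a slightly sharper analysis also a \solnref{PV2} witness using that $\cov = \ones$) or as a configuration whose predecessor pivot produces a zero-determinant subsystem; both cases are handled uniformly by extracting the relevant vector from the defining linear system. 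These cases suffice for the \EOPL reduction.

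The hard part will be the extra violation type \solnref{UV3}, which is needed only for the \UEOPL reduction. Here we are given two Lemke-feasible bit-strings $x,y$ on potentially distinct lines whose potentials overlap, i.e.\ there exist $z$-values that the two configurations share or that Lemke sees at both. Because our Lemke path with $\cov = \ones$ geometrically inverts the piecewise linear map sending a point to its image on the segment from $-\cov$ to $-\qq$, two distinct almost-complementary feasible points sharing a $z$-value correspond to two distinct preimages of the same point on this segment. Writing down the complementarity equations $(I\ww^x - M\yy^x) = \qq + z\cov$ and $(I\ww^y - M\yy^y) = \qq + z\cov$ and subtracting, I would show that $\xx^* := \yy^x - \yy^y$ (restricted to the appropriate index set and extended by zero) is nonzero and is sign-reversed by $M$, i.e.\ $x^*_i (Mx^*)_i \le 0$ for all $i$, yielding a \solnref{PV2} violation. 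This step is the main obstacle because it requires exploiting the specific structure of the covering vector $\ones$ and the complementarity patterns of Lemke bases; it is also precisely why the direct \UEOPL reduction is stronger than composing \USO with \OPDC and why a linear (rather than polynomial) blowup is achievable, since the encoding length is exactly $2d$ and every gate in $S$, $P$, $V$ is a polynomial-size circuit in the input operating on $O(d)$ input bits.
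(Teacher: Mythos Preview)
Your proposal is essentially correct and tracks the paper's approach closely: the $2d$-bit encoding, Todd's local orientation, the $z$-based potential, self-loops for invalid configurations, and the promise-preservation argument are all the same.

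Two points of divergence are worth noting. First, in the paper's construction the \solnref{R2}/\solnref{UV1} case is \emph{vacuous}: whenever the Lemke pivot would not strictly decrease $z$, the paper sets $S(\uu)=\uu$ (and dually for $P$), so such a vertex becomes a self-loop and the broken line instead surfaces as a pair of \solnref{R1}/\solnref{U1}/\solnref{UV2} solutions on either side (see Figure~\ref{fig:lemke} and Lemma~\ref{lem:t}). Your variant, where you leave the edge in and catch the non-increase as an \solnref{R2} solution, also works and is arguably cleaner; but be aware that your claim ``\solnref{R2} corresponds to $z$ not decreasing and yields \solnref{PV1}'' is then doing the work that the paper does via Lemma~\ref{lem:t2} applied to the resulting \solnref{R1} endpoints.

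Second, and more substantively, your handling of \solnref{UV3} has a small gap. You write that $x$ and $y$ ``share a $z$-value'' and then subtract the complementarity equations. But a \solnref{UV3} solution only guarantees $V(x)\le V(y)<V(S(x))$; when the inequality is strict, the vertices $x$ and $y$ do \emph{not} have the same $z$, and the direct subtraction does not yield a sign-reversed vector. The paper's fix (Lemma~\ref{lem:plcp-2sol}) is to interpolate on the edge from $\eti(x)$ to $\eti(S(x))$ to obtain a point whose $z$-coordinate equals the $z$-value at $y$; both points are then genuine solutions of $\mathrm{LCP}(M,\qq+z\ones)$, and now the subtraction argument (your $x^*=\yy^x-\yy^y$, which is exactly Lemma~\ref{lem:2solTOPV2}) goes through. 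The same interpolation trick is how the paper extracts a \solnref{PV2} witness from an \solnref{R1}/\solnref{UV2} solution with $z>0$: at such a vertex both adjacent edge directions move $z$ the same way, so one can pick a common $z$-level on the two edges and again get two LCP solutions for the same shifted $\qq'$. You gesture at this with ``a slightly sharper analysis'' but should make the interpolation step explicit.
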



\section{Algorithms}

\subsection{Algorithms for Contraction Maps}
\label{sec:algorithms}

\paragraph{\bf An algorithm for \LCM.}

The properties that we observed in our reduction from \LCM to \EOPL can also be
used to give polynomial time algorithms for the case where the number of
dimensions is constant. In our two-dimensional example, we relied on the fact
that each dimension-two slice has a unique point on the blue surface, and 
that the direction function at this point tells us the direction of the overall
fixpoint. 

This suggests that a nested binary search approach can be used to find the
fixpoint. The outer binary search will work on dimension-two coordinates, and
the inner binary search will work on dimension-one coordinates. For each fixed
dimension-two coordinate $y$, we can apply the inner binary search to find the
unique point $(x, y)$ that is on the blue surface. Once we have done so, $D_2(x,
y)$ tells us how to update the outer binary search to find a new candidate
coordinate $y'$. 

This can be generalized to $d$-dimensional instances, by
running $d$ nested instances of binary search. Moreover, our algorithm can detect violations in the course of performing the binary search and is able to produce witnesses to the given function not being a contraction map. Thus, our algorithm solves the non-promise problem \LCM, giving the following theorem, whose proof appears in Appendix~\ref{subsec:exact_algo_details}.

\begin{theorem}
Given a $\LinearFIXP$ circuit $C$ purporting to encode a contraction map $f : [0,1]^d\to
[0,1]^d$ with respect to any $\ell_p$ norm, there is an algorithm to find a
fixpoint of $f$ or return a pair of points witnessing that $f$ is not a contraction map in time that is polynomial in $\size(C)$ and exponential in $d$.
\end{theorem}

\paragraph{\bf An algorithm for \CM.}

We are also able to generalize this to the more general \CM problem, where the
input is given as an arbitrary (non-linear) arithmetic circuit. Here
the key issue is that the fixpoint may not be rational, and so we must find a
suitably accurate approximate fixpoint. Our nested binary search
approach can be adapted to do this.

Since we now deal with approximate fixpoints, we must cut off each of our nested
binary search instances at an appropriate accuracy. Specifically, we must ensure
that the solution is accurate enough so that we can correctly update the outer
binary search. 
Choosing these cutoff points turns out to be quite involved, as we must choose
different cutoff points depending on both the norm and the level of recursion,
and moreover the $\ell_1$ case requires a separate proof.

Again, the algorithm is able to detect violations of contraction during the course of the binary search, and thus solves the more general problem of either finding a fixpoint when the circuit defines a contraction map, or returning a pair of points that are not contracting.
The details of this are deferred to
Appendix~\ref{subsec:approx_algo_details}, where the following theorem is shown.

\begin{theorem}
\label{thm:alggeneral}
For a contraction map $f:[0,1]^d\to [0,1]^d$ under $\Norm{\cdot}_p$ for $2 \leq
p < \infty$, there is an algorithm to compute a point $v\in [0,1]^d$ such that
$\Norm{f(v) - v}_p < \eps$ or return a pair of points $(x,y)$ such that $\Norm{f(x) - f(y)}_p > c\Norm{x - y}_p$ in time $O(p^{d^2}\log^d(1/\eps)\log^d(p))$.
\end{theorem}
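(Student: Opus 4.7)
The plan is to generalize the nested binary search used for \LCM to the approximate setting, tracking precision requirements carefully through the levels of recursion. At the outermost level, I perform a binary search over the last coordinate $v_d \in [0,1]$: for each candidate $v_d$, I recursively compute an approximate fixpoint $(v_1,\dots,v_{d-1})$ of the $(d{-}1)$-dimensional slice obtained by fixing coordinate~$d$ to~$v_d$. Once I have the approximate sub-fixpoint, I evaluate $f$ at the full point $v=(v_1,\dots,v_d)$ and use the sign of $f(v)_d - v_d$ to decide whether to step up or down in the outer search. The recursion bottoms out at a one-dimensional binary search. As in \LCM, the correctness of each binary-search step rests on the uniqueness of the slice-fixpoint and the monotonic ``direction'' information revealed by evaluating~$f$ at it; here I use Banach's theorem restricted to each slice, which still gives a unique fixpoint because restricting a $c$-contraction under an $\ell_p$ norm to a slice (by fixing some coordinates) yields a $c$-contraction on the lower-dimensional slice under the induced $\ell_p$ norm.

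The main obstacle is choosing the per-level accuracy targets. To make the outer search in dimension~$d$ correct, the approximate sub-fixpoint $v$ must be close enough to the true slice-fixpoint $v^*$ that the sign of $f(v)_d - v_d$ agrees with the sign of $f(v^*)_d - v^*_d$ whenever the current binary-search window is still larger than the final tolerance. Contraction gives $\Norm{f(v)-f(v^*)}_p \leq c\Norm{v-v^*}_p$, so a sub-fixpoint error of $\delta$ in $\ell_p$ norm produces a coordinate error of at most $\delta + c\delta = (1+c)\delta$ in each of the first $d-1$ coordinates of $f(v)-v$ relative to $f(v^*)-v^*$. To convert the $\ell_p$-norm bound on $v-v^*$ into a bound on the single coordinate $f(v)_d - f(v^*)_d$ that I actually read off, I use $\abs{x_i} \leq \Norm{x}_p$, which is tight for $p\geq 2$ but worsens as $p$ grows through the dependence of $v^*$ on $v_d$. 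A standard Lipschitz-continuity argument (again using contraction) shows that $v^*$ moves by at most $O(1/(1-c))$ as $v_d$ shifts by a unit, and that the slice-fixpoint as a function of $v_d$ is monotone in a sense strong enough that a coordinate gap of $\eta$ in the outer search can be resolved provided the sub-routine returns a sub-fixpoint accurate to $\delta_{d-1} = \Theta(\eta/p^{O(d)})$. Propagating this relationship through $d$ levels gives the $p^{d^2}$ factor, while the $\log^d(1/\eps)\log^d(p)$ factor comes from $d$ nested binary searches each of depth $O(\log(1/\eps) + \log p)$.

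To handle violations and make the algorithm solve the non-promise problem \CM, I equip every binary-search step with a consistency check: whenever two sub-fixpoints computed at nearby outer-search values differ by more than contraction would permit, or whenever a sub-routine fails to converge within its budgeted depth, I extract an explicit pair $(x,y)$ with $\Norm{f(x)-f(y)}_p > c\Norm{x-y}_p$ and return it. In particular, if the outer binary search ever produces an up-signal on one side and a down-signal on the other with no consistent fixpoint in between, the two query points form such a violation. The restriction $p\geq 2$ enters only in the step where I read a single coordinate off an $\ell_p$-norm bound; the constants in this conversion blow up as $p\to 1$, so the $p=1$ (and more broadly $p<2$) regime requires the separate, more delicate argument alluded to in the statement. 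Combining the recursive binary search, the per-level precision analysis above, and the violation-detection wrapper yields the claimed running time $O(p^{d^2}\log^d(1/\eps)\log^d(p))$.
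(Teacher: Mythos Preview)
Your overall architecture---nested binary search over coordinates, with recursive approximate-fixpoint calls on slices---matches the paper. But the precision analysis you sketch is wrong in a way that matters, and you are missing the key technical lemma that makes the recursion go through.

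First, the claimed relationship $\delta_{d-1} = \Theta(\eta/p^{O(d)})$ is not what actually holds, and you give no derivation of the $p^{O(d)}$ factor. The paper's precisions satisfy a \emph{$p$-th power} relationship, not a multiplicative one: concretely $\eps_i(p,d) = \eps^{p^{\,d+1-i}}(dp)^{-2\sum_{j=0}^{d+1-i}p^j}$, engineered so that $\sum_{i<k} p\,\eps_i \le \eps_k^{\,p}$. This is what forces $\log(1/\eps_i) = O(p^{\,d+1-i}\log(1/\eps)\log(dp))$ at level~$i$, and the product over $d$ levels gives the $p^{d^2}\log^d(1/\eps)\log^d(p)$ bound. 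Your multiplicative accounting would give a quite different (and unjustified) runtime.

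Second, your pivoting criterion---``the sign of $f(v)_d - v_d$ agrees with the sign at the true slice fixpoint $v^*$ whenever the window is still larger than the final tolerance''---does not hold as stated. The quantity $|\Delta_k(v^*)|$ can be arbitrarily small relative to the window, so you cannot guarantee the sign is read correctly. The paper sidesteps this entirely: it never compares $v$ to $v^*$. Instead it proves a dichotomy directly (its Approximate-Fixpoints-Give-Directions lemma): if $|\Delta_j(v)| \le \eps_j$ for all $j<k$, then \emph{either} $|\Delta_k(v)| \le \eps_k$ (in which case $v$ is already a level-$k$ approximate fixpoint and you return it immediately) \emph{or} $\Delta_k(v)$ has the correct sign. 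The proof expands $\|f(v)-x^*\|_p^p$ and $\|v-x^*\|_p^p$, uses $a^p-(a-b)^p \le pb$ on the $j<k$ terms, and contradicts contraction; this is exactly where the inequality $\sum_{i<k} p\,\eps_i \le \eps_k^{\,p}$ is consumed and where the $p$-th power relationship originates.

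Finally, your violation wrapper is too vague. The paper's mechanism is specific: if the binary search at level~$k$ narrows to two $(k{-}1)$-approximate fixpoints $v^{(\ell)},v^{(h)}$ within $\eps_k/2$ in coordinate~$k$ but with $\Delta_k(v^{(\ell)}) \le -\eps_k$ and $\Delta_k(v^{(h)}) \ge \eps_k$, then the same $\ell_p^p$ expansion shows $\|f(v^{(h)})-f(v^{(\ell)})\|_p > \|v^{(h)}-v^{(\ell)}\|_p$, yielding a \solnref{CMV1} witness directly. The ``monotonicity'' and Lipschitz arguments you allude to are not needed and would not substitute for this computation.
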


Actually, our algorithm treats the function as a black-box, and so it can be
applied to any contraction map, with Theorem~\ref{thm:alggeneral} giving the
number of queries that need to be made.

\subsection{Aldous' algorithm for PLCP}
\label{sec:aldous}


Aldous~\cite{Aldous83} analysed a simple randomized algorithm for solving local
search problems. The algorithm randomly samples a large number of candidate
solutions and then performs a local search from the best sampled solution.
Aldous' algorithm can solve any problem in \PLS and thus any problem in \UEOPLc.
In~\cite{GHHKMS18} it was noted that, because our reduction from our reduction
from \PLCP to \UEOPL only incurs a linear blowup, that is, from an LCP in dimension 
$n$ we produce an \UEOPL
instance with $O(2^n)$ vertices, when we apply Aldous' algorithm to the
resulting instance, the expected time is $2^{n/2}\cdot \poly(n)$ in the worst
case, which gives the fastest known running time for a randomized algorithm for
\PLCP.
Thus, we get the following corollary of Theorem~\ref{thm:plcp2ueopldirectly}.

\begin{corollary}
There is a randomized algorithm for \PLCP that runs in expected time $O(1.4143^n)$.
\end{corollary}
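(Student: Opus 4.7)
The plan is to combine Theorem~\ref{thm:plcp2ueopldirectly} with the local-search procedure of Aldous~\cite{Aldous83}, following the observation of~\cite{GHHKMS18}. First, I would apply the promise-preserving reduction of Theorem~\ref{thm:plcp2ueopldirectly} to the given \PLCP instance of dimension~$n$. The key property used here is that the reduction incurs only a linear blow-up: although each vertex of the resulting \UEOPL instance is described by a $2n$-bit string, only $N = O(2^n\cdot n)$ of those bit-strings encode \emph{genuine} vertices, namely pairs consisting of a subset $\alpha\subseteq[n]$ of basic variables together with at most one duplicate label in $[n]$. Moreover, when $M$ really is a P-matrix the instance has a unique line from $0^n$ to the unique P-LCP solution, along which the potential strictly increases.

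Next, I would apply Aldous' algorithm to this instance. The algorithm runs in two phases. In the sampling phase, it draws $\Theta(\sqrt{N})$ uniformly random genuine vertices (which can be done directly on the subset/label domain, bypassing the $2n$-bit encoding) and evaluates the potential $V$ at each of them. In the descent phase, it starts from the sampled vertex with the largest potential and repeatedly applies the successor circuit $S$ until it reaches a vertex $x$ with $P(S(x))\neq x$, which under the promise must be the end of the unique line and hence correspond to the solution of the original \PLCP instance.

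The expected running time then follows from the classical analysis of Aldous: among $\Theta(\sqrt{N})$ uniform samples from a line of length at most $N$, the sample of highest potential lies within $O(\sqrt{N})$ successor steps of the sink in expectation, by a standard order-statistics argument. Thus both phases make $O(\sqrt{N})$ circuit evaluations, each of cost $\poly(n)$, giving a total expected running time of $\sqrt{2^n\cdot(n+1)}\cdot\poly(n) = 2^{n/2}\cdot\poly(n) = O(1.4143^n)$, as claimed.

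The only point needing real care, and arguably the main obstacle, is that one must sample uniformly from the genuine vertex set of size $O(2^n\cdot n)$ rather than from all $2^{2n}$ bit-strings; the latter would yield only the weaker bound $\sqrt{2^{2n}} = 2^n$. Fortunately, this is trivial to arrange because a genuine vertex is just a subset of $[n]$ together with an optional duplicate label, so it can be sampled directly in $\poly(n)$ time. A secondary issue is that Aldous' original statement is for \PLS, but since a \UEOPL instance is in particular a \SOD instance (obtained by discarding the predecessor circuit $P$), Aldous' algorithm applies verbatim, with the predecessor only used at the very end to verify that the final vertex is indeed an end-of-line.
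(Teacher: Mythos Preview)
Your proposal is correct and takes essentially the same approach as the paper: apply the linear-blowup reduction of Theorem~\ref{thm:plcp2ueopldirectly} to obtain a \UEOPL instance whose genuine vertex set has size $O(2^n\cdot n)$, then run Aldous' sample-and-walk algorithm to obtain expected time $2^{n/2}\cdot\poly(n)$. The paper's own argument is a one-sentence reference to~\cite{GHHKMS18} and~\cite{Aldous83}, so you have simply fleshed out the details---in particular the observation that one must sample directly from the $2^n(n{+}1)$ subset/label pairs rather than from all $2^{2n}$ bit-strings, which the paper leaves implicit.
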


\newpage

\bibliography{paper}

\newpage
\appendix

\section{Proofs for Section~\ref{sec:EOPL}: Equivalence of EOPL and EOML}
\label{app:eoml2eopl}

%

First we recall the definition of \EOML, which was first defined
in~\cite{hubavcek2017hardness}. 
It is close in spirit to the problem \EOL that
is used to define \PPAD~\cite{papadimitriou1994complexity}. 

\begin{definition}[\EOML~\cite{hubavcek2017hardness}]
Given circuits $S,P: \{0,1\}^n \rightarrow \{0,1\}^n$, and $V:\{0,1\}^n\rightarrow \{0,\dots, 2^n\}$ such that $P(0^n) =0^n\neq S(0^n)$ and $V(0^n)=1$, find a string $\xx \in \{0,1\}^n$ satisfying one of the following 
\begin{enumerate}[label=(T\arabic*)]
\itemsep0mm
\item either $S(P(\xx))\neq \xx \neq 0^n$ or $P(S(\xx))\neq \xx$,
\item $\xx\neq 0^n, V(\xx)=1$,
\item either $V(\xx)>0$ and $V(S(\xx))-V(\xx)\neq 1$, or $V(\xx)>1$ and $V(\xx)-V(P(\xx))\neq 1$. 
\end{enumerate}
\end{definition}
\EOML is actually quite similar to \EOPL. The main difference is that in \EOML,
each edge must increase the potential by exactly 1, which is enforced by
solution type T3. 

\subsection{\EOML to \EOPL}
\label{sec:EOMLtoEOPL}

Given an instance \CI\  of \EOML defined by circuits $S,P$ and $V$ on vertex
set $\{0,1\}^n$ we are going to create an instance $\CI'$ of \EOPL with circuits
$S',P'$, and $V'$ on vertex set $\{0,1\}^{(n+1)}$, i.e., we introduce one extra bit.  
This extra bit is essentially to take care of the difference in the value of potential 
at the starting point in \EOML and \EOPL, namely $1$ and $0$ respectively. 

Let $k=n+1$, then we create a potential function $V':\{0,1\}^k \rightarrow
\{0,\dots,2^k-1\}$. 
The idea is to make $0^k$ the starting point with potential zero as required,
and to make all other vertices with first bit $0$ be dummy vertices with self
loops. The real graph
will be embedded in vertices with first bit $1$, i.e., of type $(1,\uu)$. Here
by $(b,\uu)\in \{0,1\}^k$, where $b\in \{0,1\}$ and $\uu\in \{0,1\}^n$, we mean
a $k$ length bit string with first bit set to $b$ and for each $i\in[2:k]$ bit $i$ 
set to bit $u_i$. 

\medskip
\medskip

\noindent{\bf Procedure $V'(b,\uu)$:} If $b=0$ then Return $0$, otherwise Return $V(\uu)$. 
\medskip
\medskip

\noindent{\bf Procedure $S'(b,\uu)$:}
\vspace{-0.3cm}

\begin{enumerate}
\itemsep1mm
\item If $(b,\uu)=0^k$ then Return $(1,0^n)$
\item If $b=0$ and $\uu\neq 0^n$ then Return $(b,\uu)$ (creating self loop for dummy vertices)
\item If $b=1$ and $V(\uu)=0$ then Return $(b,\uu)$ (vertices with zero potentials have self loops)
\item If $b=1$ and $V(\uu)>0$ then Return $(b,S(\uu))$ (the rest follows $S$)
\end{enumerate}

\noindent{\bf Procedure $P'(b,\uu)$:}
\vspace{-0.3cm}

\begin{enumerate}
\itemsep1mm
\item If $(b,\uu)=0^k$ then Return $(b,\uu)$ (initial vertex points to itself in $P'$).
\item If $b=0$ and $\uu\neq 0^n$ then Return $(b,\uu)$ (creating self loop for dummy vertices)
\item If $b=1$ and $\uu=0^n$ then Return $0^k$ (to make $(0,0^n)\rightarrow (1,0^n)$ edge consistent)
\item If $b=1$ and $V(\uu)=0$ then Return $(b,\uu)$ (vertices with zero potentials have self loops)
\item If $b=1$ and $V(\uu)>0$ and $\uu \neq 0^n$ then Return $(b,P(\uu))$ (the rest follows $P$)
\end{enumerate}

Valid solutions of \EOML of type T2 and T3 requires the potential to be strictly greater than zero, while solutions of \EOPL may have zero potential. However, a solution of \EOPL can not be a self loop, so we've added self-loops around vertices with zero potential in the \EOPL instance.
By construction, the next lemma follows:
\begin{lemma}\label{lem:m2p-valid}
$S'$, $P'$, $V'$ are well defined and polynomial in the sizes of $S$, $P$, $V$ respectively. 
\end{lemma}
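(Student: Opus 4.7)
The plan is to verify the two claims (well-definedness and polynomial size) by a straightforward case analysis on the definitions of $V'$, $S'$, and $P'$.

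For well-definedness, I would check, for each of the three procedures, that the listed cases partition the input space $\{0,1\}^k$ into pairwise disjoint classes whose union is everything. For $V'(b,\uu)$ the split on $b=0$ versus $b=1$ is obviously exhaustive and disjoint. For $S'(b,\uu)$ I would separate on $b$ first: when $b=0$, case~1 handles $\uu=0^n$ and case~2 handles $\uu\neq 0^n$; when $b=1$, cases~3 and~4 split on whether $V(\uu)=0$, which is a well-defined predicate given the input circuit $V$. For $P'(b,\uu)$ the same style of analysis works: on $b=0$ cases~1 and~2 partition according to $\uu=0^n$; on $b=1$ cases~3,~4, and~5 partition according to whether $\uu=0^n$ and, if not, whether $V(\uu)=0$. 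In each case the output is either a fixed constant, the input itself, or a deterministic function of $S(\uu)$, $P(\uu)$, or $V(\uu)$, so the procedures specify unique values.

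For the polynomial-size claim, I would observe that each procedure can be implemented by a circuit that computes $S(\uu)$, $P(\uu)$, and $V(\uu)$ at most once (with the original circuits as black-box sub-circuits), together with $O(n)$ additional gates for the case selection: a test for $\uu = 0^n$ (an $n$-wise OR of input bits), a test for $V(\uu)=0$ (an OR over the output bits of the $V$-subcircuit), and a multiplexer over the constantly many output candidates driven by $b$ and these two predicates. Thus $|S'|, |P'|, |V'| = O(|S| + |P| + |V| + n)$.

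There is no real obstacle here; the proof is essentially a verification. The only tiny thing to be careful about is consistency at the ``seam'' between the dummy region and the embedded graph, i.e., that the edge $(0,0^n)\to(1,0^n)$ is treated symmetrically by $S'$ and $P'$. This is handled by case~1 of $S'$ together with case~3 of $P'$, and I would state explicitly that these two cases define the same directed edge so that no inconsistency is introduced. The remaining self-loops (dummy vertices with $b=0$ and $\uu\neq 0^n$, and $b=1$ vertices with $V(\uu)=0$) are deliberately built in and do not affect well-definedness.
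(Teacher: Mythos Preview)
Your proposal is correct and essentially matches the paper's approach: the paper simply writes ``By construction, the next lemma follows'' and gives no further argument, so your case analysis is a faithful (and more detailed) expansion of exactly that one-line justification. The only remark is that your discussion of the ``seam'' edge $(0,0^n)\to(1,0^n)$ goes slightly beyond what the lemma claims (it is about correctness of the reduction rather than well-definedness of the circuits), but it does no harm.
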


Our main theorem in this section is a consequence of the following three lemmas.

\begin{lemma}\label{lem:m2p-sl}
For any $\xx=(b,\uu)\in \{0,1\}^k$, $P'(\xx)=S'(\xx)=\xx$ (self loop) iff $\xx\neq 0^k$, and $b=0$ or $V(\uu)=0$.
\end{lemma}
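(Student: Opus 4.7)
The proof is a direct case analysis on which clauses of the procedures $S'$ and $P'$ fire on the input $\xx = (b,\uu)$, using the EOML promises $V(0^n)=1$ and $S(0^n)\neq 0^n$ at the two boundary cases.

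For the $\Leftarrow$ direction, assume $\xx \neq 0^k$ together with $b=0$ or $V(\uu)=0$. If $b=0$, then $\xx \neq 0^k$ forces $\uu \neq 0^n$, so clause 2 of both $S'$ and $P'$ fires and returns $(0,\uu)=\xx$. If instead $b=1$ and $V(\uu)=0$, then the EOML promise $V(0^n)=1$ forces $\uu \neq 0^n$, so clause 3 of $S'$ together with clause 4 of $P'$ both return $(1,\uu)=\xx$.

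For the $\Rightarrow$ direction, assume $S'(\xx)=P'(\xx)=\xx$. First, $\xx \neq 0^k$ is immediate from clause 1 of $S'$, which maps $0^k$ to $(1,0^n)\neq 0^k$. It then remains to rule out the case $b=1$ with $V(\uu)>0$. In the subcase $\uu=0^n$, we have $V(\uu)=1>0$, and clause 4 of $S'$ returns $(1,S(0^n))\neq \xx$ by the EOML promise $S(0^n)\neq 0^n$. In the remaining subcase $\uu \neq 0^n$, clause 4 of $S'$ returns $(1,S(\uu))$ and clause 5 of $P'$ returns $(1,P(\uu))$, so the self-loop would demand $S(\uu)=\uu=P(\uu)$ in the EOML instance; such a vertex with $V(\uu)>0$ would already be a T3-solution of the original EOML instance, since $V(S(\uu))-V(\uu)=0\neq 1$, and can therefore be excluded by a trivial WLOG preprocessing that redirects any EOML vertex with $S(\uu)=\uu$ and $V(\uu)>0$ into the dummy region.

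The only real obstacle is this last residual case: nothing in the EOML definition syntactically rules out an isolated fixed point of $S$ and $P$ at positive potential, so a tight proof either appeals to the preprocessing above or reads the lemma as characterising precisely the self-loops \emph{introduced} by the reduction. Everything else is a mechanical unfolding of the six clauses defining $S'$ and $P'$.
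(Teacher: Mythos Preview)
Your case analysis is exactly the paper's approach --- the paper's own proof is a one-sentence pointer to clauses 2, 3 of $S'$ and clauses 2, 4 of $P'$, and you have simply spelled out the unfolding. Where you go beyond the paper is in flagging the residual case $b=1$, $V(\uu)>0$, $\uu\neq 0^n$ with $S(\uu)=\uu=P(\uu)$: this is a genuine edge case that the paper's proof does not address, and nothing in the \EOML definition syntactically forbids it. Your preprocessing patch is one clean fix. An alternative observation is that the gap is harmless for the reduction itself: such an unintended self-loop in the \EOPL instance can never be an \solnref{R1} or \solnref{R2} solution, and the downstream lemmas only invoke this lemma at \emph{non}-self-loops to conclude $b=1$ and $V(\uu)>0$, which is exactly the direction unaffected by the edge case.
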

\begin{proof}
This follows by the construction of $V'$, the second condition in $S'$ and $P'$, and third and fourth conditions in $S'$ and $P'$ respectively. 
\end{proof}

\begin{lemma}\label{lem:m2p-r1}
Let $\xx=(b,\uu)\in \{0,1\}^k$ be such that $S'(P'(\xx))\neq \xx \neq 0^k$ or $P'(S'(\xx))\neq \xx$ (an \solnref{R1} type solution of \EOPL instance $\CI'$), then $\uu$ is a solution of \EOML instance $\CI$.
\end{lemma}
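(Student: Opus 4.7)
The plan is to argue by cases on the structure of $\xx = (b, \uu)$, tracing through how $S'$ and $P'$ handle the auxiliary bit $b$ and the potential $V(\uu)$. First I would rule out the trivial cases. By Lemma~\ref{lem:m2p-sl}, any vertex that is a self-loop under both $S'$ and $P'$ satisfies $S'(P'(\xx)) = P'(S'(\xx)) = \xx$, so such an $\xx$ cannot be R1. A direct check also shows $\xx = 0^k$ is not R1: we have $S'(0^k) = (1, 0^n)$, and rule~3 of $P'$ gives $P'(1, 0^n) = 0^k$, hence $P'(S'(0^k)) = 0^k$, while the other disjunct is ruled out by the side condition $\xx \neq 0^k$. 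Therefore we may assume $b = 1$, $V(\uu) > 0$, and $\xx \neq 0^k$; under these conditions $S'(\xx) = (1, S(\uu))$, and $P'(\xx) = 0^k$ if $\uu = 0^n$, otherwise $P'(\xx) = (1, P(\uu))$.

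The next split is on whether $\uu = 0^n$. When $\uu = 0^n$ (so $V(\uu) = 1$), a short calculation gives $S'(P'(\xx)) = \xx$, forcing R1 to hold via $P'(S'(\xx)) \neq \xx$, where $S'(\xx) = (1, S(0^n))$. Tracing $P'(1, S(0^n))$ yields two subcases: if $V(S(0^n)) = 0$, then $V(S(\uu)) - V(\uu) = -1 \neq 1$, giving a T3 solution at $\uu = 0^n$; otherwise $P'(S'(\xx)) = (1, P(S(0^n)))$, and R1 demands $P(S(\uu)) \neq \uu$, giving a T1 solution at $\uu$. When $\uu \neq 0^n$, both $S'(\xx)$ and $P'(\xx)$ live in the ``real'' copy of the graph, and I would subdivide according to whether the inner map ($S(\uu)$ or $P(\uu)$) equals $0^n$, has zero potential, or neither. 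Whenever neither holds, $S'$ and $P'$ faithfully mirror $S$ and $P$, so any R1 mismatch is a T1 violation at $\uu$; when $S(\uu) = 0^n$ or $P(\uu) = 0^n$, the jump to $0^k$ still yields T1, since $P(0^n) = 0^n$ and $S(0^n) \neq 0^n$.

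The main obstacle is the residual subcase where $V(P(\uu)) = 0$ or $V(S(\uu)) = 0$: the EOPL instance acquires a self-loop here that has no counterpart in the EOML instance, so the R1 mismatch need not translate to a T1 violation. The resolution is to exploit T3 directly. If $V(S(\uu)) = 0$, then $V(S(\uu)) - V(\uu) = -V(\uu) \neq 1$ together with $V(\uu) > 0$ makes $\uu$ a T3 solution. If $V(P(\uu)) = 0$, I split on $V(\uu)$: when $V(\uu) > 1$ we have $V(\uu) - V(P(\uu)) = V(\uu) \neq 1$, again giving a T3 solution, and when $V(\uu) = 1$ (which forces $\uu \neq 0^n$) we instead get a T2 solution. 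This exhausts all cases and shows $\uu$ is always an EOML solution.
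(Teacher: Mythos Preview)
Your proof is correct and follows the same case-analysis approach as the paper: trace through the definitions of $S'$ and $P'$ to show that any R1 mismatch at $(1,\uu)$ forces $\uu$ to be a T1, T2, or T3 solution. The paper organises the cases by which R1 disjunct fails (Case~I: $S'(P'(\xx))\neq\xx$; Case~II: $P'(S'(\xx))\neq\xx$) rather than by whether $\uu=0^n$, but the underlying arguments coincide. One small slip to tidy: when $P(\uu)=0^n$ there is no ``jump to $0^k$'' because $S'(1,0^n)=(1,S(0^n))$, not $0^k$; however this subcase is already absorbed by your faithful-mirror argument (giving $S(P(\uu))\neq\uu$ and hence T1), so the conclusion stands.
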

\begin{proof}
The proof requires a careful case analysis. 
By the first conditions in the descriptions of $S',P'$ and $V'$, we have $\xx \neq 0^k$. 
Further, since $\xx$ is not a self loop, Lemma \ref{lem:m2p-sl} implies $b=1$  and $V'(1,\uu)=V(\uu)>0$.
\medskip

\noindent{\em Case I.}
If $S'(P'(\xx))\neq \xx\neq 0^k$ then we will show that either $\uu$ is a genuine start of a line other than $0^n$ giving a T1 type solution of \EOML instance $\CI$, or there is some issue with the potential at $\uu$ giving either a T2 or T3 type solution of $\CI$. Since $S'(P'(1,0^n))=(1,0^n)$, $\uu \neq 0^n$. Thus if $S(P(\uu))\neq \uu$ then we get a T1 type solution of $\CI$ and proof follows. If $V(\uu)=1$ then we get a T2 solution of $\CI$ and proof follows. 

Otherwise, we have $S(P(\uu))=\uu$ and $V(\uu)>1$. Now since also $b=1$ $(1,\uu)$ is not a self loop (Lemma \ref{lem:m2p-sl}). 
Then it must be the case that $P'(1,\uu)=(1,P(\uu))$. However, $S'(1,P(\uu))\neq (1,\uu)$ even though $S(P(\uu))=\uu$. This happens only when $P(\uu)$ is a self loop because of $V(P(\uu))=0$ (third condition of $P'$).
Therefore, we have $V(\uu)-V(P(\uu))>1$ implying that $\uu$ is a T3 type solution of $\CI$. 
\medskip

\noindent{\em Case II.}
Similarly, if $P'(S'(\xx))\neq \xx$, then either $\uu$ is a genuine end of a line of $\CI$, or there is some issue with the potential at $\uu$. If $P(S(\uu))\neq \uu$ then we get T1 solution of $\CI$. Otherwise, $P(S(\uu))=\uu$ and $V(\uu)>0$. Now as $(b,\uu)$ is not a self loop and $V(\uu)>0$, it must be the case that $S'(b,\uu)=(1,S(\uu))$. However, $P'(1, S(\uu))\neq (b,\uu)$ even though $P(S(\uu))=\uu$. This happens only when $S(\uu)$ is a self loop because of $V(S(\uu))=0$. Therefore, we get $V(S(\uu))-V(\uu)<0$, i.e., $\uu$ is a type T3 solution of $\CI$. 
\end{proof}

\begin{lemma}\label{lem:m2p-r2}
Let $\xx=(b,\uu)\in \{0,1\}^k$ be an \solnref{R2} type solution of the constructed \EOPL instance $\CI'$, then $\uu$ is a type T3 solution of \EOML instance~$\CI$.
\end{lemma}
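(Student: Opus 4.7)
The plan is to unwind the R2 definition applied to $\xx = (b,\uu)$ in the constructed EOPL instance $\CI'$ and translate it back to the T3 condition on $\uu$ in $\CI$. The key design feature of the construction is the extra bit, which shifts the starting potential from $1$ (as required by \EOML) down to $0$ (as required by \EOPL), and I want to verify that this shift is handled cleanly enough that no spurious R2 solutions arise.

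First I would dispose of the corner case $\xx = 0^k$. By construction $S'(0^k) = (1, 0^n)$, $V'(0^k) = 0$, and $V'(1, 0^n) = V(0^n) = 1$. Hence $V'(S'(0^k)) - V'(0^k) = 1 > 0$, so $0^k$ cannot satisfy the R2 inequality and we may assume $\xx \ne 0^k$. This is the one step where the potential-shift trick is tight: if $V(0^n)$ were permitted to be $0$ in the \EOML instance, $0^k$ would be an R2 solution with no T3 counterpart in $\CI$.

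Next, since R2 requires $\xx \ne S'(\xx)$, the vertex $\xx$ is not a self-loop, and so Lemma~\ref{lem:m2p-sl} forces $b = 1$ together with $V(\uu) > 0$. The fourth clause of the definition of $S'$ then yields $S'(\xx) = (1, S(\uu))$, and from the definition of $V'$ we get $V'(\xx) = V(\uu)$ and $V'(S'(\xx)) = V(S(\uu))$. Plugging these into the R2 inequality $V'(S'(\xx)) - V'(\xx) \le 0$ gives $V(S(\uu)) - V(\uu) \le 0$, and in particular $V(S(\uu)) - V(\uu) \ne 1$. Combined with $V(\uu) > 0$, this is precisely the first disjunct of the T3 condition for the \EOML instance $\CI$, so $\uu$ is a T3 solution as claimed.

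I expect the only mildly subtle step to be the $\xx = 0^k$ boundary case; the rest of the argument is a direct substitution using Lemma~\ref{lem:m2p-sl} and the case definitions of $S'$ and $V'$. The remaining R2 clause $P'(S'(\xx)) = \xx$ is not actually needed in the argument, and is automatically compatible with the case $b=1$, $V(\uu) > 0$ via the fourth and fifth clauses of $P'$.
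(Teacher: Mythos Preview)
Your proof is correct and follows essentially the same approach as the paper's: rule out $\xx=0^k$, use Lemma~\ref{lem:m2p-sl} to force $b=1$ and $V(\uu)>0$, then read off $S'(\xx)=(1,S(\uu))$ and translate the R2 potential inequality into the first disjunct of T3. Your treatment is in fact slightly tidier, since you make the $0^k$ case explicit (the paper dismisses it with ``Clearly'') and you avoid the paper's unnecessary detour of checking that $S'(\xx)$ is not a self-loop.
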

\begin{proof}
Clearly, $\xx\neq 0^k$. Let $\yy = (b',\uu') = S'(\xx) \neq \xx$, and observe that $P(\yy) = \xx$. This also implies that $\yy$ is not a self loop, and hence $b=b'=1$ and $V(\uu)>0$ (Lemma \ref{lem:m2p-sl}). Further, $\yy = S'(1,\uu)=(1,S(\uu))$, hence $\uu'=S(\uu)$. Also, $V'(\xx)=V'(1,\uu)=V(\uu)$ and $V'(\yy)=V'(1,\uu')=V(\uu')$. 

Since $V'(\yy)-V'(\xx)\le 0$ we get $V(\uu')-V(\uu)\le 0 \Rightarrow V(S(\uu)) - V(\uu) \le 0\Rightarrow V(S(\uu)) - V(\uu)\neq 1$. Given that $V(\uu)>0$, $\uu$ gives a type T3 solution of \EOML.
\end{proof}

\begin{theorem}\label{thm:m2p}
An instance of \EOML can be reduced to an instance of \EOPL in linear time such that a solution of the former can be constructed in a linear time from the solution of the latter. 
\end{theorem}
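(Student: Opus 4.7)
The plan is to assemble Theorem~\ref{thm:m2p} directly from the four preceding lemmas, which already carry out all the technical work. What remains is just a short verification: (i) the constructed triple $(S', P', V')$ on $\{0,1\}^{n+1}$ really constitutes a legal EOPL instance, (ii) the reduction runs in linear time with only a linear blow-up in circuit size, and (iii) any solution of the constructed EOPL instance can be mapped back to an EOML solution of the original in linear time.

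For (i) I would simply read off the base cases in the definitions of $S'$, $P'$, $V'$: the first clause of $V'$ gives $V'(0^{n+1}) = 0$, the first clause of $P'$ gives $P'(0^{n+1}) = 0^{n+1}$, and the first clause of $S'$ gives $S'(0^{n+1}) = (1, 0^n) \ne 0^{n+1}$. Together these are precisely the EOPL start-of-line conditions. For (ii) I would invoke Lemma~\ref{lem:m2p-valid}, which already states the new circuits are well defined with size linear in the old ones; since only one extra bit is prepended to the vertex representation, the whole construction is linear in the input size.

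Step (iii) is where the substance lies, and here Lemmas~\ref{lem:m2p-r1} and~\ref{lem:m2p-r2} discharge essentially everything. Given any solution $(b, \uu)$ of the EOPL instance $\CI'$, the recovery procedure will simply strip the first bit and return $\uu$. If the solution is of type \solnref{R1}, Lemma~\ref{lem:m2p-r1} guarantees that $\uu$ is a T1, T2, or T3 solution of $\CI$, depending on whether the failure originated in $S' \circ P'$ or $P' \circ S'$ and whether the nearest neighbor of $\uu$ under $S$ or $P$ in $\CI$ happened to have zero potential (and thus was replaced by a self-loop in $\CI'$). If it is of type \solnref{R2}, Lemma~\ref{lem:m2p-r2} shows $\uu$ is a T3 solution. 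The mapping $(b,\uu) \mapsto \uu$ is trivially computable in linear time, so composed with the reduction we get a linear-time solution recovery.

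If any real obstacle remains, it is conceptual rather than technical: one must be certain that the dummy vertices with first bit $0$ and the self-loops placed at vertices of zero $V$-potential do not secretly introduce spurious EOPL solutions, since EOPL requires a solution not to be a self-loop. Lemma~\ref{lem:m2p-sl} gives the exact characterization of self-loops in $\CI'$ and so rules this out, and I would make sure to cite it explicitly both when arguing the recovery map is well defined and when confirming that $0^{n+1}$ is still the unique designated start vertex.
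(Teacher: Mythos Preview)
Your proposal is correct and follows essentially the same approach as the paper: the paper states the theorem as an immediate consequence of Lemmas~\ref{lem:m2p-valid}, \ref{lem:m2p-sl}, \ref{lem:m2p-r1}, and \ref{lem:m2p-r2}, and you have assembled them in exactly the intended way, with the added benefit of spelling out explicitly why $(S',P',V')$ satisfies the EOPL start-of-line conditions.
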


\subsection{\EOPL to \EOML}
\label{sec:eopl2eoml}

In this section we give a linear time reduction from an instance $\CI$ of \EOPL to an instance $\CI'$ of \EOML. Let the given \EOPL instance $\CI$ be defined on vertex set $\{0,1\}^n$ and with procedures $S,P$ and $V$, where $V:\{0,1\}^n\rightarrow \{0,\dots,2^m-1\}$. 
\medskip

\noindent{\bf Valid Edge.} We call an edge $\uu \rightarrow \vv$ valid if $\vv=S(\uu)$ and $\uu=P(\vv)$. 
\medskip

We construct an \EOML instance $\CI'$ on $\{0,1\}^k$ vertices where $k=n+m$. 
Let $S',P'$ and $V'$ denotes the procedures for $\CI'$ instance. 
The idea is to capture value $V(\xx)$ of the potential in the $m$ least significant bits of vertex description itself, so that it can be gradually increased or decreased on valid edges. For vertices with irrelevant values of these least $m$ significant bits we will create self loops. Invalid edges will also become self loops, e.g., if $\yy=S(\xx)$ but $P(\yy)\neq \xx$ then set $S'(\xx,.)=(\xx,.)$. We will see how these can not introduce new solutions. 

In order to ensure $V'(0^k)=1$, the $V(S(0^n))=1$ case needs to be discarded. For
this, we first do some initial checks to see if the given instance $\CI$ is not
trivial.  If the input \EOPL instance is trivial, in the sense that either
$0^n$ or $S(0^n)$ is a solution, then we can just return it.

\begin{lemma}
\label{lem:valid-edges}
If $0^n$ or $S(0^n)$ are not solutions of \EOPL instance $\CI$ then $0^n
\rightarrow S(0^n) \rightarrow S(S(0^n))$ are valid edges, and $V(S(S(0^n))\ge 2$. 
\end{lemma}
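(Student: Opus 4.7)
The proof is a direct case analysis using the definitions of the R1 and R2 solution types for \EOPL, applied to the vertices $0^n$ and $S(0^n)$. The strategy is to read off what it means for each of these two vertices to \emph{not} be a solution, and to extract the two claimed edge-validities and the potential lower bound.

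\textbf{Step 1: $0^n \to S(0^n)$ is valid.} Let $y := S(0^n)$. Since $P(0^n) = 0^n$, the R1 condition ``$S(P(x)) \ne x \ne 0^n$'' is vacuous at $x = 0^n$, so the hypothesis that $0^n$ is not an R1 solution forces $P(S(0^n)) = 0^n$, i.e.\ $P(y) = 0^n$. Together with $S(0^n) = y$, this is exactly validity of the edge $0^n \to y$. For the potential, note that $V(0^n) = 0$ and $y \ne 0^n$ by assumption, so in particular $0^n \ne S(0^n)$; combined with $P(S(0^n)) = 0^n$, the fact that $0^n$ is not an R2 solution yields $V(S(0^n)) - V(0^n) > 0$, hence $V(y) \ge 1$.

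\textbf{Step 2: $S(0^n) \to S(S(0^n))$ is valid and $S(S(0^n)) \ne S(0^n)$.} Let $z := S(y)$. Since $y \ne 0^n$, the full R1 condition applies at $y$. We already have $P(y) = 0^n$, so $S(P(y)) = S(0^n) = y$ automatically. The assumption that $y$ is not an R1 solution then forces $P(S(y)) = y$, i.e.\ $P(z) = y$, which is validity of the edge $y \to z$. Suppose now for contradiction that $z = y$, i.e.\ $S(y) = y$; then $P(z) = P(y) = 0^n$, so from $P(z) = y$ we would get $y = 0^n$, contradicting $y \ne 0^n$. Hence $z \ne y$.

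\textbf{Step 3: $V(S(S(0^n))) \ge 2$.} Since $y \ne S(y)$ and $P(S(y)) = y$, the assumption that $y$ is not an R2 solution yields $V(S(y)) - V(y) > 0$, so $V(z) > V(y) \ge 1$, and thus $V(z) \ge 2$.

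\textbf{Obstacles.} There is no significant obstacle: the lemma is entirely syntactic once one is careful to track the asymmetric $x \ne 0^n$ clause in R1 and to rule out the degenerate self-loop case $S(y) = y$ at the second vertex. Steps 1 and 2 are essentially the definitions of ``not a solution'' read in the contrapositive, and the potential bound in Step 3 then drops out because the potential must strictly increase along each of the two valid edges from a base of $V(0^n) = 0$.
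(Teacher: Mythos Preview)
The proposal is correct and follows essentially the same approach as the paper: directly unpacking the contrapositives of the R1 and R2 conditions at $0^n$ and at $S(0^n)$. Your version is in fact more careful than the paper's terse proof, in that you explicitly rule out the self-loop case $S(S(0^n)) = S(0^n)$ before invoking the R2 non-solution condition at $S(0^n)$, whereas the paper simply asserts the strict potential increase.
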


\begin{proof}
Since both $0^n$ and $S(0^n)$ are not solutions, we have
	$V(0^n)<V(S(0^n))<V(S(S(0^n)))$, $P(S(0^n))=0^n$, and for $\uu = S(0^n)$,
	$S(P(\uu))=\uu$ and $P(S(\uu))=\uu$. In other words, $0^n \rightarrow S(0^n)
	\rightarrow S(S(0^n))$ are valid edges, and since $V(0^n)=0$, we have
	$V(S(S(0^n))\ge 2$. 
\end{proof}

Let us assume now on that $0^n$ and $S(0^n)$ are not solutions of $\CI$, and
then by Lemma \ref{lem:valid-edges}, we have $0^n \rightarrow S(0^n) \rightarrow
S(S(0^n))$ are valid edges, and $V(S(S(0^n))\ge 2$. We can avoid the need to check
whether $V(S(0))$ is one all together, by making $0^n$ point directly to
$S(S(0^n))$ and make $S(0^n)$ a dummy vertex. 

We first construct $S'$ and $P'$, and then construct $V'$ which will give
value zero to all self loops, and use the least significant $m$ bits to give a
value to all other vertices.
Before describing $S'$ and $P'$ formally, we first describe the underlying
principles. Recall that in $\CI$ vertex set is $\{0,1\}^n$ and possible potential values are $\{0,\dots,2^m-1\}$, while in $\CI'$ vertex set is $\{0,1\}^k$ where $k=m+n$. 
We will denote a vertex of $\CI'$ by a tuple $(\uu,\pi)$, where $\uu \in
\{0,1\}^n$ and $\pi\in \{0,\dots,2^m-1\}$. 
Here when we say that we introduce an {\em edge $\xx\rightarrow \yy$} we mean
that we introduce a valid edge from $\xx$ to $\yy$, i.e., $\yy=S'(\xx)$ and $\xx=P(\yy)$. 
\begin{itemize}
\item Vertices of the form $(S(0^n),\pi)$ for any $\pi \in \{0,1\}^m$ and the vertex $(0^n,1)$ are
dummies and hence have self loops.
\item If $V(S(S(0^n))=2$ then we introduce an edge $(0^n,0)\rightarrow(S(S(0^n)),2)$, otherwise 
\begin{itemize}
\item for $p=V(S(S(0^n))$, we introduce the edges $(0^n,0)\ra (0^n,2)\ra (0^n, 3)\dots (0^n,p-1)\ra (S(S(0^n)),p)$.
\end{itemize}
\item If $\uu \ra \uu'$ valid edge in $\CI$ then let $p=V(\uu)$ and $p'=V(\uu')$
\begin{itemize}
\item If $p=p'$ then we introduce the edge $(\uu,p)\ra (\uu',p')$. 
\item If $p<p'$ then we introduce the edges $(\uu,p)\ra (\uu,p+1)\ra \dots\ra (\uu,p'-1)\ra (\uu',p')$.
\item If $p>p'$ then we introduce the edges $(\uu,p)\ra (\uu,p-1)\ra \dots\ra (\uu,p'+1)\ra (\uu',p')$.
\end{itemize}
\item If $\uu\neq 0^n$ is the start of a path, i.e., $S(P(\uu))\neq \uu$, then
make $(\uu,V(\uu))$ start of a path by ensuring $P'(\uu,V(\uu))=(\uu,V(\uu))$.
\item If $\uu$ is the end of a path, i.e., $P(S(\uu))\neq \uu$, then make
$(\uu,V(\uu))$ end of a path by ensuring $S'(\uu,V(\uu))=(\uu,V(\uu))$.
\end{itemize}

Last two bullets above remove singleton solutions from the system by making them
self loops. However, this can not kill all the solutions since there is a path
starting at $0^n$, which has to end somewhere. Further, note that this entire process ensures that no new start or end of a paths are introduced. 
\medskip
\medskip

\noindent{\bf Procedure $S'(\uu,\pi)$.} 
\vspace{-0.2cm}

\begin{enumerate}
\itemsep1mm
\item If ($\uu=0^n$ and $\pi=1$) or $\uu=S(0^n)$ then Return $(\uu,\pi)$. 
\item If $(\uu,\pi)=0^k$, then let $\uu'=S(S(0^n))$ and $p'=V(\uu')$. 
\begin{enumerate}
\itemsep1mm
\item If $p'=2$ then Return $(\uu',2)$ else Return $(0^n,2)$.
\end{enumerate}
\item If $\uu=0^n$ then
\begin{enumerate}
\itemsep1mm
\item If $2\le \pi<p'-1$ then Return $(0^n,\pi+1)$.
\item If $\pi=p'-1$ then Return $(S(S(0^n)),p')$.
\item If $\pi\ge p'$ then Return $(\uu,\pi)$.
\end{enumerate}
\item Let $\uu'=S(\uu)$, $p'=V(\uu')$, and $p=V(\uu)$. 
\item \label{itm:Sfive} If $P(\uu')\neq \uu$ or $\uu'=\uu$ then Return $(\uu,\pi)$
\item If $\pi=p=p'$ or ($\pi=p$ and $p'=p+1$) or $(\pi=p$ and $p'=p-1$) then Return $(\uu',p')$.
\item If $\pi<p\le p'$ or $p\le p'\le \pi$ or $\pi>p\ge p'$ or $p\ge p'\ge \pi$ then Return $(\uu,\pi)$
\item If $p<p'$, then if $p\le \pi<p'-1$ then Return $(\uu,\pi+1)$. If $\pi=p'-1$ then Return $(\uu',p')$.
\item If $p>p'$, then if $p \ge \pi>p'+1$ then Return $(\uu,\pi-1)$. If $\pi=p'+1$ then Return $(\uu',p')$.
\end{enumerate}
\medskip

\noindent{\bf Procedure $P'(\uu,\pi)$.} 
\vspace{-0.2cm}

\begin{enumerate}
\itemsep1mm
\item If ($\uu=0^n$ and $\pi=1$) or $\uu=S(0^n)$ then Return $(\uu,\pi)$. 
\item If $\uu=0^n$, then 
\begin{enumerate}
\itemsep1mm
\item If $\pi=0$ then Return $0^k$.
\item If $\pi<V(S(S(0^n)))$ and $\pi\notin \{1,2\}$ then Return $(0^n,\pi-1)$.
\item If $\pi<V(S(S(0^n)))$ and $\pi=2$ then Return $0^k$.
\end{enumerate}
\item If $\uu=S(S(0^n))$ and $\pi=V(S(S(0^n))$ then 
\begin{enumerate}
\itemsep1mm
\item If $\pi=2$ then Return $(0^n,0)$, else Return $(0^n,\pi-1)$. 
\end{enumerate}
\item If $\pi=V(\uu)$ then 
\begin{enumerate}
\itemsep1mm
\item Let $\uu'=P(\uu)$, $p'=V(\uu')$, and $p=V(\uu)$. 
\item If $S(\uu')\neq \uu$ or $\uu'=\uu$ then Return $(\uu,\pi)$
\item If $p=p'$ then Return $(\uu',p')$ 
\item If $p'<p$ then Return $(\uu',p-1)$ else Return $(\uu',p+1)$
\end{enumerate}
\item Else \% when $\pi \neq V(\uu)$
\begin{enumerate}
\itemsep1mm
\item Let $\uu'=S(\uu)$, $p'=V(\uu')$, and $p=V(\uu)$
\item If $P(\uu')\neq \uu$ or $\uu'=\uu$ then Return $(\uu,\pi)$
\item If $p'=p$ or $\pi<p< p'$ or $p<p'\le \pi$ or $\pi>p> p'$ or $p>p'\ge \pi$ then Return $(\uu,\pi)$
\item If $p<p'$, then If $p<\pi\le p'-1$ then Return $(\uu,\pi-1)$. 
\item If $p>p'$, then if $p> \pi\ge p'+1$ then Return $(\uu,\pi+1)$. 
\end{enumerate}
\end{enumerate}

As mentioned before, the intuition for the potential function procedure $V'$ is to return zero for self loops, return $1$ for $0^k$, and return the number specified by the lowest $m$ bits for the rest. 
\medskip
\medskip

\noindent{\bf Procedure $V'(\uu,\pi)$.} Let $\xx=(\uu,\pi)$ for notational convenience.
\vspace{-0.2cm}

\begin{enumerate}
\itemsep1mm
\item If $\xx=0^k$, then Return $1$. 
\item If $S'(\xx) = \xx$ and $P'(\xx)=\xx$ then Return $0$.
\item If $S'(\xx) \neq \xx$ or $P'(\xx)\neq \xx$ then Return $\pi$.
\end{enumerate}

The fact that procedures $S'$, $P'$ and $V'$ give a valid \EOML instance follows from construction.
\begin{lemma}\label{lem:p2m-valid}
Procedures $S'$, $P'$ and $V'$ gives a valid \EOML instance on vertex set $\{0,1\}^k$, where $k=m+n$ and $V':\{0,1\}^k\ra \{0,\dots, 2^k-1\}$.
\end{lemma}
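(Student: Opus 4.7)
\medskip
\noindent\textbf{Proof proposal for Lemma~\ref{lem:p2m-valid}.} The plan is to verify three things directly from the construction: (i) the procedures $S'$, $P'$, and $V'$ are well-defined (every input falls into exactly one returning case); (ii) the interface conditions required by the definition of \EOML{} hold, namely $P'(0^k)=0^k$, $S'(0^k)\ne 0^k$, and $V'(0^k)=1$; and (iii) each of $S'$, $P'$, $V'$ has size polynomial in the sizes of $S$, $P$, $V$. Throughout we use Lemma~\ref{lem:valid-edges}, which lets us assume without loss of generality that $0^n \to S(0^n) \to S(S(0^n))$ are valid edges of $\CI$ and that $V(S(S(0^n))) \ge 2$, since otherwise we have already returned a solution.

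First I would handle the boundary conditions, as they come directly from the top case distinctions in the procedures. The case $\uu=0^n$, $\pi=1$ in both $S'$ and $P'$ produces self-loops at $(0^n,1)$, and by the first case of $V'$ we have $V'(0^k)=1$. The case $(\uu,\pi)=0^k$ in $S'$ returns either $(S(S(0^n)),2)$ or $(0^n,2)$, and since $S(S(0^n))\ne 0^n$ (otherwise Lemma~\ref{lem:valid-edges} would be violated) and in either subcase the second coordinate is $2\ne 0$, we get $S'(0^k)\ne 0^k$. For $P'$, the subcase $\uu=0^n$, $\pi=0$ returns $0^k$, so $P'(0^k)=0^k$. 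This establishes (ii).

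For well-definedness in (i), the main task is to check that the cascaded conditionals in $S'$ and $P'$ exhaust all possibilities in $\{0,1\}^n\times\{0,\dots,2^m-1\}$. I would partition the space according to $\uu$: the special vertices where $\uu=0^n$ or $\uu=S(0^n)$ are handled by the early cases; for the remaining $\uu$, I would split on whether $\pi=V(\uu)$ (the ``anchor'' copy of $\uu$ in the inflated instance) versus $\pi\ne V(\uu)$ (the dummy chain copies), and within each branch verify that the ordering conditions on $\pi$, $p=V(\uu)$, $p'=V(S(\uu))$ (or $V(P(\uu))$, respectively) cover every possibility. The conditions in steps 6--9 of $S'$ and the parallel conditions in $P'$ were designed so that on each interval defined by $p$ and $p'$ exactly one returning rule fires; I would verify this by case analysis on the three orderings $p<p'$, $p=p'$, $p>p'$ and on where $\pi$ sits relative to them. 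Finally, the ``$P(\uu')\ne \uu$ or $\uu'=\uu$'' clauses convert invalid edges and singleton self-loops of $\CI$ into self-loops of $\CI'$, so well-definedness of $V'$ follows because every vertex is either assigned $0$ (both $S'$ and $P'$ fix it), or it lies on a proper directed chain of length at least one, in which case the third clause returns $\pi$.

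The main obstacle will be this case-completeness check for $S'$ and $P'$, which is tedious because of the many subcases: one must verify simultaneously that no two cases overlap (so the procedure is single-valued) and that no configuration falls through (so it is total). I would organize this as a table indexed by the tuple $(\text{sign}(p'-p),\text{sign}(\pi-p),\text{sign}(\pi-p'))$, along with the special rows for $\uu\in\{0^n, S(0^n)\}$, and check each row against the corresponding clause of the procedure. The size bounds in (iii) are routine: each procedure makes $O(1)$ calls to $S$, $P$, $V$ together with $O(m)$-bit comparisons, so $S'$, $P'$, $V'$ each have size $\poly(|S|+|P|+|V|)$, and the output of $V'$ lies in $\{0,\dots,2^m-1\}\subseteq\{0,\dots,2^k-1\}$ since $k=n+m\ge m$.
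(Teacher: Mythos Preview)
Your proposal is correct and matches the paper's approach: the paper simply asserts that the lemma ``follows from construction'' and gives no further argument, so your detailed case-by-case verification of well-definedness, the boundary conditions $P'(0^k)=0^k\ne S'(0^k)$ and $V'(0^k)=1$, and the polynomial size bound is exactly the content that the paper leaves implicit.
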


The next three lemmas shows how to construct a solution of \EOPL instance $\CI$ from a type T1, T2, or T3 solution of constructed \EOML instance $\CI'$.
The basic idea for next lemma, which handles type T1 solutions, is that we never create spurious end or start of a path. 
\begin{lemma}\label{lem:p2m-t1}
Let $\xx=(\uu,\pi)$ be a type T1 solution of constructed \EOML instance $\CI'$. Then $\uu$ is a type \solnref{R1} solution of the given \EOPL instance $\CI$.
\end{lemma}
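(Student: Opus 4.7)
The plan is to do a careful case analysis on where the vertex $\xx=(\uu,\pi)$ lies in the construction, and argue that the only way a T1 violation can arise is if $\uu$ already constitutes an \solnref{R1} solution in $\CI$. The key principle, stated informally in the text just above the lemma, is that the reduction never creates \emph{new} starts or ends of paths: every self-loop inserted by the construction is symmetric (both $S'$ and $P'$ agree), and every inserted intermediate vertex along a ``stretched'' chain has its successor and predecessor defined so that the compositions $S'\circ P'$ and $P'\circ S'$ act as the identity. So a T1 violation must be inherited from $\CI$.

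First I would dispose of the trivial vertices. If $\uu=S(0^n)$, or if $\uu=0^n$ with $\pi=1$, then both $S'(\xx)=\xx$ and $P'(\xx)=\xx$ by the very first clause of each procedure, so $\xx$ is a self-loop and cannot be a T1 solution. Next, if $\uu=0^n$ and $\pi\in\{0,2,3,\dots,V(S(S(0^n)))-1\}$, $\xx$ lies on the inserted prefix chain from $0^k$ to $(S(S(0^n)),V(S(S(0^n))))$; the clauses of $S'$ and $P'$ for $\uu=0^n$ are defined by an explicit successor/predecessor pattern that is self-consistent by inspection (each step in the chain composes back to the identity), so again no T1 violation arises here. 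The remaining $\uu=0^n$ vertices with $\pi\ge V(S(S(0^n)))$ are self-loops by construction.

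The main case is $\uu\notin\{0^n,S(0^n)\}$. I would split on whether $\pi=V(\uu)$ or not. When $\pi\ne V(\uu)$, the vertex $\xx$ is an intermediate vertex on the inserted chain simulating a valid edge $\uu\to S(\uu)$; the procedures explicitly walk $\pi$ one step toward $V(S(\uu))$, so $S'(P'(\xx))=\xx$ and $P'(S'(\xx))=\xx$ provided the surrounding chain is well defined, and if the chain is not well defined (the clause ``$P(\uu')\ne\uu$ or $\uu'=\uu$'') the whole vertex is turned into a self-loop, again preventing a T1 violation. When $\pi=V(\uu)$, the behavior of $S'$ and $P'$ is essentially $S$ and $P$ composed with the chain-insertion logic. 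Here I would argue: if $P'(S'(\xx))\ne\xx$, then chasing through clauses 4--9 of $S'$ and the corresponding clauses of $P'$, the only failure mode is that either $\uu=S(\uu)$ (self-loop from clause Sfive, which is not actually a T1 violation) or $P(S(\uu))\ne\uu$, which is exactly a type \solnref{R1} solution at $\uu$ in $\CI$. Symmetrically, if $S'(P'(\xx))\ne\xx$ and $\xx\ne 0^k$, the only way to reach this is through $S(P(\uu))\ne\uu$ together with $\uu\ne 0^n$ (noting that the prefix gadget forces $0^n\to S(S(0^n))$ to be consistent, so we never accidentally blame $\uu=0^n$).

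The main obstacle I anticipate is bookkeeping: the definitions of $S'$ and $P'$ branch into many clauses depending on the ordering of $\pi$, $p=V(\uu)$, and $p'=V(\uu')=V(S(\uu))$ (or $V(P(\uu))$), and on whether the underlying edge $\uu\to \uu'$ is valid. The proof is really a verification that in every such case, either (i) both $S'$ and $P'$ produce a self-loop consistently, or (ii) they produce a chain edge that composes back to the identity, or (iii) the inconsistency is traceable to $S(P(\uu))\ne\uu$ or $P(S(\uu))\ne\uu$ in $\CI$. I would organize this as a table indexed by the clause of $S'$ and $P'$ that fires at $\xx$ and at $S'(\xx)$ (respectively $P'(\xx)$), and verify the three possibilities case by case; the tedious but essential piece is checking the boundary cases at the ends of inserted chains, i.e.\ when $\pi\in\{p+1,p'-1\}$ (or $\{p-1,p'+1\}$ when $p>p'$), where the transition from ``walking along the chain'' to ``arriving at $(\uu',p')$'' occurs.
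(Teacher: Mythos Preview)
Your proposal is correct and follows essentially the same approach as the paper: rule out $\uu\in\{0^n,S(0^n)\}$, then argue that the construction only produces self-loops or internally consistent chains, so any T1 violation at $(\uu,\pi)$ must stem from $S(P(\uu))\ne\uu$ or $P(S(\uu))\ne\uu$ in $\CI$. The paper frames the main case as a proof by contradiction (assume $\uu$ is not \solnref{R1}, derive that $\xx$ cannot be T1) and is considerably more terse, whereas you organize it as a forward case analysis split on $\pi=V(\uu)$ versus $\pi\ne V(\uu)$; but the underlying verification is the same.
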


\begin{proof}
Let $\Delta=2^m-1$.
In $\CI'$, clearly $(0^n,\pi)$ for any $\pi \in {1,\dots, \Delta}$ is not a start or end of a path, and $(0^n,0)$ is not an end of a path. Therefore, $\uu\neq 0^n$. Since $(S(0^n),\pi), \forall \pi\in \{0,\dots,\Delta\}$ are self loops, $\uu \neq S(0^n)$.

If to the contrary, $S(P(\uu))=\uu$ and $P(S(\uu))=\uu$. If $S(\uu)=\uu=P(\uu)$ then $(\uu,\pi),\ \forall \pi\in\{0,\dots,\Delta\}$ are self loops, a contradiction. 
\spencer{I don't understand this line. Basically, the point found can't have
corresponded to a self-loop because it would then have been a self-loop too.}

For the remaining cases, let $P'(S'(\xx))\neq \xx$, and let $\uu'=S(\uu)$. 
\spencer{There must have been a edge in the original if there is a successor edge
in the new instance}. There is a valid edge from $\uu$ to $\uu'$ in $\CI$. Then
we will create valid edges from $(\uu,V(\uu))$ to $(S(\uu),V(S(\uu))$ with
appropriately changing second coordinates. The rest of $(\uu,.)$ are self loops,
a contradiction. 

Similar argument follows for the case when $S'(P'(\xx))\neq \xx$. 
\end{proof}

The basic idea behind the next lemma is that a T2 type solution in $\CI'$ has
potential $1$. Therefore, it is surely not a self loop. Then it is either an end of a path or near an end of a path, or else near a potential violation. 

\begin{lemma}\label{lem:p2m-t2}
Let $\xx=(\uu,\pi)$ be a type T2 solution of $\CI'$. Either $\uu \neq 0^n$ is start of a path in $\CI$ (type \solnref{R1} solution), or $P(\uu)$ is an \solnref{R1} or \solnref{R2} type solution in $\CI$, or $P(P(\uu))$ is an \solnref{R2} type solution in $\CI$.
\end{lemma}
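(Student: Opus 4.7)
I would unpack the $V'$ procedure at $\xx=(\uu,\pi)$ to identify the structural role of $\xx$ in the construction, then walk back to an \EOPL solution. Since $V'(\xx)=1$ and $\xx\neq 0^k$, the definition of $V'$ forces $\pi=1$ and guarantees $\xx$ is not a self-loop in $\CI'$. The construction makes every $(S(0^n),\pi)$ a self-loop and makes $(0^n,1)$ a self-loop, so $\uu\notin\{0^n,S(0^n)\}$.

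Next I would split on how $(\uu,1)$ sits in the construction. Either $(\uu,1)$ is an \emph{anchor}, meaning $V(\uu)=1$, or it is an interior vertex on the chain introduced from a valid edge $\uu\to \uu':=S(\uu)$ of $\CI$. Since $\uu\neq 0^n$ rules out the special chain around $0^n$, and $\pi=1$ must lie strictly between the two endpoint potentials of the chain, the interior case reduces to $V(\uu)=0$ with $V(\uu')\ge 2$ (an upward chain containing $(\uu,1)$ strictly between its endpoints).

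In the anchor case $V(\uu)=1$: if $\uu$ has no valid predecessor in $\CI$, then $\uu\neq 0^n$ gives an \solnref{R1} solution at $\uu$. Otherwise let $\uu':=P(\uu)$ be the valid predecessor. If $V(\uu')\ge 1$, then $V(S(\uu'))-V(\uu')\le 0$ makes $P(\uu)$ an \solnref{R2}. If instead $V(\uu')=0$, note $\uu'\neq 0^n$ (else $\uu=S(0^n)$, contradicting $\uu\neq S(0^n)$): either $\uu'$ is itself a start of line, giving an \solnref{R1} at $P(\uu)$, or its valid predecessor $\uu''=P(\uu')=P(P(\uu))$ satisfies $V(S(\uu''))-V(\uu'')=-V(\uu'')\le 0$, so $P(P(\uu))$ is an \solnref{R2}. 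The interior upward-chain case $V(\uu)=0$, $\uu\neq 0^n$, is handled analogously by a single backward walk: either $\uu$ is a start of line (so \solnref{R1} at $\uu$) or $V(S(P(\uu)))-V(P(\uu))=-V(P(\uu))\le 0$ yields \solnref{R2} at $P(\uu)$.

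The step requiring the most care is discharging the formal side-conditions at each backward step, namely $x\neq S(x)$ in the \solnref{R2} clause, correctly matched $S$/$P$ edges, and the $\neq 0^n$ clauses; these exclusions repeatedly invoke the earlier observations $\uu\notin\{0^n,S(0^n)\}$ together with the construction's self-loop rules to rule out degenerate backward steps. Once this bookkeeping is discharged at each of the up-to-two backward applications of $P$, one of the three alternatives in the statement must hold.
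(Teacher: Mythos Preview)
Your case split into ``anchor'' ($V(\uu)=1$) versus ``interior'' is the right structure, and the anchor case and the upward-interior case are both handled correctly. However, the claim that ``the interior case reduces to $V(\uu)=0$ with $V(\uu')\ge 2$'' is wrong: the construction also builds \emph{downward} chains whenever $V(\uu)>V(S(\uu))$, so there is a second interior sub-case with $V(\uu)\ge 2$ and $V(S(\uu))=0$. In that sub-case $(\uu,1)$ sits strictly inside the chain $(\uu,V(\uu))\to\cdots\to(\uu,1)\to(S(\uu),0)$, is not a self-loop, and is therefore a legitimate T2 solution; your argument never addresses it.

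This omission is not harmless. In the downward sub-case $\uu$ is itself an \solnref{R2} solution (since $V(S(\uu))-V(\uu)\le 0$ along a valid edge), but none of the three alternatives listed in the lemma need hold: take $V(\uu)=3$, $V(S(\uu))=0$, $V(P(\uu))=2$, $V(P(P(\uu)))=1$, all with valid edges, and check that $\uu$ is not a start of a line, $P(\uu)$ is neither \solnref{R1} nor \solnref{R2}, and $P(P(\uu))$ is not \solnref{R2}. The paper's own proof has the same blind spot---its assertion ``if $V(P(\uu))>\pi$ then $V(\uu)\le\pi$'' fails precisely in this example. The fix for the reduction is trivial (observe that $\uu$ is already \solnref{R2} and return it), but your proof as written, like the lemma as stated, does not cover this case.
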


\begin{proof}
Clearly $\uu \neq 0^n$, and $\xx$ is not a self loop, i.e., it is not a dummy vertex with irrelevant value of $\pi$. Further, $\pi=1$. If $\uu$ is a start or end of a path in $\CI$ then done. 

Otherwise, if $V(P(\uu))>\pi$ then we have $V(\uu)\le \pi$ and hence $V(\uu)-V(P(\uu))\le 0$ giving $P(\uu)$ as an \solnref{R2} type solution of $\CI$. 
If $V(P(\uu))<\pi=1$ then $V(P(\uu))=0$. Since potential can not go below zero, either $P(\uu)$ is an end of a path, or for $\uu''=P(P(\uu))$ and $\uu'=P(\uu)$ we have $\uu'=S(\uu'')$ and $V(\uu')-V(\uu'')\le 0$, giving $\uu''$ as a type \solnref{R2} solution of $\CI$.
\end{proof}

At a type T3 solution of $\CI'$ potential is strictly positive, hence these solutions are not self loops. If they correspond to potential violation in $\CI$ then we get a type \solnref{R2} solution. But this may not be the case, if we made $S'$ or $P'$ self pointing due to end or start of a path respectively. In that case, we get a type \solnref{R1} solution. The next lemma formalizes this intuition. 

\begin{lemma}\label{lem:p2m-t3}
Let $\xx=(\uu,\pi)$ be a type T3 solution of $\CI'$. If $\xx$ is a start or end of a path in $\CI'$ then $\uu$ gives a type \solnref{R1} solution in $\CI$. Otherwise $\uu$ gives a type \solnref{R2} solution of $\CI$.
\end{lemma}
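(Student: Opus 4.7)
The plan is to leverage the design principle of the construction: every valid edge $\uu \to \uu'$ of $\CI$ is realised in $\CI'$ as a chain of edges on which $V'$ climbs by exactly one per step, \emph{provided} that $V(\uu) < V(\uu')$. Thus a T3 violation in $\CI'$ can only originate in one of two ways: (a) at a vertex that was explicitly turned into a start or end of a path in $\CI'$ to eliminate singleton solutions, which must correspond to a start or end of a path in $\CI$; or (b) from a chain arising from a $\CI$-edge incident to $\uu$ along which $V$ fails to strictly increase, which is an \solnref{R2} solution at $\uu$.

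The first step is to rule out trivial values of $\uu$. Since $V'(\xx) > 0$, the vertex $\xx$ is not a double self-loop, so $V'(\xx) = \pi$. A short inspection of the special branches of $S'$, $P'$ handling $\uu \in \{0^n, S(0^n)\}$ shows that every non-self-loop vertex there has $V'$-differences of exactly one on both sides, so we may assume $\uu \notin \{0^n, S(0^n)\}$.

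Next I would handle the case where $\xx$ is a start or end of a path in $\CI'$. Take for concreteness $P'(\xx) = \xx \neq S'(\xx)$; the other case is symmetric. Inspecting $P'$ at the branches applicable for $\uu \notin \{0^n, S(0^n)\}$, the only way $P'(\xx) = \xx$ can occur for a non-self-loop $\xx$ is via the test ``$S(P(\uu)) \neq \uu$ or $P(\uu) = \uu$'', which is exactly the definition that $\uu$ is a start of a path in $\CI$; together with $\uu \neq 0^n$ this yields a type \solnref{R1} solution. The end-of-path case uses the analogous branch (step~\ref{itm:Sfive}) of $S'$.

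Finally, if neither $S'(\xx) = \xx$ nor $P'(\xx) = \xx$, I would split on whether $\pi = V(\uu)$. If $\pi = V(\uu)$, the outgoing edge of $\xx$ in $\CI'$ mirrors the valid $\CI$-edge $\uu \to S(\uu)$, and the formulas for $S'$ produce an increment of exactly one in $V'$ precisely when $V$ strictly increases along that $\CI$-edge; a T3 violation therefore forces $V(S(\uu)) \le V(\uu)$, exhibiting $\uu$ as \solnref{R2} (if the violation is on the successor side; the predecessor side is handled symmetrically via $P(\uu)$). If $\pi \neq V(\uu)$, then $\xx$ is an interior vertex of a bridge built for the $\CI$-edge $\uu \to S(\uu)$, and that bridge is laid out in the ``wrong'' (constant or decreasing $V'$) direction exactly when $V(S(\uu)) \le V(\uu)$, again yielding \solnref{R2} at $\uu$. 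The main obstacle will be the careful bookkeeping of the numerous sub-branches of $S'$ and $P'$, especially for the bridge vertices, to confirm that the only way $V'$ fails to increment by one along a non-self-loop edge of $\CI'$ is through a non-monotone underlying $\CI$-edge incident to $\uu$.
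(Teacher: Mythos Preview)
Your proposal is correct and follows essentially the same line as the paper's proof. The paper shortcuts your split on $\pi = V(\uu)$ versus $\pi \neq V(\uu)$ by invoking the single structural observation that along every edge of $\CI'$ the potential $V'$ changes by exactly $0$ or $\pm 1$, so a T3 violation immediately forces $V'(S'(\xx)) - V'(\xx) \le 0$, which in turn forces $V(S(\uu)) \le V(\uu)$ (and symmetrically on the predecessor side, where the paper, like your ``via $P(\uu)$'' remark, actually lands the \solnref{R2} witness at $P(\uu)$).
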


\begin{proof}
Since $V'(\xx)>0$, it is not a self loop and hence is not dummy, and $\uu\neq 0^n$. If $\uu$ is start or end of a path then $\uu$ is a type \solnref{R1} solution of $\CI$. Otherwise, there are valid incoming and outgoing edges at $\uu$, therefore so at $\xx$. 

If $V((S(\xx))-V(\xx)\neq 1$, then since potential either remains the same or increases or decreases exactly by one on edges of $\CI'$, it must be the case that $V(S(\xx))-V(\xx)\le 0$. This is possible only when $V(S(\uu))\le V(\uu)$. Since $\uu$ is not an end of a path we do have $S(\uu)\neq \uu$ and $P(S(\uu))=\uu$. Thus, $\uu$ is a type T2 solution of $\CI$.

If $V((\xx)-V(P(\xx))\neq 1$, then by the same argument we get that for $(\uu'',\pi'')=P(\uu)$, $\uu''$ is a type \solnref{R2} solution of $\CI$. 
\end{proof}

Our main theorem follows using Lemmas \ref{lem:p2m-valid}, \ref{lem:p2m-t1}, \ref{lem:p2m-t2}, and \ref{lem:p2m-t3}.

\begin{theorem}\label{thm:p2m}
An instance of \EOPL can be reduced to an instance of \EOML in polynomial time such that a solution of the former can be constructed in a linear time from the solution of the latter. 
\end{theorem}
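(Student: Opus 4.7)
The plan is to simply assemble the four preceding lemmas into a single proof. The reduction is already fully specified by the procedures $S'$, $P'$, $V'$ given above, together with the initial check that handles the case where $0^n$ or $S(0^n)$ is already a solution of $\CI$ (in which case we return trivially). Lemma~\ref{lem:p2m-valid} already guarantees that these procedures define a syntactically valid \EOML instance $\CI'$ on $\{0,1\}^{m+n}$, and since each of $S'$, $P'$, $V'$ only makes a constant number of calls to $S$, $P$, $V$ together with integer comparisons on the $m$-bit counter $\pi$, the produced circuits have size polynomial (in fact linear) in the size of the input.

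Next I would argue correctness of the backward mapping. A solution of $\CI'$ is of type T1, T2, or T3, and these three cases are handled respectively by Lemmas~\ref{lem:p2m-t1}, \ref{lem:p2m-t2}, and \ref{lem:p2m-t3}. For each such solution $\xx=(\uu,\pi)$, these lemmas produce, in constant time, a vertex of $\CI$ (namely one of $\uu$, $P(\uu)$, or $P(P(\uu))$) that is a type \solnref{R1} or \solnref{R2} solution of the \EOPL instance. Since $\CI'$ is a valid \EOML instance by Lemma~\ref{lem:p2m-valid} and \EOML is total, $\CI'$ always has a solution, and the case analysis is exhaustive, so we always recover a solution of $\CI$.

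Finally, I would verify totality of the transformation: after the initial check, the construction guarantees $V'(0^{m+n})=1$, $P'(0^{m+n})=0^{m+n}$, and $S'(0^{m+n})\ne 0^{m+n}$ (since by Lemma~\ref{lem:valid-edges} $S(S(0^n))$ exists and has potential at least $2$), so $\CI'$ satisfies the \EOML promise. Combining this with Lemma~\ref{lem:p2m-valid} and the three solution-recovery lemmas immediately yields Theorem~\ref{thm:p2m}.

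The main obstacle, which is already hidden inside Lemmas~\ref{lem:p2m-t1}--\ref{lem:p2m-t3}, is checking that the self-loops we introduce (for dummy vertices $(0^n,1)$, $(S(0^n),\cdot)$, for irrelevant values of $\pi$ at non-dummy $\uu$, for invalid edges in $\CI$, and for genuine starts/ends of lines) do not create \emph{spurious} \EOML solutions that cannot be mapped back. The key observation is that self-loops are given potential $0$ by $V'$, so a T3 violation at $\xx$ forces $V'(\xx)>0$ and hence $\xx$ is not a self-loop, guaranteeing that $\uu$ lies on an actual edge of $\CI$; similarly a T2 violation forces $\pi=1$, so $\xx$ is again not a self-loop; and the T1 case is handled by the fact that we never introduce new starts/ends, only preserve existing ones. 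Once this self-loop bookkeeping is nailed down, the rest of the proof is mechanical.
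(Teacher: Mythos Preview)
Your proposal is correct and matches the paper's approach exactly: the paper's proof of Theorem~\ref{thm:p2m} is the one-line statement that the result follows from Lemmas~\ref{lem:p2m-valid}, \ref{lem:p2m-t1}, \ref{lem:p2m-t2}, and~\ref{lem:p2m-t3}, and you have simply spelled out how these pieces fit together (with some extra commentary on the self-loop bookkeeping that the paper leaves implicit in the lemma proofs).
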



\section{Proofs for Section~\ref{sec:opdc2ufeopl}: OPDC to UEOPL}

\subsection{Proof of Lemma~\ref{lem:dcm2ufeopl}}
\label{app:dcm2ufeopl}

Throughout this proof, we will fix $\mathcal{D} = (D_i)_{i = 1, \dots, d}$ to be
the direction functions, and $P = P(k_1, k_2, \dots, k_d)$ to be the set of
points used in the \OPDC instance. We will produce a \UFEOPL instance $L = (S,
V)$.

\paragraph{\bf The circuit $S$.}

A vertex of the line is a tuple $(p_0, p_1, p_2, \dots, p_d)$, where each $p_i \in P
\cup \{\vblank\}$ is either a point or a special symbol, $\vblank$, that is used
to indicate an unused element.
We use
$\vert = (P \cup \{\vblank\})^{d+1}$ to denote the set of possible vertices.
Only some of the tuples are valid encodings for a vertex. 
To be valid, a vertex $(p_0, p_1, p_2, \dots, p_d)$ must obey the following rules:
\begin{enumerate}
\item If $p_i \ne \vblank$, then $D_j(p_i) = \zero$ for all
$j \le i$. This means that if $p_i$ is a point, then
it must be a point on the $i$-surface.

\item If $p_i \ne \vblank$, then we must have $D_{i+1}(p_i) \ne \down$.

\item If $p_i \ne \vblank$ and $p_j = \vblank$ and $i < j$, then we must have $(p_i)_{j+1} =
0$.

\item If $p_i \ne \vblank$ and $p_j \ne \vblank$ and $i < j$, then we must have
$(p_i)_{j+1} = (p_j)_{j+1} + 1$. 


\end{enumerate}
We define
the function $\isvertex : \vert \rightarrow \{\mathtt{true}, \mathtt{false}\}$
that determines whether a given $v \in \vert$ is a valid encoding of a vertex, by
following the rules laid out above. This can clearly be computed in polynomial
time.

The initial vertex, which will be mapped to the bit-string $0^n$, will be
$(p_\text{init}, \vblank, \dots, \vblank)$, where $p_\text{init} = (0, 0, \dots,
0)$ is the all zeros point in $P$.

Given a vertex encoding $v = (p_0, p_1, p_2, \dots, p_d) \in \vert$, the circuit
$S$ carries out the following operations.
If $\isvertex(v)$ is \texttt{false}, then $S(v) = v$, indicating that $v$
is indeed not a vertex. Otherwise, we use the following set of rules to
determine the successor of $v$.
Let $i$ be the smallest index such that $p_i \ne \vblank$. 

\begin{enumerate}
\item If $i = d$ then our vertex has the form $v = (\vblank, \dots, \vblank,
p_d)$, and $p_d$ is on the $d$-surface, meaning that it is a solution to the
discrete contraction map. So we set $S(v) = v$ to ensure that this
is a solution.

\item If $D_{i+1}(p_i) = \zero$, then we define $S(v) = v'$ where
\begin{equation*}
v'_j = \begin{cases}
\vblank & \text{if $j < i+1$,} \\
p_i & \text{if $j = i+1$,} \\
p_j & \text{if $j > i+1$.}
\end{cases}
\end{equation*}
This operation overwrites the point in position $i+1$ with $p_i$, and sets
position $i$ to $\vblank$. All other components of $v$ are unchanged.

\item If
$D_{i+1}(p_i) \ne \zero$ and
$i > 0$ then let $q$ be the point such that 
\begin{equation*}
(q)_j = \begin{cases}
0 & \text{if $j < i+1$,} \\
(p_i)_{i+1}+1 & \text{if $j = i + 1$,} \\
(p_i)_j & \text{if $j > i + 1$.}
\end{cases}
\end{equation*}

\begin{enumerate}
\item If $q$ is a point in $P$, then we define $S(v) = (q, p_1, p_2, \dots, p_d)$.  
\item Otherwise, we must have that $(p_i)_{i+1} = k_{i+1}$, meaning that $p_i$
is the last point of the grid. This means that we have a solution of type
\solnref{OV3}, since the fact that $\isvertex(v) = \mathtt{true}$ implies that
$D_{i+1}(p_i) = \up$. So we set $S(v) = v$.
\end{enumerate}

\item 
If $D_{i+1}(p_i) \ne \zero$ and $i = 0$ then let $q$ be the point such that $(q)_j = (p_0)_j$ for all $j >
1$, and $(q)_1 = (p_0)_1 + 1$.
\begin{enumerate}
\item 
If $q$ is in the point set $P$, then we define $S(v) = (q, p_1, p_2, \dots, p_d)$.

\item If $q$ is not in $P$, then we again have a solution of type \solnref{OV3},
since $(p_0)_1 = k_1$, and $D_1(p_0) = \up$ from the fact that $\isvertex(v) =
\mathtt{true}$. So we set $S(v) = v$.
\end{enumerate}
\end{enumerate}
 
\paragraph{\bf The potential function.}

To define the potential function, we first define an intuitive potential that
uses a tuple of values ordered lexicographically, and then translate this into a
circuit $V$ that produces integers. We'll call the tuple of values the
lexicographic potential associated with a vertex~$v$, and denote it using $\lexpot(v)$. To define the lexicographic potential, we'll need to introduce an auxiliary function $\potf : (P\cup \Set{\vblank})\times \Set{0,\dotsc,d+1} \to \Z$ given by

\begin{equation*}
\potf(p, i) =  \begin{cases}
(p)_{i+1} + 1& \text{if $p \ne \vblank$,} \\
0 & \text{otherwise.}
\end{cases}
\end{equation*}
The lexicographic potential of $v = (p_0,p_1,\dotsc,p_d) \in \vert$ is the following:
\[ \lexpot(v) = \Paren{\potf(p_0,0),\potf(p_1,1),\dotsc,\potf(p_{d-1},d-1)}\text{.} \]
Note that the $\lexpot(v) \in \Z^{d}$, since the definition ignores $p_d$.

Let $\prec_{d}$ be the ordering on tuples from 
$\Z^{d}$ where they are compared lexicographically from right to left, so that $(0, 0) \prec_2 (1, 0) \prec_2 (0, 1) \prec_2 (1, 1)$, for example.
Our potential function will be defined by the tuples given by $\lexpot$ and the
order $\prec_{d+1}$. We omit the subscript from $\prec$ whenever it is clear
from the context.

Given a vertex $v = (p_0, p_1, \dots, p_d) \in \vert$, let $\lexpot(v) = (l_0,\dotsc,l_{d-1})$. To translate from lexicographically ordered tuples to integers in a way that preserves the ordering, we pick some integer $k$  such that $k > k_i$ for
all $i$, meaning that $k$ is larger than the grid-width used in every dimension,
which implies that $l_j < k$ for all $j$. We now take a weighted sum of the coordinates of $\lexpot(v)$ where the weight for coordinate $i$ is $k^i$, so that the $i$th coordinate dominates coordinates $0$ through $i-1$. The final potential value $V : \vert \to \Z$ is then given by

\begin{equation*}
V(v) = \sum_{i = 0}^{d-1} k^il_i,
\end{equation*} which can be easily computed in polynomial time. This completes the definition of the \UFEOPL problem $(S, V)$.

\begin{lemma}
Every solution of the \UFEOPL instance $(S, V)$ can be used to find a solution
of the \OPDC instance given by $\mathcal{D}$ and $P$. Furthermore, solutions
of type \solnref{O1} are only ever mapped onto solutions of type \solnref{UF1}.
\end{lemma}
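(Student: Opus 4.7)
The plan is to verify two properties simultaneously by case analysis on the type of \UFEOPL solution: (a) each solution yields an explicit \OPDC solution, and (b) the only \UFEOPL solutions that produce an \solnref{O1} solution are of type \solnref{UF1}, so violations \solnref{UFV1} always produce violations \solnref{OV1}--\solnref{OV3}.

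First I would show that the potential $V$ strictly increases along every step of the constructed line, which rules out the ``$V(S(x)) \le V(x)$'' subcase of \solnref{UF1}. This reduces to showing that $\lexpot$ strictly increases in the order $\prec_{d}$ on each successor step. In case 2 of the construction the $i$-th coordinate of $\lexpot$ is reset to $0$ while the $(i+1)$-th coordinate jumps from its old value to $\potf(p_i,i+1)$, which dominates lexicographically; in cases 3 and 4 only the $0$-th coordinate changes, by exactly $+1$. Because the base $k$ strictly exceeds every grid width $k_j$, the integer encoding $V(v)=\sum_i k^i l_i$ preserves $\prec_d$, so $V(S(x))>V(x)$ whenever $S(x)\neq x$.

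Next I would handle a \solnref{UF1} solution of the remaining form $S(S(x))=S(x)\neq x$, so that $y:=S(x)$ is a self-loop. Inspecting the four branches where the construction writes $S(v)=v$ gives: (i) $\isvertex(y)=\mathtt{false}$; (ii) $y=(\vblank,\dots,\vblank,p_d)$ with $i=d$; (iii)/(iv) the boundary branches 3(b) and 4(b). Branch (ii) gives an \solnref{O1} solution since vertex rule 1 forces $D_j(p_d)=\zero$ for all $j$. Branches (iii) and (iv) literally exhibit an \solnref{OV3} witness: a point at the grid boundary that is a fixpoint of its $(i{-}1)$-slice and has $D_i=\up$ at the top or $\down$ at the bottom. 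For branch (i), I would trace through which validity rule of $\isvertex$ freshly fails on $y$. Rule 1 cannot fail because case 2 requires $D_{i+1}(p_i)=\zero$ before moving; rules 3 and 4 are preserved by the coordinate bookkeeping of the update. Thus the only way $\isvertex(y)$ can fail is rule 2 at the newly placed entry in case 2, i.e.\ the moved point $p_i$ has $D_{i+2}(p_i)=\down$. Comparing this with the previous occupant $p_{i+1}$ of position $i+1$ in $x$: rule 2 on $x$ gives $D_{i+2}(p_{i+1})\neq\down$, rule 4 gives $(p_i)_{i+2}=(p_{i+1})_{i+2}+1$, and rules 3--4 force them to agree on all higher coordinates, so they sit in a common $(i+2)$-slice. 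Either $D_{i+2}(p_{i+1})=\up$, producing an \solnref{OV2} violation with $p=p_i$ and $q=p_{i+1}$, or $D_{i+2}(p_{i+1})=\zero$, in which case $p_i$ and $p_{i+1}$ are two distinct fixpoints of the same $(i+1)$-slice, producing an \solnref{OV1} violation.

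Finally I would handle a \solnref{UFV1} solution: two distinct valid vertex encodings $u,v\in\vert$ with $V(u)=V(v)$ or $V(u)<V(v)<V(S(u))$. The equal-potential case forces $\lexpot(u)=\lexpot(v)$ coordinate-wise, so $(u_i)_{i+1}=(v_i)_{i+1}$ for every $i$ at which both are non-blank. Rules 3 and 4 then pin down every remaining coordinate of each non-blank slot identically in $u$ and $v$, so $u\neq v$ forces some slot $i$ at which exactly one of $u_i,v_i$ is blank; taking the smallest such $i$ yields two distinct points that are both fixpoints of the same $i$-slice, i.e.\ an \solnref{OV1} violation. The interleaved case $V(u)<V(v)<V(S(u))$ reduces to the equal-potential case by replacing $u$ with its successor whose $V$-value matches $v$'s in the finitely many intermediate positions generated by cases 2--4. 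Since every \solnref{UFV1} solution is mapped to \solnref{OV1} or \solnref{OV2}, it is never mapped to \solnref{O1}, which establishes the second part of the lemma.

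The main obstacle will be the detailed book-keeping for case (i) and for the \solnref{UFV1} analysis: in both places I must translate the structural constraints imposed by the four vertex-validity rules and by the ordering on $\lexpot$ into explicit pairs of points that verify the \OPDC violation definitions, while ensuring that no step inadvertently maps a violation back to an \solnref{O1} solution.
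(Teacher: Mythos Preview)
Your overall structure—first showing that $V$ strictly increases along every successor step (eliminating one branch of \solnref{UF1}), then case-splitting on why $S(y)=y$, and finally handling \solnref{UFV1}—matches the paper's approach. However, several of your detailed claims are incorrect.

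In branch~(i) of the \solnref{UF1} analysis you only treat successor rule~2, but $\isvertex(S(x))$ can also fail after rules~3(a) and~4(a), when the freshly inserted point $q$ at position~$0$ has $D_1(q)=\down$; the paper extracts an \solnref{OV3} and an \solnref{OV2} solution from these two situations, respectively. More seriously, your \solnref{OV1} claim in the subcase $D_{i+2}(p_{i+1})=\zero$ is wrong: rule~4 on $x$ forces $(p_i)_{i+2}=(p_{i+1})_{i+2}+1$, so $p_i$ and $p_{i+1}$ lie in a common $(i{+}2)$-slice but \emph{not} in a common $(i{+}1)$-slice, and since $D_{i+2}(p_i)=\down\neq\zero$ they are not both fixpoints of that $(i{+}2)$-slice either. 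You also omit the subcase $p_{i+1}=\vblank$, which (via rule~3) gives $(p_i)_{i+2}=0$ and hence an \solnref{OV3} solution.

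Your \solnref{UFV1} analysis has two gaps. In the equal-potential case, $\lexpot(u)=\lexpot(v)$ forces $u_i$ and $v_i$ to have the \emph{same} blank/non-blank status for every $i<d$, because $\potf(\vblank,i)=0$ while $\potf(p,i)\ge 1$ whenever $p\ne\vblank$; the slot ``at which exactly one is blank'' that you invoke therefore cannot exist. The actual source of the \solnref{OV1} violation is a slot $j$ where both $u_j$ and $v_j$ are non-blank, agree on all coordinates $>j$ (by rules~3 and~4), yet differ in some coordinate $\le j$. For the interleaved case $V(u)<V(v)<V(S(u))$, your reduction to the equal-potential case by ``replacing $u$ with its successor'' does not work: $S(u)$ is the unique successor and there is no intermediate vertex whose potential equals $V(v)$. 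The paper handles this differently: it observes that such a strict sandwich is impossible when $S(u)$ is determined by rules~3 or~4 (there the potential increases by exactly~$1$), and in the remaining rule-2 case it argues that $\lexpot(v)$ must agree with $\lexpot(u)$ on all coordinates $\ge i{+}1$, after which the largest index $j\le i$ of disagreement yields two distinct fixpoints of the same $j$-slice.
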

\begin{proof}
We begin by considering solutions of type \solnref{UF1}. Let $x =
(p_0, p_1, \dots, p_d)$ and let $y = S(x)$, and suppose that $x$ is a \solnref{UF1}
solution. This means that 
$S(x) \ne x$ and either $S(y) = y$ or
$V(y) \le V(x)$. We first suppose that $S(y) = y$, and note that in this case we
must have $\isvertex(x) = \mathtt{true}$ while $\isvertex(y) = \mathtt{false}$. 
We have the following cases based on the rule used to determine $S(x)$.
\begin{enumerate}
\item If $S(x)$ is determined by the first rule in the definition of $S$, then
this means that $p_d \ne \vblank$. Since $\isvertex(x) = \mathtt{true}$, this
means that $D_i(p_d) = \zero$ for all $i$, which means that $p_d$ is a solution
of type \solnref{O1}. 

\item If $S(x)$ is determined by the second rule in the
definition of $S$, then there are two cases.  Let $i$ be the smallest index such
that $p_i \ne \vblank$.

\begin{enumerate}
\item If $D_{i+2}(p_i) = \down$ then we have a solution of type \solnref{OV2} or \solnref{OV3}.
\begin{enumerate}
\item If $p_{i+2} \ne \vblank$ then $p_i$ and $p_{i+2}$ are solution of type
\solnref{OV2}. Specifically, this holds because $D_{i+2}(p_{i+2}) = \up$ while
$D_{i+2}(p_i) = \down$, and $(p_i)_{i+2} = (p_{i+2})_{i+2} + 1$ holds because
$\isvertex(x) = \mathtt{true}$. Moreover, $D_j(p_i) = D_j(p_{i+2}) = \zero$, for
all $j < i+2$, where in particular the fact that $D_{i+1}(p_i) = \zero$ is given
by the fact that we are in the second case of the definition of $S$.

\item If $p_{i+2} = \vblank$ then this means that $(p_i)_{i+2} = 0$. Since 
$D_{i+2}(p_i) = \down$ this gives us a solution of type \solnref{OV3}.

\end{enumerate}

\item If $D_{i+2}(p_i) \ne \down$, then we argue that this case is impossible.
Specifically, we will show that $\isvertex(y) = \mathtt{true}$, meaning that
$S(y) \ne y$. To do this, we will prove that the four conditions of $\isvertex$
hold for $y$. Note that $y$ differs from $x$ only in positions $i$ and $i+1$,
and that position $i$ of $y$ is $\vblank$. So we only need to consider the
conditions imposed by $\isvertex$ when the point $p_{i}$ is placed in position
$i+1$.
\begin{enumerate}
\item The first condition of $\isvertex$ is that $p_{i}$ should be on the
$(i+1)$-surface, which is true because the second rule of $S$ explicitly checks
that $D_{i+1}(p_i) = \zero$, while the fact that $\isvertex(x) = \mathtt{true}$
guarantees that $D_{j}(p_i) = \zero$ for all $j < i+1$. 

\item The second condition requires that $D_{i+2}(p_i) \ne \down$, which is true
by assumption.

\item Every constraint imposed by the third and fourth conditions also holds for
$p_i$ in $x$, and so the fact that $\isvertex(x) = \mathtt{true}$ implies that
these  conditions hold for $y$.
\end{enumerate}


\end{enumerate}

\item If $S(x)$ is determined by the third rule defining $S$, then we 
have two cases. Since the third rule was used, we know that $y = (q, p_1, p_2,
\dots, p_d)$, with the definition of $q$ being given in the third rule.

\begin{enumerate}

\item If $q$ is not in $P$, then we have a solution of
type \solnref{OV3}, as described in the algorithm for~$S$.

\item If $q \in P$ and $D_1(q) = \down$, then we have a solution of type \solnref{OV3}, since $q_1 = 0$
by definition.

\item If $q \in P$ and $D_1(q) \ne \down$ and $q \in P$, then we argue that the
case is impossible, and we prove this by showing that $\isvertex(y) =
\mathtt{true}$. Note that $y$ differs from $x$ only in the position occupied by
$q$, and so this is the only point for which we need to prove the conditions,
since all the other points satisfy the conditions by the fact that $\isvertex(x)
= \mathtt{true}$.
\begin{enumerate}
\item The first requirement of $\isvertex(y)$ holds trivially, since the only
new requirement is that $q$ is on the $0$-surface, and every point is on the
$0$-surface by definition. 

\item The second requirement is that $D_1(q) \ne \down$, which is true by
assumption.

\item The third and fourth conditions place constraints on certain coordinates
of $q$. For coordinates $j < i$, the third condition requires that $q_j = 0$,
which is true by definition, while the fourth condition is inapplicable. For coordinates $j \ge i$, the constraints imposed by the third and fourth conditions 
hold because $q_j = (p_i)_j$ in these coordinates, and $p_i$ also satisfies
these constraints.
\end{enumerate}

\end{enumerate}

\item If $S(x)$ is determined by the fourth rule, then we have two cases. Let
$y = (q, p_1, p_2, \dots, p_d)$ be the value of $y$ produced by the fourth rule,
where the definition of $q$ is given in that rule. 

\begin{enumerate}
\item If $q$ is not in $P$, then we have a solution of
type \solnref{OV3}, as described in the algorithm for~$S$.

\item If $q \in P$ and $D_1(q) = \down$ then we have a solution of type \solnref{OV2}. Specifically, the
points $p_0$ and $q$ provide the violation since $(q)_1 = (p_0)_1 + 1$, while
$D_1(p_0) = \up$ and $D_1(q) = \down$. The fact that both $q$ and $p_0$ belong
to the same $1$-slice is guaranteed by the definition of $q$.

\item If $q \in P$ and $D_1(q) \ne \down$ then we again argue that $\isvertex(y)
= \mathtt{true}$, making this case impossible. The reasoning is the same as the
reasoning used in case 3b.
\end{enumerate}

\end{enumerate}

We now proceed to the case where we have a solution $x = (p_0, p_1, \dots, p_d)$
of type \solnref{UF1} and the vertex $y = S(x)$ satisfies $S(y) \ne y$. In this case, we
must have $V(y) \le V(x)$. We argue that this is impossible, and again we will
do a case analysis based on the rule used to determine the output of the circuit
$S$.

\begin{enumerate}
\item If $S(x)$ is determined by the first rule then we have $\isvertex(y) =
\mathtt{false}$, which is not possible in this case.

\item If $S(x)$ is determined by the second rule, then we can prove that $V(y) >
V(x)$. This is because $y$ differs from $x$ only in positions $i$ and $i+1$, and
because $(p_i)_{i+1} = (p_{i+1})_{i+1} + 1$ by the fourth rule of
$\isvertex(x)$. Hence
\begin{equation*}
k^i \cdot
\potf(\vblank, i) + k^{i+1} \cdot \potf(p_{i}, i+1) > k^i \cdot \potf(p_i, i) + k^{i+1} \cdot \potf(p_{i+1}, i+1),
\end{equation*}
where we are using the fact that $k^{i+1} > k^i\cdot \potf(p_i,i)$. Thus, $V(y) > V(x)$.

\item If $S(x)$ is determined by the third rule, then note that we must have $q
\in P$. We again argue that $V(y) >
V(x)$. Specifically, observe that $V(y) = V(x) + \potf(q, 0) = V(x) + 1$. 

\item If $S(x)$ is determined by the fourth rule, then note that we must have $q
\in P$, and we also have $V(y) >
V(x)$. Specifically
\begin{align*}
V(y) &= V(x) - \potf(p_0, 0) + \potf(q, 0) \\
&= V(x) - (p_0)_1 + q_1 \\
&= V(x) + 1.
\end{align*}

\end{enumerate}

Finally, we move to the case where we have a solution of type \solnref{UFV1}. In this
case we have two vertices $x = (p_0, p_1, \dots, p_d)$ and
$y = (q_0, q_1, \dots, q_d)$ for which $\isvertex(x) = \isvertex(y) =
\mathtt{true}$, and for which $V(x) \le V(y) < V(S(x))$.  
We will once again perform a case analysis over the possible cases of $S$.
\begin{enumerate}
\item $S(x)$ cannot be determined by the first case in the definition of $S$,
because that case only applies when $\isvertex(y) = \mathtt{false}$.

\item If $S(x)$ is determined by the second case in the definition of $S$, then
we observe that we have $\lexpot(x) = (0, \dots, 0, v_i, v_{i+1}, \dots, v_d)$
and $\lexpot(S(x)) = (0, \dots, 0, 0, v_{i+1}+1, v_{i+2} \dots, v_d)$, ie., the
$i$th element of $\lexpot(x)$ is replaced by $0$, and the $(i+1)$th element is
replaced by $v_{i+1}+1$. Note also that elements $i+2$ through $d$ of
$\lexpot(S(x))$ agree with the corresponding elements of $\lexpot(x)$.

Since we have $V(x) \le V(y) < V(S(x))$ this means that $\lexpot(x) \prec
\lexpot(y) \prec \lexpot(S(x))$. Hence, $\lexpot(y)$ must have the form $(v'_0, v'_1, \dots, v'_i,  v_{i+1},
v_{i+2}, \dots, v_d)$, meaning that it agrees with $\lexpot(x)$ and
$\lexpot(S(x))$ on elements $i+1$ through $d$. Furthermore, we must have $v'_i
\ge v_i$.

Let $j$ be the largest index satisfying $j \le i$ and
$v'_j \ne v_j$. Note that such a $j$ must exist, since otherwise we would have
$x = y$, which would contradict the fact that $x \ne y$ in any solution of type
UFV1. We claim that $p_j$ and $q_j$ form a solution of type \solnref{OV1}. 

Let $s$ be the $j$-slice that satisfies $s_l = \blank$ for all $l \le j$ and
$s_l = (p_j)_l$ for all $l > j$. The point $p_j$ lies in the slice $s$ by
definition. We claim that $q_j$ also lies in this slice, which follows from the
fact that $v_l = v'_l$ for all $l > j$, combined with the third and fourth
properties of $\isvertex(x)$ and $\isvertex(y)$. Note also that $(p_j)_j \ne
(q_j)_j$, and hence $p_j \ne q_j$.

Finally, the first property of 
$\isvertex(x)$ and $\isvertex(y)$ imply that $D_l(p_j) = \zero$ 
and $D_l(q_j) = \zero$ for all $l \le j$. So $p_j$ and $q_j$ are two distinct
fixpoint of the $j$-slice $s$, meaning that we have a solution of type \solnref{OV1}.

\item We claim that $S(x)$ cannot be determined by the third rule in the
definition of $S$. In this case we have $\lexpot(x) = (0, v_1, v_2, \dots,
v_d)$, since the case is only applicable when $p_0 = \vblank$. Observe that
$\lexpot(S(x)) = (1, v_1, v_2, \dots, v_d)$, and that there is no possible tuple
$t$ that satisfies $\lexpot(x) \prec t \prec \lexpot(S(x))$. This means that we
must have $V(y) = V(x)$, but this is only possible if $y = x$, due to the
constraints placed by the third and fourth properties of $\isvertex$. Hence this
case is not possible, since $y = x$ is specifically ruled out in a solution of
type UVF1.

\item For similar reasons, we claim that $S(x)$ cannot be determined by the
fourth rule. In this case we have 
$\lexpot(x) = (v_0, v_1, v_2, \dots, v_d)$,
and
$\lexpot(S(x)) = (v_0+1, v_1, v_2, \dots, v_d)$, which again means that there
cannot be a tuple $t$ satisfying 
$\lexpot(x) \prec t \prec \lexpot(S(x))$. Hence we would have $V(x) = V(y)$ and
$x = y$ as in the previous case, which is impossible.
\end{enumerate}
To complete the proof, we note that solutions of type \solnref{O1} are only ever mapped
onto solutions of type \solnref{UF1}.
\end{proof}

This proves that our reduction from \OPDC to \UFEOPL is correct. The fact that
the reduction is promise-preserving follows from the fact that all solutions of
type \solnref{O1} are mapped onto solutions of type \solnref{UF1}. Hence, if we promise that there
are no violations in the OPDC instance, then the resulting UFEOPL instance can
only have solutions of type \solnref{UF1}. This completes the proof of
Lemma~\ref{lem:dcm2ufeopl}.

\subsection{Proof of Lemma~\ref{lem:uf2ufp1}}
\label{app:uf2ufp1}

We provide a polynomial-time
promise-preserving reduction from \UFEOPL to \UFEOPLp1. This reduction uses
essentially the same idea as the reduction from \EOPL to \EOML
given in Theorem~\ref{thm:eoml2eopl}. 

Let $L = (S, V)$ be an instance of \UFEOPL, and let $n$ be the bit-length of the
strings used to represent vertices in $L$. If $x$ is a vertex in $L$ such that
$V(S(x)) > V(x) + 1$, then we introduce new vertices between $x$ and $S(x)$,
each of which increases in potential by 1.

Formally, our \UFEOPLp1 instance will be denoted as $L' = (S', V')$. Each vertex
in this instance will be a pair $(v, i)$, where $v$ is a vertex from $L$, and
$i$ is an integer between $0$ and $2^n$. The circuit $S'$ is defined in the
following way. For each vertex $(v, i)$ we use the following algorithm.

\begin{enumerate}
\item If $S(v) = v$, then $S'(v, i) = (v, i)$, meaning that if $v$ is not a
vertex in $L$, then $(v, i)$ is not a vertex in $L'$ for all $i$.


\item If $S(v) \ne v$ and $V(S(v)) > V(v) + i + 1$, then $S'(v, i) = (v, i+1)$.

\item If $S(v) \ne v$ and $V(S(v)) = V(v) + i + 1$, then $S'(v, i) = (S(v), 0)$.

\item If $S(v) \ne v$ and $V(S(v)) < V(v) + i + 1$, then $S'(v, i) = (v, i)$.
\end{enumerate}
The last three conditions add a new sequence of vertices between any edge $(x,
y)$ where $V(y) > V(y) + 1$. Specifically, this sequence is
\begin{equation*}
(x, 0) \rightarrow (x, 1) \rightarrow \dots \rightarrow (x, V(y) - V(x) - 1)
\rightarrow (y, 0).
\end{equation*}
Any pair $(v, i)$ that is not used in such a sequence has $S'(v, i) = (v, i)$,
meaning that it is not a vertex.

The potential function is defined as follows. For each vertex $(v, i)$, we use
the following algorithm.
\begin{enumerate}
\item If $S(v, i) = (v, i)$, then the potential of $(v, i)$ is irrelevant.

\item Otherwise, we set $V'(v, i) = V(v) + i$.
\end{enumerate}
This completes the specification of the reduction. 

Clearly the reduction can be carried out in polynomial time. We now prove that
this is a promise-preserving reduction.

\begin{lemma}
Every solution of $L'$ can be mapped to a solution of $L$. Furthermore,
solutions of type \solnref{UF1} of $L$ are only ever mapped onto solutions of type \solnref{UFP1} in
$L'$.
\end{lemma}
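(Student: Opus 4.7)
My plan is to map each type of solution in $L'$ back to a solution of $L$ via the natural projection $(v,i) \mapsto v$, handling the two solution types of \UFEOPLp1 separately.

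For UFP1 solutions, my first step will be to observe that along any edge of $L'$ produced by case~2 or case~3 of the definition of $S'$, the potential increases by exactly one: in case~2 we get $V'(v,i+1) = V(v) + i + 1 = V'(v,i) + 1$, and in case~3 the defining equality $V(S(v)) = V(v) + i + 1$ gives $V'(S(v), 0) = V'(v,i) + 1$. Consequently the only way $(v,i)$ can be a UFP1 solution is via $S'(S'(v,i)) = S'(v,i)$, that is, the successor is itself a self-loop. Tracing when this can occur, it must be that $S'(v,i) = (S(v), 0)$ and $S(S(v)) = S(v)$, from which $v$ is a UF1 solution of $L$ via its $S(S(x)) = S(x)$ clause.

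For UFPV1 solutions I will take a pair $(v,i) \ne (u,j)$ with $V'(v,i) = V'(u,j)$ and neither a self-loop, and show that $(v,u)$ is a UFV1 solution of $L$. The projections $v$ and $u$ satisfy $S(v) \ne v$ and $S(u) \ne u$ because $(v,i)$ and $(u,j)$ are non-self-loops in $L'$. If $v = u$ then the potential equality forces $i = j$, contradicting $(v,i) \ne (u,j)$, so $v \ne u$. Assuming without loss of generality $i \le j$, the subcase $i = j$ gives $V(v) = V(u)$ and places us in the first clause of UFV1. In the subcase $i < j$ I will use the observation that a non-self-loop vertex $(u,j)$ forces $V(S(u)) \ge V(u) + j + 1$ (since both cases~2 and~3 of $S'$ require $V(S(u)) \ge V(u) + j + 1$), which combined with $V(v) = V(u) + (j-i) \le V(u) + j$ yields $V(u) < V(v) < V(S(u))$, placing us in the second clause of UFV1.

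Finally I will note promise preservation: UFP1 solutions in $L'$ map only to UF1 solutions of $L$, never to any violation, which is the stated consequence. I expect the only step requiring genuine care is the UFPV1 analysis, specifically verifying the strict inequality $V(v) < V(S(u))$ when $i < j$ by extracting the uniform lower bound $V(S(u)) \ge V(u) + j + 1$ from the case split defining~$S'$; everything else reduces to arithmetic on the potentials assigned by the construction.
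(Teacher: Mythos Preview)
Your approach is essentially the paper's, and the UFPV1 analysis is correct and matches the paper's argument (you and the paper just swap the roles of the two points). There is, however, one small gap in your UFP1 case.

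You argue that if $S'(v,i)$ is a self-loop then $S'(v,i) = (S(v),0)$ and $S(S(v)) = S(v)$. The first conclusion is right (case~2 can never lead to a self-loop successor, as you note), but the second is not: $(S(v),0)$ can also be a self-loop via case~4 of the definition of $S'$, namely when $S(S(v)) \ne S(v)$ but $V(S(S(v))) < V(S(v)) + 1$, i.e.\ $V(S(S(v))) \le V(S(v))$. In that situation $v$ itself is \emph{not} a UF1 solution (you reached $(S(v),0)$ via case~3, so $V(S(v)) = V(v)+i+1 > V(v)$, and $S(S(v)) \ne S(v)$), but $S(v)$ \emph{is} a UF1 solution via the $V(S(x)) \le V(x)$ clause. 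So the map back to $L$ should send such a $(v,i)$ to $S(v)$ rather than $v$. The overall claim---every UFP1 solution maps to some UF1 solution---survives, and promise preservation is unaffected; you just need to add this second sub-case to your trace.
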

\begin{proof}
We start by considering solutions of type \solnref{UFP1}. Let $x = (v, i)$ be a vertex,
and let $y = (u, j)$ be the vertex with $y = S(x)$. In a solution of type \solnref{UFP1}
we have that $S'(x) \neq x$ and either $S'(y) = y$ or $V'(y) \ne V'(x) + 1$.
Hence, there are two cases to consider.
\begin{enumerate}
\item If $S'(y) = y$ then we must have that $u = S(v)$ and $j = 0$, since if
$(v, i)$ is a vertex, then $S'$ will always either give $(v, i+1)$ or $(S(v),
0)$, and it suffices to note that if $S'(v, i) = (v, i+1)$ then $S'(v, i+1) \ne
(v, i+1)$.

So, we have $y = (S(v), 0)$, and $S'(y) = y$. This can only occur in the case
where either $S(S(v)) = S(v)$, or $V(S(S(v))) < V(S(v))$. Both of these cases
yield a solution of type \solnref{UF1} for $L$.

\item We claim that the case  $V'(y) \ne V'(x) + 1$ is not possible. Since we
have already dealt with the case where $S'(y) = y$, we can assume that $y$ is a
vertex. 
Note that $S'(x)$ cannot be determined by cases 1 or 4 of the algorithm, since
in those cases $x$ would not be a vertex.
If $S'(x)$ is determined by case 2 of the algorithm, then we have that
$V'(y) = V'(x) + 1$ by definition. If 
$S'(x)$ is determined by case 3 of the algorithm, then we have 
\begin{equation*}
V'(x) = V(v) + (V(u) - V(v) - 1) = V'(y) - 1
\end{equation*}
Hence, this case is impossible by construction.
\end{enumerate}
Thus, we have dealt with all possible solutions of type \solnref{UFP1}.

We now consider a violation of type \solnref{UFPV1}. In this case we have two vertices $x
= (v, i)$ and $y = (u, j)$ such that $x \ne y$, $x \ne S'(x)$, $y \ne S'(y)$,
and $V'(x) = V'(y)$. We claim that this gives us a solution of type \solnref{UFV1}.
If $i = j$ then $V(v) = V(u)$, and so we have a solution of type \solnref{UFV1} in $L$.
Assume, without loss of generality, that $i \le j$. We have
\begin{equation*}
V(v) + i = V'(v, i) = V'(u, j) = V(u) + j,
\end{equation*}
which implies that $V(u) \le  V(v)$. Furthermore, we have that 
\begin{equation*}
V(S(u)) > V(u) + j \ge V(v) + i \ge V(v),
\end{equation*}
where the first inequality arises from the fact that $(u, j)$ is a valid vertex
in $L'$. Hence, we have shown that $V(u) \le V(v) < V(S(u))$, which is exactly a
solution of type \solnref{UFV1} in $L$.
\end{proof}

The lemma implies that our reduction is correct. The fact that it is
promise-preserving follows from the fact that solutions of type \solnref{UF1} are only
ever mapped onto solutions of type \solnref{UFP1}. Hence, if it is promised that $L$ only
has \solnref{UF1} solutions, then $L'$ must only have \solnref{UFP1} solutions. This completes the
proof of Lemma~\ref{lem:uf2ufp1}.


\subsection{Proof of Lemma~\ref{lem:ufeopl2eopl}}
\label{app:ufeopl2eopl}

The reduction of Hub\'{a}\v{c}ek and Yogev~\cite{hubavcek2017hardness} relies on
the \defineterm{pebbling game} technique that was first applied by Bitansky et
al~\cite{BPR15}.
Let $L = (S, W, x_s, T)$ be an \SOVL instance.
The pebbling game is played by placing \emph{pebbles} on the vertices of this
instance according to the following rules.
\begin{itemize}
\item A pebble may be placed or removed from the starting vertex $x_s$ at any time.
\item A pebble may be placed or removed on a vertex $x \ne x_s$ if and only if
there is a pebble on a vertex $y$ with $S(y) = x$.
\end{itemize}

Given~$n$ pebbles, how far can we move along the line by following these rules?
The answer is that we can place a pebble on vertex $2^n-1$ by applying the
following \emph{optimal strategy}. The strategy is recursive. In the base case,
we can place a pebble on vertex $2^1-1 = 1$ by placing our single pebble on the
successor of $x_s$. In the recursive step, where we have $n$ pebbles, we use the following approach:
\begin{enumerate}
\item Follow the optimal strategy for $n-1$ pebbles, in order to place a pebble
on vertex $2^{n-1}-1$.

\item Place pebble $n$ on the vertex $2^{n-1}$.

\item Follow the optimal strategy for $n-1$ pebbles \emph{backwards} in order to
reclaim the first $n-1$ pebbles.

\item Follow the optimal strategy for $n-1$ pebbles forwards, but starting from
the vertex $2^{n-1}$. This ends by placing a pebble on vertex $2^{n-1} + 2^{n-1}
- 1 = 2^n-1$.
\end{enumerate}
Step 3 above relies on the fact that the pebbling game is \emph{reversible},
meaning that we can execute any sequence of moves backwards as well as forwards.
At the beginning of Step 4, we have a single pebble on vertex $2^{n-1}$, and we
follow the optimal strategy again, but using $2^{n-1}$ as the starting point.

\paragraph{\bf The reduction from \UFEOPL to \UEOPL.}

To reduce \UFEOPL to \UEOPL, we play the optimal strategy for the pebbling game.
Note that, since that since every step of the pebbling game is reversible, this
gives us a predecessor circuit. 
We will closely follow the reduction given by Bitansky et al~\cite{BPR15} from
\SOVL to \EOL. Specifically, we will reduce an instance $L = (S, V)$ of \UFEOPL
to an instance $L' = (S', P', V')$ of \UEOPL.

A vertex in $L'$ will be a tuple of pairs $((v_1, a_1), (v_2, a_2), \dots, (v_n,
a_n))$ describing
the state of the pebbling game. Each $v_i$ is a bit-string, while each $a_i$ is
either 
\begin{itemize}
\item the special symbol $\vblank$, implying that pebble $i$ is not used and
that the bit-string $v_i$ should be disregarded, or
\item an integer such that $a_i = V(v_i)$, meaning that pebble $i$ is
placed on the vertex $v_i$.
\end{itemize}
Bitansky et al~\cite{BPR15} have produced circuits $S'$ and $P'$ that implement
the optimal strategy of the pebbling game for pebbles encoded in this way. The
only slight difference is that they reduce from \SOVL, but we can apply their
construction by creating the circuit $W$ so that $W(v, a) = 1$ if and only if
$V(v) = a$. We refer the reader to their work for the full definition of these
two circuits.

Hub\'{a}\v{c}ek and Yogev~\cite{hubavcek2017hardness} built upon this reduction
by showing that it is possible to give a potential function $V'$ for the
resulting instance. Specifically, their potential encodes how much progress we
have made along the optimal strategy, which it turns out, can be computed purely
from the current configuration of the pebbles. Their construction also
guarantees that the potential at each vertex always increases by $1$, meaning
that we have $V(S(x)) = V(x) + 1$ whenever $S(x)$ and $x$ are both vertices. We
refer the reader to their work for the full definition of the circuit $V'$.

\paragraph{\bf Violations.}

So far, we have a reduction from the promise version of \UFEOPL to \UEOPL, which
entirely utilizes prior work. Specifically, every solution of type \solnref{U1} in $L'$
will map back to a solution of type \solnref{UFP1} in $L$.

Our contribution is to handle the violations,
thereby giving a promise-preserving reduction from the non-promise version of
\UFEOPL to the non-promise version of \UEOPL.

\begin{lemma}
Every violation in $L'$ can be mapped back to a violation of $L$.
\end{lemma}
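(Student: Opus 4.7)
The plan is to do a case analysis over the three violation types \solnref{UV1}, \solnref{UV2}, \solnref{UV3} in $L'$, showing each one is produced by a breakdown of the pebbling-game reversibility which in turn exposes a \solnref{UFV1} (or \solnref{UF1}) solution in $L$. The central fact I will exploit is that, under the HY construction, every pebble configuration reached by the Bitansky et al.\ optimal strategy is uniquely determined by its step number (its $V'$-potential), and advancing or reversing one primitive pebbling move involves a single consistency check against the successor circuit $S$ and the potential $V$ of $L$. Any failure of that check can be read off directly as a violation of $L$.

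First I will handle a \solnref{UV1} violation: by the HY guarantee, whenever both $x$ and $S'(x)$ are valid pebble configurations we have $V'(S'(x)) = V'(x)+1$, so a \solnref{UV1} point must be one whose next primitive move tries to place or remove a pebble at $S(v)$ for some currently pebbled $v$ and whose output therefore has to be invalid. Unwinding the circuit, the only way to be invalid is either that $S(v) = v$ in $L$ (giving a genuine \solnref{UF1} end-of-line at $v$ that can be returned directly) or that the pebble whose potential should be $V(v)+1$ in fact carries some other $V$-value, in which case $v$ and the mismatched pebble position give a \solnref{UFV1} pair with strictly crossing potentials.

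Next I will handle a \solnref{UV2} violation: the only sourceless configuration in the optimal strategy is $0^n$ (the empty board), so a second source $x\neq 0^n$ is a configuration stuck under $P'$. The obstruction is a backward primitive move whose well-definedness requires a unique pebble $v$ with $S(v)$ already pebbled; getting stuck means either such a $v$ has become a self-loop in $L$ (a \solnref{UF1}) or two distinct candidates $v,v'$ exist, which immediately yield a \solnref{UFV1} pair whose $V$-values are forced to agree by the pebbling bookkeeping. Finally, for a \solnref{UV3} violation $(x,y)$, the case $V'(x)<V'(y)<V'(S'(x))$ is automatically impossible because the HY potential increments by exactly one along every valid edge, and the remaining case $V'(x)=V'(y)$ with $x\ne y$ means that two distinct pebble configurations have been assigned the same step number $t$. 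I will compare the two configurations slot by slot and locate an index at which they hold pebbles on distinct $L$-vertices $u,u'$ which, by the HY bookkeeping, both must have $V(u)=V(u')=t$; hence $u,u'$ form a \solnref{UFV1} pair of equal potential in $L$.

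The main obstacle is this last comparison step: the pebble slots at which the two configurations differ have to be located by descending into the nested recursive structure of the Bitansky et al.\ strategy and identifying the first slot where the recursive substrategy produces different pebbled vertices. I expect this to be polynomial-time because one only needs to scan the $n$ pebble slots, compare the stored $L$-vertices, and return the first discrepancy; the correctness argument reduces to the property that inside the optimal strategy each slot's pebble is a \emph{function} of the step number whenever $L$ is a single line. All other subcases reduce to routine single-step inspection of the circuits $S'$, $P'$, $V'$ defined in the HY construction.
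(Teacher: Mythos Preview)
Your overall shape---case analysis over \solnref{UV1}, \solnref{UV2}, \solnref{UV3}---is the same as the paper's, but the execution has concrete errors in each case.

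\textbf{\solnref{UV1}.} You claim a \solnref{UV1} point must have $S'(x)$ ``invalid'', and then you try to extract an $L$-violation from that. But the \solnref{UV1} hypothesis includes $P'(S'(x)) = x$, which forces $S'(x)$ to be a genuine vertex (invalid configurations are self-loops under both $S'$ and $P'$). Once $x$ and $S'(x)$ are both valid, the Hub\'a\v{c}ek--Yogev potential satisfies $V'(S'(x)) = V'(x)+1$, flatly contradicting $V'(S'(x)) \le V'(x)$. So \solnref{UV1} is simply impossible; there is nothing to map back. Your attempt to squeeze a \solnref{UF1}/\solnref{UFV1} out of this case is based on a misreading of the hypothesis.

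\textbf{\solnref{UV2}.} Your two subcases (``$v$ has become a self-loop'' versus ``two distinct candidates $v,v'$'') do not match what actually happens. The paper splits on whether the blocked reverse-move is a \emph{placement} or a \emph{removal}. In the placement case, $P'$ wants to put a pebble on $S(v_i)$ but $v_i$ is an end of line in $L$ (i.e.\ $S(v_i)=v_i$ or $V(S(v_i))\neq V(v_i)+1$), giving a \solnref{UFP1}. In the removal case, $P'$ wants to lift pebble $v_i$, the configuration contains some $v_j$ with $a_j = a_i - 1$, but $S(v_j) \ne v_i$; then $v_i$ and $S(v_j)$ are two distinct $L$-vertices with $V(v_i) = V(S(v_j)) = a_i$, a \solnref{UFPV1}. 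Your ``two distinct candidates'' scenario is not the failure mode that arises.

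\textbf{\solnref{UV3}.} You conclude $V(u) = V(u') = t$ where $t$ is the step number $V'(x)$. That is wrong: the common value is the slot potential $a_i$, not $t$. The paper's argument is that equal $V'$-value forces $a_i = b_i$ for every slot $i$; since $x \ne y$, some slot with $a_i \ne \vblank$ has $v_i \ne u_i$, and then $V(v_i) = a_i = b_i = V(u_i)$ gives the \solnref{UFPV1} pair directly. No recursive descent into the Bitansky et al.\ strategy is needed---a single scan of the $n$ slots suffices.
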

\begin{proof}
There are three types of violation in $L'$.
\begin{enumerate}
\item Violations of type \solnref{UV1}, which are edges where the potential decreases, are not possible, since the reduction of 
Hub\'{a}\v{c}ek and Yogev ensures that $V'(S'(x)) =
V'(x) + 1$ whenever $x$ and $S'(x)$ are both vertices. 

\item In violations of type \solnref{UV2} we have a vertex $((v_1, a_1), (v_2, a_2),
\dots, (v_n, a_n))$ that is the start of a second line. This means that, for
some reason, we are not able to execute the optimal strategy backwards from this
vertex. There are two possibilities 

\begin{enumerate}
\item The optimal strategy needs to place a pebble on the successor of some
vertex $v_i$, but it cannot because $v_i$ is the end of a line. This means that
either $S(v_i) = v_i$ or that $V(S(v_i)) \ne V(v_i)+1$, and in either case we
have a solution of type \solnref{UFP1} for $L$.

\item The optimal strategy needs to remove the pebble $v_i$, but it cannot,
because it does not have a pebble on a vertex $u$ with $S(u) = u$. By
construction, there will be some pebble $v_j$ with $a_j = a_i - 1$, but in this
case we have $S(v_j) \ne v_i$. This means that we have two lines, and
specifically we have that $v_i$ and $S(v_j)$ are two vertices with the same
potential, since $V(v_j) = a_j$ and $V(S(v_j)) = V(v_j) + 1$. This gives us a
solution of type \solnref{UFPV1}.
\end{enumerate}

\item In violations of type \solnref{UV3} we have two distinct vertices $x = ((v_1, a_1), (v_2,
a_2), \dots, (v_n, a_n))$ and $y = ((u_1, b_1), (u_2, b_2), \dots, (u_n, b_n))$
with $V'(x) \le V'(y) < V'(S'(x))$. Since the reduction ensures that $V'(S'(x))
= V'(x)
+ 1$ this means that $x$ and $y$ have the same potential. The reduction of
Hub\'{a}\v{c}ek and Yogev ensures that, if two vertices have the same
potential, then they refer to the same step of the optimal strategy, meaning
that
$a_i = b_i$ for all $i$. This means that any pair of vertices $v_i$ and
$u_i$ with $a_i \ne \vblank$ is a pair of vertices with $V(v_i) = V(u_i)$, and so
a solution of type \solnref{UFPV1}. To see that such a pair must exist, it suffices to
note that the only vertex with $a_i = \vblank$ for all $i$ is the start of the
line, and there cannot be two distinct vertices with this property.

\end{enumerate}
\end{proof}

The lemma above proves that the reduction is correct, but it does not directly
prove that it is promise-preserving. Specifically, in case 2a of the proof we
show that some violations of \solnref{UV2} are mapped back to solutions of type \solnref{UFP1}.
This, however, is not a problem, because we can argue that case 2a of the proof
can only occur if there is more than one line in $L'$.

Specifically, if we are at some vertex 
$x = ((v_1, a_1), (v_2, a_2), \dots, (v_n, a_n))$ and $P(x)$ needs to place a
pebble on $S(v_i)$, then $v_i$ cannot be the furthest point in the pebble
configuration, meaning that there is some $v_j$ with $a_j \ne \vblank$ and $a_j
> a_i$. This can be verified by inspecting the recursive definition of the
optimal strategy.
But note that if $v_i$ is the end of a line, and $v_j$ is a vertex with $V(v_j)
> V(v_i)$, then $L'$ must contain more than one line.

This allows us to argue that the reduction is promise-preserving, since if $L$
is promised to have no violations, then it must contain exactly one line, and if
$L$ contains exactly one line, then all proper solutions of $L$ are mapped onto
proper solutions of $L'$. Thus $L'$ will contain no violations. This completes
the proof of Lemma~\ref{lem:ufeopl2eopl}.

\section{Proofs for Section~\ref{sec:opdc-complete}: \UEOPL to \OPDC}

\subsection{Proof of Lemma~\ref{lem:2k}}
\label{app:2k}

This construction is very similar to the one used in the proof of
Lemma~\ref{lem:uf2ufp1} given in Appendix~\ref{app:uf2ufp1}, although here we
must deal with the predecessor circuit, and ensure that the end of each line has
the correct potential. Let $L = (S, P, V)$ be an instance of \UEOPL, and let $k$
be the bit-length of the vertices used in $L$. 

We will create an instance $L' = (S', P', V')$ in the following way. Each
vertex in $L'$ will be a pair $(v, i)$, where $v$ is a vertex in $L$, and $i$ is
an integer satisfying $i \le 2^n$, where $n = k+1$. Hence, a vertex in $L'$
can be represented by a bit-string of length $n + k$.

The circuit $S'$ is defined as follows. Given a vertex $(v, i)$, we execute the
following algorithm.
\begin{enumerate}
\item If $S(v) = v$, then $S'(v, i) = (v, i)$, meaning that if $v$ is not a
vertex in $L$, then $(v, i)$ is not a vertex in $L'$ for all $i$.

\item If $v \ne P(S(v))$ and $V(v) + i > 2^{n}-1$, then $S'(v, i) = S'(v, i)$,
meaning that $(v, i)$ is not a vertex in the instance.

\item If $v \ne P(S(v))$ and $V(v) + i = 2^{n}-1$, then $S'(v, i) = 0^k$, which
makes $(v, i)$ an end of line solution.

\item If $v \ne P(S(v))$ and $V(v) + i < 2^{n}-1$, then $S'(v, i) = (v, i+1)$.

\item If $S(v) \ne v = P(S(v))$ and $V(S(v)) > V(v) + i + 1$, then $S'(v, i) = (v, i+1)$.

\item If $S(v) \ne v = P(S(v))$ and $V(S(v)) = V(v) + i + 1$, then $S'(v, i) = (S(v), 0)$.

\item If $S(v) \ne v = P(S(v))$ and $V(S(v)) < V(v) + i + 1$, then $S'(v, i) =
(v, i)$, meaning that $(v, i)$ is not a vertex.
\end{enumerate}
The last three conditions add a new sequence of vertices between any edge $(x,
y)$ where $V(y) > V(y) + 1$. Specifically, this sequence is
\begin{equation*}
(x, 0) \rightarrow (x, 1) \rightarrow \dots \rightarrow (x, V(y) - V(x) - 1)
\rightarrow (y, 0).
\end{equation*}
Conditions 2 through 4 introduce a new line starting at $(x, 0)$, whenever $x$
is the end of a line in the original instance. This line has the form
\begin{equation*}
(x, 0) \rightarrow (x, 1) \rightarrow \dots \rightarrow (x, 2^n - V(x)-1).
\end{equation*}
where $(x, 2^n - V(x)-1)$ is the new end of line. Observe that this line is
always non-empty, since $2^n-1 =  2^{k+1}-1 > 2^k \ge V(x)$.

The predecessor circuit $P'$ walks the line backwards, which is easy to
construct, since we can either follow the predecessor circuit of $L$, or we can
walk backwards along any of the lines that we have introduced. Specifically,
given a vertex $(v, i)$, we use the following algorithm to implement $P'$.
\begin{enumerate}
\item If $S(v) = v$, then the value returned by $P'(v, i)$ is irrelevant, since
$(v, i)$ is not a vertex. 

\item If $v \ne P(S(v))$ and $V(v) + i > 2^{n}-1$, then again the value of $P'(v,
i)$ is irrelevant, since $(v, i)$ not a vertex. 

\item If $v \ne P(S(v))$ and $V(v) < V(v) + i \le 2^{n}-1$, then $P'(v, i) = (v,
i-1)$, meaning that if we are on the new line at a solution, then we walk the line backwards.

\item If $v \ne P(S(v))$ and $V(v) + i = V(v)$, then $P'(v, i) = (P(v), V(v) -
V(P(v)))$, meaning that if we are at $(v, 0)$, then we move to the end of the
line between $(P(v), 0)$ and $(v, 0)$.

\item If $S(v) \ne v = P(S(v))$ and $i = 0$, then $P'(v, i) = (P(v), V(v) -
V(P(v)))$.




\item If $S(v) \ne v = P(S(v))$ and $V(S(v)) < V(v) + i + 1$, then the value
returned by $P'(v, i)$ is irrelevant, since $(v, i)$ is not a vertex. 

\item If $S(v) \ne v = P(S(v))$ and cases 5 and 6 do not apply, then
$P'(v, i) = (v, i-1)$.
\end{enumerate}
Cases 2 through 4 deal with the new line introduced at the end of each line.
while cases 5 through 7 deal with the new lines introduced between the vertices
$(x, 0)$ and $(S(x), 0)$.

The potential function $V'$ is defined so that $V'(v, i) = V(v) + i$.

The two properties that we need to satisfy hold by construction for $L'$. Every
edge is constructed so that $V'(S(x)) = V'(x) + 1$, whenever $x$ is a vertex,
and the new lines starting at vertices $(v, 0)$, where $v$ is the end of a line
in $L$, ensure that $V'(x) = 2^n-1$ if and only if $x$ is the end of a line in $L'$. The
following lemma shows that the reduction is correct.

\begin{lemma}
Every solution of $L'$ can be mapped back to a solution of $L$.
\end{lemma}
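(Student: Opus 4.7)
The plan is to prove the lemma by case analysis over the four solution types of $L'$, namely the proper solution type \solnref{U1} and the violation types \solnref{UV1}, \solnref{UV2}, \solnref{UV3}, and in each case exhibit a polynomial-time computable map to a solution of $L$. To get the promise-preserving property as a by-product, I will additionally check that every \solnref{U1} solution of $L'$ maps to a \solnref{U1} solution of $L$ (so violations can only introduce other violations).

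First I would dispose of the easy cases. The circuit $S'$ is designed so that whenever $(v,i)$ and $S'(v,i)$ are both vertices, $V'(S'(v,i)) = V'(v,i) + 1$; hence no \solnref{UV1} violation can arise in $L'$. For \solnref{U1}, a vertex $x = (v,i)$ with $P'(S'(x)) \ne x$ forces $V(v)+i = 2^n - 1$ (the only place the successor pattern is broken), and the extension chain of length $2^n - V(v) - 1$ is inserted precisely when $P(S(v)) \ne v$, so $v$ is an end of a line in $L$, giving \solnref{U1}. For \solnref{UV2}, a vertex $(v,i) \ne 0^n$ with $S'(P'(v,i)) \ne (v,i)$ either sits at the top of an inserted chain (case 4 of $P'$), in which case $S(P(v)) \ne v$ and $v$ is the start of a second line in $L$ (\solnref{UV2} in $L$), or lies on an end-of-line extension chain, in which case the mismatch directly gives either a \solnref{UV1} or a \solnref{UV2} solution of $L$ back at $v$ or $P(v)$.

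The main obstacle is \solnref{UV3}. Here we have $x = (v,i) \ne (u,j) = y$ with either $V'(x) = V'(y)$ or $V'(x) < V'(y) < V'(S'(x))$. Since $V'$ increments by exactly one along edges, the second disjunct is impossible, so we may assume $V(v)+i = V(u)+j$ and (by symmetry) $i \le j$, yielding $V(v) \ge V(u)$. The key case is when both vertices lie in the ``interior'' of inserted chains, so $(u,j)$ being a valid vertex of $L'$ forces $V(S(u)) \ge V(u) + j + 1 > V(u) + i \ge V(v)$; combined with $V(u) \le V(v)$ this is exactly $V(u) \le V(v) < V(S(u))$, which together with $v \ne u$ (since $i=j$ would force $v=u$, a contradiction) is a \solnref{UV3} solution of $L$ (possibly after swapping $u$ and $v$ if $V(v) = V(u)$). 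The boundary cases where $x$ or $y$ lies on an end-of-line extension chain are handled by observing that on such a chain $v$ is already an end of line in $L$, so we can instead produce a \solnref{UV2} or \solnref{U1} solution of $L$ at $v$; in each sub-case one of these alternatives is strictly preferred over the purported \solnref{UV3} witness.

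Finally, tracking which solution types of $L$ arise in each branch above, one sees that \solnref{U1} solutions of $L'$ always yield \solnref{U1} solutions of $L$, while every violation in $L'$ is traced back to a violation in $L$. Hence the reduction is promise-preserving, completing the proof of the lemma. I expect the bookkeeping for the boundary cases of \solnref{UV3} — vertices lying on newly inserted extension chains — to be the most delicate part, since several sub-cases must be verified to ensure no spurious violations survive.
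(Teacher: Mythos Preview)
Your approach is correct and matches the paper's: a case analysis over the four \UEOPL solution types, with \solnref{UV3} reduced to the equal-potential case because $V'$ increments by one along every edge. Two minor points. First, your chain of inequalities in the \solnref{UV3} interior case is garbled (the step $V(u)+i \ge V(v)$ need not hold); the clean derivation is $V(S(u)) \ge V(u)+j+1 = V(v)+i+1 > V(v)$, which together with $V(u) \le V(v)$ yields the \solnref{UV3} witness $(u,v)$ in $L$, and the parenthetical ``since $i=j$ would force $v=u$'' is backwards --- when $i=j$ one has $V(v)=V(u)$ with $v\ne u$, which is itself a \solnref{UV3} witness. Second, in the extension-chain boundary sub-case you produce a \solnref{U1} solution of $L$, which suffices for the lemma but contradicts your closing claim that every violation in $L'$ traces to a violation in $L$; promise-preservation still holds, because that boundary sub-case cannot occur when $L$ contains a single line. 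The paper's own proof is terser: it does not separate the extension-chain case in \solnref{UV3} at all (it simply asserts $V(S(v)) > V(v)+i$, which strictly speaking only holds off the extension chain), and it treats \solnref{UV1} as a live case rather than arguing it is impossible.
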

\begin{proof}
We enumerate the types of solutions in \UEOPL. 
\begin{enumerate}
\item A solution of type \solnref{U1} in $L'$ is a vertex $x = (v, i)$ such that
$P'(S'(x)) \ne x$. By construction this can only occur if $P(S(v)) \ne v$, so
this gives us a solution of type \solnref{U1} for $L$.

\item A solution of type \solnref{UV1} gives us a vertex $x = (v, i)$ where $P'(S'(x)) = x$
and $V'(S'(x)) \le V'(x)$. By construction, this can only occur if $V(S(v)) \le
V(v)$, so we have a solution of type \solnref{UV1} for $L$.

\item A solution of type \solnref{UV2} gives us a vertex $x = (v, i)$ where $S'(P'(x)) \ne
x$. By construction, this can only happen if $S(P(x)) \ne p$ and $i = 0$. This
gives us a solution of type \solnref{UV2} for $L$.

\item A solution of type \solnref{UV3} gives us a pair of vertices $x = (v, i)$ and $y =
(u, j)$ such that
$x \ne y$, and either $V(x) = V(y)$, or $V(x) < V(y) < V(S(x))$. Note that the
latter case is impossible here, since by construction we have $V(S(x)) = V(x) +
1$ whenever $x$ is a vertex, so we must have $V(x) = V(y)$.

If $i = j$, then $V(v) = V(u)$, and so $v$ and $u$ form a solution of type \solnref{UV3}
for $L$. Otherwise, we will 
suppose, without loss of generality, that $i > j$, which means that $V(v) <
V(u)$. Moreover, we have $V(S(v)) > V(v) + i = V(u) + j \ge V(u)$. Hence we have
shown that $V(v) < V(u) < V(S(v))$, and so we have that $v$ and $u$ form a
solution of type \solnref{UV3} for $L$.
\end{enumerate}
\end{proof}

The lemma above implies that the reduction is correct. To see that it is
promise-preserving, it suffices to note that proper solutions of $L$ are only
ever mapped onto proper solutions of $L'$. Therefore, if it is promised that $L$
has no violations, then $L'$ will also have no violations. This completes the
proof of Lemma~\ref{lem:2k}.

\subsection{Proof of Lemma~\ref{lem:points}}
\label{app:points}

The proof requires us to define two polynomial-time algorithms. 

\paragraph{\bf Splitting lines around a vertex.}

We begin by defining the $\subline$ function. 
We will use two sub-functions that split an instance in two based on a
particular vertex. Let 
$L = (S, P, V)$ be a \UEOPL instance, and
$v$ be some vertex of $L$.

\begin{enumerate}
\item We define the function $\firsthalf(v, L)$ to return an \UEOPL instance
$(S', P', V')$ by removing every vertex with potential greater than or equal to
$V(v)$. Specifically, the $S'$ and $P'$ circuits will check whether $V(x) \ge
2^{n-1}$ for each vertex $x$, and if so, then they will set $S'(x) = P'(x) = x$,
which ensures that $x$ is not on the line. For any vertex $x$ with $V(x) <
2^{n-1}$, we set $S'(x) = S(x)$, $P'(x) = P(x)$ and $V'(x) = V(x)$. Note that
$S'$, $P'$, and $V'$ can all be produced in polynomial time if we are given $(S,
P, V)$.

\item We define the function $\secondhalf(v, L)$ to return a
\UEOPL instance
$(S', P', V')$ by removing every vertex with potential strictly less than
$V(v)$. 
For the circuits $S'$ and $P'$, this is done in the same way as the previous case, but this time the circuits
will check whether $V(x) < 2^{n-1}$. The function $V'$ is defined so that $V'(x)
= V(x) - 2^{n-1}$, thereby ensuring that the potentials are in the range $[0,
2^{n-1})$. Finally, the string $0^n$ is remapped to represent the vertex $v$,
which is the start of the second half of the line. In this case, we are able to
compute $S'$, $P'$, and $V'$ in polynomial time if we are given $(S, P, V)$ and
the bit-string $v$.
\end{enumerate}
We remark that, although we view these functions as intuitively splitting a
line, the definitions still work if $L$ happens to contain multiple lines.
Each line in the instance will be split based on the potential of $v$.

\paragraph{\bf The subline function.}

The subline function is defined recursively, based on the number of bit-strings
that are given to the function. In the base case we are given a line $L$ and
zero bit-strings, in which case $\subline(L) = L$. 

For the recursive step, assume that we are given bit-strings $v_i$ through
$v_n$. Let $L' = (S', P', V') = \subline(v_{i+1}, v_{i+2}, \dots, v_n, L)$. 
\begin{itemize}
\item If $V'(v_i) \ne 2^{i-1}$ then we set $\subline(v_i, v_{i+1}, \dots, v_n,
L) = \firsthalf(L')$.
\item If $V'(v_i) = 2^{i-1}$ then we set $\subline(v_i, v_{i+1}, \dots, v_n, L) = \secondhalf(L')$.
\end{itemize}
Note that since $\firsthalf$ and $\secondhalf$ can be computed out in polynomial
time, this means that $\subline$ can also be computed in polynomial time.

An important property of the reduction is that the output of $\subline(v_i,
v_{i+1}, \dots, v_n)$ is a \UEOPL instance in which the longest possible line
has length $2^{i-1}$. This can be proved by induction. For the base case note
that $\subline(L)$ allows potentials between $0$ and $2^n-1$. Each step of the
recursion cuts this in half, meaning that $\subline(v_i, v_{i+1}, \dots, v_n,
L)$ allows potentials between $0$ and $2^{i} - 1$. Since each edge in a line
must always strictly increase the potential, this means that the longest
possible line in 
$\subline(v_i, v_{i+1}, \dots, v_n, L)$ has length $2^{i-1}$. This holds even if
the instance has multiple lines.


\paragraph{\bf The decode function.}

Given a point $(v_1, v_2, \dots, v_n)$, let $L' = \subline(v_1, v_2, \dots, v_n,
L)$. As we have argued above, we know that $L'$ is an instance in which the
longest possible line has length $2^{1 - 1} = 1$. Hence, the starting vertex of
$L'$, which is by definition $v_1$, is the end of a line. So we set
$\decode(v_1, v_2, \dots, v_n) = v_1$. Since $\subline$ can be computed in
polynomial time, this means that $\decode$ can also be computed in polynomial
time. This completes the proof of Lemma~\ref{lem:points}.

\subsection{The formal definition of the direction functions.}
\label{app:direction}

We will extend the approach given in the main body to all dimensions. Let $p =
(v_1, v_2, \dots, v_n)$ be a point. For the dimensions $j$ in the range $m(i-1)
+ 1 \le j \le mi$ we use the following procedure to determine $D_j$. Let $L' =
(S', P', V') = 
\subline(v_{i+1}, v_{i+2}, \dotsc, v_n, L)$.

\begin{enumerate}
\item 
In the case where $V'(v_i) \ne 2^{i-1}$, meaning that $\decode(p)$ is a vertex in
the first half of $L'$, then there are two possibilities.

\begin{enumerate}
\item If $V'(\decode(p)) = 2^{i-1} - 1$, meaning that $p$ is the last
vertex on the first half $L'$, then
we orient the direction function towards the bit-string given by $S(\decode(p))$.  
So we set
\begin{equation*}
D_j(p) = \begin{cases}
\up & \text{if $p_j = 0$ and $S(\decode(p)) = 1$,}\\
\down & \text{if $p_j = 1$ and $S(\decode(p)) = 0$,}\\
\zero & \text{otherwise.}\\
\end{cases}
\end{equation*}

\item If the rule above does not apply, then we orient everything towards $0$ by
setting
\begin{equation*}
D_i(p) = \begin{cases}
\down & \text{if $p_j = 1$,}\\
\zero & \text{if $p_j = 0$.}
\end{cases}
\end{equation*}
\end{enumerate}

\item If $V'(v_i) = 2^{i-1}$, then we are in the second half of the line. In
this case we set $D_i(p) = \zero$. 
\end{enumerate}
Observe that this definition simply extends the idea presented in the main body
to all dimensions, using exactly the same idea.

\subsection{Proof of Lemma~\ref{lem:completeness}}
\label{app:completeness}

To prove this lemma, we must map every solution of the \OPDC instance given by
$P$ and $\mathcal{D}$ to a solution of the \UEOPL instance $L = (S, P, V)$.
We do this by enumerating the solution types for OPDC.

\begin{enumerate}
\item In solutions of type \solnref{O1} we have a point $p = (v_1, v_2, \dots, v_n)$ such
that $D_i(p) = \zero$ for all $i$. Since the dimensions corresponding to $v_n$
are all zero, we know that $\decode(p)$ is in the second half of the line, and
since the dimensions corresponding to $v_{n-1}$ are all zero, we know that
$\decode(p)$ is in the last quarter of the line. Applying this reasoning for all
dimensions allows us to conclude that $V(\decode(p)) = 2^{n} - 1$.
Lemma~\ref{lem:2k} implies that this can be true if and only if $\decode(p)$ is
the end of a line, which is a solution of type \solnref{U1}.

\item In solutions of type \solnref{OV1} we have two fixed points of a single $i$-slice.
More specifically, we have 
an $i$-slice $s$ and two points $p = (v_1, v_2, \dots, v_n)$ and $q = (u_1, u_2,
\dots, u_n)$ in the slice $s$ with $p \ne q$ such that 
 $D_j(p) = D_j(q) = \zero$ for all $j \le i$.
Let $v_j$ be the bit-string that uses dimension $i$, and let $L' = (S', P', V')
= \subline(v_{j+1}, v_{j+2}, \dots, v_n, L)$. Since $p$ and $q$ both lie in $s$,
we know that $v_{l} = u_l$ for all $l > j$. Observe that, since $D_l(p) = D_l(q)
= \zero$ for all $l \le j$, we can use the same reasoning as we did in case 1 to
argue that $v_1$ through $v_{j-1}$ encode a vertex that is at the end of the
sub-line embedded into $(\blank, \blank, \dots, v_j, v_{j+1}, \dots, v_n)$, and
$u_1$ through $u_{j-1}$ encode a vertex that is at the end of the sub-line
embedded into $(\blank, \blank, \dots, u_j, v_{j+1}, \dots, v_n)$. There are
multiple cases to consider.

\begin{enumerate}
\item If $v_j = u_j$, then we have that $\subline(v_j, v_{j+1}, \dots, v_n, L) =
\subline(u_j, u_{j+1}, \dots, u_n, L)$.  Since $p \ne q$, there must be some
pair of vertices $v_l$ and $u_l$ such that $v_l \ne u_l$, and note that since
$p$ and $q$ both at the end of their respective sub-lines, we have $V'(v_l) =
V'(u_l)$, which also implies that $V(v_l) = V(u_l)$, which is a solution of type
UV3.

\item If $v_j \ne u_j$, and $D_l(p) = \zero$ for all
$l$ in the range $m(i-1)+1 \le l \le mi$, then 
$\subline(v_j, v_{j+1}, \dots, v_n, L)$ is the second half of $L'$, while
$\subline(u_j, u_{j+1}, \dots, u_n, L)$ is the first half. 
Since $q$ represents a vertex at the end of the corresponding line, we have that
$S(\decode(q))$ is the first vertex on the next half of the $L'$,
meaning that $V'(S(\decode(q))) = 2^{j-1}$. Moreover since $p$ is on the
second-half of the line, we have that $V'(v_j) = 2^{j-1}$, so we have
$V(S(\decode(q))) = V(v_j)$. This is a solution of type \solnref{UV3} so long as
$S(\decode(q)) \ne v_j$.

To prove that this is the case, recall that the direction functions for $q$ in the
dimensions corresponding to $u_j$ always point towards $S(\decode(q))$. Since
$u_j \ne v_j$, and since $u_j$ and $v_j$ lie in the same slice $s$, we know that
they disagree on some dimension $l \le i$. But we also have that $D_l(q) = \zero$
for all $l \le i$, which can only occur if $S(\decode(q))$ disagrees with $v_j$
in some dimension. Hence we have shown that 
$S(\decode(q)) \ne v_j$.

\item If $v_j \ne u_j$, and $D_l(q) = \zero$, for all $l$ in the range $m(i-1)+1
\le l \le mi$, then this is entirely symmetric to the previous case. 

\item In the last case, we have $v_j \ne u_j$, an index $l_1$ in the range 
$m(i-1)+1 \le l_1 \le mi$ with $D_{l_1}(p) \ne \zero$, and an index $l_2$
in the range 
$m(i-1)+1 \le l_2 \le mi$ with $D_{l_2}(p) \ne \zero$. Thus 
$\subline(v_j, v_{j+1}, \dots, v_n, L) = 
\subline(u_j, u_{j+1}, \dots, u_n, L)$, meaning that both points lie at the end
of the first half of $L'$. 
Thus the direction functions at $p$ point towards
$S(\decode(p))$, while the direction functions at $q$ point towards
$S(\decode(q))$. Moreover we have $V'(S(\decode(p))) = V'(S(\decode(q))$, so to
obtain a solution of type \solnref{UV3}, we need to prove that 
$S(\decode(p))) \ne S(\decode(q)$. 

This follows from the fact that $v_j \ne v_u$, and $v_j$ and $v_u$'s membership
in the slice $s$. This means that they disagree on some dimension $l < i$, but
since $D_{a}(p) = D_{a}(q) = \zero$ for all $a < i$, we must have
$S(\decode(p))) \ne S(\decode(q)$. 

\end{enumerate}

\item For solutions of type \solnref{OV2}, we have 
two points $p = (v_1, v_2, \dots, v_n)$ and $q = (u_1, u_2, \dots, u_n)$ that
lie in an 
$i$-slice $s$ with the following properties.
\begin{itemize}
\item $D_j(p) = D_j(q) = \zero$ for all $j < i$, 
\item $p_i = q_i + 1$, and
\item $D_i(p) = \down$ and $D_i(q) = \up$.
\end{itemize}
As with case 2, let $v_j$ be the bit-string that uses dimension $i$, and let $L'
= \subline(v_{j+1}, v_{j+2}, \dots, v_n) = 
\subline(u_{j+1}, u_{j+2}, \dots, u_n)$. 
Since $D_i(p) \ne \zero$ and $D_i(q) \ne \zero$, we must have that $p$ and $q$
are both points on the first half of $L'$. Furthermore, $\decode(p)$ and
$\decode(q)$ must both be at the end of the first half, since $D_l(p) = D_l(q) =
\zero$ for all $l < i$. Hence, $V'(\decode(p)) = V'(\decode(q)) = 2^{i-1} - 1$,
which also implies that $V(\decode(p)) = V(\decode(q))$. So to obtain a solution
of type \solnref{UV3}, we just need to prove that $\decode(p) \ne \decode(q)$.

To prove this, we observe that the direction function in dimension $i$ should be
oriented towards $S(\decode(p))$ for the point $p$, and $S(\decode(q))$  for the
point $q$. However, since $D_i(p)$ and $D_i(q)$ both point away from their
points, and since $p_i \ne q_i$, this must mean that $S(\decode(p)) \ne
S(\decode(q))$, which also implies that $\decode(p) \ne \decode(q)$.

\item We claim that solutions of type \solnref{OV3} are impossible. A solution of type \solnref{OV3}
requires that we have a point $p$ with either $p_i = 0$ and $D_i(p_i) = \down$,
or $p_i = 1$ and $D_i(p_i) = \up$. Our construction never sets 
$D_i(p_i) = \down$ when $p_i = 0$, and it never sets 
$D_i(p_i) = \up$ when $p_i = 1$. So solutions of type \solnref{OV3} cannot occur.
\end{enumerate}

To see that the reduction is promise preserving, it suffices to note that we
only ever map solutions of type \solnref{U1} onto solutions of type \solnref{O1}. Thus, if the
original instance has only solutions of type \solnref{U1}, then the resulting OPDC
instance will have solutions of type \solnref{O1}.

\section{Proofs for Section~\ref{sec:uso}: USO to OPDC}
\label{app:uso}

\subsection{Proof of Lemma~\ref{lem:uso}}

Every solution of the OPDC instance can be mapped back to a solution of the USO
instance. We prove this by enumerating all possible types of solution.

\begin{enumerate}
\item In solutions of type \solnref{O1} we are given a point $p \in P$ such that $D_i(p) =
\zero$ for all $i$. If $\udir(p) \ne \vblank$, then this means that $p$ is a
sink, and so it is solution of type \solnref{US1}. If $\udir(p) = \vblank$, then this
means that $p$ is a solution of type \solnref{USV1}.

\item In solutions of type \solnref{OV1}, we have
an $i$-slice $s$ and two points $p, q \in P_s$ with $p \ne q$ such that 
 $D_j(p) = D_j(q) = \zero$ for all $j \le i$. If $\udir(p) = \vblank$ or
$\udir(q) = \vblank$, then we have a solution of type \solnref{USV1}. Otherwise, this
means that $p$ and $q$ are both sinks of the face corresponding to $s$, and
specifically this means that they have the same out-map on this face. So this
gives us a solution of type \solnref{USV2}.

\item In solutions of type \solnref{OV2} we have
an $i$-slice $s$ and two points $p, q \in P_s$ such that
\begin{itemize}
\item $D_j(p) = D_j(q) = \zero$ for all $j < i$, 
\item $p_i = q_i + 1$, and
\item $D_i(p) = \down$ and $D_i(q) = \up$.
\end{itemize}
Note that in this case we must have $\udir(p) \ne \vblank$ and $\udir(q) \ne
\vblank$. If we restrict the cube to the face defined by $s$, note that
$\udir(p)_j = \udir(q)_j = 0$ for all dimensions $j \ne i$, and $\udir(p)_i =
\udir(q)_i = 1$. Hence $p$ and $q$ have the same out-map on the face defined by
$s$, which gives us a solution of type \solnref{USV2}.

\item Solutions of type \solnref{OV3} are impossible for the instance produced by our
reduction. In these solutions we have a point $p$ with $p_i = 0$ and $D_i(p) =
\down$, or a point $q$ with $q_j = 1$ and $D_j(q) = \up$. Since our
reduction never does this, solutions of type \solnref{OV3} cannot occur.
\end{enumerate}

To see that the reduction is promise-preserving, it suffices to note that
solutions of type \solnref{US1} are only ever mapped onto solutions of type \solnref{O1}. Thus, if
the USO instance has no violations, then the resulting OPDC instance also has
no violations. This completes the proof of Lemma~\ref{lem:uso}.

\section{Prerequisites for Appendix~\ref{app:plcontraction2opdc} and~\ref{app:algorithm_details}}

\subsection{Slice Restrictions of Contraction Maps}
\label{sec:slice}

Our algorithms will make heavy use of the
concept of a \defineterm{slice restriction} of a contraction map, which we describe
here. We define the set of fixed coordinates of a slice
$\ss\in \Slice_d$, $\fixed(\ss) = \Setbar{i\in [d]}{s_i \neq \blank}$ and the set
of free coordinates, $\free(\ss) = [d] \setminus \fixed(s)$. Thus, an $i$-slice is a slice $\ss \in \Slice_d$ for which $\free(\ss) = [i]$. We'll say that a slice is a \defineterm{$k$-dimensional slice} if $\Card{\free(\ss)} = k$.

We can define the \defineterm{slice restriction} of a function $f : [0,1]^d \to [0,1]^d$ with respect to a slice $\ss\in \Slice_d$, denoted $\restr{f}{\ss}$, to be the function obtained by fixing the coordinates $\fixed(\ss)$ according to $\ss$, and keeping the coordinates of $\free(\ss)$ as arguments. To simplify usage of $\restr{f}{\ss}$ we'll formally treat $\restr{f}{\ss}$ as a function with $d$ arguments, where the coordinates in $\fixed(\ss)$ are ignored. Thus, we define $\restr{f}{\ss}:[0,1]^d\to [0,1]^d$ by
\[ \restr{f}{\ss}(x) = f(y) \quad\text{where}\ y_i = \begin{cases} s_i &\ \text{if $i \in \fixed(\ss)$}\\ x_i&\ \text{if $i \in \free(\ss)$.}\end{cases} \]

Let $\free(\ss) = \Set{i_1,\dotsc, i_k}$. We'll also introduce a variant of $\restr{f}{\ss}$ when we want to consider the slice restriction as a lower-dimensional function, $\Restr{f}{\ss} : [0,1]^d \to [0,1]^{\Card{\free(\ss)}}$ defined by
\[ \Restr{f}{\ss}(x) = \Paren{\Paren{\restr{f}{\ss}(x)}_{i_1}, \dotsc, \Paren{\restr{f}{\ss}(x)}_{i_k}}\text{.} \]

We can also define slice restrictions for vectors in the natural way:
\[ \Paren{\restr{x}{\ss}}_i = \begin{cases} s_i&\ \text{if $s_i \neq \blank$}\\ x_i&\ \text{otherwise.} \end{cases}\]
Finally, we'll use $\Restr{x}{\ss}$ to denote projection of $x$ onto the coordinates in $\free(\ss)$:
\[ \Restr{x}{\ss} = \Paren{x_{i_1},\dotsc, x_{i_k}}\text{.} \]

We now extend the definition of a contraction map to a slice restriction of a function in the obvious way. We say that $\restr{\tf}{\ss}$ is a contraction map with respect to a norm $\Norm{\cdot}$ with Lipschitz constant $c$ if for any $x,y\in [0,1]^d$ we have
\[ \Norm{\Restr{f}{\ss}(x) - \Restr{f}{\ss}(y)} \leq c \Norm{\Restr{x}{\ss} - \Restr{y}{\ss}}\text{.} \]

We'll also introduce some notation to remove clutter. We'll define
\[\Delta_i(p) \triangleq f(p)_i - p_i\, \forall i \in [d]\] and use this notation when the function $f$ is clear from the context. Note that we'll only use $\Delta_i(p)$ when considering the free coordinates of a slice restriction, so that $\Delta_i(p)$ doesn't depend on whether the most recent context has us considering $\restr{f}{\ss}$ or $f$.

Slice restrictions will prove immensely useful through the following observations:

\begin{lemma}\label{lem:cm1}
Let $f:[0,1]^d\to [0,1]^d$ be a contraction map with respect to $\Norm{\cdot}_p$ with Lipschitz constant $c \in (0,1)$. Then for any slice $\ss \in \Slice_d$, $\restr{f}{\ss}$ is also a contraction map with respect to $\Norm{\cdot}_p$ with Lipschitz constant $c$.
\end{lemma}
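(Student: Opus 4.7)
The proof is essentially a direct computation, relying on two elementary facts about $\ell_p$ norms: (i) projection onto a subset of coordinates is non-expansive, and (ii) inserting extra coordinates whose values agree in the two vectors does not change the $\ell_p$ norm of their difference. The plan is to fix $x,y\in[0,1]^d$, lift them into vectors $x',y'\in[0,1]^d$ that encode the slice constraints literally, apply the contraction property of $f$ to $x',y'$, and then push the resulting inequality through the two projection facts.

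Concretely, I would define $x',y'\in[0,1]^d$ by setting $x'_i = y'_i = s_i$ whenever $i\in\fixed(\ss)$, and $x'_i=x_i$, $y'_i=y_i$ whenever $i\in\free(\ss)$. By construction $\restr{f}{\ss}(x)=f(x')$ and $\restr{f}{\ss}(y)=f(y')$, so the contraction hypothesis on $f$ gives
\[
\|f(x')-f(y')\|_p \;\le\; c\,\|x'-y'\|_p.
\]
Since $x'$ and $y'$ agree on $\fixed(\ss)$, the difference $x'-y'$ is supported on $\free(\ss)$ and its nonzero entries are exactly $x_i-y_i$ for $i\in\free(\ss)$; hence $\|x'-y'\|_p = \|\Restr{x}{\ss}-\Restr{y}{\ss}\|_p$. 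For the left-hand side, observe that $\Restr{f}{\ss}(x)-\Restr{f}{\ss}(y)$ is the projection of $f(x')-f(y')$ onto the coordinates in $\free(\ss)$, and dropping coordinates can only decrease an $\ell_p$ norm, so $\|\Restr{f}{\ss}(x)-\Restr{f}{\ss}(y)\|_p \le \|f(x')-f(y')\|_p$. Chaining the three inequalities yields the desired bound.

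There is no real obstacle here; the only thing to be careful about is matching the notation, since the excerpt uses both $\restr{f}{\ss}$ (treated as a $d$-argument function agreeing with $f$ on the fixed coordinates) and $\Restr{f}{\ss}$ (its projection onto the free coordinates), and one must verify that the contraction inequality is stated in terms of the projected version on both sides. Once this bookkeeping is in place, the argument is a two-line calculation and works verbatim for every $p\in\N\cup\{\infty\}$, since non-expansiveness of coordinate projections holds for all $\ell_p$ norms.
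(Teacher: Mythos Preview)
Your proof is correct and essentially identical to the paper's: both arguments lift $x,y$ to $\restr{x}{\ss},\restr{y}{\ss}$ (your $x',y'$), apply contraction of $f$, use that projection onto $\free(\ss)$ is non-expansive for the left side, and use that $\restr{x}{\ss},\restr{y}{\ss}$ agree on $\fixed(\ss)$ for the right side. The paper's proof is just the three-line chain you describe, without introducing separate names for the lifted points.
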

\begin{proof}
For any two vectors $x,y \in [0,1]^{d}$ we have
\begin{align*}
  \Norm{\restr{\tf}{\ss}(x) - \restr{\tf}{\ss}(y)}_p &\leq \Norm{f(\restr{x}{\ss}) - f(\restr{y}{\ss})}_p\\
                                       &\leq c \Norm{\restr{x}{\ss} - \restr{y}{\ss}}_p\\
                                       &= c \Norm{\Restr{x}{\ss} - \Restr{y}{\ss}}_p
\end{align*}
\end{proof}

Since slice restrictions of contraction maps are themselves contraction maps in the sense defined above, they have unique fixpoints, up to the coordinates of the argument which are fixed by the slice and thus ignored. We'll nevertheless refer to the \defineterm{unique fixpoint of a slice restriction of a contraction map}, which is the unique point $x\in [0,1]^d$ such that
\[ \Restr{f}{\ss}(x) = \Restr{x}{\ss}\quad\text{and}\quad x = \restr{x}{\ss}\text{.} \]

\begin{lemma}\label{lem:cm2}
Let $f : [0,1]^d \to [0,1]^d$ be a contraction map with respect to $\Norm{\cdot}_p$ with Lipschitz constant $c\in (0,1)$. Let $\ss, \ss' \in \Slice_d$ be such that $\fixed(\ss') = \fixed(\ss) \cup \Set{i}$ and $s_j = s'_j$ for all $j \in \fixed(\ss)$. Let $x, y \in [0,1]^d$ be the unique fixpoints of $\Restr{f}{\ss}$ and $\Restr{f}{\ss'}$, respectively. Then $(x_i - y_i)\Delta_i(y) > 0$. 
\end{lemma}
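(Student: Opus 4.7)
The plan is to derive the sign inequality directly from the contraction inequality applied to the pair $(x,y)$, exploiting the fact that $x$ and $y$ are fixpoints of two slice restrictions that differ only in whether coordinate $i$ is free. By Lemma~\ref{lem:cm1}, the full $f$ is still contracting under $\|\cdot\|_p$, so we may apply $\|f(x)-f(y)\|_p \le c\,\|x-y\|_p$ to these specific two points. The key observation is that the fixpoint properties collapse many coordinates of $f(x)-f(y)$ into coordinates of $x-y$, leaving coordinate $i$ as the only place where $f$ produces a nontrivial contribution.

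Concretely, I would split the coordinates into three groups. For $j \in \free(\ss)\setminus\{i\} = \free(\ss')$, both $x$ and $y$ are fixed points of $\restr{f}{\ss'}$ in that coordinate, so $f(x)_j = x_j$ and $f(y)_j = y_j$, giving $f(x)_j - f(y)_j = x_j - y_j$. For $j \in \fixed(\ss)$, we have $x_j = y_j = s_j$. For $j = i$, since $i \in \free(\ss)$ we have $f(x)_i = x_i$ while $f(y)_i = y_i + \Delta_i(y)$, so $f(x)_i - f(y)_i = (x_i - y_i) - \Delta_i(y)$. Plugging these into the contraction bound (raised to the $p$-th power for $p<\infty$) and canceling the common $\sum_{j\in\free(\ss)\setminus\{i\}}|x_j-y_j|^p$ terms that appear on both sides (weighted by $1$ on the left, $c^p$ on the right, so they survive as a nonnegative residue on the left), one obtains the inequality
\[
|x_i - y_i - \Delta_i(y)|^p \;\le\; c^p\,|x_i - y_i|^p,
\]
which after taking $p$-th roots becomes $|x_i - y_i - \Delta_i(y)| \le c\,|x_i - y_i|$.

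From this, since $x \neq y$ forces $x_i \neq y_i$ (otherwise $x$ itself would be a fixpoint of $\restr{f}{\ss'}$ and uniqueness would give $x=y$), and since $c<1$, we get the strict inequality $|x_i - y_i - \Delta_i(y)| < |x_i - y_i|$. A routine case analysis on signs shows that this is equivalent to $\Delta_i(y)$ being nonzero and having the same sign as $x_i - y_i$: if they had opposite signs, or if $\Delta_i(y) = 0$, then $|x_i - y_i - \Delta_i(y)| \ge |x_i - y_i|$. Hence $(x_i - y_i)\Delta_i(y) > 0$, as required.

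The only real subtlety, and the thing I expect to need to handle as a minor side-case, is the $p = \infty$ norm, where the coordinate-by-coordinate $p$-th-power algebra does not apply. There I would first argue that $\|x-y\|_\infty = |x_i - y_i|$ — since for every $j \in \free(\ss)\setminus\{i\}$ the identity $|x_j - y_j| = |f(x)_j - f(y)_j| \le c\|x-y\|_\infty$ forces those coordinates to be strictly smaller than the $\ell_\infty$ norm unless they all vanish — and then apply $|f(x)_i - f(y)_i| \le c\|x-y\|_\infty = c|x_i - y_i|$ to recover the same inequality $|x_i - y_i - \Delta_i(y)| \le c|x_i-y_i|$, after which the sign argument is identical.
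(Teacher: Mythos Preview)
Your proof is correct and follows essentially the same approach as the paper: apply the contraction inequality to the pair $(x,y)$ and use the fixpoint properties to collapse all coordinates except $i$. The paper phrases this as a contradiction argument (assume $x_i \le y_i$ and $f(y)_i > y_i$, then derive $\|f(y)-f(x)\|_p^p > \|y-x\|_p^p$), whereas you argue directly and extract the quantitative bound $|x_i - y_i - \Delta_i(y)| \le c\,|x_i - y_i|$ before reading off the sign; your version also explicitly treats the $\ell_\infty$ case, which the paper's $p$-th power computation does not cover.
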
 
\begin{proof}
We'll prove this by contradiction. Without loss of generality, assume towards a contradiction that $x_i \leq y_i$ and that $f(y)_i > y_i$. Then we have
\begin{align*}
  \Norm{f(y) - f(x)}_p^p
  &= \Norm{\Paren{f(y)_1,\dotsc, f(y)_d} - \Paren{f(x)_1, \dotsc, f(x)_d}}_p^p \\
  &= \sum_{i\in \fixed(\ss)} \Abs{s_i - s_i}^p + \sum_{j\in \free(\ss')} \Abs{y_j - x_j}^p + \Abs{f(y)_i - x_i}^p\\
  &> \sum_{i\in \fixed(\ss)} \Abs{s_i - s_i}^p + \sum_{j\in \free(\ss')} \Abs{y_j - x_j}^p + \Abs{y_i - x_i}^p\\
  &= \Norm{y - x}_p^p
\end{align*} which contradicts the fact that $f$ is a contraction map. The lemma follows.
\end{proof}

\subsection{Approximate Fixpoint Lemmas}
\label{sec:approx_lemmas}
In this section we state and prove the key lemmas that will be used both in our reduction of \CM to \EOPL and our algorithms for finding approximate fixpoints. The lemmas reflect the intuition that sufficiently good approximate fixpoints on $(k-1)$-dimensional slices can be used to obtain slightly worse approximate fixpoints for $k$-dimensional slices for a notion of approximation to be formalized shortly. By choosing our approximation guarantees at each level appropriately we can ensure that we end up with a approximate fixpoint for the original function.

To formalize this intuition we define an approximate fixpoint of a given dimension with respect to some $\ell_p$-norm and a slice. For any fixed $\ell_p$-norm, $i$-slice $\ss$, and dimension parameter $k \leq i$ we'll say that a point $x\in \ss$ is a \defineterm{$(\ss, \ell_p, k)$-approximate fixpoint} if
\[ \Abs{\Delta_j(x)} \leq \eps_j(p,d),\quad \forall j \in [k]  \] where $\Paren{\eps_j}_{j=1}^k$ is a constant depending only on the $\ell_p$ norm and the dimension $d$. To make the dependence on $p$ and $d$ explicit, we'll write $\eps_j(p,d)$ instead of $\eps_j$.

We'll define an \defineterm{$\eps$-approximate fixpoint of a contraction map $f$ w.r.t. an $\ell_p$-norm} to be a point $x\in [0,1]^d$ such that $\Norm{f(x) - x}_p \leq \eps$.

For each different $\ell_p$ norm and dimension $d$ of our contraction map, we will choose a different sequence $\Paren{\eps_i(p,d)}_{i=1}^d$ such that if a point $x$ on a $(k-1)$-slice satisfies $\Abs{\Delta_j(x)} \leq \eps_j(p,d)j$ for all $j < k$, either $\Abs{\Delta_k(x)} \leq \eps_k(p,d)$ or the sign of $\Delta_k(x)$ indicates the direction of the unique fixpoint of that slice. 
There are two distinct cases to consider for $\Paren{\eps_i(p,d)}_{i=1}^d$:
\begin{itemize}
\item For $p=1$, we choose $\eps_i(1,d)^{(1)}_i \triangleq \eps/2^{2(d+1-i)}$.
\item For $2 \leq p < \infty$, we choose $\eps_i(p,d) \triangleq \eps^{p^{(d+1-i)}} \Paren{dp}^{-2\sum_{j=0}^{d+1-i} p^j}$.
\end{itemize}

The key property satisfied by the choices of $\eps$ is captured in the following:
\begin{lemma} For any $p \in \N$, and any $d\in \N, k\in \N$ with $k \leq d$,
  \[ \sum_{i=1}^{k-1}p\eps_i(p,d) \leq {\eps_k(p,d)}^p\text{.} \] \label{lem:approx_cm_progress_helper}
\end{lemma}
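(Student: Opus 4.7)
The plan is to split the argument into the two cases $p = 1$ and $p \geq 2$ corresponding to the two definitions of $\eps_i(p,d)$ given above the lemma. In both cases we assume $\eps \in (0,1]$, which is the only regime where the notion of an $\eps$-approximate fixpoint is meaningful. The strategy in both cases is the same: divide through by the right-hand side so that the desired inequality becomes a claim about a sum of ratios, then bound this sum term-by-term and observe that the total is dominated by a geometric decay.

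\textbf{Case $p = 1$.} First I would substitute $\eps_i(1,d) = \eps/2^{2(d+1-i)}$ on both sides. The factor $\eps$ cancels and the inequality reduces to
\[ \sum_{i=1}^{k-1} \frac{1}{4^{d+1-i}} \leq \frac{1}{4^{d+1-k}}\text{.} \]
Multiplying by $4^{d+1-k}$ and reindexing with $j = k-i$, this becomes $\sum_{j=1}^{k-1} 4^{-j} \leq 1$, which follows from the standard geometric series bound $\sum_{j=1}^{\infty} 4^{-j} = 1/3 \leq 1$.

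\textbf{Case $p \geq 2$.} Here I would set $m = d+1-k$ so the right-hand side is $\eps^{p^{m+1}}(dp)^{-2\sum_{j=1}^{m+1} p^j}$ (using $p \cdot p^{m} = p^{m+1}$ and $p \cdot \sum_{j=0}^{m} p^j = \sum_{j=1}^{m+1} p^j$). Reindexing the sum on the left with $l = d+1-i$ (so $l$ ranges over $m+1, \dotsc, d$), the $l$-th term is $p \cdot \eps^{p^l}(dp)^{-2\sum_{j=0}^{l} p^j}$. The key observation is that each such term is bounded by the $l = m+1$ term, because (i) $\eps \leq 1$ and $p^l \geq p^{m+1}$ give $\eps^{p^l} \leq \eps^{p^{m+1}}$, and (ii) $dp \geq 1$ together with the monotonicity of the exponent give $(dp)^{-2\sum_{j=0}^{l}p^j} \leq (dp)^{-2\sum_{j=0}^{m+1}p^j}$. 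Pulling out the extra $p^0 = 1$ from the geometric sum in the exponent produces an extra factor of $(dp)^{-2}$ relative to the right-hand side, so each term is at most $\eps^{p^{m+1}}(dp)^{-2\sum_{j=1}^{m+1}p^j} \cdot \frac{p}{(dp)^2} = \eps^{p^{m+1}}(dp)^{-2\sum_{j=1}^{m+1}p^j} \cdot \frac{1}{d^2 p}$. Since there are only $k - 1 \leq d - 1$ terms, the total is at most $\frac{k-1}{d^2 p}$ times the right-hand side, and $\frac{k-1}{d^2 p} \leq 1$.

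\textbf{Anticipated obstacle.} There is no deep obstacle; the proof is essentially a routine calculation, but there is real bookkeeping to set up the reindexing so that the geometric sum $\sum_{j=0}^{l} p^j$ in the exponent telescopes correctly against the right-hand side's $\sum_{j=1}^{m+1} p^j$. The main place to be careful is ensuring that we genuinely get the extra negative factor of $(dp)^{-2}$ in each term, which is what makes the sum collapse; this requires noting that the right-hand exponent sum $\sum_{j=1}^{m+1} p^j$ is strictly smaller (by exactly one term, namely $p^0 = 1$) than $\sum_{j=0}^{m+1} p^j$ appearing in the $l = m+1$ summand.
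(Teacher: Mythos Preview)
Your proposal is correct and follows essentially the same approach as the paper: split into the cases $p=1$ and $p\geq 2$, bound every summand by the largest one (the $i=k-1$ term, your $l=m+1$ term), and then absorb the at-most-$d$ summands using the extra factor of $(dp)^{-2}$ that comes from the $p^0$ term in the exponent. The only cosmetic difference is that the paper bounds the number of terms crudely by $d$ (obtaining a final factor $1/(dp)$), whereas you keep track of $k-1$ (obtaining $(k-1)/(d^2p)$); both are at most $1$, so this is immaterial.
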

\begin{proof}
  There are two cases to consider. For $p=1$,
  \[ \sum_{i=1}^{k-1}\eps_i(1,d) = \sum_{i=1}^{k-1} \eps 2^{-2(d+1-i)} < \eps 2^{-2(d+1-k)} = \eps_k(1,d)\text{.} \] The interesting case is where $p \geq 2$. Here we observe that
\begin{align*}
  \sum_{i=1}^{k-1}p\eps_i(p,d)
  &= \sum_{i=1}^{k-1} p\frac{\eps^{p^{d+1-i}}}{\Paren{dp}^{2p^0 + 2p^1 + \dotsb + 2p^{d+1-i}}}\\
  &< \sum_{i=1}^{k-1} p\frac{\eps^{p^{d+1-(k-1)}}}{\Paren{dp}^{2p^0+2p^1+\dotsb + 2p^{d+1-(k-1)}}}\\
  &< dp \frac{\eps^{p^{d+1-(k-1)}}}{\Paren{dp}^{2p^0 + 2p^1 + \dotsb + 2p^{d+1-(k-1)}}}\\
  &= \frac{\eps^{p^{d+1-(k-1)}}}{\Paren{dp}^{2p^1 + 2p^2 + \dotsb + 2p^{d+1-(k-1)} + 1}}\\
  &= \Paren{\frac{\eps^{p^{d+1-k}}}{\Paren{dp}^{2p^0 + 2p^1 + \dotsb + 2p^{d+1-k}}}}^p/\Paren{dp}\\
  &= \Paren{\eps_k(p,d)}/\Paren{dp}\\
  &< \eps_k(p,d)\text{.}
\end{align*}
\end{proof}

We now prove lemmas showing that these choices of $\Paren{\eps_i(p,d)}_{i=1}^d$ ensure that $k-1$-dimensional fixpoints can be used to find $k$-dimensional fixpoints. 

\begin{lemma}[Approximate Fixpoints Give Directions]\label{lem:approx_cm_progress}
  Let $\ss \in \Slice_d$ be a $k$-slice, for some $k \leq d$. Let $f:[0,1]^d\to[0,1]^d$ be a contraction map with respect to $\Norm{\cdot}_p$. Let $x^*$ be the unique fixpoint of $\Restr{f}{\ss}$. Given any $(\ss, \ell_p, k-1)$-approximate fixpoint $v\in [0,1]^d$, either $\Abs{\Delta_k(v)} \leq \eps_k(p,d)$ or $\Delta_k(v) (x^*_k - v_k) > 0$.
\end{lemma}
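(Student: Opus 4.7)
I would prove the lemma by contradiction. Suppose $|\Delta_k(v)| > \eps_k(p,d)$ and $\Delta_k(v)(x^*_k - v_k) \leq 0$. By the symmetric roles of "up" and "down", assume without loss of generality that $\Delta_k(v) > \eps_k(p,d) > 0$ and $v_k \geq x^*_k$. The two structural facts to exploit are: (i) since $v \in \ss$ and $x^* = \restr{x^*}{\ss}$, we have $v_j = x^*_j = s_j$ for every $j \in \fixed(\ss)$; and (ii) since $x^*$ is the unique fixpoint of $\Restr{f}{\ss}$, $f(x^*)_j = x^*_j$ for every $j \in \free(\ss) = [k]$. Applying the contraction hypothesis to $v$ and $x^*$, dropping the non-negative $\fixed(\ss)$-coordinates on the left, and using (i)--(ii) on the remaining terms yields
\begin{equation*}
\sum_{j=1}^{k} \bigl|(v_j - x^*_j) + \Delta_j(v)\bigr|^p \;\leq\; c^p \sum_{j=1}^{k} |v_j - x^*_j|^p.
\end{equation*}

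The plan is now to lower-bound the left side of this inequality enough to contradict it. For the $k$-th term, $v_k - x^*_k \geq 0$ and $\Delta_k(v) > 0$ are both non-negative, so the elementary inequality $(a+b)^p \geq a^p + b^p$ (valid for $a,b \geq 0$ and $p \geq 1$) gives $((v_k - x^*_k)+\Delta_k(v))^p \geq (v_k - x^*_k)^p + \Delta_k(v)^p$. For each $j < k$, convexity of $t \mapsto |t|^p$ gives $|a+b|^p \geq |a|^p + p|a|^{p-2}ab \geq |a|^p - p|a|^{p-1}|b|$, and since $v_j, x^*_j \in [0,1]$ implies $|v_j - x^*_j| \leq 1$, this specializes to $|(v_j - x^*_j) + \Delta_j(v)|^p \geq |v_j - x^*_j|^p - p|\Delta_j(v)|$.

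Subtracting the right side from the left and combining these estimates, together with $c \leq 1$ and $|\Delta_j(v)| \leq \eps_j(p,d)$ for $j < k$, produces
\begin{equation*}
0 \;\geq\; (1 - c^p)\sum_{j=1}^{k}|v_j - x^*_j|^p + \Delta_k(v)^p - p\sum_{j=1}^{k-1}\eps_j(p,d) \;\geq\; \Delta_k(v)^p - p\sum_{j=1}^{k-1}\eps_j(p,d).
\end{equation*}
Since $\Delta_k(v)^p > \eps_k(p,d)^p$ strictly, and Lemma~\ref{lem:approx_cm_progress_helper} gives $p\sum_{j=1}^{k-1}\eps_j(p,d) \leq \eps_k(p,d)^p$, the right-hand side is strictly positive, which is the desired contradiction.

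The main obstacle is the balance between the gain on coordinate $k$ and the loss incurred on coordinates $j < k$: the gain is at least $\Delta_k(v)^p > \eps_k(p,d)^p$, while the loss must be controlled by roughly $p \sum_{j<k}|\Delta_j(v)|$. The bound $|a+b|^p \geq |a|^p - p|a|^{p-1}|b|$, rather than the cruder mean-value bound $||a+b|^p - |a|^p| \leq p(|a|+|b|)^{p-1}|b|$, is essential, because it avoids a spurious $(1+\eps_{k-1})^{p-1}$ factor that the geometric scaling of the $\eps_j(p,d)$ in Lemma~\ref{lem:approx_cm_progress_helper} cannot absorb for large $p$.
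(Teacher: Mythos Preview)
Your proof is correct and follows essentially the same approach as the paper: assume for contradiction that $\Delta_k(v) > \eps_k$ and $v_k \geq x^*_k$, compare $\sum_{j\leq k}|f(v)_j - x^*_j|^p$ with $\sum_{j\leq k}|v_j - x^*_j|^p$, gain at least $\eps_k^p$ on coordinate $k$ via $(a+b)^p \geq a^p + b^p$, lose at most $p\eps_j$ on each coordinate $j<k$, and invoke Lemma~\ref{lem:approx_cm_progress_helper}. The only cosmetic difference is in how the $j<k$ loss is bounded: the paper uses $|f(v)_j - x^*_j| \geq |v_j - x^*_j| - \eps_j$ followed by $a^p - (a-\eps_j)^p \leq p\eps_j$ for $a\in[0,1]$, whereas you use the tangent-line inequality from convexity of $|t|^p$; both yield the identical estimate.
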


\begin{proof}
  By definition, $v$ satisfies $\Abs{\Delta_i(v)} \leq \eps_i$ for all $i \leq k-1$. Assume that $\Abs{\Delta_k(v)} > \eps_k$. Without loss of generality, let $\Delta_k(v) > \eps_k$. Assume towards a contradiction that $x^*_k \leq v_k$. Then
  \begin{align}
    \Norm{\Restr{f}{\ss}(v) - \Restr{x^*}{\ss}}_p^p
    &= \sum_{j=1}^{k} \Abs{f(v)_j - x^*_j}^p\\
    &> \eps_k^p + \Abs{v_k - x^*_k}^p + \sum_{j=1}^{k-1} \Abs{f(v)_j - x^*_j}^p\\
    &\geq \eps_k^p + \Abs{v_k - x^*_k}^p + \sum_{j=1}^{k-1} \Paren{{\Abs{v_j - x^*_j} - \eps_j}}^p\\
    &= \eps_k^p + \sum_{j=1}^{k-1} \Abs{v_k - x^*_k}^p - \sum_{j=1}^{k-1} \Brack{\Abs{v_j - x^*_j}^p - \Paren{{\Abs{v_j - x^*_j} - \eps_j}}^p}\\
    &= \Norm{\Restr{v}{\ss} - \Restr{x^*}{\ss}}_p^p + \eps_k^p - \sum_{j=1}^{k-1} \Brack{\Abs{v_j - x^*_j}^p - \Paren{{\Abs{v_j - x^*_j} - \eps_j}}^p} \\
    &\geq \Norm{\Restr{v}{\ss} - \Restr{x^*}{\ss}}_p^p + \eps_k^p - \sum_{j=1}^{k-1} p\eps_j \label{approx_ub_step}\\
    &> \Norm{\Restr{v}{\ss} - \Restr{x^*}{\ss}}_p^p \text{.} \label{approx_eps_step}\\
  \end{align}
  Line~\ref{approx_ub_step} uses the fact that $a^p - (a - b)^p \leq 1 - (1 - b)^p \leq pb$ for $p > 1$, $a,b\in (0,1)$. Line~\ref{approx_eps_step} follows directly from Lemma~\ref{lem:approx_cm_progress_helper}. Again we have a contradiction since $\Norm{\Restr{f}{\ss}(v) - \Restr{x^*}{\ss}}_1 \geq \Norm{\Restr{v}{\ss} - \Restr{x^*}{\ss}}_1$ and $\Restr{f}{\ss}$ is a contraction map. The lemma follows.
\end{proof}

In order for our algorithms and reductions to produce approximate fixpoints we'll need to ensure that we can find an approximate fixpoint somewhere in our discretization of the unit hypercube. We'll first establish that if we don't have an approximate fixpoint, we have witnesses to our function not being a contraction map.

\begin{lemma}[Two Nearby Opposing Pivots Give Violations]  \label{lem:approx_cm-lp_final_violations}
  Let $\ss \in \Slice_d$ be a $k$-slice, for some $k \leq d$. Let $v^{(\ell)}, v^{(h)} \in [0,1]^d$ be $(\ss, \ell_p, k-1)$-approximate fixpoints of $f$ such that $v^{(h)}_k - v^{(\ell)}_k < \eps_k(p,d)/2$ and both $\Delta_k(v^{(h)}) \geq \eps_k(p,d)$ and $\Delta_k(v^{(\ell)}) \leq -\eps_k(p,d)$. Then $\Norm{f(v^{(h)}) - f(v^{(\ell)}}_p \geq \Norm{v^{(h)} - v^{(\ell)}}_p$, and $v^{(h)},v^{(\ell)}$ witness that $f$ is not a contraction map. 
\end{lemma}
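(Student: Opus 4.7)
The plan is to prove the inequality coordinate by coordinate, showing that the ``gain'' in the $k$th coordinate caused by the two opposing pivot directions outweighs the total ``loss'' in the $k-1$ coordinates below. First I would observe that because $v^{(h)}$ and $v^{(\ell)}$ both lie in the $k$-slice $\ss$, they agree on coordinates $k+1,\dotsc,d$, so these coordinates contribute nothing to $\Norm{v^{(h)} - v^{(\ell)}}_p^p$; the contributions of these coordinates to $\Norm{f(v^{(h)}) - f(v^{(\ell)})}_p^p$ are nonnegative and can only help. Hence it suffices to establish
\[ \sum_{j=1}^k \Abs{f(v^{(h)})_j - f(v^{(\ell)})_j}^p \;\geq\; \sum_{j=1}^k \Abs{v^{(h)}_j - v^{(\ell)}_j}^p. \]

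For the $k$th coordinate, I would compute directly that $f(v^{(h)})_k - f(v^{(\ell)})_k = (v^{(h)}_k - v^{(\ell)}_k) + \Delta_k(v^{(h)}) - \Delta_k(v^{(\ell)}) \geq (v^{(h)}_k - v^{(\ell)}_k) + 2\eps_k(p,d)$. Using the hypothesis $v^{(h)}_k - v^{(\ell)}_k < \eps_k(p,d)/2$ (and the natural interpretation $v^{(h)}_k \geq v^{(\ell)}_k$ suggested by the labels ``high''/``low''), this gives a post-$f$ gap of at least $3\eps_k(p,d)/2$ against a pre-$f$ gap of less than $\eps_k(p,d)/2$, so coordinate $k$ alone contributes a gain of at least $(3\eps_k(p,d)/2)^p - (\eps_k(p,d)/2)^p$ to the left-hand side over the right-hand side. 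For each $j<k$, since both points are $(\ss,\ell_p,k-1)$-approximate fixpoints we have $\Abs{\Delta_j(\cdot)} \leq \eps_j(p,d)$, and the triangle inequality yields $\Abs{f(v^{(h)})_j - f(v^{(\ell)})_j} \geq \Abs{v^{(h)}_j - v^{(\ell)}_j} - 2\eps_j(p,d)$. Applying the elementary bound $a^p - (a-b)^p \leq pb$ for $a\in[0,1]$ and $0 \leq b \leq a$ (with a trivial case analysis when $\Abs{v^{(h)}_j - v^{(\ell)}_j} < 2\eps_j(p,d)$), each coordinate $j<k$ contributes a loss of at most $2p\eps_j(p,d)$.

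The final step is to check the gain absorbs the total loss, i.e.
\[ \Paren{3\eps_k(p,d)/2}^p - \Paren{\eps_k(p,d)/2}^p \;\geq\; 2p\sum_{j=1}^{k-1}\eps_j(p,d). \]
This is where the carefully chosen schedule $\eps_j(p,d)$ from Section~\ref{sec:approx_lemmas} does its work, and is the main technical obstacle. For $p \geq 2$, the left-hand side is at least $2\eps_k(p,d)^p$ because $(3^p-1)/2^p \geq 2$, and Lemma~\ref{lem:approx_cm_progress_helper} already gives $p\sum_{j<k}\eps_j(p,d) \leq \eps_k(p,d)^p$, which closes the case. For $p=1$, the left-hand side simplifies to $\eps_k(1,d)$, and an ad hoc geometric-series estimate on $\eps_j(1,d) = \eps/2^{2(d+1-j)}$ shows $\sum_{j<k}\eps_j(1,d) \leq \eps_k(1,d)/3$, hence $2\sum_{j<k}\eps_j(1,d) \leq 2\eps_k(1,d)/3 < \eps_k(1,d)$. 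Putting the three pieces together gives $\Norm{f(v^{(h)})-f(v^{(\ell)})}_p \geq \Norm{v^{(h)} - v^{(\ell)}}_p$, which (since $c<1$) witnesses that $f$ fails to be $c$-contracting.
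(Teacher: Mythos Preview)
Your proposal is correct and follows essentially the same approach as the paper's proof: separate out the coordinate-$k$ ``gain'' of at least $(3\eps_k/2)^p$ versus a pre-image gap of at most $(\eps_k/2)^p$, bound each coordinate-$j<k$ ``loss'' by $2p\eps_j$ via the triangle inequality and the elementary bound $a^p-(a-b)^p\le pb$, and then close with Lemma~\ref{lem:approx_cm_progress_helper} for $p\ge 2$ and a direct geometric-series estimate for $p=1$. The only cosmetic differences are that the paper works with the slice-restricted norm $\Restr{\cdot}{\ss}$ and lifts to the full norm at the end (whereas you handle coordinates $>k$ upfront), and the paper lower-bounds the $k$th-coordinate contribution by $(|f(v^{(h)})_k-f(v^{(\ell)})_k|-|v^{(h)}_k-v^{(\ell)}_k|)^p$ rather than by the difference of $p$th powers; both routes yield the same $(3\eps_k/2)^p$ target.
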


\begin{proof}
Under the assumptions of the lemma, we have
  \begin{align}
    \Norm{\Restr{f}{\ss}\Paren{v^{(h)}} - \Restr{f}{\ss}\Paren{v^{(\ell)}}}_p^p - \Norm{\Restr{v^{(h)}}{\ss} - \Restr{v^{(\ell)}}{\ss}}_p^p
    &= \sum_{j=1}^{k-1} \Brack{\Abs{f\Paren{v^{(h)}}_j - f\Paren{v^{(\ell)}}_j}^p - \Abs{v^{(h)}_j - v^{(\ell)}_j}^p}\\
    &= \Brack{\Abs{f\Paren{v^{(h)}}_k - f\Paren{v^{(\ell)}}_k}^p - \Abs{v^{(h)}_k - v^{(\ell)}_k}^p} \\
    &\quad\quad+ \sum_{j=1}^{k-1} \Brack{\Abs{f\Paren{v^{(h)}}_j - f\Paren{v^{(\ell)}}_j}^p - \Abs{v^{(h)}_j - v^{(\ell)}_j}^p} \nonumber\\
    &\geq \Brack{\Abs{f\Paren{v^{(h)}}_k - f\Paren{v^{(\ell)}}_k} - \Abs{v^{(h)}_k - v^{(\ell)}_k}}^p\\
    &\quad\quad+ \sum_{j=k+1}^d \Brack{\Paren{\Abs{v^{(h)}_j - v^{(\ell)}_j} - 2\eps_j(p,d)}^p - \Abs{v^{(h)}_j - v^{(\ell)}_j}^p} \nonumber\\
    &\geq \Paren{3\eps_k(p,d)/2}^p - \sum_{j=k+1}^d 2p\eps_j(p,d) \label{approx_ub_step2}\\
    &> 0\text{.} \label{approx_eps_step2}
  \end{align} Here, line~\ref{approx_ub_step2} mirrors line~\ref{approx_ub_step} from Lemma~\ref{lem:approx_cm_progress} (which holds for $\eps_i(p,d) < 1/2$, which is the case here). Line~\ref{approx_eps_step2} follows from Lemma~\ref{lem:approx_cm_progress_helper} when $p \geq 2$ (since $(3/2)^p > 2$ in that case) and from the observation that $\frac{3}{2}\eps_k(1,d) \geq 2\sum_{j=1}^{k-1}\eps_j(1,d)$ for $p=1$.

  Observing that \[\Norm{f\Paren{v^{(h)}} - f\Paren{v^{(\ell)}}}_p \geq \Norm{\Restr{f}{\ss}\Paren{v^{(h)}} - \Restr{f}{\ss}\Paren{v^{(\ell)}}}_p\] completes the proof. 
\end{proof}

The following lemma establishes that when we have two $(\ss, \ell_p, k-1)$-approximate fixpoints on the same $k$-slice sandwiching the true $k$-dimensional fixpoint in a sufficiently small interval, one of the two $(\ss,\ell_p, k-1)$-approximate fixpoints must be an $(\ss, \ell_p, k)$-approximate fixpoint.
\begin{lemma}[One of Two Adjacent Pivots is a Fixpoint]  \label{lem:approx_cm-lp_final}
  Let $\ss \in \Slice_d$ be a $k$-slice, for some $k \leq d$. Let $f$ be a contraction map with respect to $\Norm{\cdot}_p$. Let $x^*$ be the unique fixpoint of $\Restr{f}{\ss}$ and let $v^{(\ell)}, v^{(h)} \in [0,1]^d$ be $(\ss, \ell_p, k-1)$-approximate fixpoints such that $v^{(h)}_k - v^{(\ell)}_k < \eps_k(p,d)/2$ and the unique fixpoint $x^*$ of $\Restr{f}{\ss}$ satisfies
  \[ v^{(\ell)}_k < x_k^* < v^{(h)}_k\text{.} \] Then either $v^{(h)}$ or $v^{(\ell)}$ is an $(\ss, \ell_p, k)$-approximate fixpoint of $f$. 
\end{lemma}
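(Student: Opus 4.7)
The approach is proof by contradiction. I will assume that neither $v^{(h)}$ nor $v^{(\ell)}$ is an $(\ss, \ell_p, k)$-approximate fixpoint and show that $\Restr{f}{\ss}$ must then expand the pair $(v^{(h)}, v^{(\ell)})$, contradicting the fact that $\Restr{f}{\ss}$ is a contraction map with constant $c<1$ by Lemma~\ref{lem:cm1}. Concretely, under this assumption $\Abs{\Delta_k(v^{(h)})} > \eps_k(p,d)$ and $\Abs{\Delta_k(v^{(\ell)})} > \eps_k(p,d)$, so Lemma~\ref{lem:approx_cm_progress} (which applies to each of $v^{(h)}, v^{(\ell)}$ since both are $(\ss,\ell_p,k-1)$-approximate) forces the sign of $\Delta_k$ at each point to match the direction to $x^*_k$. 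The hypothesis $v^{(\ell)}_k < x^*_k < v^{(h)}_k$ then pins down $\Delta_k(v^{(h)}) < -\eps_k(p,d)$ and $\Delta_k(v^{(\ell)}) > \eps_k(p,d)$: the map crosses over itself in coordinate $k$.

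Next, I would quantify the resulting expansion in coordinate $k$. From the inequalities above and $v^{(h)}_k - v^{(\ell)}_k < \eps_k(p,d)/2$, a direct computation gives $f(v^{(\ell)})_k - f(v^{(h)})_k > 2\eps_k(p,d) - \eps_k(p,d)/2 = 3\eps_k(p,d)/2$, so $\Abs{f(v^{(h)})_k - f(v^{(\ell)})_k}^p > (3\eps_k(p,d)/2)^p$ while $\Abs{v^{(h)}_k - v^{(\ell)}_k}^p < (\eps_k(p,d)/2)^p$. For the remaining free coordinates $j < k$, the $(\ss, \ell_p, k-1)$-approximation guarantee combined with the triangle inequality yields $\Abs{f(v^{(h)})_j - f(v^{(\ell)})_j} \ge \Abs{v^{(h)}_j - v^{(\ell)}_j} - 2\eps_j(p,d)$, and then the mean value estimate $a^p - (a-b)^p \le pb$ for $a,b \in [0,1]$ gives $\Abs{f(v^{(h)})_j - f(v^{(\ell)})_j}^p - \Abs{v^{(h)}_j - v^{(\ell)}_j}^p \ge -2p\eps_j(p,d)$. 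This is precisely the same bookkeeping used in the proof of Lemma~\ref{lem:approx_cm-lp_final_violations}, only with the directions in coordinate $k$ reversed because here the fixpoint lies between the two pivots.

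Summing the coordinatewise contributions over $\free(\ss) = [k]$ gives
\[ \Norm{\Restr{f}{\ss}(v^{(h)}) - \Restr{f}{\ss}(v^{(\ell)})}_p^p - \Norm{\Restr{v^{(h)}}{\ss} - \Restr{v^{(\ell)}}{\ss}}_p^p > (3\eps_k(p,d)/2)^p - (\eps_k(p,d)/2)^p - \sum_{j=1}^{k-1} 2p\,\eps_j(p,d)\text{.} \]
I would then verify that the right-hand side is strictly positive. For $p \ge 2$ one has $(3/2)^p - (1/2)^p \ge 2$, while Lemma~\ref{lem:approx_cm_progress_helper} supplies $\sum_{j=1}^{k-1} p\,\eps_j(p,d) < \eps_k(p,d)^p/(dp)$, leaving ample slack. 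For $p=1$, the closed form $\eps_i(1,d) = \eps/2^{2(d+1-i)}$ yields $2\sum_{j<k}\eps_j(1,d) < \eps_k(1,d)$ by a direct geometric-series calculation. Either way, $\Norm{\Restr{f}{\ss}(v^{(h)}) - \Restr{f}{\ss}(v^{(\ell)})}_p > \Norm{\Restr{v^{(h)}}{\ss} - \Restr{v^{(\ell)}}{\ss}}_p$, contradicting contraction of $\Restr{f}{\ss}$ and completing the proof.

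I do not expect any step here to present a serious obstacle, as the structural ingredients (a direction lemma and an expansion estimate) are already in place from Lemmas~\ref{lem:approx_cm_progress}, \ref{lem:approx_cm_progress_helper}, and~\ref{lem:approx_cm-lp_final_violations}; the only point requiring care is double-checking the uniform strict-positivity of the slack term across the $p=1$ and $p\ge 2$ cases, which is really the reason the sequences $\eps_i(p,d)$ were defined with the specific rates that they were.
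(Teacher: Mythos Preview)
Your proof is correct and follows essentially the same approach as the paper: use Lemma~\ref{lem:approx_cm_progress} to pin down the signs of $\Delta_k$ at the two pivots, then show that those opposing signs force expansion in coordinate $k$ that outweighs the slack in coordinates $j<k$. The only difference is packaging: the paper invokes Lemma~\ref{lem:approx_cm-lp_final_violations} directly (with the roles of $v^{(h)}$ and $v^{(\ell)}$ swapped, which is harmless by symmetry), whereas you re-derive that expansion estimate inline.
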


\begin{proof}
  By the contrapositive of Lemma~\ref{lem:approx_cm_progress}, $\Delta_k\Paren{v^{(h)}} < \eps_k(p,d)$ and $\Delta_k\Paren{v^{(\ell)}} > -\eps_k(p,d)$. If either $\Delta_k\Paren{v^{(h)}} > -\eps_k(p,d)$ or $\Delta_k\Paren{v^{(\ell)}} < \eps_k(p,d)$ we're done since $v^{(h)}$ or $v^{(\ell)}$ is an $(\ss, \ell_p, k)$-approximate fixpoint, respectively. The only way we can fail to have a fixpoint is if both $\Delta_k\Paren{v^{(h)}} \leq -\eps_k(p,d)$ and $\Delta_k\Paren{v^{(\ell)}} \geq \eps_k(p,d)$. By Lemma~\ref{lem:approx_cm-lp_final_violations}, this cannot happen when $f$ is a contraction map, and so one of the two points must be an $(\ss, \ell_p, k)$-approximate fixpoint.
\end{proof}

Finally, by choosing $\eps_k(p,d)$ as above, an $(\ss, \ell_p, d)$-approximate fixpoint is an $\eps$-approximate fixpoint:
\begin{lemma}
  Any $(\ss, \ell_p, d)$-approximate fixpoint $x$ is an $\eps$-approximate fixpoint. \label{lem:slice_approx_to_approx}
\end{lemma}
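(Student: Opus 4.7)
The definition of an $(\ss,\ell_p,d)$-approximate fixpoint $x$ gives $|\Delta_j(x)| \le \eps_j(p,d)$ for every $j \in [d]$, and by definition of $\Delta_j$ we have $\|f(x)-x\|_p^p = \sum_{j=1}^d |\Delta_j(x)|^p$. So it suffices to show that the sequence $(\eps_j(p,d))_{j=1}^d$ was chosen small enough that
\[
\sum_{j=1}^d \eps_j(p,d)^p \;\le\; \eps^p.
\]
The plan is to verify this inequality by direct computation in the two regimes in which $\eps_j(p,d)$ is defined, namely $p=1$ and $2 \le p < \infty$.

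For $p=1$, I would substitute $\eps_j(1,d) = \eps/2^{2(d+1-j)}$ and re-index with $k = d+1-j$, which turns the sum into the geometric series $\eps \sum_{k=1}^{d} 4^{-k} < \eps/3 < \eps$. This immediately gives $\|f(x)-x\|_1 \le \eps$, settling this case.

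For $2 \le p < \infty$, I first note that $\eps_j(p,d)$ is increasing in $j$: as $j$ grows from $1$ to $d$, both the exponent $p^{d+1-j}$ on $\eps$ and the (negative) exponent $-2\sum_{i=0}^{d+1-j}p^i$ on $(dp)$ decrease in absolute value, and $\eps<1$, so each factor grows with $j$. Hence the largest term in the sum is $\eps_d(p,d)^p$, and
\[
\sum_{j=1}^d \eps_j(p,d)^p \;\le\; d\cdot \eps_d(p,d)^p \;=\; d\cdot \eps^{p^2}\,(dp)^{-2p-2p^2}.
\]
To conclude it remains to show $d\cdot \eps^{p^2-p}\,(dp)^{-2p-2p^2} \le 1$, which is immediate from $\eps<1$ (so $\eps^{p^2-p}\le 1$) together with $(dp)^{2p+2p^2} \ge d$ (which holds since $dp\ge d$ and $2p+2p^2\ge 1$). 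Thus $\sum_j \eps_j(p,d)^p \le \eps^p$, which gives $\|f(x)-x\|_p \le \eps$.

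The step I expect to be the mildest obstacle is simply the bookkeeping on exponents in the $p\ge 2$ case: one must be careful that the tower $\eps^{p^{d+1-j}}$ really does dominate the polynomial $(dp)$-factor so that monotonicity in $j$ goes the right way, and that raising $\eps_d(p,d)$ to the $p$th power still leaves room for the factor $d$ from summing $d$ terms. Both reduce to elementary algebraic manipulations, and no new properties of $f$ beyond $\eps<1$ are needed, so the proof is essentially a verification that the definitions of $\eps_j(p,d)$ given before Lemma~\ref{lem:approx_cm_progress_helper} were chosen with enough slack at the top level $j=d$.
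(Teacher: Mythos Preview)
Your proposal is correct and follows the same approach as the paper: substitute the definitions of $\eps_j(p,d)$ into $\sum_j |\Delta_j(x)|^p$ and verify the sum is at most $\eps^p$, treating $p=1$ and $p\ge 2$ separately. Your $p\ge 2$ case is in fact more carefully argued than the paper's own proof, which essentially just writes out the sum after re-indexing and asserts it is less than $\eps^p$; your monotonicity-plus-largest-term bound makes that step explicit.
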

\begin{proof}
  When $p = 1$
  \begin{align*}
    \Norm{f(v) - v}
    &= \sum_{i=1}^d \Abs{\Delta_i(v)} \\
    &\leq \sum_{i=1}^d \eps_i(p,d) \\
    &= \sum_{i=1}^d \eps/2^{2(d+1-i)} \\
    &= \sum_{i=1}^d \eps/2^{2i} \\
    &< \eps\text{.}
  \end{align*}
  When $p \geq 2$
  \begin{align*}
    \Norm{f(v) - v}_p^p
    &= \sum_{i=1}^d \Abs{\Delta_i(v)}^p \\
    &\leq \sum_{i=1}^d \eps^p_i(p,d) \\
    &= \sum_{i=1}^d \eps^{p^{(d+1-i)}} \Paren{dp}^{-2\sum_{j=0}^{d+1-i} p^j}\\
    &= \sum_{i=1}^d \eps^{p^i} \Paren{dp}^{-2\sum_{j=0}^i p^j}\\
    &< \eps^p\text{.}
  \end{align*}
  \end{proof}

\section{Proofs for Section~\ref{sec:lcm2eopl}: \LCM to \OPDC}
\label{app:plcontraction2opdc}

\subsection{Proof of Lemma~\ref{lem:points2}}
\label{app:points2}

We'll define the bit-length of an integer $n \in \Z$, denoted $b(n)$, by $b(n) \triangleq \Ceil{\log_2 n}$. We'll extend the definition to the rationals by defining the bit-length for $x\in \Q$ as the minimum number of bits needed to represent the numberator and denominator of some representation of $x$ as a ratio of integers:
\[ b(x) \triangleq \min_{\substack{p,q \in \Z\\ x = p/q}}\Paren{b(p) + b(q)}\text{.} \] We'll extend this notion of bit-length to matrices by defining the bit-length $b(M)$ of a matrix $M \in \Q^{m\times n}$ by $b(M) \triangleq \max_{i,j}b(M_{ij})$ and to vectors by defining $b(v) \triangleq \max_{i} b(v_i)$ for $v\in \Q^n$.

\paragraph{\LinearFIXP circuits and LCPs.}
The goal of this proof is to find a tuple $(k_1, k_2, \dots, k_d)$ such that the
lemma holds. 
We utilize a lemma from ~\cite{mehta} which asserts that
every \LinearFIXP circuit can be transformed in polynomial time into an LCP with bounded bit-length,
such that solutions to the LCP capture exactly the fixpoints of the circuit.

For any \LinearFIXP circuit $C:[0,1]^d\to [0,1]^d$ with gates $g_1,\dotsc,g_m$ and constants $\zeta_1,\dotsc,\zeta_q$, we'll define the size of $C$ by
\[ \size(C) \triangleq d + m + \sum_{i=1}^q b(\zeta_i)\text{.} \] 

\begin{lemma}[\cite{mehta}]
\label{lem:ruta}
Let $C:[0,1]^d \to [0,1]^d$ be a \LinearFIXP circuit. We can produce in polynomial time an LCP
defined by an $n \times n$ matrix $M_C$ and $n$-dimensional vector $\qq_C$ for some $n$ with $d \leq n \leq \size(C)$ such that there is a bijection between solutions of the LCP $(M_C, \qq_C)$ and fixpoints of $C$. Moreover, to obtain a fixpoint $x$ of $C$ from a solution $\yy$ to the LCP $(M_C, \qq_C)$, we can just set $x = (\yy_1,\dotsc, \yy_d)$. Furthermore, $b(M_C)$ and $b(\qq_C)$ are both at most $O(n\times\size(C))$.
\end{lemma}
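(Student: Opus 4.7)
\textbf{Proof proposal for Lemma~\ref{lem:ruta}.}

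The plan is to construct the LCP by gate-by-gate simulation of the \LinearFIXP circuit $C$, introducing one (or a constant number of) LCP variables per gate, plus $d$ variables representing the fixpoint coordinates. Let the gates of $C$ be $g_1,\dots,g_m$ in topological order. First I would introduce a variable $\yy_g$ for each gate $g$, intended to equal the value computed by that gate at the fixpoint, and identify the first $d$ coordinates of $\yy$ with the inputs $x_1,\dots,x_d$ of $C$. The fixpoint condition $C(x)=x$ will be enforced by adding, for each output coordinate $i\in[d]$, the equation $\yy_i = \yy_{g_i^{\text{out}}}$, where $g_i^{\text{out}}$ is the gate that produces coordinate $i$ of $C$'s output. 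Boundary constraints $0\le \yy_i\le 1$ for $i\in[d]$ will be built in through the non-negativity of $\yy$ together with an explicit slack $1-\yy_i\ge 0$.

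Next I would encode the semantics of each gate in LCP form, which is the heart of the construction. For a linear gate ($+$, $-$, $\mz$), the constraint is an affine equation $\yy_g = L(\yy_{g'},\yy_{g''})$; this is encoded as an equality in the LCP by using a pair of complementary slack variables $w^+_g, w^-_g\ge 0$ with $w^+_g - w^-_g = \yy_g - L$ and a suitable companion $y$-variable whose complementarity with both slacks forces $w^+_g=w^-_g=0$. For a $\max$ gate $\yy_g = \max(\yy_{g'},\yy_{g''})$, I would use the classical encoding: introduce slacks $w_g^{(1)} = \yy_g - \yy_{g'}\ge 0$ and $w_g^{(2)} = \yy_g - \yy_{g''}\ge 0$ together with a complementary pair $y_g^{(1)}, y_g^{(2)}\ge 0$ satisfying $y_g^{(1)}+y_g^{(2)}=1$ and $y_g^{(1)}w_g^{(1)}=y_g^{(2)}w_g^{(2)}=0$, which forces $\yy_g$ to equal the larger of the two inputs at any LCP solution. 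The $\min$ gate is handled symmetrically. Assembling all these constraints yields an LCP $(M_C,\qq_C)$ of dimension $n$ bounded by a constant times the number of gates plus $d$, hence $n\le \size(C)$, and the construction is clearly polynomial-time.

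To verify the bijection, I would argue both directions: (a) given a fixpoint $x^*$ of $C$, setting $\yy_g$ to the value computed at gate $g$ when the circuit is evaluated on $x^*$ satisfies all of the encoded gate constraints and the fixpoint equation, giving an LCP solution whose first $d$ coordinates are $x^*$; (b) conversely, given any LCP solution $\yy$, one shows inductively along the topological order that $\yy_g$ equals the value that gate $g$ would output on input $(\yy_1,\dots,\yy_d)$, so the fixpoint equation in the LCP forces $C(\yy_1,\dots,\yy_d) = (\yy_1,\dots,\yy_d)$. The map $\yy\mapsto(\yy_1,\dots,\yy_d)$ is therefore a bijection between LCP solutions and fixpoints of $C$.

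The main obstacle I anticipate is the bit-length bookkeeping. Each row of $M_C$ is either (i) a single linear-gate equation whose coefficients are $\pm 1$ or a circuit constant $\zeta_i$, so of bit-length $O(\size(C))$, or (ii) a $\max/\min$ slack equation with $\pm 1$ coefficients. Naively, this gives $b(M_C),b(\qq_C)=O(\size(C))$; however, if one needs to clear denominators across rows (e.g.\ to normalize constants so that the complementarity structure is aligned with a standard LCP presentation), the bit-lengths can accumulate multiplicatively. To control this I would multiply each row separately by the lcm of its denominators, which bounds each row's bit-length by $\sum_i b(\zeta_i) \le \size(C)$, and then bound the total bit-length of the matrix entrywise by $O(n\cdot \size(C))$, matching the claim. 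The remaining verification is routine: counting variables to confirm $n\le\size(C)$ and rechecking that the complementarity pattern I introduce for the max/min and equality encodings is sound.
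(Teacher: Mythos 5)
You should first note that the paper does not prove this statement at all: Lemma~\ref{lem:ruta} is imported verbatim from~\cite{mehta}, and the surrounding text (Observations~\ref{obs:m} and~\ref{obs:q}) only records properties of that external construction. So there is no internal proof to compare against; your gate-by-gate simulation is the natural route and almost certainly the same strategy in spirit as the cited construction. However, as written your sketch has genuine gaps, the most serious being the bijection claim. Your $\max$ gadget with multipliers $y_g^{(1)}+y_g^{(2)}=1$ and complementarity against the two slacks $\yy_g-\yy_{g'}$, $\yy_g-\yy_{g''}$ admits spurious solutions whenever the two arguments tie: then both slacks vanish and $(y_g^{(1)},y_g^{(2)})$ can be any point of the simplex, so one fixpoint of $C$ corresponds to infinitely many LCP solutions. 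The same determinacy problem afflicts your equality gadget: a companion variable paired with a slack row that is forced to be identically zero is left completely unconstrained by complementarity, again destroying bijectivity. The standard repair, which also matches the quantitative claims of the lemma, is to rewrite $\max(a,b)=a+\max(0,b-a)$, eliminate the purely linear gates by substitution, and encode each $\max(0,z)$ with a single new variable $t$ paired with the row $w=t-z$: feasibility and $t\,(t-z)=0$ pin down $t=\max(0,z)$ uniquely, with no extra degrees of freedom even at $z=0$.

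Two further points need attention. First, intermediate gate values of a \LinearFIXP{} circuit need not lie in $[0,1]$ (subtraction can make them negative), while LCP variables must be nonnegative; identifying gate values with $\yy$-coordinates therefore requires a shift or a positive/negative split, and the split must itself be made unique (via complementarity between the two parts) or the bijection fails again. Second, your variable count does not give $n\le\size(C)$: with two or three auxiliary variables per gate you can have $n\approx d+3m>\size(C)$ when the constants contribute few bits. Using one variable per $\max$/$\min$ gate only (linear gates substituted away) restores $d\le n\le\size(C)$, and it is exactly this substitution—which composes the constants $\zeta_i$ along paths of the circuit—that explains why the cited bound on entries is $O(n\cdot\size(C))$ rather than the $O(\size(C))$ you would get entrywise without substitution. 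Your bit-length bookkeeping is otherwise fine, but it should be tied to whichever of these two presentations of the matrix you actually commit to.
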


Crucially the construction interacts nicely with fixing inputs; if $C'$ denotes a circuit where one of the inputs of $C$ is fixed to
be some number $x$, we can bound the bit-length of $M_{C'}$ and $\qq_{C'}$ in terms of the bit-lengths of $M_C$, $\qq_C$, and $x$.
\begin{observation} \label{obs:m}
$b(M_{C'}) \leq b(M_{C})$.
\end{observation}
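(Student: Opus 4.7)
The plan is to argue directly from the structure of the LCP construction in Lemma~\ref{lem:ruta}. In that construction, each gate of the \LinearFIXP circuit $C$ contributes a small set of linear constraints expressing its input/output relationship, with one LCP variable per wire (or per gate). For a $+$, $-$, $\min$, $\max$, or $\mz$ gate, the coefficients appearing in these constraints are drawn from $\{-1, 0, +1\}$ together with the constant multiplier of any $\mz$ gate; these coefficients populate the rows of $M_C$. The external inputs to $C$, by contrast, enter the LCP only as constants on the right-hand side of the corresponding constraints, and so they appear in $\qq_C$ rather than in $M_C$.

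Given this, I would show that $M_{C'}$ is obtained from $M_C$ by deleting the row and column corresponding to the input variable that is being fixed to the value $x$. The gate structure of $C'$ is identical to that of $C$: we have simply replaced one input wire by a constant. Every constraint in $M_C\yy + \qq_C$ that previously used the fixed input as a variable now uses it as a constant, and hence that contribution migrates from $M_{C'}$ into $\qq_{C'}$. All other entries of $M_{C'}$ agree with the corresponding entries of $M_C$.

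From this it follows immediately that every entry of $M_{C'}$ is an entry of $M_C$, and therefore
\[
b(M_{C'}) \;=\; \max_{i,j} b\bigl((M_{C'})_{ij}\bigr) \;\leq\; \max_{i,j} b\bigl((M_C)_{ij}\bigr) \;=\; b(M_C),
\]
which is the claim of the observation. The only real work is in verifying the structural claim about the construction of Lemma~\ref{lem:ruta}, namely that external inputs of $C$ influence only $\qq_C$; this can be checked gate-by-gate from the definitions of the $\max, \min, +, -, \mz$ gates, and is the main (though routine) obstacle.
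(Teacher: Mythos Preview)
Your proposal is correct and matches the paper's approach: the paper states this as an observation without formal proof, relying implicitly on exactly the structural fact you spell out---that in the construction of Lemma~\ref{lem:ruta} the external inputs contribute only to $\qq_C$, so fixing one to a constant $x$ moves its column's contribution into $\qq_{C'}$ and leaves $M_{C'}$ as a submatrix of $M_C$. Your write-up simply makes explicit what the paper leaves to the reader.
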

\begin{observation} \label{obs:q}
  $b(\qq_{C'}) \leq \max\Set{b(\qq_C), b(M_C) + b(x)} + 1$.
\end{observation}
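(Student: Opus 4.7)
The plan is to trace through the construction of $(M_C, \qq_C)$ promised by Lemma~\ref{lem:ruta} and read off exactly how fixing a single input coordinate modifies the constant vector, and then bound the resulting bit-length. The starting observation is that in the Mehta construction each gate of the \LinearFIXP circuit $C$ is encoded by adding one or more LCP variables and rows, with the input coordinates of $C$ appearing directly among the first $d$ entries of $\yy$ (this is what makes the recovery rule $x = (y_1, \dotsc, y_d)$ in Lemma~\ref{lem:ruta} work). The multiplicative constants $\zeta_i$ of the $\mz$ gates of $C$ appear as entries of $M_C$, not $\qq_C$, so $M_C$ and $\qq_C$ share a common denominator structure that depends only on $C$ itself and is independent of which input is being fixed.

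Next I would describe what happens when the $j$-th input is pinned to the value $x$. The standard substitution rule for LCPs replaces $y_j$ by $x$ throughout the system: the $j$-th row and column of $M_C$ are dropped, and for every remaining row $i$ the term $M_{ij} y_j$ turns into the constant $M_{ij}\, x$, which gets absorbed into the $i$-th entry of the constant vector. Thus $M_{C'}$ is literally a principal submatrix of $M_C$ (immediately giving Observation~\ref{obs:m}) and each new entry of the constant vector has the form $(\qq_{C'})_i = \qq_i + M_{ij}\, x$.

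The bit-length bound then reduces to two elementary estimates. For the product, multiplicativity of the chosen bit-length under the representation $p/q$ gives $b(M_{ij}\, x) \leq b(M_{ij}) + b(x) \leq b(M_C) + b(x)$. For the sum, the denominator-alignment property mentioned above lets us write $\qq_i$ and $M_{ij}\, x$ over a common denominator whose bit-length is already accounted for in $\max\{b(\qq_i), b(M_{ij}\, x)\}$, so adding the two numerators can grow the representation by at most one bit. Combining these yields
\[ b((\qq_{C'})_i) \;\leq\; \max\{b(\qq_i),\, b(M_{ij}\, x)\} + 1 \;\leq\; \max\{b(\qq_C),\, b(M_C) + b(x)\} + 1, \]
which is the required inequality.

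The main obstacle is the denominator-alignment step: for two completely generic rationals of bit-lengths $a$ and $b$ the sum has bit-length only at most $a + b + 1$, not $\max\{a,b\}+1$, so the clean $\max + 1$ bound genuinely relies on the specific form of the Mehta construction. Verifying this amounts to checking (or, if necessary, lightly reformulating the construction to enforce) that $M_C$ and $\qq_C$ can be written with denominators coming from a common global scaling factor built out of the $\zeta_i$'s, after which the addition step becomes just integer addition and the $+1$ bound is immediate.
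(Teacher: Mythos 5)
The paper never actually proves Observation~\ref{obs:q}: it is asserted as a remark on the construction behind Lemma~\ref{lem:ruta}, and your mechanism-level reading --- fixing the $j$-th input turns the term $M_{\cdot j}y_j$ into the constant $M_{\cdot j}\,x$, which is folded into the constant vector, so $(\qq_{C'})_i = (\qq_C)_i + M_{ij}x$ --- is exactly the intended justification (and it gives Observation~\ref{obs:m} for free). You have also correctly isolated the only non-trivial point: for arbitrary rationals the claimed bound is simply false. For instance, with $(\qq_C)_i = 1/3$, $M_{ij}=1$, $x=1/5$ the sum is $8/15$, whose bit-length is $7$, while $\max\Set{b(\qq_C), b(M_C)+b(x)}+1 = 4$. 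So any proof must use structure specific to the LCP produced from the circuit, not generic bit-length arithmetic; flagging this is the right instinct.

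However, the fix you sketch does not close the gap. Writing $M_C$ and $\qq_C$ over a common denominator $D$ built from the $\zeta_i$'s gives, for $x=u/v$, the entry $(\qq_{C'})_i = (c_i v + m_{ij}u)/(Dv)$, and in the branch where $c_i v$ dominates the numerator this representation costs roughly $b(\qq_C) + 2b(x)$ bits, not $b(\qq_C)$: the denominator of $x$ is external to the circuit and is not absorbed by a scaling factor built from the $\zeta_i$'s, so ``integer addition plus one bit'' is not what you actually get. To obtain the stated $\max\{\cdot\}+1$ bound one needs something stronger --- e.g.\ that after multiplication by $x$ the two summands have literally the same denominator, or that the $b(M_C)+b(x)$ branch always dominates for the entries the construction of~\cite{mehta} actually produces --- and verifying this is precisely the missing step. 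The exact $\max$ form matters downstream: in Appendix~\ref{app:points2} the recursion $b(\qq^{(i)}) \le \max\Set{b(\qq^{(i+1)}), b(M)+b(x_{i+1})}+1$ is what keeps the $\kappa_i$ growing only linearly in $d$, whereas the weaker additive bound $b(\qq_C)+b(M_C)+b(x)+1$ that your argument does deliver would compound through the recursion and inflate the grid parameters. So the proposal identifies the right obstacle but leaves the statement unproven at that step.
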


In other words, the bit-length of $M_{C'}$ does not
depend on $x$, and is in fact at most the bit-length of $M_{C}$, and the bit-length of $\qq_{C'}$ is bounded by the worse of the bit-lengths of $\qq_C$ or the sum of the bit-lengths of $M_C$ and $x$ plus an additional bit.

\paragraph{\bf Bounding the bit-length of a solution of an LCP.}
We now prove two technical lemmas about the bit-length of any solution to an
LCP. We begin with the following lemma regarding the bit-length of a matrix
inverse.

\begin{lemma}
\label{lem:Ainverse}
Let $A \in \Z^{n\times n}$ be a square matrix of full rank, and let $B = \max_{ij} \Abs{A_{ij}}$ be the largest absolute value of an entry of $A$. Then $b(A^{-1}) \leq 2n \log B + n\log n$. 
\end{lemma}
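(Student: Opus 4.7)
}
The plan is to apply Cramer's rule to write each entry of $A^{-1}$ as a ratio of determinants, and then bound both the numerator and denominator via Hadamard's inequality.

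First, I would recall that Cramer's rule gives
\[
(A^{-1})_{ij} \;=\; \frac{(-1)^{i+j}\,\det(A^{(ji)})}{\det(A)},
\]
where $A^{(ji)}$ denotes the $(n-1)\times(n-1)$ submatrix of $A$ obtained by deleting row $j$ and column $i$. Since $A$ has integer entries and is of full rank, both numerator and denominator are integers, with $|\det(A)| \ge 1$. Hence the bit-length $b((A^{-1})_{ij})$ is bounded by $b(\det(A^{(ji)})) + b(\det(A))$.

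Next I would bound the two determinants uniformly using Hadamard's inequality: for any real matrix $C$ with rows $c_1,\dots,c_k$,
\[
|\det(C)| \;\le\; \prod_{\ell=1}^{k} \|c_\ell\|_2.
\]
For the matrix $A$ (and similarly $A^{(ji)}$), every row has at most $n$ entries, each of absolute value at most $B$, so $\|c_\ell\|_2 \le \sqrt{n}\,B$. Therefore
\[
|\det(A)| \;\le\; (\sqrt{n}\,B)^n \;=\; n^{n/2} B^n,
\qquad
|\det(A^{(ji)})| \;\le\; (\sqrt{n}\,B)^{n-1} \;\le\; n^{n/2} B^n.
\]
Taking logarithms, each determinant has bit-length at most $\tfrac{n}{2}\log n + n\log B$.

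Finally, summing the bit-lengths of numerator and denominator yields
\[
b\bigl((A^{-1})_{ij}\bigr) \;\le\; 2\bigl(\tfrac{n}{2}\log n + n\log B\bigr) \;=\; n\log n + 2n\log B,
\]
which is exactly the claimed bound (taking the maximum over $i,j$ to get $b(A^{-1})$). The argument is essentially routine; the only thing to be careful about is the choice of representation for $(A^{-1})_{ij}$ as a ratio of integers (the Cramer's rule expression), which is what makes the denominator $|\det(A)| \ge 1$ give a clean bound and lets us avoid any reduction-to-lowest-terms complications. I do not foresee a substantive obstacle.
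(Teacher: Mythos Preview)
Your proposal is correct and follows essentially the same approach as the paper's own proof: write each entry of $A^{-1}$ via the adjugate (Cramer's rule) as a ratio of integer determinants, bound both by Hadamard's inequality $|\det| \le n^{n/2}B^n$, and sum the two bit-length bounds. The paper phrases it in terms of the adjugate rather than Cramer's rule, but the argument is identical.
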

\begin{proof}
We have $A^{-1} = \frac{1}{\det(A)}\adj(A)$. Each entry of $\adj(A)$ is a cofactor of $A$, each of which is a (possibly negated) determinant of an $(n-1)\times (n-1)$ submatrix of $A$. It is a well-known corollary of Hadamard's inequality that $\Abs{\det(A)} \leq B^n n^{n/2}$. The same bound obtains for each submatrix. 
We can write entry $A^{-1}_{ij}$ as $p/q$ where $p = \adj(A)_{ij}$ and $q = \det(A)$ are both integers. We have $b(p),b(q) \leq n\log B + (n/2) \log n$. The lemma follows immediately.
\end{proof}

We now use this to prove the following bound on the bit-length of a solution
of an LCP.

\begin{lemma}
\label{lem:lcpsize}
Let $M \in \Q^{n\times n}$ and $\qq \in \Q^{n}$. Let $\yy$ be a solution to the LCP $(M, \qq)$. Then \[
b(\yy) \leq (5n+2)\log n + (4n+1)b(M) + n + b(\qq)\text{.} \]
\end{lemma}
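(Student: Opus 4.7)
My plan is to exploit the complementarity structure of the LCP to reduce to a square linear system, and then apply Cramer's rule together with the same Hadamard-style determinant bounds that underlie Lemma~\ref{lem:Ainverse}. Let $\alpha = \{i : y_i > 0\}$ denote the set of positive coordinates of $\yy$; by the complementarity condition $y_i w_i = 0$ we have $w_i = 0$ for $i \in \alpha$, so that $\yy_\alpha$ is the unique solution of the linear subsystem $M_{\alpha\alpha}\yy_\alpha = -\qq_\alpha$, while $y_j = 0$ for $j \notin \alpha$. Breaking any degeneracy by the same lexicographic perturbation used elsewhere in our \PLCP reduction, we may assume $M_{\alpha\alpha}$ is non-singular, and it then suffices to bound $b(y_i)$ for $i \in \alpha$.

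To bring integer-matrix bounds to bear, I would next clear denominators row by row. For each $i \in \alpha$, let $D_i$ be the least common multiple of the denominators of the entries of row $i$ of $M_{\alpha\alpha}$ together with the denominator of $q_i$; scaling row $i$ of the linear system by $D_i$ preserves its solution. A crude LCM bound gives $|D_i| \leq 2^{nb(M)+b(\qq)}$, so the resulting integer system $\tilde{M}\yy_\alpha = -\tilde{\qq}$ satisfies $|\tilde{M}_{ij}| \leq 2^{(n+1)b(M)+b(\qq)}$ and $|\tilde{q}_i| \leq 2^{nb(M)+2b(\qq)}$.

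Cramer's rule now writes $y_i = \det(\tilde{M}^{(i)})/\det(\tilde{M})$, where $\tilde{M}^{(i)}$ is $\tilde{M}$ with its $i$-th column replaced by $-\tilde{\qq}$. Both are integer determinants, and Hadamard's inequality (as in the proof of Lemma~\ref{lem:Ainverse}) bounds the bit-length of each by $(|\alpha|/2)\log|\alpha| + |\alpha|\log(\max|\cdot|)$ applied to the corresponding matrix. Writing $y_i$ as a ratio of two integers, summing the two bit-length bounds, substituting the entry-size bounds for $\tilde{M}$ and $\tilde{\qq}$, and using $|\alpha| \leq n$ yields an upper bound of the desired shape; after regrouping, the logarithmic terms collect into $(5n+2)\log n$, the linear $n$ term absorbs small rounding constants, and the $b(M)$ and $b(\qq)$ coefficients take the claimed values.

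The main obstacle will be the bookkeeping required to extract the claimed linear coefficient $(4n+1)$ on $b(M)$ and the coefficient $1$ on $b(\qq)$: a direct application of Hadamard after row-scaling naively produces a quadratic $n^2 b(M)$ term, since the scaled entries already have bit-length linear in $n$ and Hadamard multiplies this by another factor of $n$. Achieving the stated linear bound requires exploiting the fact that the scaling factors $\prod_i D_i$ enter $\det(\tilde{M}^{(i)})$ and $\det(\tilde{M})$ symmetrically and cancel in the ratio, so that $y_i$ is controlled by the rational determinants of $M^{(i)}$ and $M_{\alpha\alpha}$ themselves rather than by the inflated scaled ones; together with the observation that $\qq$ enters only one column of the numerator determinant, carefully accounting for this cancellation is the delicate part of the argument and should deliver the precise constants in the statement.
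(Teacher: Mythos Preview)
Your plan is essentially the paper's own argument: reduce via complementarity to a square linear system, clear denominators to get an integer matrix, and apply the Hadamard-type bound of Lemma~\ref{lem:Ainverse} (your Cramer formulation is equivalent). Two small differences are worth noting. First, the paper works with the full $n\times n$ matrix $A_\alpha$ of~\eqref{eq:Aalpha} rather than the principal submatrix $M_{\alpha\alpha}$, and it justifies invertibility by first passing to a \emph{vertex} solution of the LCP; your lexicographic perturbation is not quite the right move here, since perturbing $\qq$ changes the solution whose bit-length you are trying to bound. Second, and more importantly for the constants, the paper does not use your row-by-row scaling or the cancellation-in-the-ratio idea: it takes a single global LCM $\ell$ of all denominators of $A$ and asserts the bound $\ell \le n^2\,2^{b(A)}$, then applies Lemma~\ref{lem:Ainverse} to $(\ell A)^{-1}$ directly. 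That single LCM bound is exactly what produces the linear coefficient $(4n+1)$ on $b(M)$.

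Your diagnosis of the obstacle is accurate, but the proposed fix has a gap. The cancellation you describe is real --- the scaling factors drop out of the Cramer ratio --- but it does not help you bound $b(y_i)$: to write $y_i$ as a ratio of \emph{integers} you must still clear the denominators in the rational determinants $\det(M_{\alpha\alpha})$ and $\det(M^{(i)})$, which brings the LCM right back. The paper's route around this is simply to control the global LCM up front; if you want to match the stated constants you should follow that line rather than the cancellation argument.
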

\begin{proof}



We first note that if an LCP has a solution, then it has a vertex solution.
Let $(\yy,\ww)$ be such a vertex solution of the LCP and, as in Section~\ref{sec:PLCPtoEOPL}, let
$\alpha = \{i \ |\ \yy_i > 0 \}$, and let $A = A_\alpha$ be defined according 
to \eqref{eq:Aalpha}.
We have that $A$ is guaranteed to be invertible, and we have
that $A\xx = \qq$,
with $\yy_i = \xx_i$ for $i \in \alpha$ and $\yy_i = 0$ for $i \notin \alpha$,
so we have $b(\yy) \leq b(\xx)$.
Also note that we have~$b(A) \leq b(M)$, since the entries in columns that take the
value of $e_i$ have constant bit-length.

We must transform $A$ into an integer matrix in order to apply
Lemma~\ref{lem:Ainverse}. Let $\ell$ denote the least common multiple of the
denominators of the entries in $A$.
Note that $\ell \leq n^2 2^{b(A)}$ and hence $b(\ell) \leq b(A)+ 2\log(n)$. Our matrix equation above can be rewritten as $\ell A \xx =
\ell \qq$, where $(\ell A)$ is an integer matrix. Hence we have $\xx
= (\ell A)^{-1} (\ell\qq)$.

If $B$ denotes the maximum entry of $\ell A$, then $B \leq \ell 2^{b(A)}$.
So by Lemma~\ref{lem:Ainverse} every entry of $(\ell A)^{-1}$ can be represented
with integer numerators and denominators bounded by 
\begin{equation*}
B^nn^{n/2} \leq (\ell 2^{b(A)})^n  n^{n/2} \leq
(n^2 2^{2 b(A)})^n n^{n/2}\text{.}
\end{equation*} From this bound we obtain
\begin{align*}
  b(\Paren{\ell A}^{-1}) &\leq 2 \log\Paren{(n^2 2^{2b(A)})^n n^{n/2}}\\
                         &= 2 \Paren{n\Paren{2 \log n + 2b(A)} + (n/2)\log n }\\
                         &\leq 5n\log n + 4nb(A)\text{.}
\end{align*}

Each entry of $\yy$ consists of the sum of $n$ entries of 
$(\ell A)^{-1}$ each of which is multiplied by an entry of
$\qq$, followed at the end with a multiplication by $\ell$. We get the following bound
on the bit-length of~$\yy$.
\begin{align*}
  b(\yy) &\leq b(\Paren{\ell A}^{-1}) + b(\qq) + n + b(\ell)\\
         &\leq 5n\log n + 4nb(A) + n + b(A) + 2\log n\\
         &\leq (5n+2)\log n + (4n+1)b(A) + n + b(\qq)
\end{align*}
\end{proof}

\paragraph{\bf Fixing the grid size.} 
We shall fix the grid size iteratively, starting with $k_d$, and working
downwards. At each step, we will have a space that is partially continuous
and partially discrete. Specifically, 
after the iteration in which we fix $k_i$, the dimensions $j < i$ will allow any
number in $[0, 1]$, while the dimensions $j \ge i$ will only allow the points
with denominator $k_j$. If $I_{k}$ denotes the set of all rationals with
denominator at most $k$, then we will denote this as 
\begin{equation*}
P(k_i, k_{i+1}, \dots, k_d) = [0,1]^{i-1} \times I_{k_i} \times I_{k_{i+1}} \times
\dots \times I_{k_{d}}.
\end{equation*}
Moreover, after fixing $k_i$, we will have that the property required by
Lemma~\ref{lem:points2} holds for all $j$-slices $s$ with $j \ge i$. Specifically,
that if $x$ is a fixpoint of $s$ according to $f$, then there exists a $p \in
P(k_i, k_{i+1}, \dots, k_{d})$ that is a fixpoint of $s$ according to
$\mathcal{D}$.

We'll start by bounding the bit-length of a solution to the \LCM problem computed one coordinate at a time. Given a PL-circuit $C$, we use Lemma~\ref{lem:ruta} to produce an LCP defined by $M \in \Q^{n\times n}$ and $\qq \in \Q^{n}$. Now let $x_1,\dotsc, x_n$ be formal variables representing the inputs to our circuit $C$. We want to determine parameters $\kappa_1,\dotsc, \kappa_d$ such that such that if we fix variables $x_{i+1},..., x_{d}$ to values with bit-lengths $b(x_{i+1}),\dotsc, b(x_d)$ where $b(x_{j}) \leq \kappa_j$ for each $j \in \Set{i+1,\dotsc, d}$, then any fixpoints of $C$ with respect to the free variables $x_1,\dotsc,x_i$ will have bit-lengths $b(x_1),\dotsc,b(x_i)$ with $b(x_j) \leq \kappa_j$ for $j\in [i]$.

We'll set \[\kappa_i \triangleq (d-i+1) ((5n+2)\log n + n + (4n+2) b(M) + 1) + b(\qq)\text{.}\] Note that $\kappa_1 \geq \kappa_2 \geq \dotsb \geq \kappa_d$.

To prove that these bounds suffice, we'll use induction on $i$, starting from $i=d$. First, we observe that by Lemma~\ref{lem:lcpsize}, we have $b(\yy) \leq (5n+2)\log n + (4n+1)b(M) + b(\qq) = \kappa_d - 1 \leq \kappa_d$ for any solution $\yy$ to the LCP $(M, \qq)$. Moreover each fixpoint $x$ of $C$ corresponds to a solution $\yy$ to the LCP by Lemma~\ref{lem:ruta}, so we have $b(x_1),\dotsc,b(x_d) \leq \kappa_d$ which implies the weaker claim that all fixpoints can be found by choosing $b(x_i) \leq \kappa_i$ for all $i\in [d]$, since $\kappa_i \geq \kappa_d$ for all $i \in [d]$.

Now we'll handle the inductive case. For $i=1,\dotsc, d$, let $M^{(i)}, \qq^{(i)}$ be the pair defining the LCP after $x_{i+1}$ through $x_d$ are fixed to values with bit-lengths bounded by $\kappa_{i+1},\dotsc, \kappa_d$, respectively. This pair will of course depend on the values $x_{i+1},\dotsc, x_d$, but since the bit-length of $M^{(i)}$ and $\qq^{(i)}$ depend only on the bit-lengths of the fixed values, we can ignore the values of $x_{i+1},\dotsc,x_d$ as long as we have bounds on their bit-lengths and as long as we restrict our attention to the bit-lengths of $M^{(i)}$ and $\qq^{(i)}$ and not the values themselves.

Using Lemma~\ref{lem:lcpsize} we know that the solution to the LCP has $b(x_i) \leq (5n+2) \log n + (4n+1)b(M^{(i)}) + b(\qq^{(i)})$. Moreover, we obtained $M^{(i)}$ by repeatedly fixing inputs to $C$, so the repeated application of Observation~\ref{obs:m} implies that $b(M^{(i)}) \leq b(M^{(i+1)}) \leq b(M)$. Moreover, Observation~\ref{obs:q} gives us
\[ b(\qq^{(i)}) \leq \max\Set{b(\qq^{(i+1)}), b(M^{(i+1)}) + b(x_{i+1})} + 1\text{.} \] By a simple argument, we can also show that $\kappa_{i+1} \geq b(\qq^{(i+1)})$ so that the above bound can be simplified to 
\[ b(\qq^{(i)}) \leq b(M^{(i)}) + b(x_{i+1}) + 1 \] at which point we can use our observation and the bound $b(x_{i+1}) \leq \kappa_{i+1}$ to obtain
\[ b(\qq^{(i)}) \leq b(M) + \kappa_{i+1} + 1\text{.} \]

Now we use the bound from Lemma~\ref{lem:lcpsize} again to obtain
\begin{align*}
  b(x_i) &\leq (5n+2)\log n + (4n+1)b(M^{(i)}) + b(\qq^{(i)})\\
         &\leq (5n+2)\log n + (4n+1)b(M) + b(M) + \kappa_{i+1} + 1
\end{align*} and we now use the definition of $\kappa_{i+1}$ to conclude
\begin{align*}
  b(x_i) &\leq (5n+2)\log n + (4n+1)b(M) + b(M)\\
         &\quad+ (d-(i+1)+1) ((5n+2)\log n + n + (4n+2) b(M) + 1) + b(\qq) + 1\\
         &= (d-i+1)((5n+2)\log n + (4n+2)b(M) + 1) + b(\qq)\\
         &= \kappa_i\text{.}
\end{align*}

We conclude that all solutions to the LCP $(M^{(i)}, \qq^{(i)})$ have free coordinates with bit-length at most $\kappa_i$, and similarly for the fixpoints of $C$. Thus, every fixpoint of $C$ with respect to the free variables can be found by choosing $x_1,\dotsc, x_{i}$ to have bit-lengths at most $\kappa_1,\dotsc,\kappa_i$, respectively, when the bit-lengths of $x_{i+1},\dotsc,x_{d}$ are chosen to have bit-lengths at most $\kappa_{i+1},\dotsc,\kappa_d$. 

To conclude the proof of Lemma~\ref{lem:points2}, we need to choose $k_1,\dotsc, k_d$ so that every $i$-slice with fixed coordinates in $P(k_1,\dotsc,k_d)$ has a fixpoint also on $P(k_1,\dotsc,k_d)$ when we map points $x\in P(k_1,\dotsc,k_d)$ to $[0,1]^d$ by
\[ x \mapsto (x_1/k_1,\dotsc,x_d/k_d)\text{.} \]

We set $k_i \triangleq 2^{\kappa_i}$. Now a point $x \in P(k_1,\dotsc,k_d)$ corresponds to a point $y\in [0,1]^d$ with $b(y_i) \leq 2\kappa_i$ for all $i\in [d]$ and any $i$-slice where the fixed coordinates satisfy the bit-length bounds will have fixpoints for the remaining variables with all coordinates satisfying the bit-length bounds.

Finally, we observe that for each $i\in [d]$, $b(k_i) = \kappa_i \leq \kappa_1 = O(\poly(\size(C)))$, so each of the $k_i$ can be represented using polynomially many bits in the size of the circuit $C$.

\subsection{Proof of Lemma~\ref{lem:ov2violation}}
\label{app:ov2violation}

The proof of this lemma will make use of Lemmas~\ref{lem:cm1} and~\ref{lem:cm2},
which are proved in Appendix~\ref{sec:slice}.

The statement of the proof says that we can assume that $f$ is contracting with
respect to some $\ell_p$ norm, and that we have an $i$-slice $s$ and
two points $p, q$ in $s$ satisfying the following conditions.
\begin{itemize}
\item $D_j(p) = D_j(q) = \zero$ for all $j < i$, 
\item $p_i = q_i + 1$, and
\item $D_i(p) = \down$ and $D_i(q) = \up$.
\end{itemize}
To translate $p$ and $q$ from the grid to the $[0, 1]^n$ space, we must divide
each component by the grid length in that dimension. Specifically, we define the
point $a \in [0, 1]^n$ so that $a_i = p_i/k_i$ for all $i$, and the point $b \in
[0, 1]^n$ such that $b_i = q_i/k_i$ for all $i$.

Lemma~\ref{lem:cm1} states that if $f$ is contracting with respect to an
$\ell_p$ norm, then the restriction of $f$ to the slice $s$ is also contracting
in that $\ell_p$ norm. Hence, $f$ must have a fixpoint in the slice $s$. Let $x
\in [0, 1]^n$ denote this fixpoint.

By definition we have that $(f(x) - x)_j = 0$ for all $j \le i$, and we also
have $(f(a) - a)_j = 0$ for all $j < i$ from the fact that $p$ is a fixpoint of
its $i$-slice. So we can apply Lemma~\ref{lem:cm2} to $x$
and $a$, and from this we get that $(x_i - a_i) \cdot (f(a)_i - a_i) > 0$. Since $D_i(p)
= \down$, we have that $f(a)_i < a_i$, and hence 
$(f(a)_i - a_i) < 0$. Therefore, we can conclude that 
$(x_i - a_i) < 0$, which implies that $x_i < a_i$.

By the same reasoning we can apply Lemma~\ref{lem:cm2} to $x$ and $b$, and this gives us that 
$(x_i - b_i) \cdot (f(b)_i - b_i) > 0$. This time we have $D_i(q) = \up$, which
implies that $f(b)_i > b_i$, and hence $(f(b)_i - b_i) > 0$. So we can conclude
that $(x_i - b_i) > 0$, meaning that $x_i > b_i$.

Hence we have shown that $b_i < x_i < a_i$, and so the point $x$ satisfies the
conditions of the lemma. This completes the proof of
Lemma~\ref{lem:ov2violation}.



\subsection{Proof of Lemma~\ref{lem:lcm2opdc}}
\label{app:lcm2opdc}

We must map every possible solution of the OPDC instance back to a solution of
the \LCM instance. We will enumerate all solution types for OPDC.

\begin{itemize}
\item In solutions of type \solnref{O1}, we have a point $p \in P$ such that $D_i(p) =
\zero$ for all $i$. This means that the point $x$, where $x_i = p_i / k_i$ for
all $i$, satisfies $f(x) - x = 0$. Hence $x$ is a solution of type \solnref{CM1}.

\item In solutions of type \solnref{OV1} we have $i$-slice $s$ and two points $p, q \in
P_s$ with $p \ne q$ such that $D_j(p) = D_j(q) = \zero$ for all $j \le i$. Let
$a$ and $b$ be the two corresponding points in $[0, 1]^n$, meaning that $a_i =
p_i / k_i$ for all $i$, and $b_i = q_i / k_i$ for all $i$.

We first first consider the contraction property in the $[0, 1]^i$ space defined
by the slice $s$. Let $\| a - b \|^i_p$ denote the distance between $a$ and $b$
according to the $\ell_p$ norm restricted to the $[0, 1]^i$ space, and likewise
let $\| f(a) - f(b) \|^i_p$ be the distance between $f(a)$ and $f(b)$ in that
space.
Since $D_j(p) = D_j(q) =  \zero$ for all $j \le i$, this means that we have 
$\| f(a) - f(b) \|^i_p = \| a - b \|^i_p$, which is a clear violation of
contraction in the space $[0, 1]^i$.

To see that $a$ and $b$ also violate contraction in $[0, 1]^d$, observe that 
$\| a - b \|^i_p = \| a - b \|_p$, because $a$ and $b$ lie in the same
$i$-slice, meaning that $a_j = b_j$ for all $j > i$.
Furthermore, we have $\| f(a) - f(b) \|_p \ge \| f(a) - f(b) \|^i_p$, since
adding in extra dimensions can only increase the distance in an $\ell_p$ norm.
Hence we have $\| f(a) - f(b) \|_p \ge \| a - b \|_p$, which is a violation of
contraction, giving us a solution of type \solnref{CMV1}. 

\item Violations of type \solnref{OV2} map directly to violations of type \solnref{CMV3}, as
discussed in the main body.

\item Violations of type \solnref{OV3} give us a point $p$ such that $p_i = 0$ and $D_i(p)
= \down$, or $p_i = k_i$ and $D_i(p) = \up$. In both cases this means that $f(p)
\not\in [0, 1]^d$, and so we have a violation of type \solnref{CMV2}.

\end{itemize}

To see that the reduction is promise-preserving, it suffices to note that
solutions of type \solnref{CM1} are only ever mapped on to solutions of type \solnref{O1}. Hence, if
the input problem is a contraction map, the resulting OPDC instance only has
solutions of type \solnref{CM1}.

\section{Proofs for Section~\ref{sec:PLCPtoEOPL}: PLCP to EOPL and UEOPL}
\label{app:full_plcp_reduction}

\subsection{Background on Lemke's algorithm}
\label{app:lemke}

The explanation of Lemke's algorithm in this section is taken from \cite{GMSV}.
Recall the LCP problem from Definition \ref{def:lcp}, where given a $d\times d$ matrix $M$ and $d$-dimensional vector $q$, we want to find $\yy$ satisfying \eqref{eq:lcp}. That is ($\ww$ is a place-holder variable), 
\[
\ww= \MM \yy  + \pq, \ \ \ \  \yy \geq 0, \ \ \ \ \ww \ge 0 \ \ \ \ \mbox{and} \ \ \ \ y_iw_i = 0,\ \forall i\in[d].  \]

The problem is interesting only when $\qq \not \geq 0$, since otherwise $\yy = 0$ is a trivial solution. 
Let $\CQ$ be the polyhedron in $2d$ dimensional space defined by the first three conditions; we will assume that $\CQ$ is
non-degenerate (just for simplicity of exposition; this will not matter for our reduction).  
Under this condition, any solution to (\ref{eq:lcp}) will be a vertex of $\CQ$, since it must satisfy $2d$
equalities. Note that the set of solutions may be disconnected.
The ingenious idea of Lemke was to introduce a new variable and consider the system:
\begin{equation} \label{eq:c} \ww= \MM \yy  + \pq +z \one, \ \ \ \  \yy \geq 0, \ \ \ \ \ww \geq 0, \ \ \ \  z \geq 0  \ \
\ \ \mbox{and} \ \ \ \ y_iw_i = 0,\ \forall i\in[d].  \end{equation}
The next lemma follows by construction of (\ref{eq:c}).
\begin{lemma}\label{lem:lemke1}
Given $(\MM,\qq)$, $(\yy,\ww,z)$ satisfies \eqref{eq:c} with $z=0$ iff $\yy$ satisfies~\eqref{eq:lcp}.
\end{lemma}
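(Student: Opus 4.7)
The plan is straightforward because Lemma~\ref{lem:lemke1} is essentially a sanity check that system \eqref{eq:c} is a genuine lift of \eqref{eq:lcp}: once the auxiliary variable $z$ is set to zero, the two systems coincide. So the proof is a simple substitution argument in two directions.

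First I would handle the forward direction. Suppose $(\yy,\ww,z)$ satisfies \eqref{eq:c} with $z=0$. Substituting $z=0$ into the linear constraint $\ww = \MM\yy + \qq + z\one$ collapses it to $\ww = \MM\yy + \qq$. The remaining conditions of \eqref{eq:c}, namely $\yy \ge 0$, $\ww \ge 0$, and $y_i w_i = 0$ for all $i \in [d]$, are exactly the remaining conditions of \eqref{eq:lcp}. Hence $\yy$ (together with the induced $\ww$) satisfies \eqref{eq:lcp}.

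For the reverse direction, suppose $\yy$ satisfies \eqref{eq:lcp}, with the associated $\ww = \MM\yy + \qq$. Then defining $z := 0$, the tuple $(\yy,\ww,z)$ trivially satisfies $\ww = \MM\yy + \qq + z\one$ (since $z\one = \zeros$), and the non-negativity and complementarity conditions carry over verbatim, while $z \ge 0$ holds since $z=0$. Thus $(\yy,\ww,z)$ satisfies \eqref{eq:c} with $z=0$.

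There is no real obstacle here; the lemma is essentially definitional. The only thing to be careful about is making explicit that $\ww$ is a ``place-holder variable'' determined by $\yy$ in \eqref{eq:lcp}, so that the iff is interpreted correctly: in the $(\Leftarrow)$ direction we exhibit the required $\ww$ and $z$ from $\yy$, and in the $(\Rightarrow)$ direction the $\ww$ provided by \eqref{eq:c} is automatically the one demanded by \eqref{eq:lcp}. With this clarification the proof is a one-liner.
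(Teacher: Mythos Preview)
Your proposal is correct and matches the paper's approach: the paper simply states that the lemma ``follows by construction of \eqref{eq:c}'' without giving any further argument, and your explicit two-direction substitution spells out exactly that construction. There is nothing to add.
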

Let $\CPol$ be the polyhedron in $2d + 1$ dimensional space defined by the first four conditions of \eqref{eq:c}, i.e.,
\begin{equation}\label{eq:cp}
\CPol = \{ (\yy,\ww, z) \ |\ \ww=\MM \yy  + \pq +z \one, \ \ \ \yy \geq 0, \ \ \ \ww \geq 0, \ \ \  z \geq 0\};
\end{equation}
for now, we will assume that $\CPol$ is {\em non-degenerate}.  

Since any solution to (\ref{eq:c}) must still satisfy $2d$ equalities in $\CPol$, the set of solutions, say
$S$, will be a subset of the one-skeleton of $\CPol$, i.e., it will consist of edges and vertices of $\CPol$.  Any solution to
the original system (\ref{eq:lcp}) must satisfy the additional condition $z = 0$ and hence will be a vertex of $\CPol$.

Now $S$ turns out to have some nice properties. Any point of $S$ is {\em fully labeled} in the sense that for each $i$, $y_i
= 0$ or $w_i = 0$.  We will say that a point of $S$ {\em has duplicate label i} if $y_i = 0$ and $w_i = 0$ are both satisfied
at this point. Clearly, such a point will be a vertex of $\CPol$ and it will have only one duplicate label.  Since there are
exactly two ways of relaxing this duplicate label, this vertex must have exactly two edges of $S$ incident at it.  Clearly, a
solution to the original system (i.e., satisfying $z = 0$) will be a vertex of $\CPol$ that does not have a duplicate label.  On
relaxing $z=0$, we get the unique edge of $S$ incident at this vertex.

As a result of these observations, we can conclude that every vertex of $S$ with $z >0$ has degree two within $S$, while a vertex with $z=0$ has degree one. Thus, $S$ consists of paths and cycles. Of these paths, Lemke's algorithm
explores a special one.  An unbounded edge of $S$ such that the vertex of $\CPol$ it is incident on has $z > 0$ is called a
{\em ray}.  Among the rays, one is special -- the one on which $\yy = 0$. This is called the {\em primary ray} and the rest
are called {\em secondary rays}. Now Lemke's algorithm explores, via pivoting, the path starting with the primary ray. This
path must end either in a vertex satisfying $z = 0$, i.e., a solution to the original system, or a secondary ray. In the
latter case, the algorithm is unsuccessful in finding a solution to the original system; in particular, the original system
may not have a solution.  
We give the full pseudo-code for Lemke's algorithm in Table~\ref{tab:lemke}.



\begin{table}[!htb]
\caption{Lemke's Complementary Pivot Algorithm}\label{tab:lemke}
\begin{tabular}{|l|}
\hline
\hspace{5pt} {\bf If} $\qq\ge 0$ {\bf then} {\bf Return} $\yy\leftarrow \zeros$ \\
\hspace{5pt} $\yy\leftarrow 0, z\leftarrow |\min_{i \in [d]} q_i|, \ww=\qq+z\ones$\\
\hspace{5pt} $i\leftarrow $ duplicate label at vertex $(\yy,\ww,z)$ in $\CPol$. $flag\leftarrow 1$ \\
\hspace{5pt} {\bf While} $z>0$ {\bf do}\\
\hspace{10pt} {\bf If} $flag=1$ {\bf then} set $(\yy',\ww',z')\leftarrow $ vertex obtained by relaxing $y_i=0$ at $(\yy,\ww,z)$ in $\CPol$\\
\hspace{10pt} {\bf Else} set $(\yy',\ww',z')\leftarrow $ vertex obtained by relaxing $w_i=0$ at $(\yy,\ww,z)$ in $\CPol$\\
\hspace{10pt} {\bf If} $z>0$ {\bf then}\\
\hspace{15pt} $i \leftarrow $ duplicate label at $(\yy',\ww',z')$\\
\hspace{15pt} {\bf If} $v_i>0$ and $v'_i=0$ {\bf then} $flag\leftarrow 1$. {\bf Else} $flag\leftarrow 0$\\
\hspace{15pt} $(\yy,\ww,z)\leftarrow(\yy',\ww',z')$\\
\hspace{5pt} End {\bf While} \\
\hspace{5pt} {\bf Return} $\yy$\\
\hline
\end{tabular}
\end{table}

\newpage

\subsection{Reduction from \PLCP with \solnref{PV1} violations to \EOPL}
\label{app:plcp_pv1_to_eopl}


It is well known that if matrix $M$ is a P-matrix (\PLCP), then $z$ strictly
decreases on the path traced by Lemke's algorithm \cite{cottle2009linear}.
Furthermore, by a result of Todd~\cite[Section 5]{todd1976orientation}, paths traced by
complementary pivot rule can be locally oriented.  Based on these two facts, 
we derive a polynomial-time reduction from \PLCP to \EOPL first, and then from \PLCP to \UEOPL.

Let $\CI=(M,\qq)$ be a given \PLCP instance, and let $\CL$ be the length of the 
bit representation of $M$ and $\qq$. 
We will reduce $\CI$ to an \EOPL instance $\CE$ in time $\poly(\CL)$. 
According to Definition~\ref{def:EOPL}, the instance $\CE$ is defined 
by its vertex set $\vert$, and procedures $S$ (successor), $P$ (predecessor) and $\pot$ (potential). 
Next we define each of these. 

As discussed in Section \ref{app:lemke} the linear constraints of (\ref{eq:c})
on which Lemke's algorithm operates forms a polyhedron $\CPol$ given in
(\ref{eq:cp}). We assume that $\CPol$ is non-degenerate. This is without
loss of generality since, a typical way to ensure this is by perturbing $\qq$ so
that configurations of solution vertices remain unchanged
\cite{cottle2009linear}, and since $M$ is unchanged if $\CI$ was a \PLCP instance then it remains so. 

Lemke's algorithm traces a path on feasible points of (\ref{eq:c}) which is on
$1$-skeleton of $\CPol$ starting at $(\yy^0,\ww^0,z^0)$, where:
\begin{equation}\label{eq:v0}
\yy^0=0,\ \ \ \ \ z^0= |\min_{i \in [d]} q_i|,\ \ \ \ \  \ww^0=\qq+z^0\ones
\end{equation}
We want to capture
vertex solutions of (\ref{eq:c}) as vertices in \EOPL instance $\CE$. To
differentiate we will sometimes call the latter, {\em configurations}. Vertex
solutions of (\ref{eq:c}) are exactly the vertices of polyhedron $\CPol$ with
either $y_i=0$ or $w_i=0$ for each $i\in [d]$. Vertices of (\ref{eq:c}) with
$z=0$ are our final solutions (Lemma \ref{lem:lemke1}). While each of its {\em
non-solution} vertex has a duplicate label. Thus, a vertex of this path can be
uniquely identified by which of $y_i=0$ and $w_i=0$ hold for each $i$ and its
duplicate label. This gives us a representation for vertices in the \EOPL
instance $\CE$. 

\medskip

\noindent{\bf \EOPL Instance $\CE$.}
\begin{itemize}
\item Vertex set $\vert=\{0,1\}^n$ where $n = 2d$. 
\item Procedures $S$ and $P$ as defined in Tables \ref{tab:S} and \ref{tab:P} respectively
\item Potential function $\pot:\vert \rightarrow \{0,1,\dots, 2^m-1\}$ defined in Table \ref{tab:F} for $m=\lceil ln(2\Delta^3)\rceil$, 
	  where $$\Delta=(n! \cdot I_{max}^{2d+1})+1$$ 
	  and $I_{max} = \max\{\max_{i,j\in [d]} M(i,j),\ \max_{i\in [d]} |q_i|\}$. 
\end{itemize}

For any vertex $\uu\in \vert$, the first $d$ bits of $\uu$ represent
which of the two inequalities, namely $y_i\ge 0$ and $w_i\ge 0$, is tight for
each $i \in [d]$. 
\[
\forall i \in [d],\ \ u_i=0 \Rightarrow y_i=0,\ \ \ \mbox{and} \ \ \ u_i=1 \Rightarrow w_i=0
\]

A valid setting of the second set of $d$ bits, namely $u_{d+1}$ through $u_{2d}$, will have 
at most one non-zero bit -- if none is one then $z=0$, otherwise the location of one bit indicates the duplicate label. 
Thus, there are many invalid configurations, namely
those with more than one non-zero bit in the second set of $d$ bits. 
These are dummies that we will handle separately, and we define a procedure 
$\isvalid$ to identify non-dummy vertices in Table \ref{tab:iv}.
To go between ``valid'' vertices of $\CE$ and corresponding vertices of the Lemke polytope
$\CPol$ of LCP $\CI$, we define procedures $\eti$ and $\ite$ in Table
\ref{tab:ei}.

\begin{table}[!hbt]
\caption{Procedure \isvalid(\uu)}\label{tab:iv}
\begin{tabular}{|l|}
\hline
\hspace{5pt} {\bf If} $\uu=0^{n}$ {\bf then} {\bf Return} 1\\
\hspace{5pt} {\bf Else} let $\tau = (u_{(d+1)}+\dots+u_{2d})$\\
\hspace{15pt} {\bf If} $\tau> 1$ {\bf then} {\bf Return} 0\\
\hspace{15pt} Let $S\leftarrow \emptyset$. \% set of tight inequalities. \\
\hspace{15pt} {\bf If} $\tau = 0$ {\bf then} $S=S\cup \{ z=0\}$. \\
\hspace{15pt} {\bf Else}\\
\hspace{30pt} Set $l\leftarrow $ index of the non-zero coordinate in vector $(u_{(d+1)},\dots,u_{2d})$. \\
\hspace{30pt} Set $S=\{y_l=0, w_l=0\}$.\\
\hspace{15pt} {\bf For} each $i$ from $1$ to $d$ {\bf do} \\
\hspace{30pt} {\bf If} $u_i=0$ {\bf then} $S=S\cup \{y_i=0\}$, {\bf Else} $S=S\cup \{w_i=0\}$\\
\hspace{15pt} Let $A$ be a matrix formed by l.h.s. of equalities $M\yy-\ww +\ones z=-\qq$ and that of set $S$\\
\hspace{15pt} Let $\bb$ be the corresponding r.h.s., namely $\bb=[-\qq; \zeros_{(d+1)\times 1}]$.\\
\hspace{15pt} Let $(\yy',\ww',z') \leftarrow \bb * A^{-1}$\\
\hspace{15pt} {\bf If} $(\yy',\ww',z') \in \CPol$ {\bf then} {\bf Return} 1, {\bf Else} {\bf Return} 0 \\
\hline
\end{tabular}
\end{table}

\begin{table}[!hbt]
\caption{Procedures $\ite(\uu)$ and $\eti(\yy,\ww,z)$}\label{tab:ei}
\begin{tabular}{|l|}
\hline
\begin{tabular}{l}
$\ite(\yy,\ww,z)$ \\ \hline
\hspace{5pt} {\bf If} $\exists i \in [d]$ s.t. $y_i * w_i \neq 0$ {\bf then} {\bf Return} $(\zeros_{(2d-2)\times 1};1;1)$ \% Invalid \\
\hspace{5pt} Set $\uu\leftarrow \zeros_{2d\times 1}$. Let $DL=\{i\in [d]\ |\ y_i=0\mbox{ and } w_i=0\}$.\\
\hspace{5pt} {\bf If} $|DL|>1$ {\bf then} {\bf Return} $(\zeros_{(2d-2)\times 1};1;1)$ \%Invalid \\
\hspace{5pt} {\bf If} $|DL|=1$ {\bf then} for $i\in DL$, set $u_{d+i}\leftarrow 1$\\
\hspace{5pt} {\bf For} each $i\in [d]$ {\bf If} $w_i=0$ {\bf then} set $u_{i}\leftarrow 1$\\
\hspace{5pt} {\bf Return} $\uu$
\end{tabular}
\\ \hline
\begin{tabular}{l}
$\eti(\uu)$  \\ \hline
\hspace{5pt} {\bf If} $\uu=0^n$ {\bf then} {\bf Return} $(\zeros_{d \times 1}, \qq+(z^0+1)\ones, z^0+1)$ \% This case will never happen\\
\hspace{5pt} {\bf If} \isvalid(\uu)=0 {\bf then} {\bf Return} $\zeros_{(2d+1) \times 1}$\\
\hspace{5pt} Let $\tau = (u_{(d+1)}+\dots+u_{2d})$\\
\hspace{5pt} Let $S\leftarrow \emptyset$. \% set of tight inequalities. \\
\hspace{5pt} {\bf If} $\tau = 0$ {\bf then} $S=S\cup \{ z=0\}$. \\
\hspace{5pt} {\bf Else}\\
\hspace{15pt} Set $l\leftarrow $ index of non-zero coordinate in vector $(u_{(d+1)},\dots,u_{2d})$. \\
\hspace{15pt} Set $S=\{y_l=0, w_l=0\}$.\\
\hspace{5pt} {\bf For} each $i$ from $1$ to $d$ {\bf do} \\
\hspace{15pt} {\bf If} $u_i=0$ {\bf then} $S=S\cup \{y_i=0\}$, {\bf Else} $S=S\cup \{w_i=0\}$\\
\hspace{5pt} Let $A$ be a matrix formed by lhs of equalities $M\yy-\ww +\ones z=-\qq$ and that of set $S$\\
\hspace{5pt} Let $\bb$ be the corresponding rhs, namely $\bb=[-\qq; \zeros_{(d+1)\times 1}]$.\\
\hspace{5pt} {\bf Return} $\bb * A^{-1}$\\
\end{tabular}\\
\hline
\end{tabular}
\end{table}

By construction of $\isvalid$, $\eti$ and $\ite$, the next lemma follows.

\begin{lemma}\label{lem:vert}
If $\isvalid(\uu)=1$ then $\uu=\ite(\eti(\uu))$, and the corresponding vertex $(\yy,\ww,z)= \eti(\uu)$ of $\CPol$ is feasible in (\ref{eq:c}). If $(\yy,\ww,z)$ is a feasible vertex of (\ref{eq:c}) then $\uu=\ite(\yy,\ww,z)$ is a valid configuration, {\em i.e.,} $\isvalid(\uu)=1$.
\end{lemma}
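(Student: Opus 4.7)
The plan is to verify the lemma by careful inspection of the three procedures $\isvalid$, $\ite$, and $\eti$, exploiting the assumed non-degeneracy of $\CPol$. The core observation is that any valid configuration $\uu$ with $\isvalid(\uu) = 1$ corresponds to a set $S$ of exactly $d+1$ ``tight'' inequality constraints among the $2d+1$ defining inequalities of $\CPol$ (namely $\yy \ge 0$, $\ww \ge 0$, $z \ge 0$), where the particular constraints picked are determined by the first $d$ bits of $\uu$ (each $u_i$ chooses between $y_i = 0$ and $w_i = 0$) together with one extra constraint determined by the second $d$ bits (either $z = 0$ if all second-half bits are zero, or both $y_l = 0$ and $w_l = 0$ for the unique $l$ with $u_{d+l} = 1$). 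Combined with the $d$ equations $\ww = M\yy + \qq + z\ones$, this gives a square linear system in $2d+1$ unknowns whose unique solution (since $\CPol$ is non-degenerate, $A$ is invertible) is precisely the point returned by $\eti(\uu)$.

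First I would prove the forward direction. Given $\isvalid(\uu) = 1$, the point $(\yy,\ww,z) = \eti(\uu)$ satisfies all the equality constraints of $\CPol$ by construction, and by the final check inside $\isvalid$ it also satisfies the non-negativity constraints, so it lies in $\CPol$. Complementarity $y_i w_i = 0$ for every $i \in [d]$ is forced by the construction of $S$: for each $i$ the set $S$ contains at least one of $\{y_i = 0, w_i = 0\}$, with the duplicate label (if any) being the unique $i$ at which both are zero. Hence $(\yy,\ww,z)$ is a feasible vertex of \eqref{eq:c}. To conclude $\uu = \ite(\eti(\uu))$, I would trace $\ite$ on this vertex: complementarity prevents the first early return, non-degeneracy ensures $|DL| \le 1$ and matches the sum $u_{d+1} + \dots + u_{2d}$ in $\uu$, and the rule ``$u_i \leftarrow 1$ iff $w_i = 0$ (and $y_i > 0$)'' exactly inverts the rule used to build $S$ from $\uu$.

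For the converse direction, given a feasible vertex $(\yy,\ww,z)$ of \eqref{eq:c}, I would show that $\uu = \ite(\yy,\ww,z)$ passes $\isvalid$. Feasibility means $y_i w_i = 0$ for all $i$, so $\ite$ does not short-circuit to the invalid sentinel. Non-degeneracy of $\CPol$, combined with the fact that a vertex is determined by $2d+1$ linearly independent tight inequalities, implies $|DL| \le 1$, so the second-half sum in $\uu$ is at most~$1$. Running $\isvalid$ on this $\uu$ then reconstructs the same constraint set $S$ and hence the same linear system that $(\yy,\ww,z)$ already solves; since $(\yy,\ww,z) \in \CPol$ by hypothesis, the final membership check succeeds and $\isvalid(\uu) = 1$.

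The proof is essentially bookkeeping and the main subtlety is the correct use of non-degeneracy. The place to be careful is exactly where ``duplicate labels'' enter: one must argue that in a non-degenerate $\CPol$ no vertex can have two distinct indices $i \ne j$ with $y_i = w_i = y_j = w_j = 0$ (such a point would lie on four facets in two complementary pairs, overdetermining the vertex), which is what makes the encoding via a single bit among $u_{d+1}, \dots, u_{2d}$ unambiguous. Aside from this single structural point, everything reduces to inspecting the pseudocode of Tables~\ref{tab:iv} and~\ref{tab:ei}.
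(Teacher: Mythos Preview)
Your proposal is correct and takes essentially the same approach as the paper, which gives only a one-sentence proof noting that the only things that can go wrong are $A$ being singular or $|DL|>1$ in $\ite$, both excluded by non-degeneracy. You have simply filled in the bookkeeping details that the paper leaves implicit.
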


\begin{proof}
The only thing that can go wrong is that the matrix $A$ generated in $\isvalid$ and $\eti$ procedures are singular, or the set of double labels $DL$ generated in $\ite$ has more than one elements. 
\end{proof}

The main idea behind procedures $S$ and $P$, given in Tables \ref{tab:S} and
\ref{tab:P} respectively, is the following (also see Figure \ref{fig:lemke}):
Make dummy configurations in $\vert$ to point to themselves with cycles of
length one, so that they can never be solutions or violations. 
The starting vertex $0^n \in \vert$ points to the configuration that corresponds
to the first vertex of the Lemke path, namely $\uu^0=\ite(\yy^0,\ww^0,z^0)$. 
Precisely, $S(0^n)=\uu^0$, $P(\uu^0)=0^n$ and $P(0^n)=0^n$ (start of
a path). 

For the remaining cases, let $\uu\in \vert$ have corresponding representation
$\xx=(\yy,\ww,z)\in \CPol$, and suppose $\xx$ has a duplicate label. As one
traverses a Lemke path for a P-LCP, the value of $z$ monotonically decreases \cite{}.
So, for $S(\uu)$ we compute the adjacent vertex $\xx'=(\yy',\ww',z')$ of $\xx$
on Lemke path such that the edge goes from $\xx$ to $\xx'$, and if the $z'<z$,
as expected, then we point $S(\uu)$ to configuration of $\xx'$ namely
$\ite(\xx')$. Otherwise, we let $S(\uu)=\uu$. Similarly, for $P(\uu)$, we find
$\xx'$ such that edge is from $\xx'$ to $\xx$, and then we let $P(\uu)$ be
$\ite(\xx')$ if $z'>z$ as expected, otherwise $P(\uu)=\uu$. 

For the case when $\xx$ does not have a duplicate label, then we have $z=0$. This is
handled separately since such a vertex has exactly one incident edge on the Lemke
path, namely the one obtained by relaxing $z=0$. According to the direction of 
this edge, we do similar process as before. For example, if the edge goes from 
$\xx$ to $\xx'$, then, if $z'<z$, we set $S(\uu)=\ite(\xx')$ else $S(\uu)=\uu$,
and we always set $P(\uu)=\uu$.  In case the edge goes from $\xx'$ to $\xx$, we
always set $S(\uu)=\uu$, and we set $P(\uu)$ depending on whether or not $z'>z$.

%
\medskip

The potential function $\pot$, formally defined in Table \ref{tab:F},
gives a value of zero to dummy vertices and the starting vertex $0^n$. To all
other vertices, essentially it is $((z^0-z) * \Delta^2)+1$. Since value of $z$
starts at $z^0$ and keeps decreasing on the Lemke path this value will keep
increasing starting from zero at the starting vertex $0^n$. Multiplication by
$\Delta^2$ will ensure that if $z_1>z_2$ then the corresponding potential values 
will differ by at least one. This is because, since $z_1$ and $z_2$ are 
coordinates of two vertices of polytope $\CPol$, their maximum value is $\Delta$
and their denominator is also bounded above by~$\Delta$. Hence $z_1-z_2\le
1/\Delta^2$ (Lemma \ref{lem:pot}).  

To show correctness of the reduction we need to show two things: $(i)$ All the
procedures are well-defined and polynomial time. $(ii)$ We can construct a
solution of $\CI$ from a solution of $\CE$ in polynomial time. 

\begin{table}
\begin{minipage}{0.73\textwidth}
\caption{Successor Procedure $S(\uu)$}\label{tab:S}
\begin{tabular}{|l|}
\hline
\hspace{0pt}{\bf If} $\isvalid(\uu)=0$ {\bf then} {\bf Return} $\uu$\\
\hspace{0pt}{\bf If} $\uu=0^n$ {\bf then} {\bf Return} $\ite(\yy^0,\ww^0,z^0)$\\
\hspace{0pt}$\xx=(\yy,\ww,z) \leftarrow \eti(\uu)$\\
\hspace{0pt}{\bf If} $z=0$ {\bf then} \\
\hspace{5pt} $\xx^1\leftarrow$ vertex obtained by relaxing $z=0$ at $\xx$ in $\CPol$. \\
\hspace{5pt} {\bf If} Todd \cite{todd1976orientation} prescribes edge from $\xx$ to $\xx^1$ \\
\hspace{10pt} {\bf then} set $\xx'\leftarrow \xx^1$. {\bf Else Return} $\uu$ \\
\hspace{0pt}{\bf Else} set $l\leftarrow $ duplicate label at $\xx$\\
\hspace{5pt} $\xx^1\leftarrow $ vertex obtained by relaxing $y_l=0$ at $\xx$ in $\CPol$ \\
\hspace{5pt} $\xx^2\leftarrow $ vertex obtained by relaxing $w_l=0$ at $\xx$ in $\CPol$ \\
\hspace{5pt} {\bf If} Todd \cite{todd1976orientation} prescribes edge from $\xx$ to $\xx^1$ \\
\hspace{10pt} {\bf then} $\xx'=\xx^1$ \\
\hspace{5pt} {\bf Else} $\xx'=\xx^2$\\
\hspace{0pt}Let $\xx'$ be $(\yy',\ww',z')$. \\
\hspace{0pt}{\bf If} $z>z'$ {\bf then} {\bf Return} $\ite(\xx')$. {\bf Else} {\bf Return} $\uu$.\\
\hline
\end{tabular}
\end{minipage}%
\hspace{-1cm}
\begin{minipage}{0.23\textwidth}
\caption{Potential Value $\pot(\uu)$}\label{tab:F}
\begin{tabular}{|l|}
\hline
\hspace{0pt} {\bf If} $\isvalid(\uu)=0$ \\
\hspace{5pt} {\bf then} {\bf Return} $0$\\
\hspace{0pt} {\bf If} $\uu=0^n$\\
\hspace{5pt}  {\bf then} {\bf Return} $0$\\
\hspace{0pt} $(\yy,\ww,z) \leftarrow \eti(\uu)$\\
\hspace{0pt} {\bf Return} $\lfloor \Delta^2*(\Delta -z)\rfloor$\\
\hline
\end{tabular}
\end{minipage}
\end{table}

\begin{table}[!htb]
\caption{Predecessor Procedure $P(\uu)$}\label{tab:P}
\begin{tabular}{|l|}
\hline
\hspace{0pt} {\bf If} $\isvalid(\uu)=0$ {\bf then} {\bf Return} $\uu$\\
\hspace{0pt} {\bf If} $\uu=0^n$ {\bf then} {\bf Return} $\uu$\\
\hspace{0pt} $(\yy,\ww,z) \leftarrow \eti(\uu)$\\
\hspace{0pt} {\bf If} $(\yy,\ww,z)=(\yy^0,\ww^0,z^0)$ {\bf then} {\bf Return} $0^n$\\
\hspace{0pt} {\bf If} $z=0$ {\bf then} \\
\hspace{5pt} $\xx^1\leftarrow$ vertex obtained by relaxing $z=0$ at $\xx$ in $\CPol$. \\
\hspace{5pt} {\bf If} Todd \cite{todd1976orientation} prescribes edge from $\xx^1$ to $\xx$ {\bf then} set $\xx'\leftarrow \xx^1$. {\bf Else Return} $\uu$\\
\hspace{0pt} {\bf Else}\\
\hspace{5pt} $l\leftarrow $ duplicate label at $\xx$\\
\hspace{5pt} $\xx^1\leftarrow $ vertex obtained by relaxing $y_l=0$ at $\xx$ in $\CPol$ \\
\hspace{5pt} $\xx^2\leftarrow $ vertex obtained by relaxing $w_l=0$ at $\xx$ in $\CPol$ \\
\hspace{5pt} {\bf If} Todd \cite{todd1976orientation} prescribes edge from $\xx^1$ to $\xx$ {\bf then} $\xx'=\xx^1$ {\bf Else} $\xx'=\xx^2$\\
\hspace{0pt} Let $\xx'$ be $(\yy',\ww',z')$. {\bf If} $z<z'$ {\bf then} {\bf Return} $\ite(\xx')$. {\bf Else} {\bf Return} $\uu$.\\
\hline
\end{tabular}
\end{table}

\begin{lemma}\label{lem:PSF}
Functions $P$, $S$ and $\pot$ of instance $\CE$ are well defined, making $\CE$ a valid \EOPL instance. 
\end{lemma}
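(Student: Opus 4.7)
The plan is to verify the three defining requirements of an \EOPL instance: (a) every branch of every procedure produces a well-defined output in the declared range, (b) all three procedures run in time $\poly(\CL)$, and (c) the boundary conditions $P(0^n)=0^n \neq S(0^n)$ and $\pot(0^n)=0$ hold.

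For (a) and (b), I would start with the helper routines. The procedure $\isvalid$ either rejects a bit pattern on the basis of the easy structural checks (second half of $\uu$ has more than one non-zero bit, or the matrix $A$ it constructs is singular, or the unique point of $\CPol$ picked out is infeasible) or accepts it. In the accepting case the pattern selects exactly $2d+1$ tight hyperplanes of $\CPol$, which under our standing non-degeneracy assumption on $\CPol$ determine a unique, non-singular $(2d+1)\times(2d+1)$ linear system whose solution $\eti(\uu)$ is genuinely a feasible vertex of $\CPol$; this is the content of Lemma~\ref{lem:vert}. Each of $\isvalid$, $\ite$, $\eti$ thus reduces to solving one linear system of size $O(d)$ over rationals of bit-length $\poly(\CL)$, which is polynomial-time. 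The procedures $S$ and $P$ then make $O(1)$ calls to these helpers, perform two adjacent pivots in $\CPol$, and invoke Todd's orientation rule~\cite{todd1976orientation}, all polynomial-time; in every branch the returned value lies in $\vert=\{0,1\}^{n}$, so $S,P$ are well-defined maps $\vert\to\vert$.

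For (c), a direct read of the tables gives what is needed: $\isvalid(0^n)=1$ from the very first line of Table~\ref{tab:iv}; Table~\ref{tab:S} then returns $S(0^n)=\ite(\yy^0,\ww^0,z^0)$, which by \eqref{eq:v0} and Lemma~\ref{lem:vert} is a valid non-zero configuration distinct from $0^n$; Table~\ref{tab:P} returns $P(0^n)=0^n$; and Table~\ref{tab:F} returns $\pot(0^n)=0$. To see that $\pot$ maps into $\{0,1,\dots,2^{m}-1\}$, I would bound $z$ at any vertex of $\CPol$ via Cramer's rule applied to the $(2d+1)$-sized systems with integer entries of magnitude at most $I_{max}$: numerators and denominators are bounded by $d!\,I_{max}^{2d+1}<\Delta$, so $0\le z\le \Delta$ at every vertex, whence $0\le \lfloor \Delta^2(\Delta-z)\rfloor \le \Delta^3 < 2^{m}$ by the choice of $m$. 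For invalid patterns and for $0^n$ the potential is $0$ by fiat, so the range is correct on all of $\vert$.

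The main obstacle will be (a): making sure that every branch of $S$ and $P$ truly returns a well-defined value, particularly in the degenerate corner cases where the ``adjacent vertex'' $\xx^{1}$ or $\xx^{2}$ obtained by relaxing a tight constraint does not exist as a finite vertex of $\CPol$ (this happens exactly when relaxing that constraint produces a ray). I would handle this by reading the tables carefully and observing that in each such situation either Todd's orientation rule is still applicable and selects the other pivot, or the resulting $\xx'$ has $z'\geq z$ and the procedure safely returns $\uu$ itself, keeping $S,P$ total; combined with the non-degeneracy of $\CPol$, this suffices. A minor additional point, which I would include only briefly, is that the helper matrices in $\eti$ and $\isvalid$ are invertible precisely on accepted inputs, so there is no hidden division-by-zero in the pipeline.
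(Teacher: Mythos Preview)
Your proposal is correct and follows essentially the same approach as the paper: verify that the procedures are polynomial-time (hence representable by $\poly(\CL)$-sized circuits), that $S,P$ map $\vert$ to $\vert$, and that $\pot$ lands in $\{0,\dots,2^m-1\}$ via a Cramer-type bound on $z$. Your treatment is considerably more thorough than the paper's three-line sketch---in particular you explicitly check the boundary conditions $P(0^n)=0^n\neq S(0^n)$, $\pot(0^n)=0$, and you flag the secondary-ray corner case in the pivoting steps---none of which the paper's own proof spells out.
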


\begin{proof}
Since all three procedures are polynomial-time in $\CL$, they can be defined
by $\poly(\CL)$-sized Boolean circuits. Furthermore, for any $\uu \in \vert$,
we have that $S(\uu),P(\uu) \in \vert$. For~$\pot$, 
since the value of $z \in [0,\ \Delta-1]$, we
have $0\le \Delta^2(\Delta-z)\le \Delta^3$. Therefore, $\pot(\uu)$ is an
integer that is at most $2 \cdot \Delta^3$ and hence is in set $\{0,\dots, 2^m-1\}$. 
\end{proof}

There are two possible types of solutions of an \EOPL instance 
(see Definition~\ref{def:EOPL}). One indicates
the beginning or end of a line \solnref{R1}, and the other is a vertex with locally optimal
potential \solnref{R2}. 
First we show that \solnref{R2} never arises. 
For this, we need the next lemma, which shows that potential differences in two
adjacent configurations adheres to differences in the value of~$z$ at
corresponding vertices.

\begin{lemma}\label{lem:pot}
Let $\uu \neq \uu'$ be two valid configurations, i.e.,
	$\isvalid(\uu)=\isvalid(\uu')=1$, and let $(\yy,\ww,z)$ and $(\yy',\ww',z')$
	be the corresponding vertices in $\CPol$. Then the following holds: 
	\begin{enumerate}
	\item[$(i)$] $\pot(\uu)=\pot(\uu')$ iff $z=z'$. 
	\item[$(ii)$] $\pot(\uu)>\pot(\uu')$ iff $z<z'$.
	\end{enumerate}
\end{lemma}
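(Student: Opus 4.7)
}

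The plan is to reduce both parts of the lemma to a single quantitative separation statement about the possible values of the $z$-coordinate at vertices of $\CPol$. First I would observe that any valid configuration $\uu$ with $\isvalid(\uu)=1$ and $\uu\neq 0^n$ corresponds, via $\eti$, to a genuine vertex of $\CPol$, i.e.\ a solution of a nonsingular linear system $A\xx=\bb$ where $A$ is a $(2d+1)\times (2d+1)$ matrix whose rows are either rows of $[M\mid -I\mid -\ones]$ or standard basis vectors, and $\bb$ has entries drawn from $\Set{0}\cup \Set{-q_i}$. Every entry of $A$ has absolute value bounded by $I_{\max}$ and every entry of $\bb$ has absolute value bounded by $I_{\max}$, so by Cramer's rule each coordinate of $\xx$ (and in particular $z$) is rational of the form $p/q$ where $|p|,|q|\le n!\cdot I_{\max}^{2d+1} = \Delta - 1$.

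The key separation claim is then: for any two valid non-$0^n$ configurations with corresponding $z$-values $z,z'$, either $z=z'$ or $|z-z'|\ge 1/(\Delta-1)^2 > 1/\Delta^2$. This follows immediately from writing $z=p/q$ and $z'=p'/q'$ with $|p|,|p'|,q,q'\le \Delta-1$: if $z\neq z'$ then $|z-z'| = |pq'-p'q|/(qq') \ge 1/(qq') \ge 1/(\Delta-1)^2$. Also, from $|z|, |z'| \le \Delta-1$ we get $0 \le \Delta-z, \Delta-z' \le 2\Delta-1$, so $\pot$ evaluates to a floor of a non-negative quantity.

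With this separation in hand, both directions of (i) and (ii) are routine. If $z=z'$ then trivially $\pot(\uu)=\pot(\uu')$. Conversely, if $z\neq z'$, say $z<z'$, then $\Delta^2(\Delta-z)-\Delta^2(\Delta-z') = \Delta^2(z'-z)\ge 1$, and since two real numbers differing by at least $1$ have floors differing by at least $1$, we conclude $\pot(\uu)>\pot(\uu')$. This simultaneously yields $(\Rightarrow)$ of (i) via contrapositive and both directions of (ii) by applying the same argument symmetrically.

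The only remaining case is when one of $\uu,\uu'$ equals $0^n$; without loss of generality $\uu = 0^n$, so $\pot(\uu)=0$ and $\eti(\uu)=(\zeros,\qq+(z^0+1)\ones, z^0+1)$, giving $z = z^0+1$. For any other valid $\uu'$ we have $z'\le \Delta-1$ whereas $z = z^0+1 \le I_{\max}+1 \ll \Delta-1$, and $\pot(\uu') = \lfloor \Delta^2(\Delta-z')\rfloor \ge \Delta^2 > 0 = \pot(\uu)$; moreover the separation bound still applies to the pair $(z,z')$ since both are rationals with denominators bounded by $\Delta-1$, so the same floor argument shows the sign of $\pot(\uu)-\pot(\uu')$ matches the sign of $z'-z$. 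The main obstacle is simply bookkeeping the bound on $\Delta$ carefully enough to guarantee the $1/\Delta^2$ separation and that $\Delta - z \ge 0$ in all cases; once this is pinned down, the rest is an elementary floor-function manipulation.
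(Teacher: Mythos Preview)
Your proposal is correct and follows essentially the same approach as the paper: bound the numerator and denominator of the $z$-coordinate at any vertex of $\CPol$ by $\Delta$ via a Cramer's-rule/determinant argument, deduce the separation $|z-z'|\ge 1/\Delta^2$ whenever $z\neq z'$, and then turn this into a floor-function inequality for $\pot$. You are simply more explicit about the Cramer bound and the floor manipulation than the paper, which states the same chain of implications more tersely.

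One small wrinkle in your treatment of the $\uu=0^n$ boundary case: you invoke ``the same floor argument'' to match the sign of $\pot(\uu)-\pot(\uu')$ with the sign of $z'-z$, but $\pot(0^n)$ is hardcoded to $0$ in Table~\ref{tab:F} rather than computed as $\lfloor\Delta^2(\Delta-z)\rfloor$, so that floor comparison does not literally apply. The paper sidesteps this by observing that $0^n$ is the unique valid configuration with $\pot=0$ and all others have strictly positive potential, then writing ``wlog $\uu,\uu'\neq\zeros$''; this is admittedly informal (the iff in (i) and (ii) is not checked for $0^n$), but it suffices for every downstream use of the lemma.
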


\begin{proof}
Among the valid configurations all except $\zeros$ has positive $\pot$ value.
Therefore, wlog let $\uu,\uu'\neq \zeros$. For these we have $\pot(\uu)=\lfloor
\Delta^2 \cdot (\Delta -z)\rfloor$, and $\pot(\uu')=\lfloor \Delta^2 \cdot (\Delta
-z')\rfloor$. 

Note that since both $z$ and $z'$ are coordinates of vertices of $\CPol$, whose
description has highest coefficient of $\max\{\max_{i,j\in [d]}
M(i,j),\max_{i\in [d]} |q_i|\}$, and therefore their numerator and denominator
both are bounded above by $\Delta$. Therefore, if $z< z'$ then we have 
\[
z'-z\ge \frac{1}{\Delta^2} \Rightarrow ((\Delta-z) - (\Delta - z')) \cdot \Delta^2 \ge 1 \Rightarrow \pot(\uu)-\pot(\uu') \ge 1.
\]

For $(i)$, if $z=z'$ then clearly $\pot(\uu)=\pot(\uu')$, and from the above argument it also follows that if $\pot(\uu)= \pot(\uu')$ then it can not be the case that $z\neq z'$. Similarly for $(ii)$, if $\pot(\uu)>\pot(\uu')$ then clearly, $z'>z$, and from the above argument it follows that if $z'>z$ then it can not be the case that $\pot(\uu')\ge \pot(\uu)$. 
\end{proof}

Using the above lemma, we will next show that instance $\CE$ has no local maximizer. 

\begin{lemma}\label{lem:t}
Let $\uu,\vv \in \vert$ s.t. $\uu\neq \vv$, $\vv=S(\uu)$, and $\uu=P(\vv)$. Then $\pot(\uu)< \pot(\vv)$.
\end{lemma}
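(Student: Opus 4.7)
The plan is to reduce the statement to an application of Lemma~\ref{lem:pot} together with a short case analysis on whether $\uu = 0^n$. First, I would handle the starting-vertex case: if $\uu = 0^n$, then $\pot(\uu) = 0$ by definition, while by construction $\vv = S(0^n) = \ite(\yy^0, \ww^0, z^0)$ is a valid configuration (with $z = z^0$), so $\pot(\vv) = \lfloor \Delta^2 (\Delta - z^0) \rfloor$. Since $z^0 = |\min_i q_i| \le I_{\max} \le \Delta - 1$, the inequality $\pot(\vv) \ge \Delta^2 \ge 1 > 0 = \pot(\uu)$ is immediate.

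Next I would treat the generic case $\uu \ne 0^n$. Because $S(\uu) = \vv \ne \uu$, inspection of Table~\ref{tab:S} forces $\isvalid(\uu) = 1$, and the only branch that returns something other than $\uu$ requires $z_{\uu} > z_{\vv}$, where $(\yy_\uu, \ww_\uu, z_\uu) = \eti(\uu)$ and $(\yy_\vv, \ww_\vv, z_\vv) = \eti(\vv) = \eti(\ite(\xx')) = \xx'$ by Lemma~\ref{lem:vert}. Symmetrically, since $P(\vv) = \uu \ne \vv$, Table~\ref{tab:P} forces $\isvalid(\vv) = 1$ and $z_{\uu} > z_{\vv}$ (the $P$ branch that returns a non-self vertex requires the $z$-coordinate of the chosen adjacent vertex to be strictly larger than the current one). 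So in both directions the consistency of $S$ and $P$ gives strict decrease of $z$ along the edge $\uu \to \vv$.

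With both $\uu$ and $\vv$ valid, non-zero, and satisfying $z_\uu > z_\vv$, I would then invoke Lemma~\ref{lem:pot}(ii) directly to conclude $\pot(\uu) < \pot(\vv)$. The choice of $m$ so that $\pot$ maps into $\{0, \dots, 2^m - 1\}$ and the $\Delta^2$ scaling factor ensure the gap is a genuine integer gap of at least $1$, so the strict inequality is preserved after the floor in the definition of $\pot$.

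The only mild obstacle is bookkeeping the edge between $0^n$ and $\uu^0 = \ite(\yy^0,\ww^0,z^0)$, since $\pot$ uses the formula $\lfloor \Delta^2(\Delta - z)\rfloor$ for non-$0^n$ valid configurations but $0$ for $0^n$ itself; one must check that this special casing is consistent with $\uu^0$ being the first vertex on the Lemke path and with the values $\pot(0^n) = 0$ and $\pot(\uu^0) > 0$. Everything else is a direct appeal to the definitions of $S$, $P$, $\pot$, and to Lemma~\ref{lem:pot}.
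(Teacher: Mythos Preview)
Your proof is correct and follows essentially the same approach as the paper: observe that $S(\uu)=\vv\neq\uu$ forces $z_\uu > z_\vv$ by the construction of $S$, and then apply Lemma~\ref{lem:pot}(ii). You are in fact more careful than the paper's (very terse) proof in separately handling the special case $\uu=0^n$, where $\pot(\uu)=0$ by definition rather than via the formula $\lfloor\Delta^2(\Delta-z)\rfloor$.
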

\begin{proof}
Let $\xx=(\yy,\ww,z)$ and $\xx'=(\yy',\ww',z')$ be the vertices in polyhedron $\CPol$ corresponding to $\uu$ and $\vv$ respectively. From the construction of $\vv=S(\uu)$ implies that $z'<z$. Therefore, using Lemma \ref{lem:pot} it follows that $\pot(\vv)<\pot(\uu)$.
\end{proof}

Due to Lemma \ref{lem:t} the only type of solutions available in $\CE$ is \solnref{R1} where $S(P(\uu))\neq \uu$ and $P(S(\uu))\neq \uu$. Next two lemmas shows how to construct solution of P-LCP instance $\CI$ or a \solnref{PV1} type violation (non-positive principle minor of matrix $M$) from these. 

\begin{lemma}\label{lem:t1}
Let $\uu \in \vert$, $\uu \neq 0^n$. 
If $P(S(\uu))\neq \uu$ or $S(P(\uu))\neq \uu$, then $\isvalid(\uu)=1$. Futhermore, for $(\yy,\ww,z)=\eti(\uu)$ if $z=0$, then $\yy$ is a $\PLo$ type solution of \PLCP instance $\CI=(M,\qq)$. 
\end{lemma}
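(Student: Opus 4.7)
The plan is to prove the two claims in order, both by direct analysis of the successor/predecessor procedures $S$ and $P$ defined in Tables~\ref{tab:S} and~\ref{tab:P}.

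First, for the validity claim, I would argue by contrapositive: assume $\isvalid(\uu)=0$ and show that both $P(S(\uu))=\uu$ and $S(P(\uu))=\uu$. Inspecting the first line of both Table~\ref{tab:S} and Table~\ref{tab:P}, we see that whenever $\isvalid(\uu)=0$, the procedures immediately return $\uu$ itself, so $S(\uu)=\uu$ and $P(\uu)=\uu$. Consequently $S(P(\uu))=S(\uu)=\uu$ and $P(S(\uu))=P(\uu)=\uu$, contradicting the hypothesis of the lemma. Hence $\isvalid(\uu)=1$.

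Second, for the LCP-solution claim, suppose $(\yy,\ww,z)=\eti(\uu)$ satisfies $z=0$. By Lemma~\ref{lem:vert}, since $\isvalid(\uu)=1$, $(\yy,\ww,z)$ is a feasible vertex of the Lemke system~(\ref{eq:c}), i.e.\ it satisfies $\ww=M\yy+\qq+z\ones$, $\yy,\ww,z\geq 0$, and the complementarity condition $y_iw_i=0$ for all $i\in[d]$. Substituting $z=0$ into these constraints gives $\ww=M\yy+\qq\geq 0$, $\yy\geq 0$, and $y_iw_i=0$, which is exactly the definition of an LCP solution in~(\ref{eq:lcp}). This is precisely the applicability of Lemma~\ref{lem:lemke1}: a vertex of~(\ref{eq:c}) with $z=0$ yields a solution of the original LCP $(M,\qq)$, which is a \PLo-type solution of the P-LCP instance~$\CI$.

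The proof is essentially just unpacking definitions — the main routine work is confirming from the pseudocode of Tables~\ref{tab:S} and~\ref{tab:P} that the invalid branch short-circuits both procedures to the identity, and then invoking the already-established Lemmas~\ref{lem:vert} and~\ref{lem:lemke1}. No step here looks like a genuine obstacle; the real content is in the earlier setup that defined $\isvalid$, $\eti$, and the Lemke polytope so that these two facts line up.
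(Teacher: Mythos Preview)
Your proof is correct and follows essentially the same approach as the paper: the contrapositive argument that invalid configurations are self-loops under both $S$ and $P$, followed by invoking Lemma~\ref{lem:vert} for feasibility in~(\ref{eq:c}) and Lemma~\ref{lem:lemke1} to extract the LCP solution when $z=0$.
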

\begin{proof}
By construction, if $\isvalid(\uu) = 0$, then $S(P(\uu))=\uu$ and $P(S(\uu))=\uu$, therefore $\isvalid(\uu)=0$ when $\uu$ has a predecessor or successor different from $\uu$.
Given this, from Lemma \ref{lem:vert} we know that $(\yy,\ww,z)$ is a feasible vertex in (\ref{eq:c}). 
Therefore, if $z=0$, then by Lemma \ref{lem:lemke1} we have a solution of the LCP (\ref{eq:lcp}), {\em i.e.,} a type $\PLo$ solution of our \PLCP instance $\CI=(\MM,\qq)$.
%
%
\end{proof}

\begin{lemma}\label{lem:t2}
Let $\uu \in \vert$, $\uu \neq 0^n$ such that $P(S(\uu))\neq \uu$ or
$S(P(\uu))\neq \uu$, and let $\xx=(\yy,\ww,z)=\eti(\uu)$. 
If $z\neq 0$ then $\xx$ has a duplicate label, say $l$. And for directions
$\sigma_1$ and $\sigma_2$ obtained by relaxing $y_l=0$ and $w_l=0$ respectively
at $\xx$, we have $\sigma_1(z) \cdot \sigma_2(z)\ge 0$, where $\sigma_i(z)$ is
the coordinate corresponding to $z$. 
\end{lemma}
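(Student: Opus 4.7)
\textbf{Proof plan for Lemma~\ref{lem:t2}.} First I would verify that a duplicate label actually exists. Since the procedures $S$ and $P$ self-loop on invalid vertices by construction, the hypothesis $P(S(\uu))\ne \uu$ or $S(P(\uu))\ne \uu$ together with $\uu\ne 0^n$ forces $\isvalid(\uu)=1$; this is already established essentially in the proof of Lemma~\ref{lem:t1}. Inspecting the $\isvalid$ procedure, a valid $\uu\ne 0^n$ with $z\ne 0$ must satisfy $\tau=u_{d+1}+\dots+u_{2d}=1$, and the single nonzero bit at position $d+l$ identifies the unique duplicate label $l$ of $\xx = \eti(\uu)$. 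Let $\xx^1,\xx^2$ be the vertices of $\CPol$ adjacent to $\xx$ obtained by relaxing $y_l=0$ and $w_l=0$ respectively, and let $z_1,z_2$ be their $z$-coordinates; $\sigma_1,\sigma_2$ are then the two edge directions along $\xx\xx^1,\xx\xx^2$.

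The central observation I would establish is the \emph{consistency of Todd's orientation}~\cite{todd1976orientation}: if Todd orients the edge $\xx\xx^i$ as forward out of $\xx$, then Todd orients the same edge as backward into $\xx^i$ (and vice versa). From this I would derive the following key claim. If $\vv := S(\uu)\ne \uu$, then necessarily $z(\eti(\vv)) < z$ and $\vv=\ite(\xx')$ where $\xx'$ is the Todd-successor of $\xx$; applying $P$ at $\vv$, the Todd-predecessor of $\xx'$ is $\xx$ (by consistency), and the check $z_\vv < z$ succeeds, so $P(\vv) = \ite(\xx) = \uu$. The analogous statement holds with the roles of $S$ and $P$ swapped. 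Both branches of $P$ (the $z=0$ branch and the duplicate-label branch) must be checked here, but in both sub-cases the relaxation step recovers exactly the edge coming from $\xx$.

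With this in hand I would finish by a short case analysis. Case~A: $P(S(\uu))\ne \uu$. By the key claim the only remaining possibility is $S(\uu)=\uu$, which by the $S$-procedure means the Todd-successor $\xx'$ has $z'\ge z$, hence $z'>z$ by non-degeneracy of $\CPol$. Then $P(S(\uu))=P(\uu)\ne \uu$, and the $P$-procedure returning a non-self-loop forces the Todd-predecessor $\xx''$ to satisfy $z''>z$. Since $\{\xx',\xx''\}=\{\xx^1,\xx^2\}$ we conclude $z_1>z$ and $z_2>z$, so $\sigma_1(z)$ and $\sigma_2(z)$ are both strictly positive and $\sigma_1(z)\sigma_2(z)>0$. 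Case~B: $S(P(\uu))\ne \uu$ is entirely symmetric and yields $z_1<z$ and $z_2<z$, so $\sigma_1(z),\sigma_2(z)$ are both strictly negative. In either case $\sigma_1(z)\sigma_2(z)>0\ge 0$, as required.

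The main obstacle I anticipate is the careful case-book for the ``consistency of Todd's orientation'' claim in the middle paragraph: the $S$ and $P$ procedures branch on whether the current vertex (and the adjacent one reached) have $z=0$ or a duplicate label, and I must verify that in every combination the edge recovered by relaxing the right constraint at the neighbor really is the edge back to $\xx$. Once this is nailed down (essentially reducing to the fact that the two Lemke-path edges at a non-solution vertex are exactly the two edges enumerated by the procedure, and that a $z=0$ vertex has a unique incident Lemke edge), the remaining case analysis is a short and mechanical sign-chase. One minor technical point I would want to flag is the use of non-degeneracy (or lexicographic perturbation of $\qq$, as in Section~\ref{sec:PLCPtoEOPL}) to rule out $z'=z$ on adjacent Lemke vertices, so that the strict inequalities in the $S$ and $P$ procedures have unambiguous outcomes.
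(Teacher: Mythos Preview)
Your proposal is correct and follows essentially the same route as the paper's proof: establish validity via Lemma~\ref{lem:t1}, then argue by cases that $S(\uu)\ne\uu$ (resp.\ $P(\uu)\ne\uu$) would force $P(S(\uu))=\uu$ (resp.\ $S(P(\uu))=\uu$) by the consistency of Todd's orientation, so the failing case forces a self-loop on one side and a genuine move on the other, yielding the same sign of $\sigma_i(z)$ on both incident edges. One small caveat: non-degeneracy of $\CPol$ does not by itself rule out $z'=z$ along an edge, so your strict inequality $\sigma_1(z)\sigma_2(z)>0$ is slightly too strong; but since the lemma only asserts $\ge 0$ and the $S,P$ procedures use strict comparisons (so $S(\uu)=\uu$ gives $z'\ge z$ while $P(\uu)\ne\uu$ gives $z''>z$), your argument still delivers the stated conclusion without that step.
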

\begin{proof}
From Lemma \ref{lem:t1} we know that $\isvalid(\uu)=1$, and therefore from Lemma \ref{lem:vert}, $\xx$ is a feasible vertex in (\ref{eq:c}).
From the last line of Tables \ref{tab:S} and \ref{tab:P} observe that $S(\uu)$ points to the configuration of vertex next to $\xx$ on Lemke's path only if it has lower $z$ value otherwise it gives back $\uu$, and similarly $P(\uu)$ points to the previous only if value of $z$ increases.


First consider the case when $P(S(\uu))\neq \uu$. Let $\vv=S(\uu)$ and corresponding vertex in $\CPol$ be $(\yy',\ww',z')=\eti(\vv)$. 
If $\vv\neq \uu$, then from the above observation we know that $z'>z$, and in that
case again by construction of $P$ we will have $P(\vv)=\uu$, contradicting
$P(S(\uu))\neq \uu$. Therefore, it must be the case that $\vv=\uu$.
Since $z\neq 0$ this happens only when the next vertex on Lemke path after $\xx$ has
higher value of $z$ (by above observation). As a consequence of $\vv=\uu$, we also have $P(\uu)\neq \uu$. By construction of $P$ this implies for 
$(\yy'',\ww'',z'')=\eti(P(\uu))$, $z''>z$. Putting both together we get 
increase in $z$ when we relax $y_l=0$ as well as when we relax $w_l=0$ at
$\xx$.

For the second case $S(P(\uu))\neq \uu$ similar argument gives that value of $z$ decreases when we relax $y_l=0$ as well as when we relax $w_l=0$ at
$\xx$. The proof follows.
\end{proof}

Finally, we are ready to prove our main result of this section using Lemmas
\ref{lem:t}, \ref{lem:t1} and \ref{lem:t2}. Together with Lemma \ref{lem:t2},
we will use the fact that on Lemke path $z$ monotonically decreases if $M$ is a
P-matrix or else we get a \solnref{PV1} type witness that $M$ is not a
P-matrix~\cite{cottle2009linear}. 

\begin{theorem}
\label{thm:plcp-eopl}
There is a polynomial-time promise-preserving reduction from \PLCP with \solnref{PV1} violations to \EOPL. 
\end{theorem}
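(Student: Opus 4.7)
My plan is to assemble the proof from the lemmas already established, with the bulk of the work being the classification of the two possible \solnref{R1} solution types in $\CE$ and a careful appeal to the standard structural result for Lemke paths on P-matrix LCPs.

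First, I would invoke Lemma~\ref{lem:PSF} to certify that $\CE = (S, P, \pot)$ is a syntactically valid \EOPL instance, with $P(0^n) = 0^n \neq S(0^n)$ and $\pot(0^n) = 0$ holding by construction. Second, I would rule out solutions of type \solnref{R2}: by Lemma~\ref{lem:t}, along any valid edge $(\uu, \vv = S(\uu))$ with $P(\vv) = \uu$, the potential strictly increases, because $S$ by construction only advances along the Lemke path in directions where $z$ strictly decreases, and Lemma~\ref{lem:pot} then forces a potential increase of at least one. Hence only \solnref{R1} solutions need to be analyzed.

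Third, given an \solnref{R1} solution $\uu \ne 0^n$, Lemma~\ref{lem:t1} guarantees $\isvalid(\uu) = 1$, so $\xx = (\yy, \ww, z) = \eti(\uu)$ is a genuine vertex of $\CPol$ feasible in~\eqref{eq:c}. If $z = 0$, Lemma~\ref{lem:lemke1} lets me return $\yy$ as a \PLo solution of $\CI$. Otherwise, by Lemma~\ref{lem:t2}, $\xx$ has a duplicate label $\ell$ and both edges at $\xx$ obtained by relaxing $y_\ell = 0$ and $w_\ell = 0$ have weakly non-decreasing $z$-component (i.e.\ $\sigma_1(z)\cdot \sigma_2(z)\ge 0$ in the notation of that lemma). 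This is the step I expect to be the main obstacle: I need to convert this "$z$ does not decrease at a duplicate-label vertex" condition into an explicit \solnref{PV1} witness, namely a subset $\alpha \subseteq [d]$ with $\det(M_{\alpha\alpha}) \le 0$. The plan here is to invoke the classical structural theorem on Lemke's algorithm for P-matrix LCPs (see Cottle--Pang--Stone~\cite{cottle2009linear}): if $M$ is a P-matrix then $z$ is strictly monotone decreasing along the entire Lemke path, so whenever Lemke's pivot rule reaches a vertex at which the $z$-monotonicity fails, the tableau computation performed in evaluating $S$ and $P$ at $\uu$ exhibits a non-positive principal minor of $M$ indexed by some $\alpha$ directly readable from the basis at~$\xx$.

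Finally, promise-preservation follows immediately: if $M$ really is a P-matrix, the classical theorem above guarantees that Lemke's path from $(\yy^0, \ww^0, z^0)$ terminates at a vertex with $z = 0$, and $z$ strictly decreases at every intermediate vertex. Hence the vertices of $\CE$ reachable from $0^n$ form a single line ending at $\ite$ of the unique LCP solution, the \solnref{R2} case is excluded by Lemma~\ref{lem:t}, and the only \solnref{R1} solution is the end of line, which maps to a \PLo solution of $\CI$. No \solnref{Q2} violation is ever produced, so the reduction is promise-preserving. The total running time is $\poly(\CL)$ because each of $S$, $P$, $\pot$, $\isvalid$, $\eti$, $\ite$ performs only a single linear algebra computation on matrices of dimension $O(d)$ with entries of bit length $O(\CL)$, and Todd's local orientation procedure~\cite{todd1976orientation} is likewise polynomial.
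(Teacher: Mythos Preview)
Your proposal is correct and follows essentially the same approach as the paper's own proof: validity of $\CE$ via Lemma~\ref{lem:PSF}, exclusion of \solnref{R2} via Lemma~\ref{lem:t}, the $z=0$/$z>0$ split for \solnref{R1} solutions via Lemmas~\ref{lem:t1} and~\ref{lem:t2}, and the appeal to~\cite{cottle2009linear} to extract a non-positive principal minor from the failure of $z$-monotonicity. The only point where the paper is marginally more explicit is in promise-preservation: it notes that when $M$ is a P-matrix \emph{all} feasible complementary vertices of $\CPol$ lie on the single Lemke path, which rules out stray \solnref{R1} solutions coming from other lines; your argument implicitly relies on this but phrases it only in terms of reachability from $0^n$.
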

\begin{proof}
Given an instance of $\CI=(\MM,\qq)$ of \PLCP, where $M\in \Real^{d\times d}$
and $\qq\in \Real^{d\times 1}$ reduce it to an instance $\CE$ of \EOPL as
described above with vertex set $\vert=\{0,1\}^{2d}$ and procedures $S$, $P$ and
$\pot$ as given in Table \ref{tab:S}, \ref{tab:P}, and \ref{tab:F} respectively.

Among solutions of \EOPL instance $\CE$, there is no local potential maximizer,
	i.e., $\uu\neq \vv$ such that $\vv=S(\uu)$, $\uu=P(\vv)$ and $\pot(\uu)>\pot(\vv)$
	due to Lemma \ref{lem:t}. We get a solution $\uu \neq 0$ such that either
	$S(P(\uu))\neq \uu$ or $P(S(\uu))\neq \uu$, then by Lemma \ref{lem:t1} it is
	valid configuration and has a corresponding vertex $\xx=(\yy,\ww,z)$ in
	$\CPol$. Again by Lemma~\ref{lem:t1} if $z=0$ then $\yy$ is a $\PLo$ type solution
	of our \PLCP instance $\CI$. On the other hand, if $z>0$ then from Lemma
	\ref{lem:t2} we get that on both the two adjacent edges to $\xx$ on Lemke
	path the value of $z$ either increases or deceases. This gives us a minor of $M$
	which is non-positive~\cite{cottle2009linear}, 
	i.e., a \PLt type solution of the \PLCP instance~\CI\ with \solnref{PV1} violation.

	The reduction is promise preserving because if the LCP instance is promised to be \PLCP then $z$ monotonically decreases along the Lemke's path, and all feasible complementary vertices are on this path. Therefore, the corresponding \EOPL instance will have exactly one path ending in a solution where the corresponding vertex $\xx=(\yy,\ww,z)$ of the LCP has $z=0$ mapping to the \PLCP solution. 
\end{proof}

\subsection{Reduction from \PLCP with \solnref{PV2} violations to \UEOPL}
\label{app:plcp_pv2_to_ueopl}

Next we show that the above construction also implies \PLCP with \solnref{PV2}
violations is in \UEOPL, and thereby also in \EOPL.  
We start with a simple well-known lemma that turns two solutions of an LCP $(M,
\qq)$ into a non-zero sign-reversing vector for $M$, i.e. a \solnref{PV2}
violation.

\begin{lemma}\cite[Theorem 3.3.7]{cottle2009linear}
\label{lem:2solTOPV2}
If for some $d$-dimensional vector $\qq'$, LCP $(M,\qq')$ has more than one solution then there exists a sign-reversing vector $\xx$ w.r.t. $M$, i.e., $x_i(Mx)_i \le 0,\ \ \forall i \in [d]$.
\end{lemma}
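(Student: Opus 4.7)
The plan is to exhibit the sign-reversing vector explicitly as the difference of the $\yy$-components of the two distinct LCP solutions. Suppose $(\yy^1,\ww^1)$ and $(\yy^2,\ww^2)$ are two distinct solutions of $(M,\qq')$, so that $\ww^j = M\yy^j + \qq'$, $\yy^j,\ww^j \ge 0$, and $y^j_i w^j_i = 0$ for every $i \in [d]$ and $j \in \{1,2\}$. Set $\xx := \yy^1 - \yy^2$. I would first observe that $\xx \neq 0$: if $\yy^1 = \yy^2$, then $\ww^1 = M\yy^1+\qq' = M\yy^2+\qq' = \ww^2$, contradicting distinctness of the two solutions.

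Next I would compute $M\xx$. Since the constant vector $\qq'$ cancels, $M\xx = (M\yy^1+\qq') - (M\yy^2+\qq') = \ww^1 - \ww^2$. Therefore, for each $i \in [d]$,
\[
x_i (M\xx)_i \;=\; (y^1_i - y^2_i)(w^1_i - w^2_i) \;=\; y^1_i w^1_i - y^1_i w^2_i - y^2_i w^1_i + y^2_i w^2_i.
\]
Using the complementarity conditions $y^1_i w^1_i = 0$ and $y^2_i w^2_i = 0$, the right-hand side collapses to $-y^1_i w^2_i - y^2_i w^1_i$. Since $\yy^1,\yy^2,\ww^1,\ww^2$ are all componentwise nonnegative, each of $y^1_i w^2_i$ and $y^2_i w^1_i$ is $\ge 0$, so $x_i(M\xx)_i \le 0$ for all $i$. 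This is exactly the sign-reversing condition defining a \solnref{PV2} violation, completing the proof.

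\textbf{Where the work lies.} There is essentially no obstacle here; the argument is a one-line algebraic manipulation using only linearity and the complementarity condition. The only tiny thing to be careful about is confirming that $\xx \neq 0$, which follows immediately from $\ww^j = M\yy^j+\qq'$ given that the two solutions are distinct. No additional tools beyond the definitions in Section~\ref{sec:PLCPtoEOPL} are required.
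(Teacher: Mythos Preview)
Your proof is correct and is essentially identical to the paper's own argument: both set $\xx = \yy^1 - \yy^2$, use $M\xx = \ww^1 - \ww^2$, and then expand $(y^1_i - y^2_i)(w^1_i - w^2_i)$ using complementarity and nonnegativity to get $x_i(M\xx)_i \le 0$. Your explicit justification that $\xx \neq 0$ (via $\ww^j = M\yy^j + \qq'$) is a small addition the paper leaves implicit.
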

\begin{proof}
Let $(\yy',\ww')$ and $(\yy^*,\ww^*)$ be two distinct solutions of the LCP defined by $(M,\qq')$. That is $\yy'\neq \yy^*$. Then,
\[
\ww^* = M\yy^*+\qq' \ \ \mbox{and} \ \ \ww' = M\yy'+\qq' \Rightarrow (\ww^* - \ww') = M(\yy^* - \yy')
\]
Furthermore, for each $i\in [d]$ we have $w^*_i y^*_i =0, w'_i y'_i =0, w^*_i y'_i \ge 0$ and $w'_i y^*_i \ge 0$. This together with $(\ww^* - \ww')_i = (M(\yy^* - \yy'))_i$ gives,
\[
\forall i \in [d], \ \ \ (\yy^* -\yy')_i (\ww^* - \ww')_i \le 0\ \ \  \Rightarrow \ \ \ (\yy^* -\yy')_i (M(\yy^* - \yy'))_i \le 0
\]
Thus, $\xx = \yy^* -\yy'$ is our desired vector. Note that $\xx \neq 0$ since $\yy' \neq \yy^*$.
\end{proof}

\UEOPL has four types of solutions. Out of these, \solnref{UV1} is ruled out by
Lemma \ref{lem:t}. Next we show that any ``extra'' end of lines as well as
\solnref{UV3} type solutions map to a \solnref{PV2} violation.

\begin{lemma}\label{lem:plcp-2sol}
Given either of the following, we can construct two distinct solutions of LCP $(M,\qq')$ for some~$\qq'$:
\begin{itemize}
\item[$(a)$] $\uu\in \vert$ is a \solnref{U1} or \solnref{UV2} type solution of instance $\CE$ such that corresponding vertex $\xx=(\yy,\ww,z)=\eti(\uu)$ has $z^*>0$. 
\item[$(b)$] $\uu,\vv \in \vert$ forms a \solnref{UV3} type solution of instance $\CE$.
\end{itemize}
\end{lemma}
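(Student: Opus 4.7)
In both cases we exploit the correspondence (Lemma~\ref{lem:vert}) between valid configurations in $\CE$ and complementary feasible vertices of the Lemke polyhedron $\CPol$, together with the elementary observation that any vertex $\xx=(\yy,\ww,z)\in\CPol$ gives a solution $(\yy,\ww)$ of the LCP $(M,\qq+z\one)$: this follows directly from substituting $z$ into $\ww = M\yy+\qq+z\one$, while complementarity $y_iw_i=0$ is automatic for valid configurations. Thus the task is just to produce two distinct complementary vertices of $\CPol$ sharing the same $z$-value.

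\textbf{Case (b).} Here we have distinct valid configurations $\uu\neq \vv$ with $\eti(\uu)=(\yy^\uu,\ww^\uu,z_\uu)$ and $\eti(\vv)=(\yy^\vv,\ww^\vv,z_\vv)$, and either $\pot(\uu)=\pot(\vv)$ or $\pot(\uu)<\pot(\vv)<\pot(S(\uu))$. By Lemma~\ref{lem:pot}, in the first sub-case we immediately get $z_\uu=z_\vv=:z'$, and $(\yy^\uu,\ww^\uu),(\yy^\vv,\ww^\vv)$ are two distinct solutions of LCP $(M,\qq+z'\one)$ (they are distinct since $\uu\neq\vv$ and $\eti$ is a bijection on valid configurations). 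In the second sub-case, Lemma~\ref{lem:pot} gives $z_\uu>z_\vv>z_{S(\uu)}$; since $\eti(\uu)$ and $\eti(S(\uu))$ are adjacent on the Lemke path, they are the endpoints of an edge of $\CPol$ along which $z$ varies linearly and monotonically. The unique interior point of this edge with $z=z_\vv$ yields one solution of $(M,\qq+z_\vv\one)$, and $\eti(\vv)$ gives a second. These are distinct because, by non-degeneracy of $\CPol$, the only vertices on the edge are its two endpoints, whose $z$-values differ from $z_\vv$.

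\textbf{Case (a).} Set $z'{}^*:=z^*$, and let $\xx^*=\eti(\uu)$. Since $\uu$ is either a \solnref{U1} or \solnref{UV2} solution, Lemma~\ref{lem:t2} applies, producing a duplicate label $l$ at $\xx^*$ and two Lemke-adjacent vertices $\xx^1,\xx^2$ obtained by relaxing $y_l=0$ and $w_l=0$, with $\sigma_1(z)\sigma_2(z)\geq 0$. Non-degeneracy of $\CPol$ rules out $z^1=z^*$ or $z^2=z^*$, so in fact either $z^1,z^2>z^*$ (the \solnref{U1} case, a local minimum of $z$) or $z^1,z^2<z^*$ (the \solnref{UV2} case, a local maximum). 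In the former case, choose any rational $z'\in(z^*,\min\{z^1,z^2\})$; in the latter, any rational $z'\in(\max\{z^1,z^2\},z^*)$. Because $z$ varies linearly along each of the two Lemke-edges from $\xx^*$, each edge contains a unique point at height $z=z'$, and these two points yield solutions $(\yy^{(1)},\ww^{(1)})$ and $(\yy^{(2)},\ww^{(2)})$ of LCP $(M,\qq+z'\one)$. They are distinct because the two edges share only the vertex $\xx^*$, which has $z$-coordinate $z^*\neq z'$.

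\textbf{Main obstacle.} The argument is geometrically transparent; the only technical care needed is (i) invoking non-degeneracy of $\CPol$ to guarantee that $z^1,z^2\neq z^*$ and that no spurious coincidence of vertices occurs along the two Lemke-edges, and (ii) ensuring the interpolated point at height $z'$ is rational, which follows because $z^*,z^1,z^2$ are coordinates of vertices of a polyhedron defined by rational data, so the interpolation parameter is rational whenever $z'$ is. Combining with Lemma~\ref{lem:2solTOPV2}, the two constructed solutions immediately yield a \solnref{PV2} violation, which is how this lemma will be used in the subsequent reduction to \UEOPL.
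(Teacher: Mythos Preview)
Your approach matches the paper's: in both cases you exhibit two complementary-feasible points of $\CPol$ sharing the same $z$-height, then take $\qq' = \qq + z\one$. Case (b) is handled identically (you are in fact more careful than the paper about distinctness of the interpolated edge-point and $\eti(\vv)$). In case (a) you and the paper both invoke Lemma~\ref{lem:t2} to obtain two Lemke edges at $\xx^*$ along which $z$ moves in the same weak direction, then interpolate to a common height $z'$ on each edge.

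One small slip in case (a): non-degeneracy of $\CPol$ is a condition on the constraint hyperplanes and does \emph{not} by itself rule out $\sigma_i(z)=0$, i.e., a Lemke edge along which the coordinate $z$ stays constant. The paper explicitly allows this possibility: if some $\sigma_i(z)=0$ then that entire edge sits at height $z^*$, so $\xx^*$ together with any other point on that edge already furnishes two distinct complementary solutions of $(M,\qq+z^*\one)$. Your dichotomy ``either $z^1,z^2>z^*$ or $z^1,z^2<z^*$'' should therefore be weakened to non-strict inequalities, with the equality case handled as just described. A related minor point: one of the two edges may be an unbounded ray (so the adjacent vertex $\xx^i$ need not exist), which the paper accommodates by letting $z_i = -\infty$; this is harmless for your argument too, since you only move a short distance along the edge.
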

\begin{proof}
The common idea to go from $(a)$ or $(b)$ to two solutions of some LCP with matrix $M$ is to create two (or more) solutions of \eqref{eq:c} with same $z$ value. Suppose $(\yy^*,\ww^*,a)$ and $(\yy',\ww',a)$ with $\yy \neq \yy'$ are feasible in \eqref{eq:c} for some $a\in \mathbb R$, then clearly for $\qq'=\qq+a$, $(\yy^*,\ww^*)$ and $(\yy',\ww')$ are solutions of LCP \eqref{eq:lcp} with matrix $M$ and vector $\qq'$.

For $(a)$, let $\xx^*=(\yy^*,\ww^*,z^*)=\eti(\uu)$ with $z^*>0$, and let $l$ be the duplicate label at vertex $\xx^*$ in $\PLo$. Then from Lemma \ref{lem:t2} we know that for directions $\sigma_1$ and $\sigma_2$ obtained by relaxing $y_l=0$ and $w_l=0$ respectively at $\xx^*$, we have $\sigma_1(z)*\sigma_2(z)\ge 0$, where $\sigma_i(z)$ is the coordinate corresponding to $z$. Suppose $\sigma_1(z), \sigma_2(z)<0$, and for $i=1,2$, let $z_i$ be the value of $z$ at the vertex adjacent to $\xx^*$ in direction of $\sigma_i$; set $z_i = -\infty$ if no vertex encountered in direction $\sigma_i$. Let $\epsilon>0$ be small enough so that $\epsilon < (z^* - z_i),\ i=1,2$, and consider the points $\xx^i = (\xx^* + \frac{\epsilon}{|\sigma_i(z)|} \sigma_i)$ on the edge corresponding to $\sigma_i$ adjacent to $\xx^*$. It is easy to check that by choice of $\epsilon$, both $\xx^1$ and $\xx^2$ are feasible. We will next show that these are solutions of an LCP defined by $(M,\qq')$ for some $\qq'$.

Note that by construction the $z$ coordinate at both $\xx^1$ and $\xx^2$ is $(z^*-\epsilon)$, giving us desired two solutions of \eqref{eq:c} with the same $z$ value. Similar, argument holds when $\sigma_1(z), \sigma_2(z)>0$ where the corresponding $z$ value is $(z^* + \epsilon)$. If either $\sigma_1(z)$ or $\sigma_2(z)$ is zero, then $z$ remains unchanged on the entire corresponding edge. 

For $(b)$, $\xx^* = (\yy^*,\ww^*,z^*)= \eti(\uu)$ and $\xx' = (\yy',\ww',z') = \eti(\vv)$. If $\pot(\uu) = \pot(\vv)$, then clearly $z^*=z'$ (Lemma \ref{lem:pot}) and we get the desired two points feasible in \eqref{eq:c} with the same $z$ value. If $\pot(\uu) < \pot(\vv) < \pot(S(\uu))$, then there exists a point on the edge joining $\xx^*$ with $\eti(S(\xx^*))$ with the same $z$ value as $z'$. 
\end{proof}

Now we are ready to show our main result of \PLCP with PV2 in \UEOPL using Lemmas \ref{lem:t}, \ref{lem:t1}, \ref{lem:2solTOPV2} and \ref{lem:plcp-2sol}

\begin{theorem}
\label{thm:plcp-ueopl}
There is a polynomial-time promise-preserving reduction from \PLCP with \solnref{PV2} violations to \UEOPL,
and thereby also to \EOPL. 
\end{theorem}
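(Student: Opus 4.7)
The plan is to reuse the exact same reduction construction from the proof of Theorem~\ref{thm:plcp-eopl}: given a \PLCP instance $\CI = (M,\qq)$ in dimension $d$, we produce the \EOPL/\UEOPL instance $\CE$ on vertex set $\vert = \{0,1\}^{2d}$ with the procedures $S$, $P$, and $\pot$ defined in Tables~\ref{tab:S},~\ref{tab:P}, and~\ref{tab:F}. We then verify that this same instance is a valid \UEOPL instance and that every type of \UEOPL solution maps back, in polynomial time, to either a \PLo solution or a \solnref{PV2} violation of $\CI$.

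Recall that \UEOPL has one proper solution type \solnref{U1} and three violation types \solnref{UV1}, \solnref{UV2}, \solnref{UV3}. I would dispatch them as follows. First, Lemma~\ref{lem:t} directly rules out \solnref{UV1} solutions, since it shows that for every valid edge in $\CE$ the potential strictly increases, so no vertex $\uu$ satisfies $P(S(\uu)) = \uu$ with $\pot(S(\uu)) \le \pot(\uu)$. Next, a \solnref{U1} solution is a vertex $\uu$ with $P(S(\uu)) \ne \uu$ and, by Lemma~\ref{lem:t1}, $\isvalid(\uu) = 1$; let $\xx = (\yy, \ww, z^*) = \eti(\uu)$. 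If $z^* = 0$, Lemma~\ref{lem:t1} already gives us that $\yy$ is a \PLo solution of $\CI$. If $z^* > 0$, then Lemma~\ref{lem:plcp-2sol}(a) yields two distinct feasible points $(\yy^1, \ww^1)$ and $(\yy^2, \ww^2)$ of some LCP $(M, \qq')$ obtained by shifting $\qq$, and then Lemma~\ref{lem:2solTOPV2} converts these into a non-zero vector $\xx$ with $x_i(Mx)_i \le 0$ for all $i$, that is, a \solnref{PV2} violation. The same argument treats \solnref{UV2} solutions, which are also vertices where the predecessor/successor structure forces a double-direction pivot at $\xx$, so Lemma~\ref{lem:plcp-2sol}(a) again applies.

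The remaining case is \solnref{UV3}, a pair $(\uu,\vv)$ with $\pot(\uu) = \pot(\vv)$ or $\pot(\uu) < \pot(\vv) < \pot(S(\uu))$. Lemma~\ref{lem:plcp-2sol}(b) covers both subcases: in the first we take $\eti(\uu)$ and $\eti(\vv)$ directly, and in the second we take a point on the edge from $\eti(\uu)$ to $\eti(S(\uu))$ at the unique parameter where the $z$-coordinate equals $z' := \eti(\vv)_z$. Either way we obtain two feasible points of~\eqref{eq:c} with a common $z$-value, hence two distinct solutions of an LCP $(M,\qq')$, and Lemma~\ref{lem:2solTOPV2} again produces a \solnref{PV2} violation of $\CI$. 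All of these mappings are clearly computable in polynomial time.

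Finally I would argue the promise-preservation. If $M$ really is a P-matrix, then Lemke's algorithm traces a single path on which the extra variable $z$ decreases monotonically, ending at the unique LCP solution where $z = 0$ \cite{cottle2009linear}. By construction of $S$, $P$, and $\pot$, this path is precisely the unique line in $\CE$ starting at $0^{2d}$, and it ends at the valid configuration $\uu^\ast$ with $\eti(\uu^\ast)_z = 0$. Hence $\CE$ contains no violation solutions of any of the types \solnref{UV1}, \solnref{UV2}, \solnref{UV3}, and the unique \solnref{U1} solution of $\CE$ maps back to the unique \PLo solution of $\CI$. The main obstacle in the write-up — verifying that the \solnref{U1}/\solnref{UV2} cases with $z^* > 0$ and the \solnref{UV3} cases really do yield two distinct LCP solutions — is already absorbed into Lemma~\ref{lem:plcp-2sol}, so the remaining work is just the bookkeeping above. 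The containment in \EOPL then follows from \UEOPLc $\subseteq$ \EOPLc.
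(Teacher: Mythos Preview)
Your proposal is correct and follows essentially the same approach as the paper: you reuse the reduction from Theorem~\ref{thm:plcp-eopl}, rule out \solnref{UV1} via Lemma~\ref{lem:t}, handle \solnref{U1}/\solnref{UV2} with $z=0$ via Lemma~\ref{lem:t1} and $z>0$ via Lemma~\ref{lem:plcp-2sol}(a) plus Lemma~\ref{lem:2solTOPV2}, handle \solnref{UV3} via Lemma~\ref{lem:plcp-2sol}(b) plus Lemma~\ref{lem:2solTOPV2}, and finish with the promise-preservation argument. This is exactly the paper's proof, with a bit more detail spelled out in the \solnref{UV3} case.
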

\begin{proof}
Given an instance of $\CI=(\MM,\qq)$ of \PLCP, where $M\in \Real^{d\times d}$
and $\qq\in \Real^{d\times 1}$ reduce it to an instance $\CE$ of \UEOPL as
described above with vertex set $\vert=\{0,1\}^{2d}$ and procedures $S$, $P$ and
$\pot$ as given in Table \ref{tab:S}, \ref{tab:P}, and \ref{tab:F} respectively.

Lemma \ref{lem:t} rules out UV1 violation in $\CE$. If we get U1 solution or UV2
violation $\uu$ of $\CE$, then corresponding vertex $\xx=(\yy,\ww,z)$ is
feasible in \eqref{eq:c} by Lemma \ref{lem:t1}. Furthermore, if $z=0$ then $\yy$
is a $\PLo$ type solution of our \PLCP instance $\CI$. On the other hand if
$z>0$, then by Lemmas \ref{lem:plcp-2sol} and \ref{lem:2solTOPV2} we can
construct a PV2 violation of our \PLCP instance $\CI$. Similarly, Lemmas
\ref{lem:plcp-2sol} and \ref{lem:2solTOPV2} also map any UV3 violation of
\UEOPLc instance $\CE$ to a PV2 violation of \CI. 

By construction, if $\CI$ is a promise \PLCP instance, then instance $\CE$ of
\UEOPLc will have exactly one U1 solution corresponding to the unique solution
of the $\CI$.
\end{proof}


\section{Proofs for Section~\ref{sec:algorithms}: Algorithms for Contraction Maps}
\label{app:algorithm_details}


In this section, we provide an exact algorithm for solving \LCM, i.e., we either return a rational fixpoint of polynomial bit-length or a pair of points that prove (indirectly) that the given function is not a contraction map. 
which is guaranteed to have a rational fixpoint
of polynomial bit-length or two points that prove (indirectly) that the given function is not contracting.
Then we extend this algorithm to find an approximate fixpoint of general 
contraction maps for which there may not be an exact solution of
polynomial bit length. In both cases, the problems solved by our algorithm are not promise problems and we always return either a solution or a violation.
Our algorithms work for any $\ell_p$ norm with $p \in \Natural$, and are polynomial for constant dimension $d$.
These are the first such algorithms for $p \ne 2$. 
Such algorithms were so far only known for the $\ell_2$ and $\ell_\infty$
norms \cite{HuangKhachSik99,Sik01,ShellSik03}\footnote{Our approach does not cover the $\ell_\infty$
norm, as that would require more work and not give a new result.}

\subsection{Overview: algorithm to find a fixed-point of \LCM}


The algorithm does a nested binary search using Lemmas \ref{lem:cm1} and
\ref{lem:cm2} to find fixpoints of slices with increasing numbers of free
coordinates. We illustrate the algorithm in two dimensions in
Figure~\ref{fig:exact_algo}. The algorithm is recursive. To find the eventual
fixpoint in $d$ dimensions we fix a single coordinate $s_1$, find the
unique $(d-1)$-dimensional fixpoint of $\restr{f}{\ss}$, the
$(d-1)$-dimensional contraction map obtained by fixing the first coordinate of
the input to $f$ to be $s_1$. Let $x$ the unique fixpoint of $\restr{f}{\ss}$
where $x_1 = s_1$. If $f(x_1) > s_1$, then the $d$-dimensional fixpoint $x^*$ of
$f$ has $x^*_1 > s_1$, and if $f(x_1) < s_1$, then $x^*_1 < s_1$ (Lemma \ref{lem:cm2}). We can thus do
a binary search for the value of $x^*_1$. Once we've found $x^*_1$, we can
recursively find the $(d-1)$-dimensional fixpoint of $\restr{f}{\ss}$ where $s_1
= x_1$. The resulting solution will be the $d$-dimensional fixpoint. At each
step in the recursive procedure, we do a binary search for the value of one
coordinate of the fixpoint at the slice determined by all the coordinates
already fixed. For piecewise-linear functions, we know that all fixpoints are
rational with bounded bit-length (as discussed in Section \ref{app:points2}), so we can find each coordinate exactly.

If at any step in recursion our binary search finds two $k-1$ dimensional fixpoints on slices that are adjacent, differing only in the $k$th coordinate and by a small enough amount , we can return these points, which witness the failure of $f$ to be a contraction map. These points correspond to a solution of type \solnref{CMV3} to the \LCM problem. The proof that $f$ is not a contraction is indirect, and uses the fact that the discretized grid implicitly searched by the algorithm will contain every fixpoint of $f$. Since we maintain the invariant that our two pivots bound the coordinate we're searching over from above and below when $f$ is a contraction map, such a pair of points gives proof that $f$ is not contracting.

\begin{figure}[h!]
  \centering
  \begin{center}
    \begin{tikzpicture}
  \pgfmathsetmacro{\size}{3}
  \pgfmathsetmacro{\rad}{0.6}
  \coordinate (bl) at (-\size, -\size);
  \coordinate (tl) at (-\size, \size);
  \coordinate (tr) at (\size, \size);
  \coordinate (br) at (\size, -\size);
  \coordinate (first) at (0, 1.4);
  \coordinate (second) at ($(\size/2, -0.6)$);
  \coordinate (third) at ($(\size/4, 0.8)$); 
  \draw[thick] (tl) -- (tr) -- (br) -- (bl) -- (tl);
  \draw[thick, dashed, darkgray] (0, \size) -- (0, -\size);
  \draw[thick, dashed, darkgray] ($(\size/2, \size)$) -- ($(\size/2, -\size)$);
  \draw[thick, dashed, darkgray] ($(\size/4, \size)$) -- ($(\size/4, -\size)$);
    
  \fill (first) circle [radius=\rad mm];
  \fill (second) circle [radius=\rad mm];
  \fill (third) circle [radius=\rad mm];

  \coordinate (fourth) at ($(0, 0)$);
  \coordinate (fifth) at ($(0, \size/2)$);
  \coordinate (sixth) at ($(0, 3*\size/4)$);
  \coordinate (actual) at ($(1.2, 0.3)$);

  \coordinate (b1) at (0,-\size);
  \coordinate (b2) at (\size/2,-\size);
  \coordinate (b3) at (\size/4,-\size);


  \fill (actual) circle [radius=\rad mm];
  \node [right=1mm of actual] {$x^* = f(x^*)$};

  \node [below=0.5mm of b1] {$\frac{1}{2}$};
  \node [below=0.5mm of b2] {$\frac{3}{4}$};
  \node [below=0.5mm of b3] {$\frac{5}{8}$};

  \draw[->, thick] (first.center) -- ++(1.7, 0.0);
  \draw[->, thick] (second.center) -- ++(-0.4, 0.0);
  \draw[->, thick] (third.center) -- ++(0.5, 0.0);

  \begin{pgfonlayer}{background}
    \fill[pastelred, opacity=0.2] (0, -\size) rectangle (\size, \size);
    \fill[pastelred, opacity=0.4] ($(0, -\size)$) rectangle ($(\size/2, \size)$);
    \fill[pastelred, opacity=0.5] ($(\size/4, -\size)$) rectangle ($(\size/2, \size)$);
  \end{pgfonlayer}{background}
\end{tikzpicture}
  \end{center}
  \caption{An illustration of the algorithm to find a fixpoint of a piecewise-linear contraction map in two dimensions. The algorithm begins by finding a fixpoint along the slice with $x_1 = 1/2$. The fixpoint along that slice points to the right, so we next find a fixpoint along the slice with $x_1 = 3/4$. The fixpoint along that slice points to the left, so we find the fixpoint along $x_1 = 5/8$. We successively find fixpoints of one-dimensional slices, and then use those to do a binary search for the two-dimensional fixpoint. The red regions are the successive regions considered by the binary search, where each successive step in the binary search results in a darker region.} 
  \label{fig:exact_algo}
\end{figure}
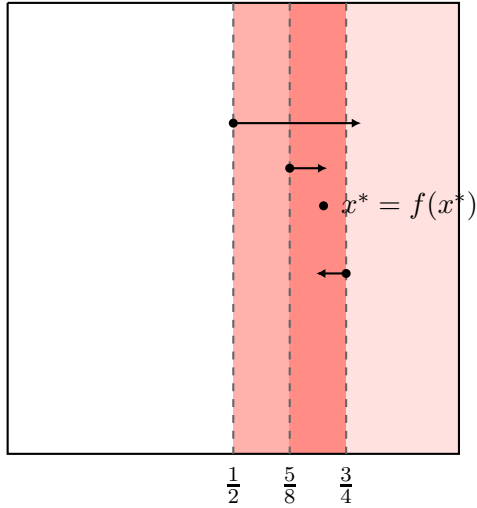

Using this algorithm we obtain the following theorem.

\begin{theorem}
Given a $\LinearFIXP$ circuit $C$ purporting to encode a contraction map $f : [0,1]^d\to
[0,1]^d$ with respect to any $\ell_p$ norm, there is an algorithm to find a
fixpoint of $f$ or return a pair of points witnessing that $f$ is not a contraction map in time that is polynomial in $\size(C)$ and exponential in $d$.
\end{theorem}

The full details of the algorithm can be found in Appendix~\ref{subsec:exact_algo_details}.

\subsection{Overview: algorithm to find an approximate fixed-point of \CM}

Here we generalize our algorithm to find an approximate fixpoint of an arbitrary function
given by an arithmetic circuit, i.e., our algorithm solves \CM, which is 
specified by a circuit $f$ that represents the contraction map,\footnote{The algorithm works even if $f$ is given as an arbitrary black-box, as long as it is guaranteed to be a contraction map.} a
$p$-norm, and $\eps$. Again, let $d$ denote the dimension of the problem, i.e. the number
of inputs (and outputs) of $f$.
Let $x^*$ denote the unique exact fixpoint for the contraction map $f$.
We seek an approximate fixpoint, i.e., a point for which $\Norm{f(x)-x}_p \leq \eps$. 

We do the same recursive binary search as in the algorithm above, but
at each step of the algorithm instead of finding an
exact fixpoint, we will only find an approximate fixpoint of $\restr{f}{\ss}$. The difficulty in this case will come from the fact that Lemma~\ref{lem:cm2} does not apply to approximate fixpoints. Consider the example illustrated in Figure~\ref{fig:approx_algo}. In this example, $y$ is the unique fixpoint of the slice restriction along the gray dashed line. By Lemma~\ref{lem:cm2}, $(f(y)_1 - y_1)(x^*_1 - y_1) \geq 0$ so if we find $y$, we can observe that $f(y)_1 > y_1$ and recurse on the right side of the figure, in the region labeled $\mathcal{R}$. If we try to use the same algorithm but where we only find approximate fixopints at each step, we'll run into trouble. In this case, if we found $z$ instead of $y$, we would observe that $f(z)_1 < z_1$ and conclude that $x^*_1 < z_1$, which is incorrect. As a result, we would limit our search to the region labeled $\mathcal{L}$, and wouldn't be able to find $x^*$. 
\begin{figure}[h!]
  \centering
  \begin{center}
    \begin{tikzpicture}
  \pgfmathsetmacro{\size}{2.5}
  \pgfmathsetmacro{\rad}{0.6}
  \coordinate (apxc) at (-0.6, -0.5);
  \coordinate (actualc) at (-0.6, 0.5);
  \coordinate (globalc) at (0.5, 1.6);
  \coordinate (apxc end) at ($(apxc) +(-0.2, 0.6)$);
  \coordinate (actualc end) at ($(actualc) +(0.6, 0)$);
  \coordinate (bl) at (-\size, -\size);
  \coordinate (tl) at (-\size, \size);
  \coordinate (tr) at (\size, \size);
  \coordinate (br) at (\size, -\size);
  \coordinate (slice top) at (-0.6, \size);
  \coordinate (slice bottom) at (-0.6, -\size);
    
  \draw[thick] (tl) -- (tr) -- (br) -- (bl) -- (tl);
  \draw[thick, dashed, darkgray] (slice bottom) -- (slice top);
  
  \fill (apxc) circle [radius=\rad mm] node (apx) {}; 
  \fill (actualc) circle [radius=\rad mm] node (actual) {}; 
  \fill[gray] (apxc end) circle [radius=\rad mm, fill=darkgray] node (apx end) {}; 
  \fill[gray] (actualc end) circle [radius=\rad mm, fill=darkgray] node (actual end) {}; 
  \fill (globalc) circle [radius=\rad mm] node (global) {}; 
    
  \draw[->, thick] (apx.center) -- (apxc end) node (apx end) [left] {\footnotesize $f(z)$};
  \draw[->, thick] (actual.center) -- (actualc end) node (actual end) [above] {\footnotesize $f(y)$}; 
  \node [above right] at (global) {\footnotesize $x^* = f(x^*)$};
  \node [above left] at (actual) {\footnotesize $y$};
  \node [below right] at (apx) {\footnotesize $z$};
  \node [below] at (slice bottom) {\footnotesize $\ss$};
  \begin{pgfonlayer}{background}
    \node[above right=0.5cm of bl] (L) {\large $\mathcal{L}$}; 
    \node[above left=0.5cm of br] (R) {\large $\mathcal{R}$}; 
    \draw[->, thick, darkpastelblue] (apx.center) -- ++(1.1, 0.0);
    \draw[->, thick, darkpastelblue] (actual.center) -- ++(1.1, 0);
    \draw[->, thick, darkpastelred] (apx.center) -- ++(-0.2, 0.0);
  \end{pgfonlayer}
\end{tikzpicture}
  \end{center}
  \caption{A step in the recursive binary search. Here, $x^*$ is the fixpoint for the original function, $y$ is the fixpoint for the slice restriction $\restr{f}{\ss}$ along the dashed gray line, and $z$ is an approximate fixpoint to the slice restriction.}
  \label{fig:approx_algo}
\end{figure}
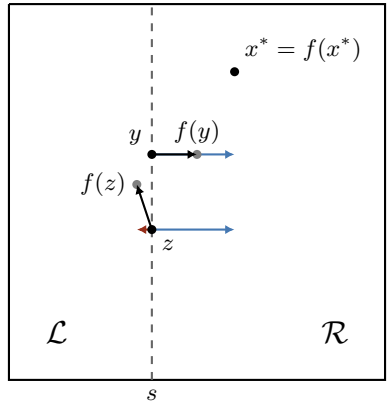

When looking for an approximate fixpoint, we'll have to choose a different precision $\eps_i$ for each level of the recursion so that either the point $x$ returned by the $i$th recursive call to our algorithm satisfies $\Abs{f(x)_i - x_i} > \eps_i$ and we can rely on it for pivoting in the binary search, or $\Abs{f(x)_i - x_i} \leq \eps_i$ and we can return $x$ as an approximate fixpoint to the recursive call one level up. Each different $\ell_p$ norm will require a different choice of $\Paren{\eps_i}_{i=1}^d$.

Using this idea we are able to obtain the following results:

\begin{theorem}
For a contraction map $f:[0,1]^d \to [0,1]^d$ with respect to the $\ell_1$ norm, there is an algorithm to compute a point $v\in [0,1]^d$ such that $\Norm{f(v) - v}_1 < \eps$ or report a violation of contraction in time $O(d^d\log(1/\eps))$. 
\end{theorem}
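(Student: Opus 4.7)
The plan is to instantiate the recursive nested binary search sketched for the more general Theorem~\ref{thm:alggeneral}, but with the $\ell_1$-specific thresholds $\eps_k(1,d) = \eps/2^{2(d+1-k)}$ from Section~\ref{sec:approx_lemmas}. I would define a recursive procedure $\ApproxFindFP(\ss, k)$ which, given a $k$-slice $\ss \in \Slice_d$, either returns a $(\ss, \ell_1, k)$-approximate fixpoint of $f$ or a pair of points $(x,y)$ witnessing that $f$ is not contracting. The top-level call is on the fully-free slice with $k=d$; by Lemma~\ref{lem:slice_approx_to_approx}, any $(\ss, \ell_1, d)$-approximate fixpoint it returns already satisfies $\Norm{f(v)-v}_1 < \eps$.

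For the inductive step at level $k$, I would binary search the $k$-th coordinate over an interval $[\ell, h] \subseteq [0,1]$, initialised to $[0,1]$, maintaining $(k{-}1)$-approximate fixpoints $v^{(\ell)}$ and $v^{(h)}$ at the two endpoints via recursive calls. At each iteration, pick the midpoint $m$, form the $(k{-}1)$-slice $\ss'$ by setting $s'_k = m$ and $s'_j = s_j$ otherwise, and call $v := \ApproxFindFP(\ss', k{-}1)$. Propagate any reported violation. Otherwise, by Lemma~\ref{lem:approx_cm_progress} either $\Abs{\Delta_k(v)} \le \eps_k(1,d)$ (so $v$ is already a $(\ss, \ell_1, k)$-approximate fixpoint and is returned), or $\sgn(\Delta_k(v))$ correctly indicates the side of the unique fixpoint of $\restr{f}{\ss}$ and I replace whichever endpoint lies on the wrong side with $m$, preserving the invariant that $x^*_k$ lies in $[\ell, h]$. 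The loop terminates when $h - \ell < \eps_k(1,d)/2$: Lemma~\ref{lem:approx_cm-lp_final} guarantees that, when $f$ is contracting, one of $v^{(\ell)}, v^{(h)}$ is then itself a $(\ss, \ell_1, k)$-approximate fixpoint, and otherwise Lemma~\ref{lem:approx_cm-lp_final_violations} converts the final pair into a violation witnessing $\Norm{f(v^{(h)}) - f(v^{(\ell)})}_1 > \Norm{v^{(h)} - v^{(\ell)}}_1$.

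Correctness rests on two invariants: (i) every recursive call either returns a valid slice-level approximate fixpoint or a genuine violation of contraction in the $\ell_1$ norm, proved by downward induction on $k$ using the three approximation lemmas; and (ii) the bracket $[\ell, h]$ always contains the $k$-th coordinate of the unique fixpoint of $\restr{f}{\ss}$ as long as $f$ is contracting and no violation has surfaced, which is exactly Lemma~\ref{lem:approx_cm_progress}. The base case $k=0$ is trivial, as a $0$-slice has no free coordinates to check.

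For the running time, the binary search at level $k$ halves the bracket at each iteration and terminates once $h - \ell < \eps_k(1,d)/2 = \eps/2^{2(d+1-k)+1}$, so it performs at most $\log(1/\eps) + 2(d+1-k) + 1$ iterations, each triggering one recursive call at level $k{-}1$. The naive recurrence $T(k) \le (\log(1/\eps) + O(d))\, T(k{-}1)$ yields $T(d) = O((d + \log(1/\eps))^d)$. The main obstacle will be tightening this to the claimed $O(d^d \log(1/\eps))$; I expect this to follow by a more careful amortised charging argument exploiting that only the outermost binary search genuinely incurs the $\log(1/\eps)$ factor while the inner levels contribute only the geometric correction $2(d{+}1{-}k)$ coming from the decay of $\eps_k(1,d)$, so the $\log(1/\eps)$ term can be pulled out of the product and the remaining $d$ factors each contribute at most $O(d)$.
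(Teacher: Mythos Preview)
Your approach is essentially identical to the paper's: the same recursive procedure $\ApproxFindFP$, the same thresholds $\eps_k(1,d)=\eps/2^{2(d+1-k)}$, and the same appeal to Lemmas~\ref{lem:approx_cm_progress}, \ref{lem:approx_cm-lp_final_violations}, \ref{lem:approx_cm-lp_final}, and \ref{lem:slice_approx_to_approx} for correctness. The one structural difference is cosmetic: the paper checks the midpoint one last time after the while loop terminates and reports a violation against whichever endpoint has the opposing sign, whereas you test the two endpoints directly; both are justified by the same lemma.

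Where you diverge from the paper is in the running-time analysis, and here there is a genuine issue---but it is with the theorem statement, not with your argument. Your proposed tightening to $O(d^d\log(1/\eps))$ does not work: at level $k$ the binary search must shrink the interval below $\eps_k(1,d)/2 < \eps$, so \emph{every} level incurs $\Omega(\log(1/\eps))$ iterations, not just the outermost one. The $\log(1/\eps)$ factor therefore cannot be pulled out of the product; your naive recurrence already gives the correct answer $O((\log(1/\eps)+d)^d)$. The paper's own proof reaches the same conclusion, stating the total as $O(d^d\log^d(1/\eps))$ (which follows from $\prod_{k}(\log(1/\eps)+2k)\le \prod_k 2k\log(1/\eps)$). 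In other words, the stated bound $O(d^d\log(1/\eps))$ is a typo in the theorem: the exponent $d$ on the logarithm is missing, and neither you nor the paper achieves the bound as written. Drop your amortization paragraph and simply record the $O(d^d\log^d(1/\eps))$ bound.
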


\begin{theorem}
  For a contraction map $f:[0,1]^d\to [0,1]^d$ under $\Norm{\cdot}_p$ for $2 \leq p < \infty$, there is an algorithm to compute a point $v\in [0,1]^d$ such that $\Norm{f(v) - v}_p < \eps$ or report a violation of contraction in time $O({p}^{d^2}\log^d(1/\eps)\log^d(pd))$.
\end{theorem}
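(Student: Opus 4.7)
The plan is to instantiate the nested binary search sketched in the overview, plugging in the approximation schedule $\bigl(\eps_i(p,d)\bigr)_{i=1}^d$ defined in Section~\ref{sec:approx_lemmas} with $\eps_i(p,d) = \eps^{p^{(d+1-i)}}(dp)^{-2\sum_{j=0}^{d+1-i}p^j}$, and then to argue correctness using the three ``approximate fixpoint'' lemmas (Lemma~\ref{lem:approx_cm_progress}, Lemma~\ref{lem:approx_cm-lp_final_violations}, and Lemma~\ref{lem:approx_cm-lp_final}). I will define a recursive procedure $\FindFP(\ss, k)$ that, given an $(k)$-slice $\ss$ whose unfixed coordinates are $1,\dots,k$, returns either an $(\ss,\ell_p,k)$-approximate fixpoint of $f$, or a pair of points $(x,y)$ witnessing that $f$ is not contracting (i.e.\ $\|f(x)-f(y)\|_p > c\|x-y\|_p$). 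The base case $k=0$ returns the unique point of the $0$-slice, which is trivially an $(\ss,\ell_p,0)$-approximate fixpoint. When $\ss$ is the trivial $d$-slice, $\FindFP$ returns an $(\ss,\ell_p,d)$-approximate fixpoint which is an $\eps$-approximate fixpoint by Lemma~\ref{lem:slice_approx_to_approx}, or a violation.

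In the recursive step, $\FindFP(\ss,k)$ performs a binary search over the $k$-th coordinate on the interval $[0,1]$. Each iteration maintains two endpoints $a < b$ together with $(\ss_a,\ell_p,k-1)$- and $(\ss_b,\ell_p,k-1)$-approximate fixpoints $v^{(a)}, v^{(b)}$ produced by recursive calls $\FindFP(\ss_a,k-1)$ and $\FindFP(\ss_b,k-1)$, where $\ss_a,\ss_b$ are the $(k-1)$-subslices of $\ss$ with $s_k=a$ and $s_k=b$. After each recursive call, we inspect $\Delta_k(v^{(a)})$ and $\Delta_k(v^{(b)})$; by Lemma~\ref{lem:approx_cm_progress}, either $|\Delta_k(v^{(a)})|\leq \eps_k(p,d)$ (in which case $v^{(a)}$ is the desired $(\ss,\ell_p,k)$-approximate fixpoint and we return it), or the sign of $\Delta_k(v^{(a)})$ unambiguously points toward the true $k$-dimensional fixpoint, which lets us continue the binary search. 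The search halts when $b-a < \eps_k(p,d)/2$; at that point Lemma~\ref{lem:approx_cm-lp_final} guarantees that one of $v^{(a)}$ or $v^{(b)}$ is a valid $(\ss,\ell_p,k)$-approximate fixpoint, unless the signs of the pivots are opposite and sandwich the claimed root, in which case Lemma~\ref{lem:approx_cm-lp_final_violations} produces an explicit violation which we propagate upward.

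For the running time, each level of recursion runs binary search on $[0,1]$ with precision $\eps_k(p,d)/2$, so it uses at most $O\!\bigl(\log(1/\eps_k(p,d))\bigr) = O\!\bigl(p^{d+1-k}(\log(1/\eps)+\log(dp))\bigr)$ recursive calls to the next level. Multiplying these across the $d$ levels gives
\[
T(d) \;=\; \prod_{k=1}^{d} O\!\bigl(p^{d+1-k}\log(dp/\eps)\bigr) \;=\; O\!\bigl(p^{d(d+1)/2}\,\log^d(1/\eps)\,\log^d(dp)\bigr),
\]
which is absorbed by the claimed $O(p^{d^2}\log^d(1/\eps)\log^d(pd))$ bound. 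Each recursive call performs a single evaluation of $f$ plus arithmetic on numbers whose bit-length stays polynomial, so up to polynomial factors in the representation the query complexity matches the stated time.

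The main obstacle I anticipate is \emph{propagating violations cleanly through the recursion}. When a low-level call returns a violation pair $(x,y)$, we can pass it up unchanged. The subtler case is when the high-level binary search itself detects contradictory pivot directions within distance less than $\eps_k(p,d)/2$: I must check carefully that the two sandwiching approximate fixpoints $v^{(a)},v^{(b)}$ satisfy the hypothesis of Lemma~\ref{lem:approx_cm-lp_final_violations} as stated (i.e.\ both are $(\ss,\ell_p,k-1)$-approximate fixpoints on the same $k$-slice with opposite signs of $\Delta_k$), and that the choice of approximation schedule $\eps_k(p,d)$ is consistent with the invariants required at every recursion depth. The constants in Section~\ref{sec:approx_lemmas} are specifically engineered to make this telescoping work, so the argument reduces to a careful bookkeeping of which $\eps_k(p,d)$ is used at which level; the non-trivial inequality $\sum_{i<k}p\,\eps_i(p,d) \le \eps_k(p,d)^p$ (Lemma~\ref{lem:approx_cm_progress_helper}) already encodes exactly what is needed.
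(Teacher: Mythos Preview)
Your proposal is correct and follows essentially the same approach as the paper: the paper's Algorithm~2 (\textsc{ApproxFindFP}) is exactly the recursive binary search you describe, its correctness is established by the same inductive use of Lemmas~\ref{lem:approx_cm_progress}, \ref{lem:approx_cm-lp_final_violations}, and~\ref{lem:approx_cm-lp_final}, and the runtime analysis likewise multiplies $O\bigl(p^{d+1-k}\log(1/\eps)\log(dp)\bigr)$ across the $d$ levels to obtain $O\bigl(p^{d(d+1)/2}\log^d(1/\eps)\log^d(dp)\bigr) \le O\bigl(p^{d^2}\log^d(1/\eps)\log^d(dp)\bigr)$. Your handling of the violation case is in fact slightly cleaner than the paper's pseudocode, since you explicitly maintain the endpoint approximate fixpoints $v^{(a)},v^{(b)}$ throughout the loop rather than only the endpoint coordinates.
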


The full details of the algorithms can be found in Appendix~\ref{subsec:approx_algo_details}.

\subsection{Details: finding a fixed-point of \LCM}
\label{subsec:exact_algo_details}
Suppose $f$ is piecewise-linear; it follows that the coordinates of the unique fixpoints of $\restr{f}{\ss}$ will be rational numbers with bounded denominators. Consider the values of $\kappa_i$'s computed in Section \ref{app:points2}. 
The analysis in the section tells us that if we consider an $i$-slice $\ss$ where $s_j$ is a rational with bit-length at most $\kappa_j$ for $j\in \Set{i+1,\dotsc,d}$, the unique fixed-point of $\restr{f}{\ss}$ will have coordinates with bit-lengths of at most $\kappa_1,\dotsc, \kappa_{d}$ Furthermore, $\kappa_1$ is the largest among all $\kappa_i$'s and is bounded by polynomial in the size of the circuit $C$ representing the given \LCM instance.

Let $\Slice_d(2^{\kappa})$ be the set of slices with fixed coordinates having denominators at most $2^{\kappa_i}$ in the $i$th coordinate:
\[ \Slice_d(2^{\kappa}) \iff \ss \in \Slice_d\ \text{and}\ s_i \in \Set{0/2^{\kappa_i},1/k_i,\dotsc,2^{\kappa_i}/2^{\kappa_i}},\ \forall i \in \fixed(\ss)\text{.} \] 
We will design an algorithm assuming an upper bound of $2^{\kappa_i}$ on the $i$th coordinate denominators of fixpoints for any slice restriction $\restr{f}{\ss}$ where $\ss \in \Slice_d(2^{\kappa})$. 

\begin{algorithm}[t]
	\caption{Algorithm for \LCM}\label{alg}
	
	\begin{algorithmic}[1]
          \State Input: A $k$-slice $\ss \in \Slice_d(2^{\kappa})$ for some $k \leq d$.
          \State Output: The unique fixpoint of $\ss$, i.e., a point $y$ 
		such that $\restr{f}{\ss}(y) = \restr{y}{\ss}$ and $y = \restr{y}{\ss}$.
          \Function{FindFP}{$\ss$}
              \State Let $k = \Card{\free(\ss)}$.
              \If{$k= 0$} \Return $\ss$ \EndIf
              \State Set $k \gets k - 1$. Set $\tt^{(\ell)} \gets \ss$, $\tt^{(h)} \gets \ss$.
              \State Set $t^{(\ell)}_k \gets 0$, $t^{(h)}_k \gets 1$. 
              \State Set $v^{(\ell)} \leftarrow \FindFP(\tt^{(\ell)})$, and  $v^{(h)} \leftarrow \FindFP(\tt^{(h)})$.
              \If{$f(v^{(\ell)})_k = t^{(\ell)}_k$} \Return $v^{(\ell)}$. \EndIf
              \If{$f(v^{(h)})_k = t^{(h)}_k$} \Return $v^{(h)}$. \EndIf
              \State Set $\tt \gets \ss$
                  \While{$t^{(h)}_k - t^{(\ell)}_k > \frac{1}{2^{(\kappa_k-1)}}$} 
                      \State Set $t_k \leftarrow \frac{(t^{(h)}_k + t^{(\ell)}_k)}{2}$.
                      \State Set $v \leftarrow\FindFP(\tt)$. \label{alg:vv}
                      \If{$f(v)_k = t_k$}  \Return $v$. \EndIf
                      \If{$f(v)_k > t_k$} Set $t^{(\ell)}_k \leftarrow t_k$
                      \Else\ Set $t^{(h)}_k \leftarrow t_k$. \EndIf
                  \EndWhile
              \State Set $t_k \leftarrow$ the unique number in $(t^{(\ell)}_k, t^{(h)}_k)$ with denominator at most $2^{\kappa_k}$. \label{alg1:candidate_fp}
              \State Set $v^* \leftarrow \FindFP(\tt)$.    
              \If{$f(v^*)_k - v^*_k = 0$} \Return $\FindFP(\tt)$.
              \Else\ \textbf{throw error}: ``The pair $\Paren{v^{(\ell)}, v^{(h)}}$ is a solution of type \solnref{CMV3}.'' \EndIf
              \EndFunction 
\end{algorithmic}
\end{algorithm}

\smallskip

\noindent \textbf{Analysis.}
Given a $k < d$ and a $k$-slice $\ss \in \Slice_d(2^{\kappa})$, we know from Lemma~\ref{lem:cm1} that the restricted function $\restr{f}{\ss}$ is also contracting. We will show that $\FindFP$, the function given in Algorithm~\ref{alg}, computes a fixpoint of $\restr{f}{\ss}$. 
Since the algorithm is recursive, we will prove its correctness by induction. The next lemma establishes the base case of induction and follows by design of the algorithm. It is equivalent to finding a fixpoint of a one dimensional function.

\begin{lemma}\label{lem:cm-ind1}
For any $1$-slice $\ss$, $\FindFP(\ss)$ returns the unique fixpoint of $\restr{f}{\ss}$ if it doesn't throw an error.
\end{lemma}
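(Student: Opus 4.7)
The plan is to prove Lemma~\ref{lem:cm-ind1} by direct case analysis on the three possible return statements of $\FindFP(\ss)$, exploiting uniqueness of the fixpoint of $\restr{f}{\ss}$. Let $k$ denote the unique index in $\free(\ss)$. First I will invoke Lemma~\ref{lem:cm1} to conclude that if $f$ is a contraction map, so is $\restr{f}{\ss}$, and hence $\restr{f}{\ss}$ has a unique fixpoint $x^*$. It then suffices to show that whenever $\FindFP(\ss)$ returns a point $v$ (rather than throwing an error), we have $v = \restr{v}{\ss}$ and $f(v)_k = v_k$, which by uniqueness forces $v = x^*$.

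Next I will observe that every internal recursive call made inside the body of $\FindFP(\ss)$ is performed on a $0$-slice $\tt$ (all coordinates fixed), so the base-case guard \emph{``if $k=0$ return $\ss$''} causes $\FindFP(\tt)$ to return the single point specified by $\tt$; in particular the returned value $v$ equals $\tt$ and $v_k = t_k$ by construction. With this in hand, each of the three return statements can be dispatched quickly: (i) the two initial guards return $v^{(\ell)}$ (resp.\ $v^{(h)}$) only when $f(v^{(\ell)})_k = t^{(\ell)}_k = v^{(\ell)}_k$ (resp.\ $f(v^{(h)})_k = t^{(h)}_k = v^{(h)}_k$); (ii) the within-loop guard returns $v$ only when $f(v)_k = t_k = v_k$; and (iii) the post-loop guard returns $\FindFP(\tt) = \tt$ only when $f(\tt)_k = \tt_k$. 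In every case the returned point lies in $\ss$ and satisfies $f(v)_k = v_k$, so it is a fixpoint of $\restr{f}{\ss}$, and uniqueness then forces $v = x^*$.

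To finish, I will verify that $\FindFP(\ss)$ does reach one of these return statements when it does not throw an error. Each pass through the while loop either returns or strictly halves $t^{(h)}_k - t^{(\ell)}_k$, so after at most $\kappa_k - 1$ iterations the loop-exit condition $t^{(h)}_k - t^{(\ell)}_k \le 1/2^{\kappa_k - 1}$ is met; the post-loop block then either returns the candidate (if $f(v^*)_k = v^*_k$) or throws the error, so no infinite loop is possible. The one subtlety I want to flag is that the base case does \emph{not} require any use of Lemma~\ref{lem:cm2}: the correctness of the direction in which the binary search narrows is irrelevant here, because regardless of which endpoint is updated at each step, \emph{any} value that is ever returned is automatically a fixpoint and hence automatically the unique fixpoint. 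The correctness of those direction-of-narrowing arguments via Lemma~\ref{lem:cm2} will instead be the main ingredient of the inductive step for $k$-slices with $k>1$, which I expect to be the harder part of the overall induction.
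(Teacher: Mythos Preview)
Your proposal is correct and matches the paper's approach: the paper simply asserts that the lemma ``follows by design of the algorithm'' and ``is equivalent to finding a fixpoint of a one dimensional function'' without further detail, and your case analysis fills in exactly what that assertion leaves implicit. Your observation that Lemma~\ref{lem:cm2} is unnecessary for the base case is a useful clarification the paper does not make explicit.
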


Now for the inductive step, assuming $\FindFP$ can compute a fixpoint of any $k$-slice restriction of a contraction map, we will show that it can compute one for any $(k+1)$-slice of the contraction map.

\begin{lemma}\label{lem:cm-ind2}
Fix any $k$-slice $\ss \in \Slice_d(2^{\kappa})$ and let $\tt = (\blank,\dotsc,\blank, t_k, s_{k+1}, \dotsc, s_d)$ for some $t_k \in \Set{0/2^{\kappa_k},1/2^{\kappa_k},\dotsc,2^{\kappa_k}/2^{\kappa_k}}$. If $\FindFP(\tt)$ returns the unique fixpoint of the restricted function $\restr{f}{\tt}$, then $\FindFP(\ss)$ returns the unique fixpoint of the function $\restr{f}{\ss}$ if it doesn't throw an error.
\end{lemma}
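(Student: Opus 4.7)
The proof will show that the binary search in $\FindFP(\ss)$ correctly locates the $k$-th coordinate of the unique fixpoint $x^*$ of $\restr{f}{\ss}$. By Lemma~\ref{lem:cm1}, $\restr{f}{\ss}$ is itself a contraction, so $x^*$ exists and is unique; the same holds for every $(k-1)$-slice restriction $\restr{f}{\tt}$ encountered during the execution. By the inductive hypothesis, each recursive call $\FindFP(\tt)$ returns the unique fixpoint of $\restr{f}{\tt}$.

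The key invariant maintained in the while-loop is $t^{(\ell)}_k \le x^*_k \le t^{(h)}_k$, with strict inequality whenever the early-exit test $f(v)_k = t_k$ has not fired. Initially this holds since $x^* \in [0,1]^d$; if $x^*_k \in \{0,1\}$ then $v^{(\ell)}$ or $v^{(h)}$ is itself a fixpoint of $\restr{f}{\ss}$ (by uniqueness of fixpoints of $\restr{f}{\tt^{(\ell)}}$, $\restr{f}{\tt^{(h)}}$ together with Lemma~\ref{lem:cm1}), and the corresponding early-exit check returns it. The invariant is preserved at each iteration: the recursive call produces the unique $(k-1)$-slice fixpoint $v$ of $\restr{f}{\tt}$, and Lemma~\ref{lem:cm2}, applied to $\ss$ and its sub-slice $\tt$, gives $\sign(f(v)_k - v_k) = \sign(x^*_k - t_k)$. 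Thus the update to $t^{(\ell)}_k$ or $t^{(h)}_k$ correctly preserves $x^*_k$ inside the interval, while the early-exit case $f(v)_k = t_k$ means $v$ is a fixpoint of $\restr{f}{\ss}$, hence equals $x^*$ by uniqueness.

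Termination and the final exact step are handled via Lemma~\ref{lem:points2}. Each iteration halves the interval, so after at most $\kappa_k$ iterations we have $t^{(h)}_k - t^{(\ell)}_k \le 1/2^{\kappa_k - 1}$. A simple induction on iteration count shows that at this point both endpoints are integer multiples of $1/2^{\kappa_k-1}$, and therefore of $1/2^{\kappa_k}$. By Lemma~\ref{lem:points2}, $x^*_k$ is a rational with denominator at most $2^{\kappa_k}$, and the invariant places it strictly inside $(t^{(\ell)}_k, t^{(h)}_k)$. An open interval of width at most $2/2^{\kappa_k}$ with endpoints that are multiples of $1/2^{\kappa_k}$ contains exactly one multiple of $1/2^{\kappa_k}$, so $t_k$ on line~\ref{alg1:candidate_fp} is set to $x^*_k$. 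The recursive call $\FindFP(\tt)$ then returns the unique fixpoint of $\restr{f}{\tt}$, which by Lemma~\ref{lem:cm1} and uniqueness of $x^*$ equals $x^*$; the test $f(v^*)_k - v^*_k = 0$ succeeds, and $x^*$ is returned.

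The main obstacle I anticipate is the uniqueness argument for the candidate $t_k$: one must verify carefully that the endpoints produced by repeated dyadic halving of $[0,1]$ remain multiples of $1/2^{\kappa_k}$ at loop exit, and that the strict form of the invariant (needed to place $x^*_k$ in the \emph{open} interval) survives all iterations, including when early-exit checks almost fire. A smaller subtlety is that Lemma~\ref{lem:cm2} gives a strict direction only when $v \ne x^*$ on the free coordinate $k$; the proof must explicitly branch on the $f(v)_k = t_k$ case and dispatch it via uniqueness. Everything else is routine given Lemmas~\ref{lem:cm1}, \ref{lem:cm2}, and~\ref{lem:points2}.
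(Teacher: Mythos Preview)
Your proposal is correct and follows essentially the same approach as the paper's proof: both establish the invariant $t^{(\ell)}_k < x^*_k < t^{(h)}_k$ via Lemma~\ref{lem:cm2} applied to the $(k-1)$-dimensional fixpoints returned by the recursive calls, and both conclude by arguing that the terminal interval contains a unique grid point which must be $x^*_k$. Your treatment is in fact slightly more careful than the paper's on the point you flagged as an obstacle---the paper simply asserts that ``there can be at most one such number in $(t^{(\ell)}_k, t^{(h)}_k)$'' without noting that the endpoints are themselves dyadic, whereas you make this explicit.
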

\begin{proof}
Since $f$ is a contraction map, so is $\restr{f}{\ss}$ due to Lemma~\ref{lem:cm1}. Let $v=\FindFP(\tt)$ be the value returned by the algorithm when input the slice $\tt$. Now, if $t_k=0$ then $f(v)_k \geq 0 = t_k = v_k$ and if $t_k=1$ then $f(v)_k \leq 1 = t_k = v_k$. If either is an equality then $v$ is the unique fixpoint of $\restr{f}{\ss}$ as well and the lemma follows. 

Otherwise, we know that the $k$th coordinate of the unique fixpoint of $\restr{f}{\ss}$, which we'll call $t^*_k$, is between $t^{(\ell)}_k=0$ and $t^{(h)}_k=1$.
Note that when $t_k$ is set to $t^*_k$, the vector $v = \FindFP(\tt)$ which is the unique fixpoint of $\restr{f}{\tt}$ will also be the unique fixpoint of $\restr{f}{\ss}$. Thus, it suffices to show that $t_k$ will eventually be set to $t^*_k$ during the execution of the algorithm. 

The while loop of Algorithm~\ref{alg} does a binary search between $t^{(h)}_k$ and $t^{(\ell)}_k$ to find $t^*_k$, while keeping track of the fixpoints of $\restr{f}{\tt}$. We first observe that after line~\ref{alg:vv} of each execution of the loop in Algorithm~\ref{alg}, $v$ is the unique fixpoint of $\restr{f}{\tt}$. Therefore, whenever $f(v)_k=t_k$ is satisfied, we will return the unique fixpoint of $\restr{f}{\ss}$ and the lemma follows. 

The binary search maintains the invariant that if we let $v^{(\ell)} = \FindFP(t^{(\ell)})$ and $v^{(h)} = \FindFP(t^{(h)})$ we have $f(v^{(\ell)})_k > v^{(\ell)}_k$, and $f(v^{(h)})_k < v^{(h)}_k$. By Lemma~\ref{lem:cm2}, this invariant ensures that $t^*_k$ satisfies
\[ t^{(\ell)}_k < t^*_k < t^{(h)}_k \] at all times. Therefore, at some point in the binary search either one of the endpoints is $t^*_k$ and we return the desired fixpoint or we end the binary search with $t^{(\ell)}_k$ and $t^{(h)}_k$ such that $(t^{(h)}_k-t^{(\ell)}_k)\leq 1/2^{\kappa_k-1}$ and $t^*_k \in [t^{(\ell)}_k, t^{(h)}_k]$. By the assumption we know that $t^*_k$ is a rational number with denominator at most $2^{\kappa_k}$. Since there can be at most one such number in $(t^{(\ell)}_k, t^{(h)}_k)$, $t^*_k$ can be uniquely identified. Let $\tt^*$ denote the slice $\tt$ after setting $t_k \leftarrow t^*_k$. The second to last line of Algorithm~\ref{alg} will then return the unique fixpoint of $\restr{f}{\tt^*}$, which will be the unique fixpoint of $\restr{f}{\ss}$ is a contraction map. The only way this can fail to happen is if the point returned by $\FindFP(\tt^*)$ is not actually a $k$-dimensional fixpoint, in which case the algorithm will throw an error.
\end{proof}

We now address the case where the algorithm returns an error:

\begin{lemma} \label{lem:cm_err}
If $\FindFP(\tt)$ returns an error for $k$-slice $\ss \in \Slice_d(2^{\kappa_1},2^{\kappa_2},\dotsc, 2^{\kappa_d})$, the pair of points $(v^{(\ell)},v^{(h)})$ indicated by the error witness that $f$ is not a contraction map. 
\end{lemma}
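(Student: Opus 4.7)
The plan is to unpack the configuration present at the moment the error is raised, and then read off a \solnref{CMV3} witness directly from it. First I would invoke the inductive hypothesis established in Lemma~\ref{lem:cm-ind2}: each recursive call $\FindFP(\tt)$ on a $(k-1)$-slice either throws an error (which propagates up and gives a \solnref{CMV3} witness at a lower level, itself a violation of contraction for $f$ by Lemma~\ref{lem:cm1}) or returns the unique fixpoint of $\restr{f}{\tt}$. So at the moment the error is thrown at slice $\ss$, the three points $v^{(\ell)}$, $v^{(h)}$, and the freshly computed $v^{*}$ are the unique fixpoints of the $(k-1)$-slices $\tt^{(\ell)}$, $\tt^{(h)}$, and $\tt^*$, which agree on all fixed coordinates beyond position $k$ and differ only in $t_k$. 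In particular, all three satisfy $(f(\cdot)-\cdot)_j = 0$ for every $j < k$, so they already meet the first bullet of~\solnref{CMV3}.

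Next I would extract the direction and distance data from the binary search. The error-throwing branch is only reached after the \textbf{while} loop exits normally, which means that at the end of the loop the conditional returns of line ``\textbf{If} $f(v)_k = t_k$ \textbf{Return} $v$'' never triggered for the endpoints, and that at every pivot step $t^{(\ell)}_k$ and $t^{(h)}_k$ were updated only from $t_k$'s with strict inequalities $f(v)_k > t_k$ or $f(v)_k < t_k$. Hence (with $v^{(\ell)}, v^{(h)}$ implicitly re-bound to the $(k-1)$-fixpoints on the current $\tt^{(\ell)}, \tt^{(h)}$) we have $f(v^{(\ell)})_k > v^{(\ell)}_k$ and $f(v^{(h)})_k < v^{(h)}_k$. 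The loop halves the initial interval of length $1$ and terminates as soon as $t^{(h)}_k - t^{(\ell)}_k \le 1/2^{\kappa_k-1}$, so at termination the gap is exactly $1/2^{\kappa_k-1} = 2/k_k$ with both endpoints at multiples of $1/k_k$. The candidate $v^{*}_k$ chosen in line~\ref{alg1:candidate_fp} is the unique rational in the open interval with denominator at most $k_k$, which is necessarily the midpoint, so
\[
v^{*}_k - v^{(\ell)}_k \;=\; v^{(h)}_k - v^{*}_k \;=\; \frac{1}{k_k}.
\]

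Finally, I would split on the sign of $f(v^{*})_k - v^{*}_k$, which is nonzero precisely because the error branch was taken. If $f(v^{*})_k < v^{*}_k$, consider the pair $(v^{*}, v^{(\ell)})$: both lie on the same $k$-slice $\ss$; both are fixpoints of their $(k-1)$-slice, so the first bullet of~\solnref{CMV3} holds; the second bullet holds because $k_k \cdot v^{*}_k = k_k \cdot v^{(\ell)}_k + 1$; and the third bullet holds since $f(v^{*})_k < v^{*}_k$ while $f(v^{(\ell)})_k > v^{(\ell)}_k$. Symmetrically, if $f(v^{*})_k > v^{*}_k$, the pair $(v^{(h)}, v^{*})$ is a~\solnref{CMV3} solution. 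In either case, the pair indicated by the error is a genuine \solnref{CMV3} violation, which by the discussion accompanying Definition~\ref{def:LCM} (and more concretely by Lemmas~\ref{lem:points2} and~\ref{lem:ov2violation}) is impossible when $f$ is a contraction map, and so witnesses that $f$ fails to contract.

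The only real obstacle is the book-keeping: we must ensure that the variables $v^{(\ell)}$ and $v^{(h)}$ referred to at error time track the current $\tt^{(\ell)}, \tt^{(h)}$ (so that the binary-search invariants of opposing $k$-direction at $(k-1)$-fixpoints persist) and that the halving schedule of the binary search matches the $1/k_k$ grid width fixed by Lemma~\ref{lem:points2} exactly, which is what makes $v^{*}$ lie exactly one grid step from each of $v^{(\ell)}$ and $v^{(h)}$. Once both of these are pinned down, the \solnref{CMV3} witness is literally the content of the final case split and no further calculation is needed.
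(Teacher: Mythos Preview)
Your argument is correct in substance and takes a somewhat different route from the paper's. The paper proves the lemma by pure contradiction: assuming $f$ is contracting, Lemma~\ref{lem:cm2} forces the $k$th coordinate of the true fixpoint of $\restr{f}{\ss}$ into the final interval $(t^{(\ell)}_k, t^{(h)}_k)$, and Lemma~\ref{lem:points2} says the only rational in that interval with denominator at most $2^{\kappa_k}$ is $t^*_k$; since the $(k-1)$-fixpoint $v^*$ at $t^*_k$ is not a $k$-fixpoint, $f$ cannot be contracting, and the pair $(v^{(\ell)}, v^{(h)})$ is declared an indirect witness of this. You go further and manufacture an explicit \solnref{CMV3} solution by pairing $v^*$ with the fixpoint at whichever adjacent endpoint has the opposing sign in coordinate $k$. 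This is more constructive and actually matches the formal \solnref{CMV3} template (grid-adjacent $(k-1)$-fixpoints with opposing $k$-direction), which the paper's argument does not quite do.

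Two small points to tighten. First, the algorithm as written only sets $v^{(\ell)}, v^{(h)}$ once, at $t_k=0$ and $t_k=1$, and never re-binds them during the loop; to carry your construction through you must either patch the algorithm to update $v^{(\ell)}, v^{(h)}$ whenever $t^{(\ell)}_k, t^{(h)}_k$ change, or recompute the endpoint fixpoints at error time---you flag this as book-keeping, but it is a genuine edit to the pseudocode, not just a reading convention. Second, the pair named in the error message, $(v^{(\ell)}, v^{(h)})$, sits at distance $2/k_k$ in coordinate $k$ and so is not itself a \solnref{CMV3} pair; your own case split shows the correct witness is $(v^*, v^{(\ell)})$ or $(v^{(h)}, v^*)$, so your final clause ``the pair indicated by the error is a genuine \solnref{CMV3} violation'' overstates what you proved and should be rephrased.
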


\begin{proof}
The algorithm only returns an error when $t^{h(\ell)}_k - t^{(\ell)}_k \leq \frac{1}{2^{\kappa_k-1}}$ and the point $v^*$ returned by $\FindFP(\tt^*)$ is not a $k$-dimensional fixpoint, where $\tt^*$ is the slice obtained by setting $t_k \leftarrow t^*$ in line~\ref{alg1:candidate_fp}. By induction we know that $v^*$ must be a $(k-1)$-fixpoint, since the recursive call to the algorithm didn't throw an error.  If $f$ is a contraction map, then the fact that $v^{(\ell)}$ and $v^{(h)}$ are $(k-1)$-dimensional fixpoints of $\restr{f}{\ss}$ with $\restr{f}{\ss}(v^{(\ell)})_k - v^{(\ell)}_k > 0$ and $\restr{f}{\ss}(v^{(h)})_k - v^{(h)}_k < 0$ together imply that the $k$th coordinate of the true fixpoint $\restr{f}{\ss}$ will lie in $(v^{(\ell)}_k, v^{(h)}_k)$ by Lemma~\ref{lem:cm2}. By Lemma~\ref{lem:points2}, we know that any $k$-dimensional fixpoint of $f$ has $k$th coordinate with bit-length at most $\kappa_k$. Thus, there is a unique value which the $k$th coordinate of the unique fixpoint of $\restr{f}{\ss}$ can have, namely $t^*_k$. But $v^*$ is not a fixpoint of $\restr{f}{\ss}$ and so we must conclude that $\restr{f}{\ss}$ is not a contraction map, which implies that $f$ is not a contraction map. Thus, the pair $(v^{(\ell)}, v^{(h)})$ together witness that $f$ is not a contraction map.
\end{proof}

Using Lemma~\ref{lem:cm_err} and applying induction using Lemma~\ref{lem:cm-ind1} as a base-case and Lemma~\ref{lem:cm-ind2} as an inductive step, the next theorem follows.

\begin{theorem}
$\FindFP(\blank,\blank,\dotsc,\blank)$ returns the unique fixpoint of $f$, or a pair of points proving that $f$ isn't a contraction map in time $O(L^d)$ where $L=\max_k L_k$ is polynomial in the size of the input instance. 
\end{theorem}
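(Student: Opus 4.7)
The plan is to establish correctness by induction on the number of free coordinates of the input slice, and then to analyze the runtime by solving the recurrence induced by the nested binary searches.

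For correctness, I would argue as follows. Let $\ss$ be the input slice and $k = |\free(\ss)|$. The base case $k=0$ is trivial since $\FindFP$ returns $\ss$ itself, which is its own unique ``fixpoint'' (no free coordinates remain). For the inductive step, assume that for every $(k-1)$-slice $\tt \in \Slice_d(2^\kappa)$, the call $\FindFP(\tt)$ either returns the unique fixpoint of $\restr{f}{\tt}$ or throws an error that, by Lemma~\ref{lem:cm_err}, produces a pair of points witnessing that $f$ is not a contraction map. Lemma~\ref{lem:cm-ind2} then implies that $\FindFP(\ss)$ either returns the unique fixpoint of $\restr{f}{\ss}$ or throws an error. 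In the error case, Lemma~\ref{lem:cm_err} again yields a pair of points witnessing that $f$ is not a contraction map, and such a pair constitutes a solution of type \solnref{CMV3} to the \LCM problem, as required by Definition~\ref{def:LCM}. Calling $\FindFP$ on the all-$\blank$ slice and applying induction up to $k=d$ gives the overall correctness: either the unique fixpoint of $f$ is returned, or a contraction-violating pair is produced.

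For the runtime, let $T(k)$ denote the worst-case number of operations performed by $\FindFP$ on a $k$-slice. At level $k$, the algorithm makes two initial recursive calls (for $t^{(\ell)}_k = 0$ and $t^{(h)}_k = 1$) and then runs a binary search on the interval $[0,1]$ until the width drops below $1/2^{\kappa_k - 1}$, requiring at most $L_k = O(\kappa_k)$ iterations; each iteration performs one recursive call to $\FindFP$ on a $(k-1)$-slice, plus an $O(1)$ number of circuit evaluations and arithmetic operations whose cost is polynomial in the input size. After the binary search, the algorithm identifies the unique rational $t_k$ of denominator at most $2^{\kappa_k}$ in the remaining interval (which, by Lemma~\ref{lem:points2}, can be computed by standard continued-fraction/Stern--Brocot techniques in polynomial time) and performs one final recursive call. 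Hence
\[
T(k) \;\le\; (L_k + O(1)) \cdot T(k-1) + \poly(\size(C)),
\]
with $T(0) = O(1)$. Unrolling the recurrence yields $T(d) = O\!\left(\prod_{k=1}^d L_k\right) \cdot \poly(\size(C)) = O(L^d) \cdot \poly(\size(C))$, where $L = \max_k L_k$. Finally, Lemma~\ref{lem:points2} guarantees that each $\kappa_k$, and therefore each $L_k$, is polynomial in $\size(C)$, so $L$ is polynomial in the input size, as claimed.

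The main obstacle I anticipate is making the binary-search step that recovers the unique rational $t^*_k$ of bounded denominator from the final interval fully rigorous: we must both argue that a unique such rational exists in $(t^{(\ell)}_k, t^{(h)}_k)$ whenever $f$ genuinely has a fixpoint on the slice (this follows from Lemma~\ref{lem:points2} combined with the width bound $1/2^{\kappa_k - 1}$, since two distinct fractions with denominator $\le 2^{\kappa_k}$ differ by at least $1/2^{2\kappa_k}$, but the stronger fact that \emph{at most one} lies in an interval of length $1/2^{\kappa_k - 1}$ needs a short argument using the grid structure), and we must describe how to compute this rational efficiently. The remainder of the analysis is a routine combination of Lemmas~\ref{lem:cm1}--\ref{lem:cm_err} and the recurrence above.
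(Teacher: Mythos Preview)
Your proposal is correct and follows essentially the same approach as the paper: induction on the number of free coordinates, using Lemma~\ref{lem:cm-ind1}/the trivial $k=0$ case as the base, Lemma~\ref{lem:cm-ind2} as the inductive step, and Lemma~\ref{lem:cm_err} for the error case, followed by the natural recurrence $T(k) \le O(L_k)\cdot T(k-1)$ for the runtime. The paper's proof is in fact just a one-line appeal to these lemmas, so your write-up is strictly more detailed.

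One remark on your stated obstacle: you are overcomplicating the recovery of the unique rational in $(t^{(\ell)}_k, t^{(h)}_k)$. The grid widths are chosen as $k_i = 2^{\kappa_i}$ (see Appendix~\ref{app:points2}), so the admissible values of $t_k$ are exactly the dyadic rationals $j/2^{\kappa_k}$. Since the binary search starts at $0$ and $1$ and always bisects, the endpoints $t^{(\ell)}_k, t^{(h)}_k$ are themselves dyadic of denominator $2^{\kappa_k}$ once the width reaches $1/2^{\kappa_k-1}$, and the unique grid point strictly between them is simply their midpoint. No continued-fraction or Stern--Brocot machinery is needed.
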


\subsection{Details: finding an approximate fixed-point of \CM}
\label{subsec:approx_algo_details}

We now proceed to prove the correctness of our algorithm. 

\begin{algorithm}[t]
	\caption{Algorithm for \CM, for a given $f$, $\eps$, $\ell_p$}\label{alg2}
	
	\begin{algorithmic}[1]
          \State Input: A $k$-slice $\ss \in \Slice_d$ for some $k\leq d$.
          \State Output: An $(\ss, \ell_p, k)$-approximate fixpoint of $\restr{f}{\ss}$.
          \Function{ApproxFindFP}{$\ss$}
              \State Let $k = \Card{\free(\ss)}$.
              \If{$k= d$} \Return $\ss$ \EndIf
              \State Set $k \gets k - 1$. Set $\tt^{(\ell)} \gets \ss$, $\tt^{(h)} \gets \ss$.
              \State Set $t^{(\ell)}_k \gets 0$, $t^{(h)}_k \gets 1$. 
              \State Set $v^{(\ell)} \leftarrow \ApproxFindFP(\tt^{(\ell)})$, and  $v^{(h)} \leftarrow \ApproxFindFP(\tt^{(h)})$.
              \If{$\Abs{\Delta_k\Paren{v^{(\ell)}}} \leq \eps_k(p, d)$} \Return $v^{(\ell)}$. \EndIf
              \If{$\Abs{\Delta_k\Paren{v^{(h)}}} \leq \eps_k(p, d)$} \Return $v^{(h)}$. \EndIf
              \State Set $\tt \gets \ss$
              \While{$t^{(h)}_k - t^{(\ell)}_k > \eps_k(p, d)$}
                  \State Set $t_k \leftarrow \frac{(t^{(h)}_k + t^{(\ell)}_k)}{2}$.
                  \State Set $v \leftarrow\ApproxFindFP(\tt)$. 
                  \If{$\Abs{\Delta_k(v)} \leq \eps_k(p, d)$}  \Return $v$. \EndIf
                  \If{$f(v)_k > t_k$} Set $t^{(\ell)}_k \leftarrow t_k$
                  \Else\ Set $t^{(h)}_k \leftarrow t_k$. \EndIf
              \EndWhile
              \State Set $t_k \leftarrow \frac{(t^{(h)}_k + t^{(\ell)}_k)}{2}$.
              \State Set $v^* \leftarrow \ApproxFindFP(\tt)$
              \If{$\Abs{\Delta_k(v^*)} > \eps_k(p, d)$}
                  \If{$\Delta_k(v^*) > \eps_k(p, d)$} \State\textbf{throw error}: ``The pair $\Paren{v^*, v^{(h)}}$ witnesses $f$ not being a contraction map.''
                  \Else \State\textbf{throw error}: ``The pair $\Paren{v^{(\ell)}, v^*}$ witnesses $f$ not being a contraction map.'' \EndIf
              \EndIf
              \State \Return $v^*$. \label{alg2:final_output}
              \EndFunction 
\end{algorithmic}
\end{algorithm}

\smallskip

\noindent \textbf{Analysis.}

We will show that for any contraction map $f$ with respect to an $\ell_p$ norm, and any $\eps > 0$, if Algorithm~\ref{alg2} doesn't throw an error, then it returns an approximate fixpoint of $f$, i.e. a point $v\in [0,1]^d$ such that $\Norm{f(v) - v} \leq \eps$. To do this, we'll show that for any $k < d$ and $k$-slice $\ss \in \Slice_d$ $\ApproxFindFP(\ss)$ will return a $(\ss, \ell_p, k)$-approximate fixpoint (when it doesn't throw an error). Since Algorithm~\ref{alg2} is recursive, our proof will be by induction. The next lemma establishes the base case of the induction and follows by design of the algorithm.

\begin{lemma}\label{lem:approx-cm-ind1}
For any $1$-slice $\ss$, $\ApproxFindFP(\ss)$ returns a $(\ss, \ell_p, 1)$-approximate fixpoint when it doesn't throw an error.
\end{lemma}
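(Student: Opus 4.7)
The base case lemma is almost a direct consequence of the algorithm's structure, since every return statement is guarded by an explicit check on $|\Delta_k(v)|$. My plan is to make this explicit by tracing through each return path in Algorithm~\ref{alg2} when it is invoked on a $1$-slice $\ss$, verifying that the returned point $v$ must satisfy $|\Delta_1(v)| \leq \eps_1(p,d)$, which is exactly the definition of a $(\ss, \ell_p, 1)$-approximate fixpoint.

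First, I would unpack the initial recursive calls. Since $\ss$ is a $1$-slice, $\Card{\free(\ss)} = 1$, so after the decrement the index used for binary search is the unique free coordinate. The two slices $\tt^{(\ell)}$ and $\tt^{(h)}$ obtained by fixing this coordinate to $0$ and $1$ are $0$-slices, so the recursive calls $\ApproxFindFP(\tt^{(\ell)})$ and $\ApproxFindFP(\tt^{(h)})$ hit the recursive base case and return the (fully determined) slices as points $v^{(\ell)}, v^{(h)} \in [0,1]^d$. Note that trivially any point is a $(\ss, \ell_p, 0)$-approximate fixpoint, so no further justification is required for these boundary pivots.

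Next, I would enumerate the three classes of return point in the body of the function. The two post-initialization checks return $v^{(\ell)}$ or $v^{(h)}$ only after verifying $|\Delta_1(v^{(\ell)})| \leq \eps_1(p,d)$ or $|\Delta_1(v^{(h)})| \leq \eps_1(p,d)$ respectively. Each iteration of the binary search loop returns the midpoint $v$ only after the guarded test $|\Delta_1(v)| \leq \eps_1(p,d)$ succeeds. Finally, the post-loop return at line~\ref{alg2:final_output} is executed exactly when the immediately preceding conditional $|\Delta_1(v^*)| > \eps_1(p,d)$ fails, and hence only when $|\Delta_1(v^*)| \leq \eps_1(p,d)$. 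In each of these four cases, the returned point witnesses $|\Delta_1(\cdot)| \leq \eps_1(p,d)$, which is precisely the condition in the definition of a $(\ss, \ell_p, 1)$-approximate fixpoint (since the only free coordinate of $\ss$ is the first).

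The main obstacle here is essentially bookkeeping rather than substance: I will need to be careful that the indexing $k$ in the pseudocode is interpreted as the unique free coordinate of $\ss$ (the post-decrement value is a notational artifact, as is the $k=d$ vs.\ $k=0$ check at the top of the function), and I need to confirm that the only way control reaches line~\ref{alg2:final_output} without first triggering an error is via the case $|\Delta_1(v^*)| \leq \eps_1(p,d)$. No use of Lemmas~\ref{lem:approx_cm_progress}, \ref{lem:approx_cm-lp_final_violations}, or \ref{lem:approx_cm-lp_final} is needed here; those arguments are reserved for the inductive step and for the error analysis. The nontrivial work of ensuring correctness of the binary-search invariant (and mapping error conditions back to contraction violations) is deferred to Lemma~\ref{lem:approx-cm-ind2}, which handles the inductive step.
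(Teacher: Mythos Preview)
Your proposal is correct and matches the paper's approach. The paper does not give a proof for this lemma at all; it simply states that it ``follows by design of the algorithm,'' which is exactly what your case analysis of the guarded return statements makes explicit.
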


For the inductive step, we show that we can go from approximate fixpoints of $(k-1)$-slices to approximate fixpoints of $k$-slices.

\begin{lemma}\label{lem:approx-cm-ind2}
  Fix some $k$-slice $\ss \in \Slice_d$ and let $\tt = (\blank,\dotsc,\blank,t_k,s_{k+1},\dotsc,s_d)$ for some $t_k \in [0,1]$. If $\ApproxFindFP(\tt)$ returns a $(\tt, \ell_p, k-1)$-approximate fixpoint $v$, then $\ApproxFindFP(\ss)$ returns an $(\ss, \ell_p, k)$-approximate fixpoint when it doesn't throw an error. 
\end{lemma}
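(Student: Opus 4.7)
The plan is to establish the invariants maintained by the binary search in Algorithm~\ref{alg2}, and then argue that termination either produces an $(\ss, \ell_p, k)$-approximate fixpoint or yields a certified violation of contraction. By Lemma~\ref{lem:cm1}, $\restr{f}{\ss}$ is a contraction map, so it has a unique fixpoint $x^*$ with $x^*_k \in [0,1]$. By the inductive hypothesis of Lemma~\ref{lem:approx-cm-ind2} itself (recursive calls are on $(k-1)$-slices), the vectors $v^{(\ell)}$, $v^{(h)}$, and each $v$ produced during the loop are $(k-1)$-approximate fixpoints of their respective slices, and hence also $(\ss,\ell_p,k-1)$-approximate fixpoints since the conditions on $\Abs{\Delta_j}$ for $j < k$ do not depend on the value of the $k$th coordinate.

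First I would handle the early-exit checks: if either of the endpoint pivots $v^{(\ell)}$ or $v^{(h)}$ already satisfies $\Abs{\Delta_k} \le \eps_k(p,d)$, then by definition it is a $(\ss,\ell_p,k)$-approximate fixpoint and we are done. Otherwise I would use Lemma~\ref{lem:approx_cm_progress} (applied with $\ss$ and the pivot) to conclude $\Delta_k(v^{(\ell)})(x^*_k - t^{(\ell)}_k) > 0$ and $\Delta_k(v^{(h)})(x^*_k - t^{(h)}_k) > 0$; combined with $t^{(\ell)}_k = 0$ and $t^{(h)}_k = 1$, this forces $\Delta_k(v^{(\ell)}) > \eps_k$, $\Delta_k(v^{(h)}) < -\eps_k$, and $x^*_k \in (t^{(\ell)}_k, t^{(h)}_k)$. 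This is the loop invariant I would maintain.

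Next I would analyze one iteration of the while-loop. The midpoint $t_k$ is probed and a $(k-1)$-approximate fixpoint $v$ is returned by the recursive call. If $\Abs{\Delta_k(v)} \le \eps_k$, the function returns a $(\ss,\ell_p,k)$-approximate fixpoint. Otherwise, Lemma~\ref{lem:approx_cm_progress} says the sign of $\Delta_k(v)$ points toward $x^*_k$, so the branch that sets either $t^{(\ell)}_k \leftarrow t_k$ or $t^{(h)}_k \leftarrow t_k$ preserves both $x^*_k \in (t^{(\ell)}_k, t^{(h)}_k)$ and the sign conditions on $\Delta_k(v^{(\ell)}), \Delta_k(v^{(h)})$. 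Each iteration halves the interval, so the loop terminates in $\lceil \log_2(1/\eps_k(p,d)) \rceil$ iterations with $t^{(h)}_k - t^{(\ell)}_k \le \eps_k(p,d)$.

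The main obstacle is the post-loop argument when the midpoint probe $v^*$ also fails to satisfy $\Abs{\Delta_k(v^*)} \le \eps_k$. Here I would split on the sign of $\Delta_k(v^*)$: if $\Delta_k(v^*) > \eps_k$, then by Lemma~\ref{lem:approx_cm_progress} we have $x^*_k \in (t_k, t^{(h)}_k)$, an interval of width at most $\eps_k/2$. The pair $(v^*, v^{(h)})$ consists of two $(\ss,\ell_p,k-1)$-approximate fixpoints sandwiching $x^*_k$ with $k$th-coordinate gap at most $\eps_k/2$, so Lemma~\ref{lem:approx_cm-lp_final} asserts that one of them is an $(\ss,\ell_p,k)$-approximate fixpoint --- contradicting that both have $\Abs{\Delta_k} > \eps_k$. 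The only way to reconcile this is that $f$ is not a contraction map, and Lemma~\ref{lem:approx_cm-lp_final_violations} (applied to $(v^*, v^{(h)})$) certifies this by producing the inequality $\Norm{f(v^*) - f(v^{(h)})}_p \ge \Norm{v^* - v^{(h)}}_p$, which is exactly the thrown error. The symmetric case $\Delta_k(v^*) < -\eps_k$ is handled by pairing $v^*$ with $v^{(\ell)}$. Thus the algorithm either reaches line~\ref{alg2:final_output} with a genuine $(\ss,\ell_p,k)$-approximate fixpoint, or throws an error accompanied by a valid contraction-violation witness, completing the induction.
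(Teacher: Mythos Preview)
Your proposal is correct, and the core idea matches the paper's: every non-error return in $\ApproxFindFP(\ss)$ is gated by a check that $\Abs{\Delta_k}\le\eps_k(p,d)$, so the returned point is automatically an $(\ss,\ell_p,k)$-approximate fixpoint.

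That said, your argument does considerably more work than the paper's. The paper's proof is essentially two observations: (i) by the inductive hypothesis every recursive call on a $(k-1)$-slice returns a $(k-1)$-approximate fixpoint, and (ii) since the lemma conditions on ``doesn't throw an error,'' the only exit points are the early returns and line~\ref{alg2:final_output}, each of which is reached only after verifying $\Abs{\Delta_k}\le\eps_k(p,d)$. No invariant about $x^*_k\in(t^{(\ell)}_k,t^{(h)}_k)$ is needed, nor any appeal to Lemma~\ref{lem:approx_cm_progress}. Your careful bracketing argument and your treatment of the error branch (via Lemmas~\ref{lem:approx_cm-lp_final} and~\ref{lem:approx_cm-lp_final_violations}) are not required here; the paper defers the error analysis entirely to the separate Lemma~\ref{lem:approx-cm-err}. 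What your extra work buys is a self-contained explanation of \emph{why} the algorithm is sound overall, but for this lemma as stated the paper's direct ``inspect the return paths'' argument suffices.
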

\begin{proof}
  We observe that $\restr{f}{\ss}$ is a contraction map by Lemma~\ref{lem:cm1}. We assume that $v=\ApproxFindFP(\tt)$ is a $(\tt, \ell_p, k-1)$-approximate fixpoint of $\restr{f}{\tt}$ for any value of $t_k\in [0,1]$. 

  We first observe that after the first recursive invocations of $\ApproxFindFP$, as $\ApproxFindFP(\tt^{(\ell)})$ and $\ApproxFindFP(\tt^{(h)})$, if $\Abs{f(v^{(h)})_k - v^{(h)}_k} \leq \eps_k(p,d)$ or $\Abs{f(v^{(\ell)})_k - v^{(\ell)}_k}\leq \eps_k(p,d)$, we return $v^{(h)}$ or $v^{(\ell)}$, respectively, so the output of $\ApproxFindFP(\ss)$ satisfies the requirements of the lemma.

  Moreover, in every subsequent call to $\ApproxFindFP(\tt)$, if the output $v$ satisfies $\Abs{f(v)_k - v_k} \leq \eps_k$, we return $v$, so we'll assume in what follows that the point $v^*$ returned by the algorithm is from line \ref{alg2:final_output}.

 We observe that $v^*$ was the output of a call to $\ApproxFindFP$ with a $(k-1)$-slice $\tt$, so it is a $(\tt, \ell_p, k-1)$-approximate fixpoint and in order to get to the final line, we must have $\Abs{\Delta_k(v^*)}\leq \eps_k(p, d)$. Thus, we have that $v^*$ is a $(\tt, \ell_p, k)$-approximate fixpoint. 
\end{proof}

Applying induction using Lemma~\ref{lem:approx-cm-ind1} as a base-case and Lemma~\ref{lem:approx-cm-ind2} as an inductive step, we obtain the following lemma:

\begin{lemma} For a contraction map with respect to an $\ell_p$ norm for $p < \infty$, and the trivial slice $\ss = (\blank, \blank, \dotsc, \blank)$, $\ApproxFindFP(\ss)$ returns a $(\ss, \ell_p, d)$-approximate fixpoint when it doesn't throw an error. 
\end{lemma}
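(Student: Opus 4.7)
The proof plan is a straightforward induction on $k$, the number of free coordinates of the input slice, using the two lemmas immediately preceding the statement. I would set up the inductive claim as: for every $k \in \{1, 2, \dotsc, d\}$ and every $k$-slice $\ss \in \Slice_d$, $\ApproxFindFP(\ss)$ returns an $(\ss, \ell_p, k)$-approximate fixpoint whenever it does not throw an error. The statement of the lemma is then the special case $k = d$ applied to the trivial slice $\ss = (\blank, \blank, \dotsc, \blank)$.

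The base case $k = 1$ is exactly Lemma~\ref{lem:approx-cm-ind1}. For the inductive step, fix a $k$-slice $\ss$ with $k \geq 2$. Inspecting Algorithm~\ref{alg2}, every recursive invocation that $\ApproxFindFP(\ss)$ performs is on a slice of the form $\tt = (\blank, \dotsc, \blank, t_k, s_{k+1}, \dotsc, s_d)$ for some $t_k \in [0,1]$, i.e., a $(k-1)$-slice. By the inductive hypothesis, each such recursive call either throws an error (in which case we are done trivially) or returns a $(\tt, \ell_p, k-1)$-approximate fixpoint of $\restr{f}{\tt}$. Lemma~\ref{lem:approx-cm-ind2} then applies verbatim, giving that $\ApproxFindFP(\ss)$ returns an $(\ss, \ell_p, k)$-approximate fixpoint, which completes the induction.

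Applying this with $k = d$ to the trivial slice yields the statement. Since Lemmas~\ref{lem:approx-cm-ind1} and~\ref{lem:approx-cm-ind2} already do essentially all the work (the first providing the base case and the second packaging the inductive step, including the matching of slice forms for the recursive calls), there is no real obstacle: the proof is a purely formal one-paragraph induction argument. The only mild checking required is to observe that the recursive invocations performed by $\ApproxFindFP$ on a $k$-slice are indeed all on $(k-1)$-slices of the precise shape demanded by Lemma~\ref{lem:approx-cm-ind2}, which is immediate from reading lines 7--9 and the while-loop body of Algorithm~\ref{alg2}. For completeness, one could also note at the end that by Lemma~\ref{lem:slice_approx_to_approx}, the resulting $(\ss, \ell_p, d)$-approximate fixpoint is in particular an $\eps$-approximate fixpoint of $f$ in the usual sense $\Norm{f(v) - v}_p \leq \eps$, which is what the algorithm was designed to produce.
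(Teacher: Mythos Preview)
Your proposal is correct and matches the paper's approach exactly: the paper simply states that the lemma follows by induction using Lemma~\ref{lem:approx-cm-ind1} as the base case and Lemma~\ref{lem:approx-cm-ind2} as the inductive step, which is precisely what you have spelled out.
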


\begin{lemma}\label{lem:approx-cm-err}
  If Algorithm~\ref{alg2} throws an error, the pair $(x, y)$ indicated by the error witnesses $f$ not being a contraction map. 
\end{lemma}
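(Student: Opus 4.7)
My plan is to verify that in each of the two error branches of Algorithm~\ref{alg2}, the indicated pair of points satisfies the hypotheses of Lemma~\ref{lem:approx_cm-lp_final_violations}, so that the conclusion of that lemma immediately gives a witness that $f$ is not a contraction.

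First I would pin down the loop invariant maintained by the binary search inside $\ApproxFindFP(\ss)$: at every moment during the loop, $v^{(\ell)}$ is a $(\ss, \ell_p, k-1)$-approximate fixpoint of $f$ with $v^{(\ell)}_k = t^{(\ell)}_k$ and $\Delta_k(v^{(\ell)}) \geq \eps_k(p,d)$, while $v^{(h)}$ is a $(\ss, \ell_p, k-1)$-approximate fixpoint with $v^{(h)}_k = t^{(h)}_k$ and $\Delta_k(v^{(h)}) \leq -\eps_k(p,d)$. This invariant is put in place by the two initial calls to $\ApproxFindFP$ (since otherwise the algorithm would already have returned an $(\ss, \ell_p, k)$-approximate fixpoint), and it is preserved by each iteration of the while loop, because the update rule only replaces an endpoint $t^{(\ell)}_k$ or $t^{(h)}_k$ (together with its associated point) by a newly computed $v$ that has $|\Delta_k(v)| > \eps_k(p,d)$ with the appropriate sign.

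Next I would describe the state of the algorithm immediately before an error is thrown. At that moment the while loop has just terminated, so $t^{(h)}_k - t^{(\ell)}_k \leq \eps_k(p,d)$; the algorithm has set $t_k$ to the midpoint $(t^{(\ell)}_k + t^{(h)}_k)/2$; and by the inductive hypothesis from Lemma~\ref{lem:approx-cm-ind2}, the recursive call $v^* \leftarrow \ApproxFindFP(\tt)$ returns an $(\ss, \ell_p, k-1)$-approximate fixpoint with $v^*_k = t_k$. An error is thrown precisely when $|\Delta_k(v^*)| > \eps_k(p,d)$, and the two branches are distinguished by the sign of $\Delta_k(v^*)$. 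In the branch $\Delta_k(v^*) > \eps_k(p,d)$, the pair $(v^*, v^{(h)})$ satisfies every hypothesis of Lemma~\ref{lem:approx_cm-lp_final_violations}: both are $(\ss, \ell_p, k-1)$-approximate fixpoints, they agree on all coordinates fixed by $\ss$, their $k$-th coordinates satisfy $v^{(h)}_k - v^*_k = t^{(h)}_k - t_k = (t^{(h)}_k - t^{(\ell)}_k)/2 \leq \eps_k(p,d)/2$, and the two $\Delta_k$ values have opposite signs of magnitude at least $\eps_k(p,d)$. The conclusion $\Norm{f(v^*) - f(v^{(h)})}_p \geq \Norm{v^* - v^{(h)}}_p$ is then a direct witness that $f$ is not a contraction. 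The branch $\Delta_k(v^*) < -\eps_k(p,d)$ is handled symmetrically by pairing $v^{(\ell)}$ with $v^*$.

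The main obstacle I expect is bookkeeping around the loop invariant: the pseudocode writes only the updates to $t^{(\ell)}_k$ and $t^{(h)}_k$ explicitly, so I need to argue carefully that the intended implementation also refreshes $v^{(\ell)}$ and $v^{(h)}$ so that they remain attached to the current endpoints with the required signs of $\Delta_k$. I also need to line up the sign convention of Lemma~\ref{lem:approx_cm-lp_final_violations} with the convention maintained by the binary search (lower endpoint pointing up, higher endpoint pointing down). Once these bookkeeping points are nailed down, each error branch collapses into a direct citation of the previously established violation lemma.
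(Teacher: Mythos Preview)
Your proposal is correct and follows essentially the same route as the paper's proof: verify that the pair produced in each error branch satisfies the hypotheses of Lemma~\ref{lem:approx_cm-lp_final_violations} and then invoke that lemma. You are in fact more careful than the paper on the bookkeeping point you flag---the pseudocode only updates $t^{(\ell)}_k$ and $t^{(h)}_k$, and both your argument and the paper's implicitly assume that $v^{(\ell)}$ and $v^{(h)}$ are refreshed accordingly; your explicit loop invariant makes this assumption transparent rather than leaving it to ``by inspection.''
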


\begin{proof}
  Assume that the error was thrown in a call to $\ApproxFindFP(\ss)$ where $\ss$ is a $k$-slice for some $k \leq d$. Then by inspection we can see that $x$ and $y$ satisfy $x_k < y_k$ and $y_k - x_k < \eps_k(p, d) / 2$. Moreover, $x$ and $y$ were returned by calls to $\ApproxFindFP$ with $(k-1)$-slices that we'll denote $\tt^{(x)}$ and $\tt^{(y)}$, respectively. It follows from the inductive argument above that $x$ and $y$ are $(\tt^{(x)}, \ell_p, k-1)$ and $(\tt^{(y)}, \ell_p, k-1)$-approximate fixpoints, respectively. Furthermore, to get to the line in which the error is thrown, we must have $\Delta_k(x) \geq \eps_k(p,d)$ and $\Delta_k(y) \leq -\eps_k(p,d)$. By Lemma~\ref{lem:approx_cm-lp_final_violations}, we immediately obtain the desired conclusion.
\end{proof}

Now applying Lemma~\ref{lem:slice_approx_to_approx} and analyzing the runtime of our algorithm, we obtain our final results.

\begin{theorem}
For a contraction map with respect to the $\ell_1$ norm, $\ApproxFindFP(\blank,\blank,\dotsc,\blank)$ returns a point $v\in [0,1]^d$ such that $\Norm{f(v) - v}_1 < \eps$ or reports a violation of contraction in time $O(d^d\log(1/\eps))$.
\end{theorem}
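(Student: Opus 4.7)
Correctness is essentially inherited from the machinery already assembled. The plan is to first invoke the lemma immediately preceding the theorem (the inductive result obtained from Lemmas~\ref{lem:approx-cm-ind1} and~\ref{lem:approx-cm-ind2}) to conclude that, when no error is thrown, $\ApproxFindFP(\blank,\dotsc,\blank)$ returns a $(\ss,\ell_1,d)$-approximate fixpoint $v$. Then Lemma~\ref{lem:slice_approx_to_approx} specializes to $p=1$ and gives $\Norm{f(v)-v}_1 \le \sum_{i=1}^d \eps_i(1,d) = \sum_{i=1}^d \eps/2^{2(d+1-i)} < \eps$, as required. For the error branch, I would cite Lemma~\ref{lem:approx-cm-err} directly: any pair $(x,y)$ flagged by the algorithm satisfies the hypotheses of Lemma~\ref{lem:approx_cm-lp_final_violations}, and hence witnesses that $f$ is not a contraction.

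The substantive content is therefore the running-time bound. I would set up a recurrence $T(k)$ for the number of calls to $f$ made by an invocation of $\ApproxFindFP$ on a $k$-slice. Inspecting Algorithm~\ref{alg2}, each such call does two initial recursive calls followed by a binary search whose width halves at every iteration until it drops below $\eps_k(1,d)$; this takes at most $\lceil \log_2(1/\eps_k(1,d))\rceil + O(1)$ iterations, each of which triggers one recursive call at level $k-1$. Plugging in $\eps_k(1,d) = \eps/2^{2(d+1-k)}$ gives
\[
\log_2(1/\eps_k(1,d)) \;=\; 2(d+1-k) + \log_2(1/\eps),
\]
so the recurrence is $T(k) \le \bigl(2(d+1-k)+\log_2(1/\eps)+c\bigr)\cdot T(k-1)$ with $T(0)=O(1)$.

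The main quantitative step is to show that the resulting product telescopes to $O(d^d \log(1/\eps))$. The idea I would pursue is that for the $\ell_1$ norm the precisions $\eps_k(1,d)$ decrease only \emph{geometrically} in $k$ (a factor of $4$ per level), so the logarithms they contribute are additive, not multiplicative, in $\log(1/\eps)$. Concretely, I would argue
\[
T(d) \;\le\; \prod_{j=1}^{d} \bigl(2j + \log_2(1/\eps) + c\bigr)
\]
and then bound this product by observing that each factor is at most $O(d)\cdot \log(1/\eps)$ when $\log(1/\eps)$ dominates $d$ (the regime of interest), while the leading $\log(1/\eps)$ term only ``activates'' once across the entire product because the $2j$ terms can be absorbed into $d^d$. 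The cleanest way to push this through is probably to split into the two regimes $\log(1/\eps) \ge d$ and $\log(1/\eps) < d$ and bound each product separately, verifying in both cases that $\prod_{j=1}^d (2j + \log_2(1/\eps) + c) = O(d^d \log(1/\eps))$.

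The main obstacle will be exactly this last step: turning the product $\prod_{j=1}^d (2j+\log(1/\eps))$ into the compact form $O(d^d\log(1/\eps))$ claimed in the statement. The geometric decay of the $\eps_k(1,d)$ — a feature peculiar to the $\ell_1$ case, and responsible for the much better bound than in the $\ell_p$ case with $p\ge 2$ proved in the next theorem — is what makes it possible, but the algebra requires a careful case split on the relative sizes of $d$ and $\log(1/\eps)$ rather than a routine telescoping. Once that bound is in hand, multiplying by the polynomial per-call cost of evaluating $f$ (which is hidden in the $O(\cdot)$) completes the running-time analysis.
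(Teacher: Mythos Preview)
Your correctness argument is fine and mirrors the paper's: the inductive lemmas give a $(\ss,\ell_1,d)$-approximate fixpoint, Lemma~\ref{lem:slice_approx_to_approx} converts that to an $\eps$-approximate fixpoint, and Lemma~\ref{lem:approx-cm-err} handles the error branch. Your recurrence is also set up correctly.

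The gap is in the final step of the running-time analysis, and it is not a matter of finding the right algebraic trick --- the bound you are aiming for is simply unattainable from your (correct) recurrence. The product
\[
\prod_{j=1}^{d}\bigl(2j+\log_2(1/\eps)+c\bigr)
\]
is at least $(\log_2(1/\eps))^{d}$ whenever $\log_2(1/\eps)\ge 1$, and $(\log(1/\eps))^{d}$ is \emph{not} $O(d^{d}\log(1/\eps))$ in the regime $\log(1/\eps)\gg d$. Concretely, take $d=3$ and $\log(1/\eps)=10^{6}$: the product is about $10^{18}$ while $d^{d}\log(1/\eps)\approx 2.7\times 10^{7}$. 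Your heuristic that ``the leading $\log(1/\eps)$ term only activates once across the entire product'' is false --- when $\log(1/\eps)$ dominates every $2j$, it activates in \emph{every} factor, and the product really is $\Theta(\log^{d}(1/\eps))$ up to lower-order terms.

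In fact the paper's own proof does not establish the stated bound either: it sets up the same per-level iteration count and then concludes with $O(d^{d}\log^{d}(1/\eps))$, not $O(d^{d}\log(1/\eps))$. So the theorem statement appears to contain a typo (missing exponent on the $\log$), and you should target $O(d^{d}\log^{d}(1/\eps))$, which your product bounds immediately via $2j+\log(1/\eps)+c \le O(d\log(1/\eps))$ for each $j\le d$.
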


\begin{proof}
  In the worst case, the $i$th recursive call to $\ApproxFindFP$ will terminate with $t^{(h)}_i - t^{(\ell)}_i < \eps_i = \eps/4^i$ so will require at most $O(\log(1/\eps)i)$ iterations. The total runtime will thus be bounded by $O(d^d\log^d(1/\eps))$. 
\end{proof}

\begin{theorem}
  For a contraction map under $\Norm{\cdot}_p$ for $2 \leq p < \infty$, $\ApproxFindFP(\blank,\blank,\dotsc,\blank)$ returns a point $v\in [0,1]^d$ such that $\Norm{f(v) - v}_p < \eps$ or reports a violation of contraction in time $O({p}^{d^2}\log^d(1/\eps)\log^d(dp))$.
\end{theorem}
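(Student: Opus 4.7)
The plan is to split the proof cleanly into a correctness part and a runtime part. Correctness is essentially free at this point: by induction on the recursion depth, using Lemma~\ref{lem:approx-cm-ind1} as the base case and Lemma~\ref{lem:approx-cm-ind2} as the inductive step, the call $\ApproxFindFP(\blank,\ldots,\blank)$ either throws an error or returns a $(\ss,\ell_p,d)$-approximate fixpoint of $f$ on the trivial slice $\ss$. Applying Lemma~\ref{lem:slice_approx_to_approx} converts this to an honest $\eps$-approximate fixpoint in the $\ell_p$-norm, while Lemma~\ref{lem:approx-cm-err} guarantees that any thrown error produces two points witnessing that $f$ is not contracting. So the only real content is the runtime bound.

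For the runtime, I would count work by the number of leaf calls in the recursion tree (each leaf call costs $O(1)$ evaluations of $f$, absorbed into the big-$O$). Let $T_k$ denote the number of iterations performed by the while loop at recursion level $k$, i.e., for a $k$-slice. Because the loop halves the interval $[t^{(\ell)}_k, t^{(h)}_k]$ each round and only halts when its width is at most $\eps_k(p,d)$, we have $T_k = O(\log(1/\eps_k(p,d)))$. Plugging in the chosen precision $\eps_k(p,d) = \eps^{p^{(d+1-k)}}(dp)^{-2\sum_{j=0}^{d+1-k}p^j}$ for $p \geq 2$ gives
\[
\log(1/\eps_k(p,d)) \;=\; p^{d+1-k}\log(1/\eps) \;+\; 2\!\!\sum_{j=0}^{d+1-k}\!\!p^j\,\log(dp)
\;=\; O\!\bigl(p^{d+1-k}(\log(1/\eps)+\log(dp))\bigr),
\]
where the geometric sum is dominated by its largest term since $p\geq 2$.

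The total number of leaf calls is then the product of $T_k$ across levels, because level $k$ triggers $T_k$ recursive calls into level $k-1$, and any ``short-circuit'' return only decreases this count. Thus the running time is at most
\[
\prod_{k=1}^{d} O\!\bigl(p^{d+1-k}(\log(1/\eps)+\log(dp))\bigr)
\;=\; O\!\Bigl(p^{\sum_{k=1}^d (d+1-k)}\,(\log(1/\eps)+\log(dp))^d\Bigr)
\;=\; O\!\bigl(p^{d(d+1)/2}(\log(1/\eps)+\log(dp))^d\bigr).
\]
Using $d(d+1)/2 \leq d^2$ to absorb the exponent into $p^{d^2}$, and the elementary inequality $(a+b)^d \leq (2ab)^d$ valid for $a,b \geq 1$ applied to $a=\log(1/\eps)$ and $b=\log(dp)$, the product factors as $O(\log^d(1/\eps)\log^d(dp))$, yielding the claimed $O(p^{d^2}\log^d(1/\eps)\log^d(dp))$ bound.

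The main (mild) obstacle I anticipate is the multiplicative factoring step: $(a+b)^d \leq 2^d a^d b^d$ requires both factors to be at least $1$, so one needs a brief sanity check that we may assume $\eps < 1$ and $dp \geq 2$ (otherwise the statement is trivial or $d=0$). A secondary subtlety is making sure the ``short-circuit'' returns in Algorithm~\ref{alg2}, where $\Abs{\Delta_k(v)} \leq \eps_k(p,d)$ is detected before the binary search finishes, cannot cause us to undercount the work; this is immediate since such an early return only replaces $T_k$ iterations with strictly fewer, but the bookkeeping should be stated explicitly so the product form of the bound is justified.
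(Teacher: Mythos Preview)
Your proposal is correct and follows essentially the same approach as the paper: the correctness part assembles the same lemmas in the same way, and the runtime part bounds the binary-search depth at each recursion level by $O(\log(1/\eps_k(p,d)))$ and multiplies across levels. The paper's proof is terser---it directly asserts each level takes $O(p^i\log(1/\eps)\log(dp))$ iterations and multiplies---whereas you make the $(a+b)^d \le (2ab)^d$ factoring step and the $d(d+1)/2 \le d^2$ exponent absorption explicit; both arguments arrive at the stated bound with the same mild looseness in constants for small $d$.
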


\begin{proof}
  In the worst case, the $i$th recursive call to $\ApproxFindFP$ will terminate with $t^{(h)}_i - t^{(\ell)}_i < \eps_i = \eps^{p^i}\Paren{dp}^{-2\sum_{j=0}^i p^j}$ so will require $O(p^i\log(1/\eps)\log(dp))$ iterations. The total runtime will thus be bounded by $O(p^{d^2}\log^d(1/\eps)\log^d(dp))$. 
\end{proof}

\end{document}